\author{Michael Benedikt \affil{University of Oxford, UK} and Chia-Hsuan Lu \affil{University of Oxford, UK}  and Tony Tan \affil{University of Liverpool, UK}}
\begin{document}
\title{Decidability of Graph Neural Networks via Logical Characterizations}

\begin{abstract}
We present results concerning the expressiveness and decidability of a popular graph learning formalism, graph  neural networks (GNNs), exploiting  connections with  logic. We use  a family of recently-discovered decidable logics involving ``Presburger quantifiers''. We show how to use these logics to 
measure the expressiveness of classes of GNNs, in some cases getting exact correspondences between the expressiveness of logics and GNNs. We also employ the logics, and the techniques used to analyze them, to obtain decision procedures
for verification problems over GNNs. We complement this with undecidability results for static analysis problems involving the logics, as well as for GNN verification problems.   
\end{abstract}
\maketitle
\section{Introduction}

Graph Neural Networks (GNNs) have become the most common model for learning functions that work on graph data.
Like traditional neural networks, GNNs consist of a layered architecture where layer $k+1$ takes as input the output
of layer $k$. Each layer computes a function from graph vertices to a vector of numerical values -- the
\emph{feature vector}. Computation of the feature vector at layer $k+1$ for a node $u$ is based on aggregating vectors 
for layer $k$ of nodes $v$ that are related to $u$ in the source graph:
for example aggregating vectors associated
to nodes adjacent to $u$ in the graph. In an aggregation, the vectors of the previous vectors may be transformed using linear functions.
A layer can perform multiple aggregations -- corresponding to different linear functions --  and then combine them to get the feature vector for the next layer. The use of graph structure ensures that the computation of the network is \emph{invariant}: depending only on the input graph and the node up to isomorphism.
There are many variations of GNN. One key design choice is the kind of aggregation used - one can use ``local aggregation'', over the neighbors of a node, or aggregation over all nodes in the graph. 
A second design choice is the kind of numerical functions that can be applied to vector components, in particular the kind of \emph{activation functions} that can be applied at each layer: e.g. $\relu$, sigmoid, piecewise linear functions.

An important issue in the study of graph learning is the \emph{expressiveness} of a learning model. What kinds of computations can a given type of GNN express? 
The first results in this line were about the \emph{separating power of a graph learning model}: what pairs of nodes can be distinguished using GNNs within a certain class. For example, it is known that the separating power of standard GNN models is limited by the Weisfeiler-Leman (WL) test \cite{howpowerful}.

A finer-grained classification would  characterize  the functions computed by GNNs within a certain class, in terms of some formalism that is easier to analyze.  Such characterizations are referred to as \emph{uniform expressiveness results} and there has been much less work in this area.
\cite{barceloetallogical} provides a classification of a class of GNNs in terms of \emph{modal logic}. The main result in \cite{barceloetallogical} is a characterization of the classifiers expressible in first-order logic that can be performed with a GNN having only \emph{local aggregation} and \emph{truncated $\relu$ activations} over \emph{undirected graphs}.
They also provide a lower bound on the expressiveness of GNNs having in addition a ``global aggregator'', that sums over all nodes in the graph.

In this work we continue the line of work on uniform expressiveness.
Our work improves on the state of the art in a number of directions:
\begin{itemize}
\item \emph{From first order expressiveness to general expressiveness} In contrast to \cite{barceloetallogical}, we provide logical characterizations of \emph{all} the functions that can be computed by certain GNN formalisms, not just the intersection with first-order logic. To do this we utilize logics that go beyond first order, but which are still amenable to analysis.
\item \emph{From expressiveness to verification} While we deal with GNNs that go beyond first-order logic, we can still obtain characterizations in a logic where the basic satisfiability problems are decidable. This provides us with decidability of a number of natural verification problems related to GNNs. In doing this, we show a surprising link between GNNs and recently-devised decidable logics going beyond first-order logic, so-called \emph{Presburger logics}. 
\item \emph{From undirected graphs to directed graphs} While prior work focused on undirected graphs, we explore how the expressiveness characterizations vary with GNNs that can recognize directionality of graph edges. The aim is to show that the logical characterizations and decidability are often independent of the restriction to undirected graphs.
\item \emph{From bounded to unbounded activations} We explore the impact of the activation functions.
We begin with the case of \emph{bounded activation functions}, like the truncated $\relu$ of \cite{barceloetallogical}, and establish characterizations and decidability results for GNNs using this function. We show both some contrasts and some similarity to the case of  \emph{unbounded activation functions}, including the standard $\relu$. Here some, but not all, of the corresponding decidability results fail.
\end{itemize}

\myparagraph{Organization} After overviewing related work in Section~\ref{sec:related}, we formalize our GNN model and the basic logics we study in Section~\ref{sec:prelims}. In Section~\ref{sec:eventually_constant}, we start with results on logical characterizations of GNNs with ``bounded activation functions'' -- like the truncated $\relu$ of \cite{barceloetallogical}. We apply these characterizations to get decidability results. In Section~\ref{sec:pspace} we obtain tight complexity bounds for verification problems specifically for the case of GNNs with local aggregation and truncated $\relu$ activations.

Section~\ref{sec:unbounded} turns to the case of unbounded activation functions, which includes the traditional $\relu$ function. Here we provide lower bounds for expressiveness, and then turn to the implications for decidability. 
Section~\ref{sec:discuss} gives conclusions and discusses several open issues. 
\paragraph{Acknowledgements}
This paper elaborates on results announced in the conference paper \cite{usicalp24}. We thank the reviewers of ICALP for their feedback on the initial version.
\section{Related work} \label{sec:related}
Our work relates to three areas: logical characterizations of GNNs, verification of GNNs, and
decidable logics. We situate our contributions with respect to prior work in these areas below.

\myparagraph{Logical characterizations of GNN expressiveness} Logics have been used to characterized the separating power of GNN languages (``non-uniform expressiveness'') for a number of years: that is, graphs are distinguishable by a class of GNNs if and only if they are distinguisable in a certain logic. See \cite{grohegnnlogic} for an overview. There is much less work on ``uniform expressiveness'': characterizing the behaviour of GNNs on all graphs. The direct inspiration for our work is \cite{barceloetallogical}, which provides results characterizing the formulas within first-order logic that are expressible with GNNs. The recent \cite{grohedescriptivegnn}  provides logical characterizations of GNNs with piecewise linear activations. The logic is not decidable; indeed our undecidability results imply that one cannot capture such GNNs with a decidable logic.
The recent \cite{realcircuitgnn} gives characterizations of rich classes of GNNs using real circuits -- again, this does not yield any decidability.
.

\myparagraph{Verification of GNNs} We employ logical characterizations to gain insight on two basic verification problems -- whether a given classification can be achieved on some nodes or on all nodes. There is prior work on verification of GNNs, but it focuses on more complex (but arguably more realistic) problems, adversarial robustness. The closest paper to ours is the recent  
\cite{langeverifygnn}, which formalizes a broad set of problems related to verifying that the output is in a certain region in Euclidean space. \cite{langeverifygnn} provides both decidability and undecidability theorems, but they are incomparable to ours both in the results and in the techniques. For example Theorem 1 of \cite{langeverifygnn} shows undecidability of a satisfiability problem where we verify that certain nodes output a particular value, over GNNs which always distinguish a node from its neighbor. Theorem 2 of \cite{langeverifygnn} shows a decidability result with a different kind of specification, where the degree of input graphs is bounded.

Independently from our work, \cite{ijcai24gnnverification} looked at similar verification problems to ours, but for a more restricted class of GNNs: those with truncated $\relu$ and integer coefficients. They prove a $\pspace$-completeness result, very similar to our Theorem \ref{thm:pspace}.

\myparagraph{Logics combining graph structure with Presburger arithmetic}
An early work on combining logic on interpreted structure with Presburger arithmetic
is \cite{bapa}. There the uninterpreted structure consists only of unary predicates, so decidability is much simpler. The formalisms we study are closely related to \cite{localpresburgerbartosztony,twovarpres}. We rely on one decidability results from
\cite{localpresburgerbartosztony}, but we will need to refine the analysis to get our complexity results.

Recently, logics that combine uninterpreted relations with Presburger arithmetic have been applied to the analysis of transformers -- transducers that process strings  \cite{chiangcholaktighter,anthonypablo}. Since this is outside of the context of general graphs, the details of the logics that are employed are a bit different than those we consider, and the focus is not on the decidability border.
\section{Preliminaries} \label{sec:prelims}

Let $\bbN$, $\bbN^+$, $\bbZ$, and $\bbQ$ be the set of natural numbers, positive natural numbers, integers, and rational numbers, respectively.
For $p, q \in \bbZ$ with $p \le q$, $\intinterval{p}{q}$ is the set of integers between $p$ and $q$, including $p$ and $q$. When $p$ is $1$, we omit it and simply write $\intsinterval{q}$.
For $r \in \bbQ$, $\ceil{r}$ is the smallest integer greater than or equal to $r$, and
$\floor{r}$ is the greatest integer less than or equal to $r$.

For a function $f$ mapping from $\bbQ$ to $\bbQ$ and a vector $\bfv \in \bbQ^m$, $f(\bfv)$ denotes that $f$ is applied to each entry of $\bfv$.

\begin{definition}
    An \emph{$n$-graph} is a tuple $\tuple{V, E, \set{U_c}_{c \in \intsinterval{n}}}$,
    where $n \in \bbN$ is the number of vertex colors;
    $V$ is a nonempty finite set of vertices;
    $E \subseteq V \times V$ is a set of edges;
    each $U_c \subseteq V$ is the set of $c$-colored vertices.
\end{definition}

Note that we allow self-loops in graphs, and a graph is by default a \emph{directed graph}.
For a graph $\cG$, we say that $\cG$ is a \emph{undirected graph} if for all $v, u \in V$, $(v, u) \in E$ if and only if $(u, v) \in E$.
For a vertex $v$ in $\cG$, we let
${\nbr{\out, \cG}(v) := \setc{u}{(v, u) \in E}}$ and refer to this as the set of \emph{out-neighbors of $v$}. 
The set of \emph{in-neighbors of $v$}, denoted 
$\nbr{\inc, \cG}(v)$ are defined analogously.

In the body of the paper, we will present results about GNNs that work over directed graphs. Some of our results will make heavy use of directionality, considering GNNs that only look at out-neighbors. But many results will deal with ``bidirectional GNNs'', those that can see both out- and in-neighbors. \emph{In the appendix
we show that the results which will deal with  bidirectional GNNs also apply to undirected graphs}.

\myparagraph{Graph neural networks}
We use a standard notion of ``aggregate-combine'' graph neural networks with rational coefficients. The only distinction from the usual presentation is that
we allow GNNs to work over directed graphs, with separate aggregations over incoming and outgoing edges, while traditional GNNs work on undirected graphs.

\begin{definition}\label{def:gnn}
    An \emph{$n$-graph neural network} (GNN) is a tuple
    \begin{equation*}
        \tuple{
            \set{\gnndim{\ell}}_{\ell \in \intinterval{0}{L}},
            \set{\act{\ell}}_{\ell \in \intsinterval{L}}, 
            \set{\coefC{\ell}}_{\ell \in \intsinterval{L}},
            \set{\coefA{\ell}{x}}_{\substack{
                \ell \in \intsinterval{L} \\
                x \in \set{\out, \inc}}},
            \set{\coefR{\ell}}_{\ell \in \intsinterval{L}},
            \set{\coefb{\ell}}_{\ell \in \intsinterval{L}}
        },
    \end{equation*}
    where $L \in \bbN^+$ is the number of layers;
    each ${\gnndim{\ell} \in \bbN^+}$, called the \emph{dimension} of the $\ell^{th}$ layer, requiring $\gnndim{0} := n$, the number of colors;
    each ${\act{\ell}: \bbQ \to \bbQ}$, the \emph{activation function} of the $\ell^{th}$ layer;
    each ${\coefC{\ell}, \coefA{\ell}{x}, \coefR{\ell} \in \bbQ^{\gnndim{\ell} \times \gnndim{\ell-1}}}$, the \emph{coefficient matrices} of the $\ell^{th}$ layer; and
    $\coefb{\ell} \in \bbQ^{\gnndim{\ell}}$, the \emph{bias vector} of the $\ell^{th}$ layer.
\end{definition}

All the coefficients are rational. In order to have an effective representation of a GNN, we will also \emph{assume that the activation functions are computable}. 

\begin{definition}
    For an $n$-GNN $\cA$ and an $n$-graph $\cG$,
    the \emph{computation of $\cA$ on $\cG$} is a sequence of \emph{derived feature functions} $\set{\feat{\ell}_\cG: V \to \bbQ^{\gnndim{\ell}}}_{\ell \in \intinterval{0}{L}}$ defined inductively.
    For $v \in V$, $\feat{\ell}_\cG(v)$ is called the \emph{$\ell$-feature (vector) of $v$}, and 
    $\feat{\ell}_{\cG, i}(v)$ denotes the $i^{th}$ entry of $\feat{\ell}_\cG(v)$.

    For $\ell = 0$, for $1 \le i \le \gnndim{0}$,
    if $v \in U_i$, then $\feat{0}_{\cG, i}(v) = 1$; otherwise, $\feat{0}_{\cG, i}(v) = 0$.
    For $1 \le \ell \le L$,
    \begin{equation*}
        \feat{\ell}_\cG(v) := \actp{\ell}{
            \coefC{\ell} \feat{\ell-1}_\cG(v) + \!\!\!\!\!\!
            \sum_{x \in\set{\out, \inc}}\!\!
            \left(
                \coefA{\ell}{x} \!\!\!\!\!\!
                \sum_{u \in \nbr{x, \cG}(v)} \!\!\!\!\!
                \feat{\ell-1}_\cG(u)
            \right) +
            \coefR{\ell} \sum_{u \in V} \feat{\ell-1}_\cG(u) +
            \coefb{\ell}
        }.
    \end{equation*}
\end{definition}

That is, we compute the feature values of a node $v$ at layer $\ell+1$ by adding several components. One component aggregates over the $\ell$-layer feature vector from the outgoing neighbors of  $v$, and applies a affine transformation. Another component does the same for the incoming neighbors of $v$, a third does this for every node in the graph, while another applies a affine transformation to the $\ell$-layer feature vector of $v$ itself. The affine transformation can be different for each component, and in particular can be a zero matrix that just drops that component. The final component of the sum is the bias vector.

When the graph $\cG$ is clear from the context,
we omit it and simply write $\feat{\ell}(v)$ and $\feat{\ell}_{i}(v)$, and similarly
when the graph $\cG$ is clear from the context,
write $\nbr{\out}(v)$ and $\nbr{\inc}(v)$ for the in-neighbors and out-neighbors.

Note that in most presentations of GNNs, one deals with only undirected edges. The above definition degenerates in that setting to two aggregations per layer, with the aggregation over all nodes 
often referred to in the literature as the \emph{global readout}.

In some presentations of GNNs, a \emph{classification function}, which associates a final Boolean decision to a node, is included in the definition. In our case, we have separated out the classification function as an independent component in defining the expressiveness: see the last part of the preliminaries.

\myparagraph{Classes of activation functions}
Following prior work on analysis of GNNs, some of our results will deal with activation functions that are bounded in value:

\begin{definition}
    We say that the function $f: \bbQ \to \bbQ$
    is \emph{eventually constant},
    if there exists $t_\leftt, t_\rightt \in \bbQ$
    satisfying $t_\leftt < t_\rightt$,
    called the \emph{left and right thresholds of $f$},
    such that
    for every $x \le t_\leftt$, $f(x) = f(t_\leftt)$;
    for every $x \ge t_\rightt$, $f(x) = f(t_\rightt)$.
\end{definition}

A standard eventually constant function is the \emph{truncated $\relu$ function},
denoted by $\trrelu$,
which is $0$ for negatives, $1$ for $x$ greater than $1$, and $x$ otherwise~\cite{barceloetallogical}.
There are other eventually constant functions that are used in practice: for example, the linear approximation of standard bounded functions used in graph learning, like the Sigmoid activation function. 
We will be interested in functions that are defined on the reals, but which preserve the rationals. The definition of eventually constant extends to such a function in the obvious way.

For a GNN with eventually constant activation functions,
we use $\tl{\ell}$ and $\tr{\ell}$ 
to denote the left and right thresholds of the GNN's activation functions.

We also consider \emph{unbounded activation functions}, such as the \emph{standard $\relu$ function}, which is $x$ for non-negatives and $0$ for negatives.

\myparagraph{Flavors of GNN}
\begin{definition}
    For a GNN $\cA$,
    we say that
    $\cA$ is \emph{outgoing-only},
    if for every $1 \le \ell \le L$, $\coefA{\ell}{\inc}$ is a zero matrix.
    We say that
    $\cA$ is \emph{local},
    if for every $1 \le \ell \le L$, $\coefR{\ell}$ is a zero matrix.
\end{definition}

We use the following naming,
${(\cO)(\cL)(\cC|\pw|\trrelu|\relu)\text{-}\mathsf{GNN}}$,
for the set of GNNs satisfying constraints given by the prefix,
where $\cO$ for outgoing-only;
$\cL$ for local;
$\cC$ for eventually constant activation;
$\pw$ for $\emph{piecewise linear}$ activation.
For example, $\olcGNN$ is the set of outgoing-only and local GNNs with eventually constant activations;
$\bgpwGNN$ is the set of GNNs with piecewise linear activations.

\myparagraph{Classifiers and Boolean semantics}
Our GNNs define vector-valued classification functions on nodes. But for comparing with expressiveness and in defining verification problems, we will often use a derived function from nodes to Booleans.
We do this by thresholding at the end -- below we use $1/2$ for convenience, but other choices do not impact the results.
\begin{definition}
    For an $n$-GNN $\cA$,
    an $n$-graph $\cG$,
    and a vertex $v$ in $\cG$,
    we say that $\cA$ \emph{accepts} the tuple $\tuple{\cG, v}$,
    if $\feat{L}_{\cG, 1}(v) \ge 1/2$.
\end{definition}

Note that the global readout component 
can interact with the activation functions $\act{\ell}$, which can behave very differently on translated values due to non-linearity -- think of a typical $\act{\ell}$ as a piece-wise linear function.
Global readout can also interact with the classification threshold,  pushing some values above the threshold while leaving others below.

\myparagraph{Satisfaction of logical formulas}
We will deal with various classes of logical formulas on graphs,  $\phi(v_1 \ldots v_n)$. These will be evaluated on a graph with respect to a graph $G$ and a variable binding $\sigma$ that maps each free variable to a node of
$G$. We often write $G, \sigma \models \phi$, or $G, c_1 \ldots c_n$ for nodes $c_i$, where the ordering of nodes is understood. For our formulas, we will always have $n$ at most $2$.
We will often omit the input graph since it will be obvious from context.

We will also deal logical formulas
over integers $\Psi(x_1, \ldots, x_n)$, 
which are evaluated with respect to a variable binding $\sigma$ mapping free variables to integers. We write $\Psi(\sigma)$ or $\Phi(a_1, \ldots, a_n)$ to denote satisfaction of such a $\Psi$ on binding $\sigma$
mapping $x_i$ to $a_i$. We sometimes identify the value true with $1$ and false with $0$, writing $\eval{\Psi(\bfa)}=1$.

We will use  lowercase Greek letters to denote logical formulas over graphs,
and uppercase Greek letters for logical formulas over integers.


\myparagraph{Two-variable modal logic with Presburger quantifiers}
We review logic with Presburger quantifiers. The basic idea is to combine a decidable logic on uninterpreted structures, like two-variable logic or guarded logic, with the ability to perform some arithmetic on the number of elements. There are several formalisms in the literature that combine Presburger arithmetic with a decidable uninterpreted logic, some originating many years ago~\cite{bapa}. We will rely on a recent logic from~\cite{localpresburgerbartosztony}, but we will need several variations of the underlying idea here.

\begin{definition}\label{def:presburger}
    A \emph{Presburger quantifier} is of the form:
    \begin{equation*}
        \cP(x)\ :=\ 
        \sum_{t \in \intsinterval{k}} \lambda_t \cdot \presby{\varphi_t(x,y)}\ \circledast\ \delta,
    \end{equation*}
    where $\delta \in \bbZ$;
    each $\lambda_t \in \bbZ$;
    each $\varphi_t(x,y)$ is a formula with free variables $x$ and $y$;
    $\circledast$ is one of $=$, $\neq$, $\leq$, $\geq$, $<$, or $>$.
    Note that $\cP(x)$ has one free variable $x$.

    We give the semantics of these quantifiers inductively,
    assuming a semantics for $\varphi_t(x, y)$.
    Given a graph $\cG$ and a vertex $v$ in $\cG$,
    we say that $\cP(x)$ holds in $\cG, x/v$,
    denoted by ${\cG \models \cP(v)}$,
    if the following (in)equality holds in $\bbZ$:
    \begin{equation*}
        \sum_{t \in \intsinterval{k}} \lambda_t \cdot
        \abs{\setc{u \in V}{\cG \models \varphi_t(v, u)}}
        \ \circledast\ \delta.
    \end{equation*}
\end{definition}

\begin{remark}
    Note that each Presburger quantifier
    can be rewritten as a Boolean combination of expressions which \emph{only use the inequality symbol $\ge$ as $\circledast$}. 
    For example,
    ${\left(\presby{\varphi(x,y)} = \delta\right)}$ and
    ${\left(\presby{\varphi(x,y)} \ge \delta\right)} \land
    \neg {\left(\presby{\varphi(x,y)} \ge \delta + 1\right)}$
    are semantically equivalent.
    Therefore it is sufficient to consider Presburger quantifiers which only use the inequality symbol $\ge$.
\end{remark}

\begin{remark}
    We will make use of Presburger quantifiers
    that allow for rational coefficients of the form:
    \begin{equation*}
        \widetilde{\cP}(x)\ :=\  
        \lambda_0 + \sum_{t \in \intsinterval{k}} \lambda_t \cdot \presby{\varphi_t(x,y)}
        \ \circledast\ 
        \lambda'_0 + \sum_{t \in \intsinterval{k'}} \lambda'_t \cdot \presby{\varphi'_t(x,y)},
    \end{equation*}
    where each $\lambda_t,\lambda'_t\in \bbQ$.
    This is a shorthand for the Presburger quantifier:
    \begin{equation*}
        \cP(x)\ :=\ 
        \sum_{t \in \intsinterval{k}} (D\lambda_t) \cdot \presby{\varphi_t(x,y)}
        + 
        \sum_{t \in \intsinterval{k'}} (-D\lambda'_t) \cdot \presby{\varphi'_t(x,y)}
        \ \circledast \
        D(\lambda_0-\lambda'_0),
    \end{equation*}
    where $D$ is the least common multiplier 
    of the denominators of the coefficients in $\widetilde{\cP}(x)$.

    When $\lambda_t = 1$, we omit it and simply write $\presby{\varphi_t(x,y)}$.
\end{remark}

\begin{definition}\label{def:mp2}
We give the syntax of \emph{two-variable modal logic with Presburger quantifiers} ($\MPtwo$) over vocabulary $\tau$. Formulas will have exactly one free variable, denoted $x$ below.
    \begin{itemize}
        \item $\top$ is an $\MPtwo$ formula.
        \item For a unary predicate $U \in \tau$, $U(x)$ is an $\MPtwo$ formula. 
        \item If $\varphi(x)$ is an $\MPtwo$ formula, then so is $\neg\varphi(x)$.
        \item If $\varphi_1(x)$ and $\varphi_2(x)$ are $\MPtwo$ formulas, then so is $\varphi_1(x) \land \varphi_2(x)$.
        \item If $\set{\varphi_t(x)}_{t \in \intsinterval{k}}$ is a set of $\MPtwo$ formulas and
        $\set{\epsilon_t(x, y)}_{t \in \intsinterval{k}}$ is a set of \emph{guard atoms},
        each of form $E(x, y)$, $E(y, x)$, or $\top$, then 
        \begin{equation*}
            \sum_{t \in \intsinterval{k}} \lambda_t \cdot \presby{\epsilon_t(x,y)\land\varphi_t(y)}
            \ \circledast\ \delta
        \end{equation*}
        is also an $\MPtwo$ formula.
        $\set{\epsilon_t(x, y)}_{t \in \intsinterval{k}}$ are the \emph{guards} of the formula.  Consistent with the restriction we announced on the logic, we consider the result as a formula with free variable $x$: if all $\epsilon_i$ are $\top$ it returns either every vertex or no vertex.
    \end{itemize}
    The semantics of the Boolean connectives is as usual, while the semantics of the Presburger quantifiers is given by Definition~\ref{def:presburger}.
\end{definition}
An $\MPtwo$ formula $\varphi(x)$ is an $n$-formula if its vocabulary consists of $n$ unary predicates.
We use abbreviations $\lor$ and $\to$ as usual.
Note that the guarded universal quantifier
$\forall y\ E(x, y) \to \varphi(y)$ can be expressed as
$\left(\presby{E(x, y) \land \neg\varphi(y)} = 0\right)$,
and the guarded existential quantifier
$\exists y\ E(x, y) \land \varphi(y)$ can be expressed as
$\left(\presby{E(x, y) \land \varphi(y)} \ge 1\right)$.

\begin{definition}
    For an $\MPtwo$ formula $\varphi(x)$,
    for a graph $\cG$ and vertex $v$ in $\cG$,
    we define the valuation of $\varphi(x)$ on $\cG, v$, denoted by $\eval{\varphi(v)}_\cG$ as follows:
    \begin{equation*}
        \eval{\varphi(v)}_\cG\ :=\ 
        \begin{cases}
            1, &\text{if $\cG \models \varphi(v)$} \\
            0, &\text{otherwise}.
        \end{cases}
    \end{equation*}
\end{definition}
When the graph $\cG$ is clear from the context,
we omit it and simply write $\eval{\varphi(v)}$.

The logic ``\emph{local} $\MPtwo$'' is obtained by excluding $\top$ as a guard.
We say a $\MPtwo$ formula is \emph{outgoing-only} if there are no guarded atom of form $E(y, x)$.
Analogously to what we did with GNNs, we use $\cL$ and $\cO$ to indicate the ``local'' and ''outgoing-only'' fragment of $\MPtwo$, respectively.

The logic $\MPtwo$ combines Presburger arithmetic and quantification over the model.
Thus one might worry that it has an undecidable satisfiability problem. And indeed, we will show this: see Theorem~\ref{thm:global_mptwo_undecidable}.
An idea to gain decidability is to impose that the quantification is \emph{guarded} -- again, the underlying idea is from~\cite{localpresburgerbartosztony}.

The logic $\blMPtwo$ is contained in the following logic, defined in~\cite{localpresburgerbartosztony}:
\begin{definition}
    The syntax of the \emph{guarded fragment of two-variable logic with Presburger quantifiers} ($\GPtwo$) over colored graph vocabulary $\tau$ starts with arbitrary atoms over the vocabulary, with the usual connective closure and the following rules for quantifiers.
    \begin{itemize}
        \item If $\varphi(x)$ is a $\GPtwo$ formula,
        then so are $\forall x\ \epsilon(x) \to \varphi(x)$ and $\exists x\ \epsilon(x) \land \varphi(x)$, where $\epsilon$ is either $U(x)$ for some unary predicate $U \in \tau$ or $x=x$.
        \item If $\varphi(x, y)$ is a $\GPtwo$ formula,
        then so are $\forall x\ \epsilon(x, y) \to \varphi(x, y)$ and $\exists x\ \epsilon(x, y) \land \varphi(x, y)$, where $\epsilon(x, y)$ is either $E(x, y)$ or $E(y, x)$.
        \item If $\set{\varphi_t(x, y)}_{t \in \intsinterval{k}}$ is a set of $\GPtwo$ formulas and
        $\set{\epsilon_t(x, y)}_{t \in \intsinterval{k}}$ is a set of atoms,
        each of form $E(x, y)$ or $E(y, x)$,
        then 
        \begin{equation*}
            \sum_{t \in \intsinterval{k}} \lambda_t \cdot \presby{\epsilon_t(x,y)\land\varphi_t(x, y)}
            \ \circledast\ \delta
        \end{equation*}
        is also a $\GPtwo$ formula.
    \end{itemize}
\end{definition}
The main difference between the logic $\blMPtwo$ and the logic above is that the former is ``modal'', restricting to one-variable formulas, and allowing two variables only in the guards. While in the logic above we can build up more interesting two variable formulas, for example conjoining two guards. 

We will make use of the following prior decidability result:

\begin{theorem}[\cite{localpresburgerbartosztony}, Theorem 10] \label{thm:gp2_decidabel}
    The finite satisfiability problem of $\GPtwo$ is decidable.
\end{theorem}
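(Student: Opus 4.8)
The plan is to prove decidability by reducing finite satisfiability of a $\GPtwo$ sentence to the feasibility of an existential Presburger formula, following the \emph{type‑counting} methodology used for two‑variable logic with counting and its guarded variants. First I would put the input sentence into a normal form: introduce a fresh unary predicate $P_\psi$ for every subformula $\psi(x)$ that is (equivalent to) a Presburger quantifier, add the axioms $\forall x\,(P_\psi(x) \leftrightarrow \psi(x))$, and propagate these so that one is left with (i) a ``base'' guarded two‑variable sentence over the extended vocabulary whose quantifiers are only the ordinary guarded $\forall/\exists$, and (ii) a finite set of ``counting axioms'' of the shape $\forall x\,\bigl(P_\psi(x) \leftrightarrow \sum_{t}\lambda_t\cdot\presby{\epsilon_t(x,y)\land\varphi_t(x,y)}\ \circledast\ \delta\bigr)$ in which each $\varphi_t(x,y)$ is a Boolean combination of atoms and the new predicates. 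The normalization also flattens nested Presburger quantifiers; here one must be careful that $\GPtwo$ — unlike the modal $\blMPtwo$ — permits genuine two‑variable subformulas inside counting, so that $2$‑types, not just $1$‑types, are the relevant objects.

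Next I would describe a finite structure by a ``type profile''. Fix the set of $1$‑types $\pi$ over the extended vocabulary and the set of admissible ``edge $2$‑types'' $\mu$ (descriptions of an ordered $E$‑connected pair together with the $1$‑types of its endpoints). For an element $a$ of $1$‑type $\pi$, the only information the logic observes beyond $\pi$ is, for each $\mu$ compatible with $\pi$, the number of $E$‑neighbours of $a$ realizing $\mu$, and the counting axioms say exactly that these neighbour counts satisfy the fixed linear (in)equalities. So I would introduce variables $z_\pi$ (number of elements of $1$‑type $\pi$) together with variables recording, for each $\pi$, how many elements carry each admissible degree vector; since only the truth values of the finitely many atomic constraints $\sum_t\lambda_t n_t \circledast \delta$ matter, one can either quantize degree vectors into boundedly many classes or simply keep the $n_t$ as Presburger variables (harmless, as the target is Presburger anyway). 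Then I would assert as linear/Presburger constraints: every element's profile satisfies its counting axioms; the base guarded sentence holds (this forbids certain $2$‑types globally and demands a witness of the appropriate type for each guarded $\exists$, both linear in the $z_\pi$); an edge supply/demand balance condition stating that for all $\pi,\pi'$ the number of $\mu$‑labelled edges emanating from $\pi$‑elements equals the number of $\mu^{-1}$‑labelled edges emanating from $\pi'$‑elements; and that the universe is nonempty. Decidability of Presburger arithmetic then decides feasibility, and a standard graph‑realization argument turns a solution into a genuine finite model, and conversely.

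The main obstacle I expect is the edge‑balance / realizability step: a nonnegative integer solution to the counting and balance equations does not by itself yield an actual graph, since one must wire up elements so that each one receives precisely its prescribed multiset of neighbour $2$‑types. This is the familiar crux in $C^2$‑style proofs, and I would handle it via a bipartite degree‑sequence realization lemma, possibly after allowing a bounded amount of padding with extra copies of already‑present $1$‑types so that parity and one‑off obstructions (a self‑inverse edge type, or a $\pi$‑to‑$\pi$ type) can always be met. A secondary point is that the base guarded two‑variable part must be consistent with the $2$‑types used by the Presburger part — one works with a single shared set of admissible $2$‑types throughout — and that the normalization genuinely reduces everything to constraints over $1$‑type counts without escaping Presburger; guardedness is essential here, since it confines all counting to neighbourhoods and keeps profiles local. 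Assembled together, these steps give a small‑model property and hence the stated decidability (and, with more bookkeeping, an elementary complexity bound).
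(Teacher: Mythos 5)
The paper does not prove this theorem: it is quoted verbatim from \cite{localpresburgerbartosztony} (Theorem~10), so there is no ``paper's own proof'' to compare against. Your sketch is a reconstruction attempt, and its broad outline — normalize to counting axioms over fresh unary predicates, reduce to a system over $1$-type and $2$-type counts, argue realizability, and discharge feasibility via Presburger arithmetic — is the standard template for two-variable and guarded logics with counting, and is indeed in the spirit of the cited work.

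That said, a couple of the steps are glossed over in a way that would need genuine work. First, the sentence ``every element's profile satisfies its counting axioms'' cannot be asserted element-by-element inside Presburger arithmetic, since a structure has unboundedly many elements each with its own degree vector; the standard fix is to reduce to counting how many elements fall into each of \emph{finitely many} profile classes, and this requires an argument (typically via the semilinearity of the solution set of each atomic linear constraint, or via Pratt-Hartmann-style biregularity) that finitely many classes suffice. Your two proposed alternatives — ``quantize degree vectors'' or ``keep the $n_t$ as Presburger variables'' — each hide this: naively quantizing by which atomic constraints a profile satisfies discards information that the balance equations need (e.g.\ the constraint $n_1 = n_2$ admits unboundedly many distinct quantized-equal vectors with different absolute degrees, and those absolute degrees matter for edge counting), while keeping per-element variables is not a Presburger formula at all. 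Second, the realizability step you flag as the ``main obstacle'' really is where the content lives: a Gale--Ryser-style degree-sequence argument works in the one-sorted, single-relation setting, but $\GPtwo$ has multiple $2$-types and inverse guards $E(x,y)$ versus $E(y,x)$, so one needs a careful simultaneous realization across all edge $2$-types plus a padding/parity argument, and this is precisely where the guardedness of the logic is exploited in the literature. So the plan is pointed in the right direction, but as written it has two unfilled gaps — bounding the profile space and completing the realization lemma — either of which, unresolved, would sink the argument.
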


From this we easily derive the decidability of $\blMPtwo$:
\begin{corollary}\label{corollary:mp2_decidable}
    The finite satisfiability problem of $\blMPtwo$ is decidable.
\end{corollary}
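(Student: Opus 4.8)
The plan is to reduce finite satisfiability of $\blMPtwo$ to that of $\GPtwo$ and then invoke Theorem~\ref{thm:gp2_decidabel}. Concretely, I would define a translation $\varphi(x) \mapsto \varphi^*(x)$ taking each $\blMPtwo$ $n$-formula to a $\GPtwo$ formula over the same vocabulary and with the same single free variable, preserving meaning: $\cG \models \varphi(v)$ iff $\cG \models \varphi^*(v)$ for all $n$-graphs $\cG$ and vertices $v$. Since an $\blMPtwo$ formula has exactly one free variable, $\varphi$ is satisfiable over finite graphs iff the $\GPtwo$ sentence $\exists x\ (x = x) \land \varphi^*(x)$ is, and the latter is decidable by Theorem~\ref{thm:gp2_decidabel}.

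The translation is by structural induction and is essentially the identity — it formalizes the remark above that $\blMPtwo$ is a syntactic fragment of $\GPtwo$ up to trivial rewriting. For atoms, $\top$ becomes $x = x$ and $U(x)$ is already a $\GPtwo$ atom; $\neg$ and $\land$ are handled homomorphically. The one case needing care is the Presburger quantifier $\sum_t \lambda_t \cdot \presby{\epsilon_t(x,y) \land \varphi_t(y)} \circledast \delta$. Here the decisive point is that in the \emph{local} fragment every guard $\epsilon_t$ has the form $E(x,y)$ or $E(y,x)$ and never $\top$ — which is exactly the restriction $\GPtwo$ places on the guards of its own Presburger quantifiers. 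By induction each $\varphi_t$ becomes a $\GPtwo$ formula with free variable $x$; renaming $x$ to $y$ gives a $\GPtwo$ formula with free variables among $\{x,y\}$, so $\sum_t \lambda_t \cdot \presby{\epsilon_t(x,y) \land \varphi_t^*(y)} \circledast \delta$ is a legal $\GPtwo$ Presburger quantifier. The general form of Definition~\ref{def:presburger}, with arbitrary $\circledast$ and the rational-coefficient shorthand, reduces to this shape by the remarks following Definition~\ref{def:presburger}, so it suffices to treat it.

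The only genuinely delicate step — bookkeeping rather than mathematics, but the place a careless argument would slip — is staying within two variables when Presburger quantifiers nest: inside $\varphi_t(y)$ the inner Presburger quantifier must reuse $x$ as its bound variable, so ``renaming $x$ to $y$'' has to be read as the standard two-variable swap $x \leftrightarrow y$ applied throughout $\varphi_t$ and its guards; one checks that this swap again yields a well-formed $\GPtwo$ formula, using that the $\GPtwo$ grammar is symmetric in $x$ and $y$ for the binary-guard and Presburger-quantifier clauses. A routine induction then shows that $\varphi^*$ is well-formed and semantically equivalent to $\varphi$, completing the reduction. It is worth noting that the argument fails precisely at the guard $\top$, which is why it delivers decidability for the local fragment only, consistent with the undecidability of full $\MPtwo$ (Theorem~\ref{thm:global_mptwo_undecidable}).
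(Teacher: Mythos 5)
Your proof is correct and takes essentially the same approach as the paper: reduce finite satisfiability of $\varphi(x)$ to finite satisfiability of a $\GPtwo$ sentence of the form $\exists x\ \epsilon(x) \land \varphi^*(x)$ and invoke Theorem~\ref{thm:gp2_decidabel}. The paper uses a fresh unary predicate $U_{n+1}$ as the guard $\epsilon(x)$ where you use $x=x$, and it simply asserts the syntactic containment $\blMPtwo \subseteq \GPtwo$ rather than spelling out the two-variable swap under substitution that you carefully verify — both are cosmetic differences, and your argument is sound.
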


\begin{proof}
    Let $\varphi(x)$ be an $n$-$\blMPtwo$ formula and $U_{n+1}$ be a fresh unary predicate.
    We claim that $\varphi(x)$ is finitely satisfiable if and only if the $\GPtwo$ sentence
    $\psi := \exists x\ U_{n+1}(x) \land \varphi(x)$ is also finitely satisfiable.
    Then the corollary follows from the decidability of the finite satisfiability problem of $\GPtwo$ by Theorem~\ref{thm:gp2_decidabel}.
    
    If $\varphi(x)$ is finitely satisfiable by the $n$-graph $\cG$ and vertex $v \in G$,
    let $\cG'$ be the $(n+1)$-graph that extended $\cG$ with $U_{n+1} := \set{v}$.
    Then $\cG' \models U_{n+1}(v)$,
    which implies that $\cG' \models \psi$. 
    Hence $\psi$ is finitely satisfiable by $\cG'$.
    
    If $\psi$ is finitely satisfiable by the $(n+1)$-graph $\cG$,
    let $\cG'$ be the $n$-graph that restricted $\cG$ by removing $U_{n+1}$.
    By definition, there exists at least one vertex $v$ in $\cG$ such that $\cG \models U_{n+1}(v) \land \varphi(v)$,
    which implies that $\cG \models \varphi(v)$.
    Since there is no $U_{n+1}$ in $\varphi(x)$,
    $\cG' \models \varphi(v)$.
    Hence $\varphi(x)$ is finitely satisfiable by $\cG'$.
\end{proof}

\myparagraph{Notions of expressiveness for GNNs and $\MPtwo$ formulas}
Recalling that we have a node-to-Boolean semantics available for both logical formulas and GNNs (via thresholding),
we use the term \emph{$n$-specification} for either an $n$-GNN or an $n$-$\MPtwo$ formula. 

\begin{definition}
    If $S_1, S_2$ are $n$-GNNs, they are said to be \emph{equivalent} if they accept the same nodes within $n$-graphs.
    If $S_1$ is an $n$-GNN and $S_2$ a node formula in some logic,
    we say $S_1$ and $S_2$ are equivalent
    if for every $n$-graph $\cG$ and vertex $v$ in $\cG$,
    $S_1$ accepts $\tuple{\cG, v}$ if and only if $\cG, v$ satisfies $S_2$.
\end{definition}

The notions of two languages of specifications being equally expressive, or equally expressive over undirected graphs, is defined in the obvious way.

\myparagraph{Verification problems for GNNs}
We focus on two verification problems. 
The first is the most obvious analog of satisfiability for GNNs, whether it accepts some node of some graph:

\begin{definition}
    For an $n$-GNN $\cA$,
    we say that $\cA$ is \emph{satisfiable},
    if there exist an $n$-graph $\cG$ and a vertex $v$ in $\cG$, such that $\cA$ accepts $\tuple{\cG, v}$.
\end{definition}

We will also consider a variation of the problem which asks whether a GNN accepts every node of some graph:
\begin{definition}
    For an $n$-GNN $\cA$,
    we say that $\cA$ is \emph{universally satisfiable}, 
    if there exist an $n$-graph $\cG$, such that
    for every vertex $v$ in $\cG$,
    $\cA$ accepts $\tuple{\cG, v}$.
\end{definition}

Two GNNs are equivalent if they accept the same tuples. Note that, like satisfiability and unlike universal satisfiability, this does not require a quantifier alternation. For brevity we will not state results for equivalence, but \emph{it can easily be seen that both our positive and negative results on satisfiability also apply to equivalence}. 

\subsection{Summary of results} \label{subsec:summary}
We summarize our analysis of the complexity and decidability of verification problems for GNNs in Table \ref{tab:sat} and \ref{tab:univsat}. 
In cases where we have only written ``decidable', we have not computed precise bounds: and in the ``eventually constant'' case we would require more hypotheses in order to compute such bounds.

\begin{table}[h!]
    \tbl{Complexity and decidability results for the satisfiability problem for fragments of GNNs.}
    {\begin{tblr}{
        colspec={l|c|c|c|c},
        hline{2, 3, 4, 5, 6},
        cell{1}{1} = {r=2}{m},
        cell{1}{2} = {c=2}{c},
        cell{1}{4} = {c=2}{c},
        cell{3}{2} = {c=2}{c},
        cell{3}{4} = {r=2}{m},
        cell{3}{5} = {r=4}{m},
        cell{4}{2} = {c=2}{c},
        cell{5}{2} = {r=2}{m},
        cell{5}{3} = {r=2}{m},
        cell{5}{4} = {r=2}{m},
        }
        & Local ($\cL$) & & Global \\
        & Outgoing-only ($\cO$) & Bidirectional & Outgoing-only ($\cO$) & Bidirectional \\
        $\trrelu$ &
        \makecell[c]{$\pspace$-complete \\(Theorem~\ref{thm:pspace})} & &
        Open &
        \makecell[c]{Undecidable \\(Theorem~\ref{thm:global_gnn_undecidable})} \\
        Eventually constant ($\cC$) &
        \makecell[c]{Decidable \\(Theorem~\ref{thm:local_gnn_decidable})} \\
        $\relu$ &
        \makecell[c]{$\nexp$-complete \\(Theorem~\ref{thm:outputonlydecidability})} &
        \makecell[c]{Undecidable \\(Theorem~\ref{thm:hilbert_to_blrelugnn})} &
        \makecell[c]{Undecidable \\(Theorem~\ref{thm:hilbert_to_ogrelugnn})} \\
        Piecewise linear ($\pw$)
    \end{tblr}} \label{tab:sat}
\end{table}

\begin{table}[h!]
    \tbl{Decidability results for the universal satisfiability problem for fragments of GNNs.}
    {\begin{tblr}{
        colspec={l|c|c|c|c},
        hline{2, 3, 4, 5, 6},
        cell{1}{1} = {r=2}{m},
        cell{1}{2} = {c=2}{c},
        cell{1}{4} = {c=2}{c},
        cell{3}{2} = {r=2, c=2}{c},
        cell{3}{4} = {r=2}{m},
        cell{3}{5} = {r=4}{m},
        cell{5}{2} = {r=2}{m},
        cell{5}{3} = {r=2}{m},
        cell{5}{4} = {r=2}{m},
        }
        & Local ($\cL$) & & Global \\
        & Outgoing-only ($\cO$) & Bidirectional & Outgoing-only ($\cO$) & Bidirectional \\
        $\trrelu$ &
        \makecell[c]{Decidable \\(Theorem~\ref{thm:local_gnn_universal_decidable})} & &
        Open &
        \makecell[c]{Undecidable \\(Theorem~\ref{thm:global_gnn_universal_undecidable})} \\
        Eventually constant ($\cC$) \\
        $\relu$ &
        Open &
        \makecell[c]{Undecidable \\(Theorem~\ref{thm:blgnn_unbounded_undec})} &
        \makecell[c]{Undecidable \\(Theorem~\ref{thm:oggnn_unbounded_undec})} \\
        Piecewise linear ($\pw$)
    \end{tblr}} \label{tab:univsat}
\end{table}


\section{Characterization and decidability of GNNs with eventually constant activation functions} \label{sec:eventually_constant}

In this section, we only consider GNNs with eventually constant activations.
In Section~\ref{subsec:spectrum}, we establish a key tool to analyzing these GNNs: we show that the set of possible activation values is finite, and one can compute an overapproximation of this set.
We use this for two purposes. First we give a decidability result for GNNs with eventually constant activations and only local aggregation, and then we show that even with global aggregation we get an equivalence of the GNNs in expressiveness with a logic.

In Section~\ref{subsec:global_undecidable}, we show that the finite satisfiability of $\bgMPtwo$ is undecidable. Using the expressiveness characterization, this will imply that satisfiability problems for global GNNs are undecidable. These results were presented for GNNs and logics on directed graphs.

\subsection{Decidability of satisfiability problems for GNNs with eventually constant functions, via logic}\label{subsec:spectrum}

We now come to one of the crucial definitions in the paper, the spectrum of a GNN.

\begin{definition}
    For an $n$-GNN $\cA$ and $0 \le \ell \le L$,
    the \emph{$\ell$-spectrum} of $\cA$, denoted by $\spectrum{\ell}$,
    is the set $\setc{\feat{\ell}(v)}{\text{for every $n$-graph $\cG$ and vertex $v$ in $\cG$}}$.
\end{definition}

That is, the $\ell$-spectrum is the range of the features computed at layer $\ell$, as we range over all input graphs and vertices.
We show that, for a GNN with eventually constant activations, the spectrum is actually finite, and a finite superset is computable:

\begin{theorem} \label{thm:computespectrum}
    For every $\bgcGNN$ $\cA$ and $0 \le \ell \le L$, the $\ell$-spectrum of $\cA$ is finite. 
    We can compute a finite superset of the $\ell$-spectrum from the specification of $\cA$.
\end{theorem}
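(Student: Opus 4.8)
I would prove this by induction on $\ell$, showing simultaneously that (i) $\spectrum{\ell}$ is finite and (ii) a finite superset $\widehat{S}^{(\ell)}$ is computable. The base case $\ell=0$ is immediate: by definition $\feat{0}(v)$ is a $0/1$ vector in $\bbQ^n$, so $\spectrum{0} \subseteq \{0,1\}^n$, which is finite and computable.

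\textbf{The inductive step.} Suppose $\spectrum{\ell-1}$ is finite with a computable finite superset $\widehat{S}^{(\ell-1)}$. The layer-$\ell$ feature of a vertex $v$ is obtained by applying $\act{\ell}$ entrywise to
\[
\coefC{\ell}\,\feat{\ell-1}(v) + \coefA{\ell}{\out}\!\!\sum_{u\in\nbr{\out}(v)}\!\!\feat{\ell-1}(u) + \coefA{\ell}{\inc}\!\!\sum_{u\in\nbr{\inc}(v)}\!\!\feat{\ell-1}(u) + \coefR{\ell}\sum_{u\in V}\feat{\ell-1}(u) + \coefb{\ell}.
\]
The key point is that the three summations range over vectors drawn from the finite set $\spectrum{\ell-1}$, so each sum is a nonnegative-integer combination $\sum_{\bfw\in\widehat{S}^{(\ell-1)}} m_{\bfw}\,\bfw$ with $m_{\bfw}\in\bbN$; a priori the multiplicities $m_{\bfw}$ are unbounded, so the pre-activation value ranges over an infinite set. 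This is where eventual constancy saves us: $\act{\ell}$ is constant on $(-\infty, t_\leftt]$ and on $[t_\rightt,\infty)$. The argument of $\act{\ell}$ in coordinate $i$ is an affine function of the $m_{\bfw}$'s with \emph{integer} structure in the sense that, once the denominators are cleared, moving any $m_{\bfw}$ past a certain finite bound forces the coordinate argument below $t_\leftt$ or above $t_\rightt$ (this requires a small case analysis on the sign of the relevant coefficient, and handling of coordinates where the coefficient is zero, in which case that $m_{\bfw}$ is irrelevant to coordinate $i$). Hence for computing $\act{\ell}$ applied coordinatewise, it suffices to let each $m_{\bfw}$ range over a bounded, computable interval of $\bbN$ — except we must be careful that the bound for making one coordinate saturate might conflict with another coordinate. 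The clean way around this: observe that for each coordinate $i$ there is a computable threshold $M_i$ such that if $m_{\bfw}>M_i$ and the coefficient of $m_{\bfw}$ in coordinate $i$ is positive (resp.\ negative), the pre-activation in coordinate $i$ is already $\ge t_\rightt$ (resp.\ $\le t_\leftt$) regardless of the other multiplicities; so the \emph{value} $\act{\ell}$ of that coordinate depends only on $\min(m_{\bfw},M)$ for a single global computable $M=\max_i M_i$ together with the signs. Therefore the set of possible post-activation vectors is the image under $\act{\ell}$ (applied coordinatewise) of a finite, explicitly enumerable set of pre-activation vectors, giving finiteness; and since $\act{\ell}$ is assumed computable, we can actually evaluate it on each, yielding a computable finite superset $\widehat{S}^{(\ell)}$.

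\textbf{The main obstacle.} The delicate part is making the "bounded multiplicities suffice" argument uniform across all coordinates at once, and across all activation coordinates and all choices of which $m_{\bfw}$ to saturate — i.e.\ extracting a single computable bound $M$ and a finite enumeration of pre-activation vectors whose image under $\act{\ell}$ provably contains $\spectrum{\ell}$. One has to argue that truncating every multiplicity at $M$ does not change the coordinatewise value of $\act{\ell}$: for a coordinate where the coefficient of $\bfw$ is nonzero, once $m_{\bfw}\ge M$ the coordinate is already saturated (so further increase doesn't change the output of $\act{\ell}$ there), while if the coefficient is zero, $m_{\bfw}$ plays no role in that coordinate at all — but one must also confirm this truncation is simultaneously harmless in every coordinate, which follows because "saturated" is a monotone condition in each $m_{\bfw}$ along the relevant sign direction. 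A minor additional wrinkle is handling the local / outgoing-only restrictions uniformly — but these only delete summands, so they are strictly easier and need no special treatment. I expect no genuine difficulty beyond bookkeeping; the conceptual content is entirely in the interaction of eventual constancy with integer-weighted sums over a finite spectrum.
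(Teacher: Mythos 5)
Your overall strategy is sound --- induction on layers, with eventual constancy cutting the relevant pre-activations down to a finite set --- but the key inference in your inductive step fails. You claim that for each coordinate $i$ there is a computable $M_i$ such that, whenever a multiplicity $m_j$ exceeds $M_i$ and its coefficient $c_{i,j}$ in coordinate $i$ is positive (resp.\ negative), coordinate $i$ of the pre-activation is already $\ge t_\rightt$ (resp.\ $\le t_\leftt$) ``regardless of the other multiplicities.'' This is false whenever the coefficients $c_{i,j}$ within a single coordinate have mixed signs. Concretely, suppose coordinate~$1$ of the pre-activation is $m_1 - m_2$ and the thresholds are $t_\leftt = 0$, $t_\rightt = 1$: for any choice of $M$, the pair $(m_1,m_2)=(M+1,M)$ gives pre-activation $1$, while truncating both multiplicities to $M$ gives pre-activation $0$, so the post-activation changes. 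Monotonicity of the pre-activation in each $m_j$ does not imply that a large $m_j$ forces saturation, because another multiplicity with an opposite-sign coefficient can grow in tandem and keep the affine form pinned at an interior value forever. In fact, showing that bounded multiplicities always suffice would require a Carath\'eodory-style small-solution theorem for integer linear systems --- the result the paper invokes only later (Lemma~\ref{lemma:ilp}), for the exponential tree model property and the $\pspace$ bound --- which is considerably more machinery than finiteness should cost.

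The paper sidesteps multiplicities altogether. It observes that $\capa{\ell}$ times the pre-activation at layer $\ell$ is always an integer vector $\bfk$, where $\capa{\ell}$ is a common denominator computable from $\spectrumover{\ell-1}$ and the layer-$\ell$ coefficients, and that $\act{\ell}(\bfk/\capa{\ell})$ depends only on $\bfk$ clipped coordinatewise to the bounded integer box $\intinterval{\capa{\ell}\tl{\ell}}{\capa{\ell}\tr{\ell}}^{\gnndim{\ell}}$. The overapproximation $\spectrumover{\ell}$ is simply the image of that box under $\act{\ell}$; containment is immediate (Lemma~\ref{lemma:syntacticoverapprox}) because clipping before applying an eventually constant function does not change the output, and no claim about which pre-activations are actually realizable by multiplicity vectors is ever required. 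That is the missing move in your proof: parametrize by the clipped scaled pre-activation value, not by the node multiplicities that produced it.
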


We give some intuition for the proof. Our effective overapproximation of the spectrum will simulate the computation of the GNN, and will be defined inductively on the layers.
Recall that a GNN is given by
dimensions $\set{\gnndim{\ell}}_{\ell \in \intinterval{0}{L}}$,
activation functions $\set{\act{\ell}}_{\ell \in \intsinterval{L}}$,
coefficient matrices for transforming the prior node value $\set{\coefC{\ell}}_{\ell \in \intsinterval{L}}$,
coefficient matrices for local aggregation $\set{\coefA{\ell}{x}}_{\substack{
            {\ell \in \intsinterval{L}} \\
            x \in \set{\out, \inc}}}$,
coefficient matrices for global readout $\set{\coefR{\ell}}_{\ell \in \intsinterval{L}}$,
and bias vectors $\set{\coefb{\ell}}_{\ell \in \intsinterval{L}}$.

We now define our overapproximations inductively as follows:
\begin{definition}
    For every $\bgcGNN$ $\cA$, for $0 \le \ell \le L$, the set $\spectrumover{\ell}$ is defined as follows:
    \begin{equation*}
        \begin{aligned}
            \spectrumover{0}\ :=\ &\set{0, 1}^{\gnndim{0}} \\
            \spectrumover{\ell}\ :=\ &
            \setc{
                \actp{\ell}{
                    \dfrac{\bfk}{\capa{\ell}}
                }
            }{
                \bfk \in \intinterval{\capa{\ell} \cdot \tl{\ell}}{\capa{\ell} \cdot \tr{\ell}}^{\gnndim{\ell}}
            },
        \end{aligned}
    \end{equation*}
    where
    $\capa{\ell}$ is the product of the least common denominator of elements in $\spectrumover{\ell-1}$ and the least common denominator of coefficients in $\coefC{\ell}$, $\coefA{\ell}{\out}$, $\coefA{\ell}\inc$, $\coefR{\ell}$, and $\coefb{\ell}$. 
\end{definition}

We show that the set $\spectrumover{\ell}$ overapproximates the $\ell$-spectrum:

\begin{lemma}
\label{lemma:syntacticoverapprox}
    For every $n$-$\bgcGNN$ $\cA$ and $0 \le \ell \le L$,
    for every $n$-graph $\cG$ and vertex $v$ in $\cG$,
    $\feat{\ell}(v) \in \spectrumover{\ell}$.
\end{lemma}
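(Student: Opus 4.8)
The statement is an inductive claim over layers $\ell$, so the natural approach is induction on $\ell$, with the base case $\ell = 0$ handled directly and the inductive step showing that if $\feat{\ell-1}(v) \in \spectrumover{\ell-1}$ for all graphs and vertices, then $\feat{\ell}(v) \in \spectrumover{\ell}$.

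For the base case $\ell = 0$: by definition $\feat{0}_{\cG,i}(v)$ is $1$ if $v \in U_i$ and $0$ otherwise, so $\feat{0}(v) \in \set{0,1}^{\gnndim{0}} = \spectrumover{0}$, which closes the base case immediately.

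For the inductive step, fix an $n$-graph $\cG$, a vertex $v$, and suppose $\feat{\ell-1}(u) \in \spectrumover{\ell-1}$ for every vertex $u$ (this is where I use the induction hypothesis, applied to $\cG$ itself, for all vertices including neighbors of $v$ and all vertices summed over in the global readout). Write $\bfz := \coefC{\ell} \feat{\ell-1}(v) + \sum_{x} \coefA{\ell}{x} \sum_{u \in \nbr{x}(v)} \feat{\ell-1}(u) + \coefR{\ell} \sum_{u \in V} \feat{\ell-1}(u) + \coefb{\ell}$ for the pre-activation vector, so that $\feat{\ell}(v) = \actp{\ell}{\bfz}$. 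The first key observation is that $\bfz$ is an integer combination, with denominators controlled, of values all lying in $\spectrumover{\ell-1}$: each entry of each $\feat{\ell-1}(\cdot)$ has denominator dividing the least common denominator $d_{\ell-1}$ of $\spectrumover{\ell-1}$, and each matrix/bias coefficient has denominator dividing the least common denominator $c_\ell$ of the coefficients of layer $\ell$; hence every entry of $\bfz$ is a rational with denominator dividing $d_{\ell-1} \cdot c_\ell = \capa{\ell}$. Writing $\bfz = \bfk / \capa{\ell}$, this means $\bfk \in \bbZ^{\gnndim{\ell}}$. The second observation uses the eventually-constant property: for each coordinate $i$, if $\bfz_i \le \tl{\ell}$ then $\act{\ell}(\bfz_i) = \act{\ell}(\tl{\ell}) = \act{\ell}(\capa{\ell}\tl{\ell}/\capa{\ell})$, and if $\bfz_i \ge \tr{\ell}$ then similarly $\act{\ell}(\bfz_i) = \act{\ell}(\capa{\ell}\tr{\ell}/\capa{\ell})$; so we may replace $\bfk_i$ by its clamp into the interval $\intinterval{\capa{\ell}\tl{\ell}}{\capa{\ell}\tr{\ell}}$ without changing $\act{\ell}(\bfz_i)$. (Here one should note $\capa{\ell}\tl{\ell}$ and $\capa{\ell}\tr{\ell}$ are integers, or round them inward, and that the thresholds are rational; a small comment that one may assume WLOG the thresholds are such that clamping lands on integers, or simply take floors/ceilings appropriately, suffices.) After clamping coordinatewise we obtain $\bfk' \in \intinterval{\capa{\ell}\tl{\ell}}{\capa{\ell}\tr{\ell}}^{\gnndim{\ell}}$ with $\actp{\ell}{\bfk'/\capa{\ell}} = \actp{\ell}{\bfz} = \feat{\ell}(v)$, and by definition the left-hand side lies in $\spectrumover{\ell}$, completing the induction.

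The main obstacle — really the only point needing care rather than bookkeeping — is the denominator-control argument: one must verify that summing over neighbors and over all vertices of $V$ (an unbounded number of terms) does not blow up the denominator. This is fine because a sum of rationals each with denominator dividing $\capa{\ell}$ still has denominator dividing $\capa{\ell}$, regardless of how many summands there are; the magnitude of the numerator is unbounded, but that is exactly what the clamping step absorbs. A secondary subtlety is the interaction between rational thresholds $\tl{\ell},\tr{\ell}$ and the requirement that $\capa{\ell}\tl{\ell},\capa{\ell}\tr{\ell}$ be the integer endpoints of the range of $\bfk$; the clean fix is to observe that enlarging the interval to $\intinterval{\floor{\capa{\ell}\tl{\ell}}}{\ceil{\capa{\ell}\tr{\ell}}}$ (or noting $\capa{\ell}$ can be taken to also clear the denominators of the thresholds) only makes $\spectrumover{\ell}$ larger, which is harmless since we only need an overapproximation. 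Finiteness of $\spectrumover{\ell}$, and hence Theorem~\ref{thm:computespectrum} together with computability, then follows since each $\spectrumover{\ell}$ is indexed by a finite integer box and $\act{\ell}$ is computable.
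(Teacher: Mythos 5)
Your proof is correct and follows essentially the same induction-on-layers argument as the paper: both express the pre-activation value as $\bfk/\capa{\ell}$, observe that $\bfk$ is an integer vector because $\capa{\ell}$ clears all relevant denominators, and then clamp $\bfk$ coordinatewise into $\intinterval{\capa{\ell}\tl{\ell}}{\capa{\ell}\tr{\ell}}$ using eventual constancy. Your side remark about $\capa{\ell}\tl{\ell}$ and $\capa{\ell}\tr{\ell}$ possibly not being integers flags a genuine small gap that the paper's own proof glosses over (its $\capa{\ell}$ does not explicitly clear the thresholds' denominators, yet the interval notation requires integer endpoints); either of your proposed fixes is fine since only an overapproximation is needed.
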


It is quite straightforward to see that every element of the spectrum is captured.
It is an overapproximation because different integers that we sum in an inductive step
may not be realized in the same graph.

\begin{proof}
    The proof is by induction on layers.
    The base case $\ell = 0$ is straightforward.
    For the inductive step $1 \le \ell \le L$,
    for every vertex $v$ in $\cG$, 
    by the induction hypothesis,
    there exists $\bfs(v) \in \spectrumover{\ell-1}$ such that
    ${\feat{\ell-1}(v) = \bfs(v)}$. 
    Let $\bfk(v)$ be defined as:
    \begin{equation*}
        \bfk(v)\ :=\ 
        \capa{\ell} \cdot \left(
            \coefC{\ell} \bfs(v) +  
            \sum_{x \in\set{\out, \inc}}
            \left(
                \coefA{\ell}{x} \sum_{u \in \nbr{x}(v)} \bfs(u)
            \right) +
            \coefR{\ell} \sum_{u \in V} \bfs(u) +
            \coefb{\ell}
        \right).
    \end{equation*}
    It is clear that $\feat{\ell}(v)$ can be rewritten as
    $
        \feat{\ell}(v) = 
        \actp{\ell}{
            \dfrac{\bfk(v)}{\capa{\ell}}
        }
    $.
    Recall that $\capa{\ell}$ is the product of the least common denominator of elements in
    $\spectrumover{\ell-1}$ and the least common denominator of coefficients in $\coefC{\ell}$, $\coefA{\ell}{\out}$, $\coefA{\ell}\inc$, $\coefR{\ell}$, and $\coefb{\ell}$. 
    Therefore, $\bfk(v)$ is an integer vector.
    
    Since $\act{\ell}$ is eventually constant with right threshold $\tr{\ell}$.
    We can obtain another vector $\bfk'(v) \in \intinterval{\capa{\ell} \cdot \tl{\ell}}{\capa{\ell} \cdot \tr{\ell}}^{\gnndim{\ell}}$ satisfying that
    $
        \actp{\ell}{\dfrac{\bfk'(v)}{\capa{\ell}}} = \actp{\ell}{\dfrac{\bfk(v)}{\capa{\ell}}}
    $
    by substituting the elements which are greater than $\capa{\ell} \cdot \tr{\ell}$ by $\capa{\ell} \cdot \tr{\ell}$, and the elements that are less than $\capa{\ell} \cdot \tl{\ell}$ by $\capa{\ell} \cdot \tl{\ell}$.
    By the definition of $\spectrumover{\ell}$,
    $\actp{\ell}{\dfrac{\bfk'(v)}{\capa{\ell}}} \in \spectrumover{\ell}$,
    which implies that $\feat{\ell}(v) \in \spectrumover{\ell}$.
\end{proof}
 
We can show by induction on the number of the layers that the set is finite -- regardless of computability of the activation functions!

\begin{lemma}
    \label{lem:overapproxfinite}
    For every $n$-$\bgcGNN$ $\cA$ and $0 \le \ell \le L$,
    $\spectrumover{\ell}$ has finite size.
    $\spectrumover{\ell}$ can be computed if activations in $\cA$ are computable.   
\end{lemma}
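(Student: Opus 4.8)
The plan is to prove both claims simultaneously by induction on the layer index $\ell$, following the inductive shape of the definition of $\spectrumover{\ell}$. For the base case $\ell = 0$ there is nothing to do: $\spectrumover{0} = \set{0,1}^{\gnndim{0}}$ has exactly $2^{\gnndim{0}}$ elements and is given explicitly.

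For the inductive step I would fix $1 \le \ell \le L$ and assume $\spectrumover{\ell-1}$ is finite. The key observation is that this finiteness is exactly what makes ``the least common denominator of elements in $\spectrumover{\ell-1}$'' a well-defined positive integer $D_1$ --- this is the one place where the induction hypothesis is genuinely needed, rather than merely propagated. Combined with the least common denominator $D_2$ of the finitely many fixed rational entries of $\coefC{\ell}$, $\coefA{\ell}{\out}$, $\coefA{\ell}{\inc}$, $\coefR{\ell}$, and $\coefb{\ell}$, this gives $\capa{\ell} = D_1 D_2 \in \bbN^+$. Since $\tl{\ell} < \tr{\ell}$ are fixed rationals, the index box $\intinterval{\capa{\ell}\cdot\tl{\ell}}{\capa{\ell}\cdot\tr{\ell}}^{\gnndim{\ell}}$ is a finite set of integer vectors, of size at most $(\capa{\ell}(\tr{\ell}-\tl{\ell})+1)^{\gnndim{\ell}}$; and $\spectrumover{\ell}$ is by definition the image of this finite box under the single-valued map $\bfk \mapsto \act{\ell}(\bfk/\capa{\ell})$, hence finite. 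I would stress here that this cardinality bound never mentions any property of $\act{\ell}$ beyond its being a function, so the finiteness half of the lemma holds regardless of whether $\act{\ell}$ is computable.

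For the computability half I would track that every ingredient of the construction is effective once the activations are: by the induction hypothesis $\spectrumover{\ell-1}$ is produced as an explicit finite list of rationals, so $D_1$ is computable; $D_2$ is read off the specification of $\cA$; therefore $\capa{\ell}$ and the integer interval $\intinterval{\capa{\ell}\cdot\tl{\ell}}{\capa{\ell}\cdot\tr{\ell}}$ are computable; and enumerating the finitely many $\bfk$ in the box and applying the computable $\act{\ell}$ entrywise to $\bfk/\capa{\ell}$ produces $\spectrumover{\ell}$ explicitly, which closes the induction. I do not expect a real obstacle here: the only subtlety to handle with care is the well-definedness point above (together with the minor bookkeeping that $\capa{\ell}\cdot\tl{\ell}$ and $\capa{\ell}\cdot\tr{\ell}$ should be read with the appropriate ceiling/floor so that the interval endpoints are integers), so the ``hard part'' is really just invoking the induction hypothesis at the right moment.
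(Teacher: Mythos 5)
Your proof is correct and takes essentially the same approach as the paper: induction on layers, using the induction hypothesis to conclude that the least common denominator of $\spectrumover{\ell-1}$ (and hence $\capa{\ell}$) is well-defined and finite, and then bounding $\spectrumover{\ell}$ as the image of a finite integer box under $\act{\ell}$. Your side remarks — the explicit cardinality bound with the $\gnndim{\ell}$ exponent, the observation that finiteness needs no computability hypothesis, and the floor/ceiling care at the interval endpoints — are all sound and slightly sharper than the paper's prose, but do not change the argument.
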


In the inductive step, we have a finite set of rationals, thus some fixed precision. We take some integer linear combinations and we will obtain an infinite set of values, but only finitely many between the left and right thresholds of the eventually constant activations. Thus when we apply the activation functions to these values, we will get a finite set of rational values -- since the activation functions map rationals to rationals.

\begin{proof}
    We prove the lemma by induction on layers.
    The base case $\ell = 0$ is trivial.
    For the induction step $1 \le \ell \le L$,
    by the induction hypothesis, the size of the set $\spectrumover{\ell-1}$ is finite.
    Thus, the least common denominator of elements in $\spectrumover{\ell-1}$ is well-defined and finite,
    which implies that $\capa{\ell}$ is also finite.
    Because the size of the set $\spectrumover{\ell}$ is bounded by $\capa{\ell} \cdot \left(\tr{\ell}-\tl{\ell}\right) + 1$,
    the size of $\spectrumover{\ell}$ is finite.

    It is obvious that we can compute the set $\spectrumover{\ell}$ recursively, assuming the computability of activations in $\cA$.
\end{proof}

\begin{remark}
    The restriction to rational coefficients is crucial in the argument.
    Consider the following $1$-layer $1$-$\bltrreluGNN$.
    The dimensions are $\gnndim{0} = \gnndim{1} = 1$;
    the coefficient matrix $\coefC{1}$ is a zero matrix;
    $\left(\coefA{1}{\out}\right)_{1, 1} = \sqrt{2}$;
    $\left(\coefA{1}{\inc}\right)_{1, 1} = -1$;
    the bias vector $\coefb{1}$ is a zero vector.
    It is not difficult to see that its $1$-spectrum is 
    $\setc{\trrelup{\sqrt{2}t_1 - t_2}}{t_1, t_2 \in \bbN}$,
    whose size is infinite since $\sqrt{2}$ is irrational.
\end{remark}

\begin{remark}
    Even simple GNNs may have exponential size spectra. 
    For example, let $\cA_k$ be a $1$-layer $1$-$\oltrreluGNN$ defined as follows:
    the dimensions are $\gnndim{0} = \gnndim{1} = 1$;
    the coefficient matrix $\coefC{1}$ is a zero matrix;
    $\left(\coefA{1}{\out}\right)_{1,1} = 1/k$;
    the bias vector $\coefb{1}$ is a zero vector.
    By definition, its $1$-spectrum  is $\setc{t/k}{t \in \intinterval{0}{k}}$, whose size is $k+1$. But the description of $\cA_k$ is only linear in $\log k$.
\end{remark}

We now give several applications of the spectrum result.
First we can use the finiteness of the spectrum to get a characterization of the expressiveness of $\bgcGNN$ and logic:

\begin{theorem}\label{thm:gnn_to_logic}
    For every 
    $n$-$\bgcGNN$ $\cA$,
    there exists an
    $n$-$\MPtwo$ formula $\psi_\cA(x)$, effectively computable from the description of $\cA$,
    such that $\cA$ and $\psi_\cA(x)$ are equivalent.
    In the case we start with $n$-$\blcGNN$ and $n$-$\ogcGNN$, the formulas we obtain are in $n$-$\blMPtwo$ and $n$-$\ogMPtwo$, respectively.
\end{theorem}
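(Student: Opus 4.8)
The plan is to simulate the GNN computation layer by layer, using the finiteness of the spectrum (Theorem~\ref{thm:computespectrum}) to make the simulation ``symbolic''. Concretely, fix a $\bgcGNN$ $\cA$ with $L$ layers, and compute a finite superset $\spectrumover{\ell}$ of each $\ell$-spectrum via Lemma~\ref{lemma:syntacticoverapprox} and Lemma~\ref{lem:overapproxfinite}. For each layer $\ell$ and each vector $\bfs\in\spectrumover{\ell}$, I would build an $\MPtwo$ formula $\varphi^{(\ell)}_{\bfs}(x)$ that is supposed to be true at a vertex $v$ exactly when $\feat{\ell}(v)=\bfs$. The final formula $\psi_\cA(x)$ is then the disjunction of $\varphi^{(L)}_{\bfs}(x)$ over all $\bfs\in\spectrumover{L}$ whose first coordinate is $\ge 1/2$.

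\textbf{Base case and inductive step.} For $\ell=0$, the feature vector of $v$ is determined by which color predicates hold of $v$, so $\varphi^{(0)}_{\bfs}(x)$ is just a conjunction of literals $U_c(x)$ or $\neg U_c(x)$ according to the $0/1$ pattern $\bfs$. For the inductive step, recall that $\feat{\ell}(v)$ is obtained by applying $\act{\ell}$ to
\[
\coefC{\ell}\feat{\ell-1}(v)+\sum_{x\in\{\out,\inc\}}\Bigl(\coefA{\ell}{x}\sum_{u\in\nbr{x}(v)}\feat{\ell-1}(u)\Bigr)+\coefR{\ell}\sum_{u\in V}\feat{\ell-1}(u)+\coefb{\ell}.
\]
The three sums are each an integer-weighted count of neighbors (resp.\ all vertices) whose $(\ell-1)$-feature equals a given $\bfr\in\spectrumover{\ell-1}$. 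Since by induction $\feat{\ell-1}(u)=\bfr$ is equivalent to the $\MPtwo$ formula $\varphi^{(\ell-1)}_{\bfr}(y)$, each such count is exactly a term $\presby{\epsilon(x,y)\wedge\varphi^{(\ell-1)}_{\bfr}(y)}$ with $\epsilon$ being $E(x,y)$, $E(y,x)$, or $\top$ respectively --- precisely the shape of a Presburger quantifier in $\MPtwo$. Thus for each target value $\bfs\in\spectrumover{\ell}$ I can write down, coordinate by coordinate, a Boolean combination of (rational-coefficient) Presburger quantifiers asserting that the pre-activation vector lies in $(\act{\ell})^{-1}(\bfs)\cap(\text{affine image of }\spectrumover{\ell-1}\text{-combinations})$; since $\spectrumover{\ell-1}$ and $\spectrumover{\ell}$ are finite and computable, this is a finite, effective formula. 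Using the $\top$-guard only for the global readout term gives the $\blMPtwo$ formula when $\cA$ is local (no $\coefR{\ell}$), and dropping $E(y,x)$ guards gives the $\ogMPtwo$ formula when $\cA$ is outgoing-only.

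\textbf{Main obstacle.} The delicate point is handling the nonlinearity of $\act{\ell}$ and the classification threshold correctly: I cannot reason about the pre-activation value in isolation, I must ensure that, for a vertex $v$, \emph{exactly one} of the formulas $\varphi^{(\ell)}_{\bfs}(x)$ holds, and that it is the one with $\bfs=\feat{\ell}(v)$. This requires that the disjuncts be mutually exclusive and exhaustive over all vertices of all graphs --- which follows because $\{\varphi^{(\ell-1)}_{\bfr}\}_{\bfr}$ already has this property and distinct $(\ell-1)$-feature multisets among neighbors yield distinct pre-activation vectors, but distinct pre-activations can collapse to the same $\bfs$ after $\act{\ell}$, so the formula for $\bfs$ must be the disjunction over all pre-activation patterns mapping to $\bfs$. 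A secondary bookkeeping burden is that the counts must be expressed as $\presby{\cdot}$ over $\spectrumover{\ell-1}$-values rather than over raw features, and one must clear denominators (using the rational-coefficient Presburger shorthand from the second Remark) to stay within the integer semantics of Definition~\ref{def:presburger}. None of this is conceptually hard given the spectrum theorem, but it is where the proof must be written carefully.
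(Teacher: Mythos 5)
Your proposal is correct and follows essentially the same approach as the paper: an inductive construction of formulas $\varphi^{(\ell)}_{\bfs}(x)$ asserting $\feat{\ell}(v)=\bfs$ for each $\bfs$ in the (computable, finite) spectrum overapproximation, with the pre-activation expressed as a rational-coefficient Presburger quantifier counting $(\ell{-}1)$-level neighbors, and a final disjunction over accepting $\bfs$. The one point you gloss that the paper treats explicitly: the preimage $(\act{\ell})^{-1}(\bfs)$ is generally an \emph{infinite} set of pre-activation values (the counts grow with graph degree), so writing it as a finite formula requires the eventually constant thresholds $\tl{\ell},\tr{\ell}$ --- the formula uses $\ge$ (resp.\ $\le$) for the boundary cases $k=\capa{\ell}\tr{\ell}$ (resp.\ $k=\capa{\ell}\tl{\ell}$) and equality only for the interior lattice points, rather than enumerating preimages.
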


This expressiveness equivalence will be useful in getting further decidability results, as well as separations in expressiveness,
for GNNs.

The idea of the proof of the theorem is that we have only finitely many elements in the overapproximation set to worry about, so we can fix each in turn and write a formula for each. 
Our main translation is captured in the following:

\begin{lemma}\label{lem:gnn_to_logic}
    For every $n$-$\bgcGNN$ $\cA$, $0 \le \ell \le L$,
    and $\bfs \in \spectrumover{\ell}$,
    there exists an $n$-$\bgMPtwo$ formula 
    $\varphil{\ell}_\bfs(x)$,
    such that for every $n$-graph $\cG$ and vertex $v$ in $\cG$,
    $\cG \models \varphil{\ell}_\bfs(v)$ if and only if 
    $\feat{\ell}(v) = \bfs$.
    In the case we start with $n$-$\blcGNN$ and $n$-$\ogcGNN$,
    the formulas we obtain are in $n$-$\blMPtwo$ and $n$-$\ogMPtwo$, respectively.
\end{lemma}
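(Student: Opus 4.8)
The plan is to proceed by induction on the layer $\ell$, constructing for each $\bfs \in \spectrumover{\ell}$ a formula $\varphil{\ell}_\bfs(x)$ that "detects" exactly the vertices whose $\ell$-feature vector equals $\bfs$. For the base case $\ell = 0$, the feature vector $\feat{0}(v)$ is determined by which colors $v$ has, so for each $\bfs \in \set{0,1}^{\gnndim{0}} = \spectrumover{0}$ we simply take $\varphil{0}_\bfs(x) := \bigwedge_{i : \bfs_i = 1} U_i(x) \land \bigwedge_{i : \bfs_i = 0} \neg U_i(x)$, which is clearly in the local, outgoing-only fragment. For the inductive step, suppose we already have formulas $\varphil{\ell-1}_\bft(x)$ for every $\bft \in \spectrumover{\ell-1}$. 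By Lemma~\ref{lemma:syntacticoverapprox}, every vertex's $(\ell-1)$-feature lies in the finite set $\spectrumover{\ell-1}$, and the formulas $\set{\varphil{\ell-1}_\bft(x)}_{\bft \in \spectrumover{\ell-1}}$ partition the vertices. The key observation is that $\feat{\ell}(v)$ is a function of: (i) $\feat{\ell-1}(v)$ itself; (ii) for each $\bft \in \spectrumover{\ell-1}$, the count of out-neighbors $u$ with $\feat{\ell-1}(u) = \bft$; (iii) likewise for in-neighbors; and (iv) for each $\bft$, the count of \emph{all} vertices with $\feat{\ell-1}(u) = \bft$. Items (ii), (iii), (iv) are exactly what Presburger quantifiers with guards $E(x,y)$, $E(y,x)$, and $\top$ respectively can count, using $\varphil{\ell-1}_\bft(y)$ as the body.

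Concretely, I would fix a target value $\bfs \in \spectrumover{\ell}$ and characterize the tuples of counts that produce it. Writing $m_{\out,\bft}, m_{\inc,\bft}, m_{\mathrm{glob},\bft}$ for the three families of counts and $\bft_0$ for the value $\feat{\ell-1}(v)$, the defining equation is that $\act{\ell}$ applied to
\[
\coefC{\ell}\bft_0 + \sum_{\bft} \coefA{\ell}{\out} \cdot m_{\out,\bft}\, \bft + \sum_{\bft} \coefA{\ell}{\inc} \cdot m_{\inc,\bft}\, \bft + \sum_{\bft} \coefR{\ell} \cdot m_{\mathrm{glob},\bft}\, \bft + \coefb{\ell}
\]
equals $\bfs$. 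Since $\act{\ell}$ is eventually constant and (by Lemmas~\ref{lem:overapproxfinite} and~\ref{lemma:syntacticoverapprox}) the pre-activation values are rationals with bounded denominator, the set of integer-count tuples realizing $\bfs$ is a Presburger-definable set; moreover it can be taken to be a finite union of "boxes" where each count is either fixed to a specific small value or constrained to lie above some threshold (because once a count is large enough the activation saturates). Each such box is a Boolean combination of guarded Presburger conditions of the form $\presby{\epsilon_t(x,y) \land \varphil{\ell-1}_\bft(y)} \circledast \delta$, which are legal $\MPtwo$ formulas; conjoining with $\bigvee_{\bft_0} \varphil{\ell-1}_{\bft_0}(x)$-style disjuncts to fix $\feat{\ell-1}(v)$ completes the construction. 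The fragment bookkeeping is immediate: if we never use the $E(y,x)$ guard we stay outgoing-only, and if we never use the $\top$ guard (which requires $\coefR{\ell}$ to be zero, i.e.\ the GNN is local) we stay local.

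The main obstacle I anticipate is making precise the claim that "the set of count-tuples yielding $\bfs$ is definable by a Boolean combination of $\ge$-comparisons of linear forms in the counts." One has to be a little careful because the activation function $\act{\ell}$ is applied coordinatewise to a vector whose coordinates are each affine in the counts, and $\act{\ell}$ is only assumed eventually constant (not piecewise linear), so on the bounded middle region it could be arbitrary. The resolution is that on the relevant bounded region the pre-activation vector ranges over a \emph{finite} set of rationals (by the denominator bound underlying $\capa{\ell}$), so we can enumerate that finite set, test $\act{\ell}$ on each element explicitly using computability, and for each coordinate either pin the corresponding linear form to an exact value (finitely many cases) or assert it exceeds $\tr{\ell}$ or falls below $\tl{\ell}$ (two inequality cases). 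This turns the potentially messy "solve $\act{\ell}(\text{affine})=\bfs$" into a finite disjunction of conjunctions of linear equalities and inequalities over the counts, each directly expressible as an $\MPtwo$ Presburger quantifier, and it is exactly where the rational-coefficient hypothesis and the computability of $\act{\ell}$ get used. Theorem~\ref{thm:gnn_to_logic} then follows by taking $\psi_\cA(x) := \bigvee_{\bfs \in \spectrumover{L},\ \bfs_1 \ge 1/2} \varphil{L}_\bfs(x)$.
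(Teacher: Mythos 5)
Your proposal is correct and follows essentially the same route as the paper: induction on layers, express ``$\feat{\ell-1}(v) = \bft_0$'' via the inductively constructed formulas, then compute the pre-activation vector as a rational linear combination of the guarded counts, and reduce ``$\act{\ell}(\text{pre-activation}) = \bfs$'' to a finite disjunction (over the finitely many rationals $k/\capa{\ell}$ in the threshold interval) of per-coordinate equalities, with $\ge$ and $\le$ comparisons at the two boundary points to absorb the saturated regions. One small caution: your initial phrasing about ``boxes where each count is fixed or constrained above a threshold'' is not quite right as stated (what is pinned or bounded is the affine form, not the individual counts), but your ``main obstacle'' paragraph already identifies and resolves exactly this, landing on the paper's actual construction.
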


\begin{proof}
    We define an $n$-$\bgMPtwo$ formula $\varphil{\ell}_\bfs(x)$ inductively on layers.
    For the base case $\ell = 0$,
    for $1 \le i \le \gnndim{0}$,
    let $\theta_{i, 1}(x) := U_i(x)$ and $\theta_{i, 0}(x) := \neg U_i(x)$.
    For $\bfs \in \spectrumover{0}$,
    \begin{equation*}
        \varphil{0}_\bfs(x)\ :=\ \bigwedge_{i \in \intsinterval{\gnndim{0}}} \theta_{i, \bfs_i}(x).
    \end{equation*}

    For the inductive case $1 \le \ell \le L$,
    for ${\capa{\ell} \cdot \tl{\ell}} \le k \le {\capa{\ell} \cdot  \tr{\ell}}$, $\bfs' \in \spectrumover{\ell-1}$, and $1 \le i \le \gnndim{\ell}$,
    \begin{equation*}
        \begin{aligned}
            \phil{\ell}_{k, \bfs', i}(x)\ :=\ 
                \left(\coefC{\ell} \bfs' + \coefb{\ell}\right)_i
                &+
                \sum_{\substack{
                    x \in \set{\out, \inc} \\
                    \bfs'' \in \spectrumover{\ell-1} }}
                    \left(\coefA{\ell}{x} \bfs''\right)_i \cdot
                \presby{\epsilon_{x}(x, y) \land \varphil{\ell-1}_{s''}(y)} \\
                &+
                \sum_{\bfs'' \in \spectrumover{\ell-1} }
                    \left(\coefR{\ell} \bfs''\right)_i\cdot
                \presby{\varphil{\ell-1}_{\bfs''}(y)}
                \ \circledast_{k}\ \dfrac{k}{\capa{\ell}},
        \end{aligned}
    \end{equation*}
    where ${\epsilon_{\out}(x, y) := E(x, y)}$ and ${\epsilon_{\inc}(x, y) := E(y, x)}$.
    If $k = \capa{\ell} \cdot  \tl{\ell}$, then $\circledast_{k}$ is $\le$; 
    if $k = \capa{\ell} \cdot  \tr{\ell}$, then $\circledast_{k}$ is $\ge$;
    otherwise, $\circledast_{k}$ is $=$.
    Finally, for $\bfs \in \spectrumover{\ell}$,
    \begin{equation*}
        \varphil{\ell}_{\bfs}(x) \ :=\ 
        \bigvee_{\bfs' \in \spectrumover{\ell-1}}
            \left(
                \varphil{\ell-1}_{\bfs'}(x) \land
                \bigwedge_{i \in \intsinterval{\gnndim{\ell}}}
                \bigvee_{k \in \cK(\bfs_i)}
                \phil{\ell}_{k, \bfs', i}(x)
            \right),
    \end{equation*}
    where $\cK(s) := \setc{k \in \intinterval{\capa{\ell} \cdot \tl{\ell}}{\capa{\ell} \cdot \tr{\ell}}} {\actp{\ell}{\dfrac{k}{\capa{\ell}}} = s}$.
    Note that by Lemma~\ref{lem:overapproxfinite},
    the size of $\spectrumover{\ell}$ is finite.
    Thus the disjunction in the construction is over a finite set.
    
    We prove the correctness of the construction by induction on the layers.
    
    \textbf{\underline{For the base case $\ell = 0$}},
    for every $n$-graph $\cG$, vertex $v$ in $\cG$,
    and $\bfs \in \spectrumover{0}$,
    it is straightforward to check that
    $\cG \models \varphil{0}_\bfs(v)$ if and only if 
    $\feat{0}(v) = \bfs$.

    \textbf{\underline{For the induction step $1 \le \ell \le L$}},    
    for every $n$-graph $\cG$ and
    $\bfs' \in \spectrumover{\ell-1}$,
    we define $\bfwl{\ell}_{\bfs'}: V \to \bbZ$ as follows:
    \begin{equation*}
        \begin{aligned}
            \bfwl{\ell}_{\bfs'}(v)\ :=\ 
                \left(\coefC{\ell} \bfs' + \coefb{\ell}\right)
                & +
                \sum_{\substack{
                    x \in \set{\out, \inc} \\
                    \bfs'' \in \spectrumover{\ell-1} }}
                \coefA{\ell}{x} \bfs''\cdot
                \abs{\setc{u \in V}{\cG \models \epsilon_{x}(v, u)\land\varphil{\ell-1}_{\bfs''}(u)}} \\
                & +
                \sum_{\bfs'' \in \spectrumover{\ell-1} }
                    \coefR{\ell} \bfs''\cdot
                    \abs{\setc{u \in V}{\cG \models \varphil{\ell-1}_{\bfs''}(u)}}.
        \end{aligned}
    \end{equation*}
    By the semantics of Presburger quantifiers,
    for $1 \le i \le \gnndim{\ell}$,
    $\cG \models \phil{\ell}_{k, \bfs', i}(v)$ if and only if 
    $\left(\bfwl{\ell}_{\bfs'}(v)\right)_i\ \circledast_{k}\ \dfrac{k}{\capa{\ell}}$. 
    Let ${V_{s''} := \setc{u \in V}{\cG \models \varphil{\ell-1}_{\bfs''}(u)}}$.
    By the induction hypothesis,
    for every vertex $u$ in $\cG$,
    $\cG \models \varphil{\ell-1}_{\bfs''}(u)$ if and only if $\feat{\ell-1}(u) = \bfs''$.
    Hence $V_{\bfs''} = \setc{u \in V}{\feat{\ell-1}(u) = \bfs''}$, and
    $\set{V_{\bfs''}}_{\bfs'' \in \spectrumover{\ell-1}}$ form a partition of $V$.
    Thus, we can rewrite $\bfwl{\ell}_{\bfs'}(v)$ as follows. 
    \begin{equation*}
        \begin{aligned}
            \bfwl{\ell}_{\bfs'}(v) \ =\ &
                \left(\coefC{\ell} \bfs' + \coefb{\ell}\right) +
                \sum_{\substack{
                    x \in \set{\out, \inc} \\
                    \bfs'' \in \spectrumover{\ell-1} }} \!\!\!\!
                \coefA{\ell}{x} 
                \left(\sum_{u \in \nbr{x}(v) \cap V_{\bfs''}} \bfs'' \right) +
                \sum_{\bfs'' \in \spectrumover{\ell-1}} \!\!\!\!
                \coefR{\ell}
                \left(\sum_{u \in V \cap V_{\bfs''}} \bfs''\right) \\
            \ =\ &
                \left(\coefC{\ell} \bfs' + \coefb{\ell}\right) +
                \sum_{x \in \set{\out, \inc} }
                \left(
                    \coefA{\ell}{x} 
                    \sum_{u \in \nbr{x}(v)} \feat{\ell-1}(u)
                \right) +
                \coefR{\ell} \sum_{u \in V} \feat{\ell-1}(u)
        \end{aligned}
    \end{equation*}
    By the definition of features,
    we obtain that $\feat{\ell}(v) = \actp{\ell}{\bfwl{\ell}_{\feat{\ell-1}(v)}(v)}$.

    If $\cG \models \varphil{\ell}_\bfs(v)$,
    then there exists $\bfs' \in \spectrumover{\ell-1}$ such that
    $\cG \models \varphil{\ell-1}_{\bfs'}(v)$.
    By the induction hypothesis, 
    $\feat{\ell-1}(v) = \bfs'$.
    For $1 \le i \le \gnndim{\ell}$, because
    $\cG \models \bigvee_{\cK(\bfs_i)} \phil{\ell}_{n, \bfs', i}(x)$,
    there exists ${\capa{\ell} \cdot \tl{\ell}} \le k \le {\capa{\ell} \cdot  \tr{\ell}}$
    such that $\actp{\ell}{\dfrac{k}{\capa{\ell}}} = \bfs_i$
    and 
    $\cG \models \phil{\ell}_{k, \bfs', i}(v)$.
    By the semantics of Presburger quantifiers,
    $\cG \models \phil{\ell}_{k, \bfs', i}(v)$ implies that
    $\left(\bfwl{\ell}_{\bfs'}(v)\right)_i\ \circledast_{k}\ \dfrac{k}{\capa{\ell}}$.
    Recall that $\act{\ell}$ is eventually constant with the left threshold $\tl{\ell}$ and reight threshold $\tr{\ell}$ and
    $\feat{\ell}_i(v) = \actp{\ell}{\left(\bfwl{\ell}_{\bfs'}(v)\right)_i}$.
    \begin{itemize}
        \item 
        If $k = \capa{\ell} \cdot \tl{\ell}$, then $\left(\bfwl{\ell}_{\bfs'}(v)\right)_i \le \tl{\ell}$.
        Thus, $\feat{\ell}_i(v)
        = \actp{\ell}{\tl{\ell}}
        = \actp{\ell}{\dfrac{k}{\capa{\ell}}}
        = \bfs_i$.
        \item
        If $k = \capa{\ell} \cdot \tr{\ell}$, then $\left(\bfwl{\ell}_{\bfs'}(v)\right)_i \ge \tr{\ell}$.
        Thus, $\feat{\ell}_i(v)
        = \actp{\ell}{\tr{\ell}}
        = \actp{\ell}{\dfrac{k}{\capa{\ell}}}
        = \bfs_i$.
        \item
        If $\capa{\ell} \cdot \tl{\ell} \le k \le \capa{\ell} \cdot \tr{\ell}$,
        then $\left(\bfwl{\ell}_{\bfs'}(v)\right)_i = \dfrac{k}{\capa{\ell}}$,
        which implies that
        $\feat{\ell}_i(v)
        = \actp{\ell}{\dfrac{k}{\capa{\ell}}}
        = \bfs_i$.
    \end{itemize}
    Therefore $\feat{\ell}(v) = \bfs$.

    On the other hand, 
    suppose that $\feat{\ell}(v) = \bfs$.
    Let us denote $\feat{\ell-1}(v)$ by $\bfs'$.
    By definition, the features are captured by spectrum, $\bfs \in \spectrumover{\ell}$ and $\bfs' \in \spectrumover{\ell-1}$.
    By the induction hypothesis, $\cG \models \varphil{\ell-1}_{\bfs'}(v)$.
    For $1 \le i \le \gnndim{\ell}$, let
    \begin{equation*}
        k\ :=\ 
        \begin{cases}
            \capa{\ell} \cdot \tl{\ell},&
                \text{if $\left(\bfwl{\ell}_{\bfs'}(v)\right)_i \le \tl{\ell}$} \\
            \capa{\ell} \cdot  \tr{\ell},&
                \text{if $\left(\bfwl{\ell}_{\bfs'}(v)\right)_i \ge \tr{\ell}$} \\
            \capa{\ell} \cdot \left(\bfwl{\ell}_{\bfs'}(v)\right)_i,&
                \text{otherwise}.
        \end{cases}        
    \end{equation*}
    By the semantics of Presburger quantifiers,
    $\cG \models \phil{\ell}_{k, \bfs', i}(v)$.
    Furthermore,
    $\actp{\ell}{\dfrac{k}{\capa{\ell}}}
    = \actp{\ell}{\bfwl{\ell}_{\bfs'}(v)}$
    By the property of $\bfwl{\ell}_{\bfs'}(v)$,
    $\actp{\ell}{\bfwl{\ell}_{\bfs'}(v)}
    = \feat{\ell}(v)$.
    Therefore
    $\cG \models \varphil{\ell}_{\bfs}(v)$.

    If $\cA$ is local, then for $1 \le \ell \le L$, $\coefR{\ell}$ is a zero matrix.
    Thus, there are no $\top$ guard atoms in $\phil{\ell}_{k, \bfs', i}(x)$.
    Hence the formula we obtained is in $\blMPtwo$. 

    Similarly, if $\cA$ is outging-only, then for $1 \le \ell \le L$, $\coefA{\ell}{\inc}$ is a zero matrix.
    Thus, there are no $E(y, x)$ guard atoms in $\phil{\ell}_{k, \bfs', i}(x)$.
    Hence the formula we obtained is in $\ogMPtwo$. 
\end{proof}

We can now prove Theorem~\ref{thm:gnn_to_logic}.

\begin{proof}
    Let $\spectrumoverL := \setc{\bfs \in \spectrumover{L}}{\bfs_1 \ge 1/2}$ and
    $\psi_\cA(x) := \bigvee_{\bfs \in \spectrumoverL} \varphil{L}_\bfs(x)$,
    where $\varphil{L}_\bfs(x)$ is the formulas defined in Lemma~\ref{lem:gnn_to_logic}.
    
    For every $n$-graph $\cG$ and vertex $v$ in $\cG$, 
    if $\cG \models \psi_\cA(v)$,
    then there exists ${\bfs \in \spectrumoverL}$,
    such that $\cG \models \varphil{L}_\bfs(v)$.
    By Lemma~\ref{lem:gnn_to_logic}, 
    $\feat{L}(v) = s$.
    Hence ${\bfs_1 = \feat{L}_1(v) \ge 1/2}$, which implies that $\cA$ accepts $\tuple{\cG, v}$.

    On the other hand, if $\cA$ accepts $\tuple{\cG, v}$,
    by Lemma~\ref{lem:gnn_to_logic},
    $\cG \models \varphil{L}_{\feat{L}(v)}(v)$.
    By definition of acceptness of GNNs, 
    $\feat{L}_1(v) \ge 1/2$,
    which implies that $\feat{L}(v) \in \spectrumoverL$.
    Therefore $\cG \models \psi_\cA(v)$.
    
    If $\cA$ is an $n$-$\blcGNN$ ($n$-$\ogcGNN$), the formulas defined in Lemma~\ref{lem:gnn_to_logic} are in $n$-$\blMPtwo$ ($n$-$\ogMPtwo$).
    Hence $\psi_\cA(x)$ is also in $\blMPtwo$ ($\ogMPtwo$). 
\end{proof}

Recall that finite satisfiability of $\blMPtwo$ is decidable by Corollary~\ref{corollary:mp2_decidable}. Combining this with Theorem~\ref{thm:gnn_to_logic} we get decidability of satisfiability for $\blcGNN$:

\begin{theorem}\label{thm:local_gnn_decidable}
    The satisfiability problem of $\blcGNN$s is decidable.
\end{theorem}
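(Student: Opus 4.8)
The plan is to reduce the GNN satisfiability problem to finite satisfiability of $\blMPtwo$, which we already know to be decidable by Corollary~\ref{corollary:mp2_decidable}.

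First, given an $n$-$\blcGNN$ $\cA$ as input, I would apply Theorem~\ref{thm:gnn_to_logic} to effectively compute an equivalent $n$-$\blMPtwo$ formula $\psi_\cA(x)$. This is where essentially all the content lies, and it has already been done: the translation rests on Theorem~\ref{thm:computespectrum} (finiteness and computability of the overapproximation $\spectrumover{\ell}$ of the spectrum), which in turn uses the standing assumption that the activation functions are computable and eventually constant. Here we simply invoke the already-established translation, noting that it yields a $\blMPtwo$ formula (rather than a general $\MPtwo$ formula) precisely because $\cA$ is local.

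Second, observe that by the definition of equivalence between a GNN and a node formula, for every $n$-graph $\cG$ and vertex $v$ in $\cG$, $\cA$ accepts $\tuple{\cG, v}$ if and only if $\cG, v \models \psi_\cA(x)$. Since every graph in our setting is finite, it follows immediately that $\cA$ is satisfiable --- i.e., some vertex of some $n$-graph is accepted by $\cA$ --- if and only if $\psi_\cA(x)$ is finitely satisfiable. One then applies Corollary~\ref{corollary:mp2_decidable} to decide the latter, and outputs its answer; composing the effective translation with this decision procedure gives a decision procedure for satisfiability of $\blcGNN$s.

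There is essentially no remaining obstacle: the statement is a corollary of the machinery built earlier in this section together with the imported result Theorem~\ref{thm:gp2_decidabel}. The only point deserving a moment of care is that the translation of Theorem~\ref{thm:gnn_to_logic} is genuinely effective; this is guaranteed exactly because we assumed the activation functions are computable, which is what makes $\spectrumover{\ell}$ computable in Lemma~\ref{lem:overapproxfinite}. Without that assumption the spectrum would still be finite, but $\psi_\cA$ might fail to be computable, and the reduction would break down.
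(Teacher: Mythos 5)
Your proof is correct and is exactly the argument the paper intends: compose the effective translation of Theorem~\ref{thm:gnn_to_logic} (which yields a $\blMPtwo$ formula because $\cA$ is local) with the decidability of finite satisfiability of $\blMPtwo$ from Corollary~\ref{corollary:mp2_decidable}. Your remark that effectiveness hinges on the computability assumption for activation functions, via Lemma~\ref{lem:overapproxfinite}, is the right point to flag.
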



Recall from Theorem~\ref{thm:gp2_decidabel} that finite satisfiability for the richer logic $\GPtwo$, allowing unguarded unary quantification and containing $\blMPtwo$, is also decidable.
Using this with Theorem~\ref{thm:gnn_to_logic} gives decidability of universal satisfiability for $\blcGNN$.

\begin{theorem}\label{thm:local_gnn_universal_decidable}
    The universal satisfiability problem of $\blcGNN$s is decidable.
\end{theorem}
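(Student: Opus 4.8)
The plan is to reduce the universal satisfiability problem for $\blcGNN$s to the finite satisfiability problem for $\GPtwo$ sentences, and then invoke Theorem~\ref{thm:gp2_decidabel}. The bridge is Theorem~\ref{thm:gnn_to_logic}: given an $n$-$\blcGNN$ $\cA$, I would first compute an equivalent $n$-$\blMPtwo$ formula $\psi_\cA(x)$, so that for every $n$-graph $\cG$ and vertex $v$, $\cA$ accepts $\tuple{\cG, v}$ if and only if $\cG, v \models \psi_\cA(x)$. Under this equivalence, $\cA$ is universally satisfiable precisely when there is a finite, nonempty $n$-graph $\cG$ with $\cG \models \forall x\ \psi_\cA(x)$.

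The key observation is that, although the universal quantifier ranging over \emph{all} vertices is not available inside the guarded logic $\blMPtwo$ itself, it is available in $\GPtwo$, whose unary quantification rule permits the trivial guard $x = x$. Since $\psi_\cA(x) \in \blMPtwo \subseteq \GPtwo$, the sentence
\[
\chi_\cA\ :=\ \left(\exists x\ x = x\right)\ \land\ \left(\forall x\ (x = x) \to \psi_\cA(x)\right)
\]
is a $\GPtwo$ sentence over the graph vocabulary with $n$ unary predicates, and it is effectively computable from $\cA$. The first conjunct merely forces the model to be nonempty, so as to match the convention that an $n$-graph has a nonempty vertex set.

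It then remains to check the reduction in both directions, which is routine: a finite model of $\chi_\cA$ is exactly a nonempty finite $n$-graph all of whose vertices satisfy $\psi_\cA$, hence are accepted by $\cA$, so it witnesses universal satisfiability of $\cA$; conversely any graph witnessing universal satisfiability of $\cA$ is a finite model of $\chi_\cA$. Therefore $\cA$ is universally satisfiable iff $\chi_\cA$ is finitely satisfiable, which is decidable by Theorem~\ref{thm:gp2_decidabel}.

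I do not expect a serious obstacle here once Theorem~\ref{thm:gnn_to_logic} is in place; the only real point requiring care is that universal satisfiability demands \emph{unguarded} quantification over all vertices, so one cannot remain within $\blMPtwo$ (whose satisfiability is all that Corollary~\ref{corollary:mp2_decidable} gives, and which suffices only for ordinary satisfiability in Theorem~\ref{thm:local_gnn_decidable}) and must instead pass to the richer decidable logic $\GPtwo$ using the guard $x = x$. The secondary subtlety is the nonemptiness convention on graphs, handled by the extra existential conjunct.
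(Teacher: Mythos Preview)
Your proposal is correct and follows essentially the same route as the paper: translate the $\blcGNN$ to an equivalent $\blMPtwo$ formula via Theorem~\ref{thm:gnn_to_logic}, wrap it in a $\GPtwo$ sentence using the trivially guarded universal quantifier $\forall x\ (x=x)\to\psi_\cA(x)$, and invoke Theorem~\ref{thm:gp2_decidabel}. The only cosmetic difference is your explicit nonemptiness conjunct $\exists x\,(x=x)$, which the paper omits since its graphs are nonempty by convention.
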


\begin{proof}
    For every $\blcGNN$ $\cA$, by Theorem~\ref{thm:gnn_to_logic},
    there exists a $\blMPtwo$ formula $\psi_\cA(x)$ such that $\cA$ and $\psi_\cA(x)$ are equivalent.
    We claim that $\cA$ is universally satisfiable if and only if the $\GPtwo$ sentence $\varphi := \forall x\ (x=x) \to \psi_\cA(x)$ is finitely satisfiable.

    Suppose that $\psi$ is finitely satisfiable by the graph $\cG$.
    Then for every $v$ in $\cG$, $\cG \models \psi_\cA(v)$.
    Since $\cA$ and $\psi_\cA(x)$ are equivalent, $\cA$ accepts $\tuple{\cG, v}$.
    Hence $\cA$ is universally satisfiable, with witness graph $\cG$.
    On the other hand, Suppose that $\cA$ is universally satisfiable, with witness the finite graph $\cG$.
    By definition, for every $v$ in $\cG$, $\cA$ accepts $\tuple{\cG, v}$.
    Since $\cA$ and $\psi_\cA(x)$ are equivalent, $\cG \models \psi_\cA(v)$.
    Hence $\varphi$ is satisfiable by $\cG$.
    Note that the size of $\cG$ is finite, which implies that $\varphi$ is finitely satisfiable.
\end{proof}

The following converse to Theorem~\ref{thm:gnn_to_logic} shows that the logic is equally expressive as the GNN model:

\begin{theorem}\label{thm:logic_to_gnn}
    For every $n$-$\MPtwo$ formula $\psi(x)$,
    there exists an $n$-$\bgtrreluGNN$ $\cA_\psi$,
    such that $\psi(x)$ and $\cA_\psi$ are equivalent.
    If we start with $n$-$\blMPtwo$ and $n$-$\ogMPtwo$ formulas, we obtain $n$-$\bltrreluGNN$ and $n$-$\ogtrreluGNN$, respectively.
\end{theorem}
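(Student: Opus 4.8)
The plan is a structural induction on $\psi(x)$: for every subformula $\chi$ I will build a $\trrelu$-GNN $\cA_\chi$ together with a designated output coordinate of its last layer that, on every $n$-graph $\cG$ and vertex $v$, evaluates to $\eval{\chi(v)} \in \set{0,1}$. A crucial invariant is that \emph{every} feature value produced anywhere in these GNNs lies in $\set{0,1}$; since $\trrelu$ is the identity on $[0,1]$, a layer whose $\coefC{\ell}$ is the identity matrix and whose other coefficient matrices and bias are zero acts as an identity layer on such features. This gives two free operations: \emph{padding} a GNN with identity layers to reach a prescribed depth, and \emph{running GNNs in parallel} in a block-diagonal way (pad the shorter ones to a common depth, take the direct sum of feature spaces, and zero out all cross-block coefficients so the blocks do not interfere). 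With these, each formula-building step costs just one extra layer. At the very end I prepend or append a trivial layer, if needed, so that the designated coordinate of $\psi$ is coordinate $1$; then $\cA_\psi$ accepts $\tuple{\cG,v}$, i.e.\ $\feat{L}_{\cG,1}(v) \ge 1/2$, iff $\cG,v \models \psi$.

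For the base cases: $U_i(x)$ is computed by a one-layer GNN whose $\coefC{1}$ selects coordinate $i$ of $\feat{0}$ (the colour indicator) and whose aggregation matrices and bias are zero, so the output is $\trrelu(\feat{0}_i(v)) = \feat{0}_i(v)$; and $\top$ is computed by a one-layer GNN with all coefficient matrices zero and bias $1$, so the output is $\trrelu(1) = 1$. For $\neg\varphi$, append to $\cA_\varphi$ a layer computing $\trrelu(1-b)$ where $b$ is $\varphi$'s designated coordinate; as $b \in \set{0,1}$ this equals $1-b = \eval{\neg\varphi(v)}$. For $\varphi_1 \land \varphi_2$, run $\cA_{\varphi_1}$ and $\cA_{\varphi_2}$ in parallel and append a layer computing $\trrelu(b_1+b_2-1)$, which is $1$ precisely when $b_1 = b_2 = 1$.

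The only substantive case is the Presburger quantifier $\chi(x) := \sum_{t \in \intsinterval{k}} \lambda_t \cdot \presby{\epsilon_t(x,y) \land \varphi_t(y)}\ \circledast\ \delta$. Run $\cA_{\varphi_1}, \dots, \cA_{\varphi_k}$ in parallel, with designated coordinates $b_1, \dots, b_k$, and append one more layer $\ell$. In this layer I route, for each $t$ with $\epsilon_t = E(x,y)$, the coefficient $\lambda_t$ through $\coefA{\ell}{\out}$ (so that it aggregates $b_t$ over out-neighbors); for each $t$ with $\epsilon_t = E(y,x)$, through $\coefA{\ell}{\inc}$; for each $t$ with $\epsilon_t = \top$, through $\coefR{\ell}$; and I set the bias to $1 - \delta$. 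The pre-activation value of the designated coordinate is then $S - \delta + 1$, where
\[
    S \ :=\ \sum_{t \in \intsinterval{k}} \lambda_t \cdot \abs{\setc{u \in V}{\cG \models \epsilon_t(v,u) \land \varphi_t(u)}},
\]
using the induction hypothesis $b_t(u) = \eval{\varphi_t(u)}$ node-wise. Since $S$ and $\delta$ are integers, $\trrelu(S - \delta + 1) = 1$ iff $S - \delta + 1 \ge 1$ iff $S \ge \delta$ iff $\cG \models \chi(v)$, which handles $\circledast$ being $\ge$; the comparators $>, \le, <$ are handled by the same trick with a different constant shift inside $\trrelu$, and $=, \neq$ by combining a $\ge$- and a $\le$-test through the already-treated Boolean cases (equivalently, one first applies the remark after Definition~\ref{def:presburger} to assume $\circledast$ is $\ge$). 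The fragment bookkeeping is then immediate: the padding layers touch only $\coefC{\ell}$, and the layer appended in the Presburger case touches $\coefR{\ell}$ only when some guard is $\top$ and touches $\coefA{\ell}{\inc}$ only when some guard is $E(y,x)$; hence starting from $\blMPtwo$ (no $\top$ guards) produces a local GNN, and starting from $\ogMPtwo$ (no $E(y,x)$ guards) produces an outgoing-only GNN.

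The only place where anything nontrivial happens is the Presburger case, and the point there is the identity $\eval{S \ge \delta} = \trrelu(S - \delta + 1)$ for integer $S$: this is exactly what lets a \emph{bounded} activation ($\trrelu$) read off a threshold test on an \emph{unbounded} aggregate. Everything else — the Boolean cases, the parallel composition, the depth padding — works only because we keep every stored feature value in $\set{0,1}$, which is precisely the regime in which $\trrelu$ doubles as the identity map needed for plumbing.
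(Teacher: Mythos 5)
Your proposal is correct and uses essentially the same construction as the paper: one feature coordinate per subformula, all stored values kept in $\set{0,1}$ so that $\trrelu$ doubles as the identity, Boolean connectives handled by one-layer $\trrelup{\cdot}$ combinations, and the Presburger quantifier handled by routing each $\lambda_t$ through the aggregation matrix determined by the guard $\epsilon_t$ followed by the threshold identity $\trrelup{S-\delta+1}=\eval{S\ge\delta}$ for integer $S$, with the fragment bookkeeping read off from which matrices are touched. The paper organizes this as a single $(L+1)$-layer GNN indexed by a fixed subformula enumeration (Lemma~\ref{lem:logicmimicsgnn}) rather than a modular structural induction with identity-layer padding and block-diagonal parallel composition, but these are the same construction packaged differently.
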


The idea of the proof is induction on the formula structure. For each subformula there will be an entry of a  feature for the GNN which represents the subformula, in the sense that -- for the second to last layer -- 
its value is $1$ if the subformula holds, or $0$ otherwise. We will have an entry for each subformula at every iteration, but as we progress to later layers of the GNN, more of these entries will be correct with respect to the corresponding subformula. In an inductive case for a Presburger quantifier that uses some coefficients $\lambda_t$, the corresponding matrix coefficient will be multiplying certain quantifies by $\lambda_t$.
Note that this translation is polynomial time, thus the size of the corresponding GNN is polynomial in the formula.

For every $n$-$\bgMPtwo$ formula $\psi(x)$, let $L$ be the number of subformulas of $\psi(x)$ and $\set{\varphi_\ell(x)}_{\ell \in \intsinterval{L}}$ be an enumeration of subformulas of $\psi(x)$ satisfying that
$\varphi_L(x)$ is $\psi(x)$
and for each $\varphi_i(x)$ and $\varphi_j(x)$, if $\varphi_i(x)$ is a strict subformula of $\varphi_j(x)$, then $i < j$.

\begin{definition}\label{def:correct}
    For a $\bgMPtwo$ formula $\Psi(x)$, for a graph $\cG$ and a vertex $v$ in $\cG$,
    we say that a vector $\bfv$ is \emph{$\ell$-correct} for $\Psi(x)$ w.r.t. $\tuple{\cG, v}$,
    if, for $1 \le t \le \ell$, $\bfv_t = \eval{\varphi_t(v)}_\cG$.
\end{definition}

We define the $(L+1)$-layer $n$-$\bgtrreluGNN$ $\cA_\Psi$ as follows.
The input dimension $\gnndim{0}$ is $n$.
For $1 \le \ell \le L$, the dimension $\gnndim{\ell}$ is $L + n$, and $\gnndim{L+1} = 1$.
The coefficient matrices and bias vectors are chosen so that the features of $\cA_\Psi$ satisfies the following conditions.

We first describe the conditions for the first $L$ entries of features.
For $1 \le \ell \le L$,
\begin{equation*}
    \feat{L}_{\ell}(v)\ =\ \feat{L-1}_{\ell}(v)\ =\ \cdots\ =\ \feat{\ell}_{\ell}(v).
\end{equation*}
The condition of $\feat{\ell}_{\ell}(v)$ depend on the formula $\varphi_\ell(x)$.
\begin{itemize}
    \item If $\varphi_\ell(x)$ is $\top$,
    then $\feat{\ell}_\ell(v) = 1$.
    \item If $\varphi_\ell(x)$ is $U_j(x)$ for some unary predicate $U_j$,
    then $\feat{\ell}_\ell(v) = \feat{\ell-1}_{L+j}(v)$.
    \item If $\varphi_\ell(x)$ is $\neg\varphi_j(x)$,
    then $\feat{\ell}_\ell(v) = \trrelup{1-\feat{\ell-1}_j(v)}$.
    \item If $\varphi_\ell(x)$ is $\varphi_{j_1}(x)\land\varphi_{j_2}(x)$,
    then $\feat{\ell}_\ell(v) = \trrelup{\feat{\ell-1}_{j_1}(v)+\feat{\ell-1}_{j_2}(v)-1}$.
    \item If $\varphi_\ell(x)$ is ${\left(\sum_{t \in \intsinterval{k}}\ \lambda_t \cdot \presby{\epsilon_t(x, y)\land\varphi_{j_t}(y)} \ge \delta\right)}$,
    then 
    \begin{equation*}
        \feat{\ell}_\ell(v)\ =\ 
        \trrelup{
            \sum_{t \in \intsinterval{k}} \lambda_t \cdot \sum_{u \in X_t(v)}\feat{\ell-1}_{j_t}(u) - \delta + 1
        },
    \end{equation*}
    where
    \begin{equation*}
        X_t(v)\ :=\ 
        \begin{cases}
            \nbr{\out}(v), &\text{if $\epsilon_t(x, y)$ is $E(x, y)$} \\
            \nbr{\inc}(v), &\text{if $\epsilon_t(x, y)$ is $E(y, x)$} \\
            V,             &\text{if $\epsilon_t(x, y)$ is $\top$}.
        \end{cases}
    \end{equation*}
\end{itemize}
Next, the conditions for the last $n$ entries of features are as follows:
for $1 \le t \le n$,
\begin{equation*}
    \feat{L}_{L+t}(v)\ =\ \feat{L-1}_{L+t}(v)\ =\ \cdots\ =\ \feat{1}_{L+t}(v)\ =\ \feat{0}_{t}(v).
\end{equation*}
Finally, for the last layer, $\feat{L+1}_1(v) = \feat{L}_L(v)$.

It is easy to see that there are GNNs that satisfy these conditions.

The theorem will follow once we have shown the following property of $\cA_\Psi$:
\begin{lemma} \label{lem:logicmimicsgnn}
    For every $n$-graph $\cG$ and vertex $v$ in $\cG$,
    for $1 \le \ell \le L$, the feature
    $\feat{\ell}(v)$ within $\cA_\Psi$ is $\ell$-correct for $\Psi(x)$ w.r.t. $\tuple{\cG, v}$.
\end{lemma}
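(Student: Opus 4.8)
The plan is to prove Lemma~\ref{lem:logicmimicsgnn} by a plain induction on $\ell$ from $1$ to $L$. The induction hypothesis at step $\ell$ is that, for every vertex $v$, the feature $\feat{\ell-1}(v)$ in $\cA_\Psi$ is $(\ell-1)$-correct for $\Psi(x)$ w.r.t.\ $\tuple{\cG, v}$; for $\ell = 1$ this is vacuous, since $\feat{0}(v)$ is trivially $0$-correct. To show $\feat{\ell}(v)$ is $\ell$-correct I must check $\feat{\ell}_t(v) = \eval{\varphi_t(v)}_\cG$ for all $t \le \ell$. For $t < \ell$ this is immediate: the persistence conditions imposed on $\cA_\Psi$ give $\feat{\ell}_t(v) = \feat{\ell-1}_t(v)$ (the chain $\feat{L}_t = \cdots = \feat{t}_t$ forces this whenever $\ell > t$), and the induction hypothesis gives $\feat{\ell-1}_t(v) = \eval{\varphi_t(v)}_\cG$. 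So the real content is the single coordinate $t = \ell$: one must show $\feat{\ell}_\ell(v) = \eval{\varphi_\ell(v)}_\cG$.

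For this I would do a case analysis on the syntactic shape of $\varphi_\ell(x)$, following exactly the clauses used to define $\feat{\ell}_\ell(v)$. First I would invoke the Remark after Definition~\ref{def:presburger} to assume every Presburger subformula is written with $\ge$ only, so the last clause of the construction applies verbatim. The atomic cases $\varphi_\ell(x) = \top$ and $\varphi_\ell(x) = U_j(x)$ are read off directly from the defining conditions (the $U_j$ case simply routes a color coordinate $\feat{0}_j(v)$ into position $\ell$). For $\varphi_\ell(x) = \neg\varphi_j(x)$ and $\varphi_\ell(x) = \varphi_{j_1}(x) \land \varphi_{j_2}(x)$, the indices $j, j_1, j_2$ are all $< \ell$, so by the induction hypothesis $\feat{\ell-1}_j(v)$, $\feat{\ell-1}_{j_1}(v)$, $\feat{\ell-1}_{j_2}(v)$ lie in $\set{0,1}$ and equal the corresponding truth values; it then only remains to note that $\trrelup{1-b} = 1-b$ and $\trrelup{b_1+b_2-1}$ compute Boolean negation and conjunction on $\set{0,1}$-inputs.

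The one clause with a little substance is the Presburger quantifier $\varphi_\ell(x) = \bigl(\sum_{t \in \intsinterval{k}} \lambda_t \cdot \presby{\epsilon_t(x,y) \land \varphi_{j_t}(y)} \ge \delta\bigr)$, where each $j_t < \ell$. By the induction hypothesis $\feat{\ell-1}_{j_t}(u) = \eval{\varphi_{j_t}(u)}_\cG$ for every vertex $u$, so the inner sum $\sum_{u \in X_t(v)} \feat{\ell-1}_{j_t}(u)$ equals the number of $u$ with $\cG \models \epsilon_t(v,u) \land \varphi_{j_t}(u)$ --- here one just matches $X_t(v)$ against the guard $\epsilon_t$ (out-neighbours for $E(x,y)$, in-neighbours for $E(y,x)$, all of $V$ for $\top$). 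Setting $N := \sum_{t \in \intsinterval{k}} \lambda_t \cdot \abs{\setc{u \in V}{\cG \models \epsilon_t(v,u) \land \varphi_{j_t}(u)}} \in \bbZ$, the feature is $\feat{\ell}_\ell(v) = \trrelup{N - \delta + 1}$; since $N$ and $\delta$ are integers, $\trrelu$ acts here as a step function, equal to $0$ when $N - \delta + 1 \le 0$ (i.e.\ $N < \delta$) and to $1$ when $N - \delta + 1 \ge 1$ (i.e.\ $N \ge \delta$). Hence $\feat{\ell}_\ell(v) = 1$ iff $\cG \models \varphi_\ell(v)$, completing the inductive step. The outgoing-only and local specializations need no extra argument, since the restriction just limits which guards $\epsilon_t$ can occur.

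I do not expect a genuine obstacle here --- the proof is essentially bookkeeping once the construction of $\cA_\Psi$ is in place. The points requiring care are keeping the two forms of persistence straight (a coordinate freezing at its own layer versus subformula coordinates $j_t$ already being stable by layer $\ell-1$ because $j_t < \ell$), and, in the Presburger case, the two small observations that the guard $\epsilon_t$ corresponds to the neighbourhood set $X_t(v)$ and that integrality of $N - \delta + 1$ makes truncated $\relu$ behave as a Boolean threshold.
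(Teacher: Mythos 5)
Your proposal is correct and follows essentially the same structure as the paper's proof: an induction on $\ell$, with persistence handling coordinates $t < \ell$ and a case analysis on the syntactic shape of $\varphi_\ell$ (using the induction hypothesis for strict subformulas and integrality to make $\trrelu$ act as a Boolean threshold) handling coordinate $t = \ell$. Your explicit appeal to the Remark normalizing Presburger quantifiers to $\ge$-form is a small clarity improvement, since the paper's construction of $\cA_\Psi$ uses that normalization only implicitly.
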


\begin{proof}
    For $1 \le \ell \le L$, 
    for $1 \le t \le \ell - 1$,
    note that $\feat{\ell}_t(v) = \feat{\ell-1}_t(v)$.
    By the induction hypothesis, 
    $\feat{\ell-1}(v)$ is $(\ell-1)$-correct for $\Psi(x)$ w.r.t. $\tuple{\cG, v}$,
    which implies that $\feat{\ell-1}_t(v) = \eval{\varphi_{t}(v)}$.
    Thus, it holds that $\feat{\ell}_t(v) = \eval{\varphi_{t}(v)}$.

    It remains to show that $\feat{\ell}_\ell(v) = \eval{\varphi_{\ell}(v)}$.
    We prove the property by induction on subformulas.
    \begin{itemize}
        \item If $\varphi_\ell(x)$ is $\top$,
        then $\feat{\ell}_\ell(v) = 1$ by the definition of $\cA_\Psi$.

        \item If $\varphi_\ell(x)$ is $U_j(x)$ for some unary predicate $U_j$,
        then $\feat{\ell}_\ell(v) = \feat{\ell-1}_{L+j}(v)$.
        Note that $\feat{\ell-1}_{L+j}(v) = \feat{0}_{j}(v)$.
        If $\cG \models U_j(v)$,
        then $\feat{0}_j(v) = 1$,
        which implies $\feat{\ell}_\ell(v) = 1$. 
        Otherwise, if $\cG \not\models U_j(v)$,
        then $\feat{0}_j(v) = 0$
        which implies $\feat{\ell}_\ell(v) = 0$. 
        
        \item Suppose that $\varphi_\ell(x)$ is $\neg\varphi_j(x)$.
        Because $\varphi_j(x)$ is a strict subformula of $\varphi_\ell(x)$, $\ell-1 \ge {j}$.
        By the induction hypothesis, $\feat{\ell-1}_j(v) = \eval{\varphi_j(v)}$.
        Thus, we have
        \begin{equation*}
            \feat{\ell}_\ell(v)
            \ =\ \trrelup{1 - \eval{\varphi_j(v)}}
            \ =\ \eval{\varphi_\ell(v)}.
        \end{equation*}
        
        \item Suppose that $\varphi_\ell(x)$ is $\varphi_{j_1}(x)\land\varphi_{j_2}(x)$.
        Because $\varphi_{j_1}(x)$ and $\varphi_{j_2}(x)$ are strict subformulas of $\varphi_\ell(x)$,
        $\ell-1 \ge j_1$ and $\ell-1 \ge j_2$.
        By the induction hypothesis,
        $\feat{\ell-1}_{j_1}(v) = \eval{\varphi_{j_1}(v)}$ and
        $\feat{\ell-1}_{j_2}(v) = \eval{\varphi_{j_2}(v)}$.
        Thus, we have
        \begin{equation*}
            \feat{\ell}_\ell(v)
            \ =\ \trrelup{\eval{\varphi_{j_1}(v)} + \eval{\varphi_{j_2}(v)} - 1}
            \ =\ \eval{\varphi_\ell(v)}.
        \end{equation*}
        
        \item Suppose that $\varphi_\ell(x)$ is ${\left(\sum_{t \in \intsinterval{k}}\ \lambda_{t}\cdot \presby{\epsilon_t(x, y)\land\varphi_{j_t}(y)} \ge \delta\right)}$.
        For $1 \le t \le k$,
        because $\varphi_{j_t}(x)$ is a strict subformula of $\varphi_\ell(x)$,
        $\ell - 1 \ge {j_t}$.
        By the induction hypothesis,
        for every $u$ in $\cG$, $\feat{\ell-1}_{j_t}(u) = \eval{\varphi_{j_t}(u)}$.
        We note that
        \begin{equation*}
            \begin{aligned}
                \abs{\setc{u \in V}{\cG \models \epsilon_t(v, u)\land\varphi_{j_t}(u)}}
                \ =\ & \sum_{u \in X_t(v)} \eval{\varphi_{j_t}(u)} 
                \ =\ & \sum_{u \in X_t(v)} \feat{\ell-1}_{j_t}(u).
            \end{aligned}
        \end{equation*}
        Let $w(v)$ be a value defined as follows:
        \begin{equation*}
            w(v) \ :=\ 
            \sum_{t \in \intsinterval{k}} \lambda_t \cdot
            \abs{\setc{u \in V}{\cG \models \epsilon_t(v, u)\land\varphi_{j_t}(u)}}
            \ =\ 
            \sum_{t \in \intsinterval{k}} \lambda_t \cdot \sum_{u \in X_t(v)}\feat{\ell-1}_{j_t}(u).
        \end{equation*}
        Observe that
        \begin{equation*}
            \feat{\ell}_\ell(v)\ =\ 
            \trrelup{
                \sum_{t \in \intsinterval{k}} \lambda_t \cdot \sum_{u \in X_t(v)}\feat{\ell-1}_{j_t}(u) - \delta + 1
            }\ =\ \trrelup{w(v)-\delta+1}.
        \end{equation*}
        By the semantic of Presburger quantifiers,
        if $\cG \models \varphi_\ell(v)$, then $w(v) \ge \delta$,
        which implies $\feat{\ell}_\ell(v) = 1$.
        On the other hand,
        if $\cG \not\models \varphi_\ell(v)$, then $w(v) < \delta$,
        which implies $\feat{\ell}_\ell(v) = 0$.
    \end{itemize}
\end{proof}

We can now prove Theorem~\ref{thm:logic_to_gnn}.

\begin{proof}
    For every $n$-$\bgMPtwo$ formula $\Psi(x)$,
    let $\cA_\Psi$ be the $n$-$\bgtrreluGNN$ defined in Lemma~\ref{lem:logicmimicsgnn}.
    Note that $\feat{L+1}_1(v) = \feat{L}_L(v)$.
    
    For every $n$-graph $\cG$ and vertex $v$ in $\cG$,
    if $\cG \models \Psi(v)$,
    since $\varphi_L(x)$ is $\Psi(x)$,
    we have $\cG \models \varphi_L(v)$.
    By Lemma~\ref{lem:logicmimicsgnn},
    $\feat{L}_L(v) = 1$,
    which implies $\feat{L+1}_1(v) = 1$ and $\cA_\Psi$ accepts $\tuple{\cG, v}$.
    On the other hand,
    if $\cG \not\models \Psi(v)$,
    we have $\cG \not\models \varphi_L(v)$.
    By Lemma~\ref{lem:logicmimicsgnn} again,
    $\feat{L}_L(v) = 0$,
    which implies $\feat{L+1}_1(v) = 0$ and $\cA_\Psi$ does not accept $\tuple{\cG, v}$.
    
    If $\Psi(x)$ is an $n$-$\blMPtwo$ formula,
    by construction,
    for $1 \le \ell \le L$,
    $\coefR{\ell}$ is a zero matrix.
    Therefore $\cA_\Psi$ is an $n$-$\bltrreluGNN$.
    Similarly, if $\Psi(x)$ is an $n$-$\ogMPtwo$ formula,
    by construction,
    for $1 \le \ell \le L$,
    $\coefA{\ell}{\inc}$ is a zero matrix.
    Therefore $\cA_\Psi$ is an $n$-$\ogtrreluGNN$.
\end{proof}

Putting together the two translation results, we have:
\begin{corollary} \label{cor:logicequalgnn}
    The logic $\MPtwo$ and $\bgcGNN$s are expressively equivalent,
    as are $\blMPtwo$ and $\blcGNN$s; $\ogMPtwo$ and $\ogcGNN$s.
\end{corollary}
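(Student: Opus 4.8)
The plan is to combine the two translation directions already proved, after one small compatibility observation. Recall that two specification languages are equally expressive precisely when every specification in one has an equivalent specification in the other, and vice versa. One direction of the first equivalence is immediate from Theorem~\ref{thm:gnn_to_logic}: every $n$-$\bgcGNN$ has an effectively computable equivalent $n$-$\MPtwo$ formula, and moreover that theorem preserves the local and outgoing-only restrictions, so every $n$-$\blcGNN$ is equivalent to some $n$-$\blMPtwo$ formula and every $n$-$\ogcGNN$ to some $n$-$\ogMPtwo$ formula.

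For the converse direction I would invoke Theorem~\ref{thm:logic_to_gnn}, which produces, from any $n$-$\MPtwo$ formula, an equivalent $n$-$\bgtrreluGNN$, again transferring the local and outgoing-only restrictions to the $\bltrreluGNN$ and $\ogtrreluGNN$ classes. The only thing to check is that these truncated-$\relu$ GNNs actually lie inside the eventually-constant classes named in the corollary: but $\trrelu$ is eventually constant with left threshold $0$ and right threshold $1$, so $\bgtrreluGNN \subseteq \bgcGNN$, and likewise $\bltrreluGNN \subseteq \blcGNN$ and $\ogtrreluGNN \subseteq \ogcGNN$. Hence every $n$-$\MPtwo$ formula is equivalent to some $n$-$\bgcGNN$, and the same for the two restricted fragments, which together with the first paragraph gives the three claimed equivalences.

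I expect no real obstacle here: both translations are already established, and the sole point requiring attention is that the class of eventually-constant activations is broad enough to contain the particular truncated-$\relu$ activation used in Theorem~\ref{thm:logic_to_gnn}, which is immediate from the definition of \emph{eventually constant}. Accordingly I would keep the proof to a few sentences citing Theorems~\ref{thm:gnn_to_logic} and~\ref{thm:logic_to_gnn} together with this observation.
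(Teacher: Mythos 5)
Your proof is correct and follows exactly the route the paper takes: combine Theorem~\ref{thm:gnn_to_logic} for the GNN-to-logic direction with Theorem~\ref{thm:logic_to_gnn} for the converse, observing that $\trrelu$ is eventually constant so the truncated-$\relu$ GNNs produced by the latter land inside the eventually-constant classes. The paper states this as an immediate consequence of the two theorems; your explicit note about the containment $\bgtrreluGNN \subseteq \bgcGNN$ (and its local/outgoing-only variants) is the right small detail to make the argument airtight.
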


The translations also tell us that \emph{the expressiveness of GNNs with $\trrelu$
is the same as that of GNNs with arbitrary eventually constant activations} -- provided we use the Boolean semantics based on thresholds.

\subsection{Undecidability of $\MPtwo$, and of GNNs with $\trrelu$ and global readout} \label{subsec:global_undecidable}

Note that we claimed in Theorem~\ref{thm:computespectrum}  that the spectrum is finite for GNNs with eventually constant activations, even when they have  global readout. And  we could compute a finite overapproximation of the spectrum. But in our decidability argument for $\blcGNN$, we required further the ability to decide membership in the spectrum for any fixed rational, and for this we utilized decidability of the logic.  So what happens to decidability of the GNNs -- or the corresponding logic -- when global readout is allowed?

We show undecidability of finite satisfiability for the logic $\MPtwo$, and of the corresponding GNN satisfiability problem. First for the logic:

\begin{theorem}\label{thm:global_mptwo_undecidable}
     The finite satisfiability problem of $\MPtwo$ is undecidable.
\end{theorem}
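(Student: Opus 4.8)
The plan is to reduce an undecidable problem — most naturally the halting problem, or equivalently Hilbert's 10th problem / the emptiness problem for two-counter machines — to finite satisfiability of $\MPtwo$. Since $\MPtwo$ has full (guarded-by-$\top$, hence effectively unrestricted existential/universal) quantification over the model together with the ability to count and compare cardinalities via Presburger quantifiers, the key observation is that the $\top$-guard lets us compare global counts of vertices satisfying arbitrary one-variable formulas, and this is enough to force infinite grid-like or arithmetic-like structures in a way that evades the safeguards that make $\GPtwo$ (and hence $\blMPtwo$) decidable. The cleanest route I would take is to encode a \emph{two-counter (Minsky) machine} and assert that it has a halting run; alternatively, encode solvability of a system of Diophantine equations. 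I expect the authors use a grid/tiling-style or counter-machine encoding.

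\medskip

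\textbf{Encoding.} First I would set up a vocabulary of unary predicates: one predicate $\mathsf{Run}$ marking the vertices that represent configurations along a computation, predicates $\mathsf{St}_q$ for each control state $q$, and two predicates $A$, $B$ whose \emph{counts} among the $\mathsf{Run}$-vertices — or among vertices reachable in some controlled way — encode the two counter values. The subtlety is that a Presburger quantifier with a $\top$ guard counts globally, so I would use auxiliary predicates to partition the vertex set into ``slices'', one slice per time step of the computation, and within slice $i$ let the number of $A$-colored vertices be the value of counter $1$ at time $i$. Edges are used to link slice $i$ to slice $i+1$. The transition relation of the machine — increment, decrement, zero-test — is then expressed as a conjunction of $\MPtwo$ formulas relating the $A$-count (resp. $B$-count) of a slice to that of its successor slice: e.g.\ ``increment counter 1'' becomes an assertion, local to the edge linking consecutive slices, that the successor's $A$-count equals the current $A$-count plus one, which is exactly the kind of $\presby{\cdot}$-comparison $\MPtwo$ provides. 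Zero-tests use $\presby{A(y)} = 0$ style atoms. Finally assert that some slice carries the initial state with both counters zero and some slice carries the halting state. A finite model then exists iff the machine halts.

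\medskip

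\textbf{Main obstacle.} The hard part is that $\MPtwo$ is ``modal'': every formula has a single free variable and two variables may appear only inside guards, with the guards restricted to $E(x,y)$, $E(y,x)$, or $\top$. So I cannot directly say ``slice $i$ and slice $i+1$ agree except for a $+1$ in the $A$-count'' using a two-free-variable formula; I must route all the cross-slice bookkeeping through the single edge relation and through \emph{global} counts. The trick to make this work is to arrange the model so that each $\mathsf{Run}$-vertex $v$ sees, via $E$, exactly the $A$-vertices and $B$-vertices belonging to its own slice and its successor slice, so that a formula with free variable $v$ can compare $\presby{E(v,y)\wedge A(y)\wedge \mathsf{here}(y)}$ with $\presby{E(v,y)\wedge A(y)\wedge \mathsf{next}(y)}$. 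Establishing that such a well-formed ``tape'' structure is itself forced (no spurious edges, slices genuinely linearly ordered, counts not accidentally shared between slices) is the delicate part, and is where one typically needs a careful choice of auxiliary colors plus several $\MPtwo$ sanity axioms; getting these axioms to be expressible without a second free variable outside guards is exactly the technical crux. Once the structure is pinned down, correctness of the reduction — finite model iff halting run — is a routine induction on the length of the run in each direction.

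\medskip

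An alternative I would keep in reserve, in case the counter-machine bookkeeping proves too awkward within the modal restriction, is to reduce from Hilbert's tenth problem directly: represent each variable $x_i$ of a Diophantine system by the cardinality of a color class $C_i$, represent products and sums by introducing fresh color classes together with $\MPtwo$ axioms forcing their cardinalities to be the appropriate products/sums (products being encoded via a bipartite ``multiplication gadget'' whose edge count is $|C_i|\cdot|C_j|$, expressible because the number of edges out of each $C_i$-vertex is forced to be $|C_j|$ using a guarded Presburger count), and finally asserting the equation holds as an equality of two Presburger-linear combinations of these cardinalities. This avoids the linear-order-of-slices issue entirely at the cost of building reliable multiplication gadgets, and it is the encoding I would actually write up if the direct machine simulation does not go through cleanly.
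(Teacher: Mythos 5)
Your ``alternative kept in reserve'' is in fact the route the paper takes, and it is the one you should have led with. The paper reduces from Hilbert's tenth problem in the form of \emph{simple equation systems}: equations of the three atomic shapes $\upsilon_{t_1}=1$, $\upsilon_{t_1}=\upsilon_{t_2}+\upsilon_{t_3}$, $\upsilon_{t_1}=\upsilon_{t_2}\cdot\upsilon_{t_3}$. Each variable value is encoded as the cardinality of an intersection $P_t\cap U_i$ of color classes (one $P_t$ per equation, one $U_i$ per variable), with sanity axioms forcing the $P_t$'s and $U_i$'s to partition $V$ and forcing $|P_t\cap U_i|=|P_{t'}\cap U_i|$ for all $t,t'$. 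Addition is a bijection gadget (each vertex in $P_t\cap(U_{t_2}\cup U_{t_3})$ has exactly one outgoing edge into $P_t\cap U_{t_1}$, and each vertex there has exactly one incoming edge back), and multiplication is exactly the bipartite fan-out gadget you sketch: each $P_t\cap U_{t_2}$ vertex has out-degree $|P_t\cap U_{t_3}|$ into $P_t\cap U_{t_1}$, each $P_t\cap U_{t_1}$ vertex has in-degree $1$, so counting edges two ways gives $|U_{t_1}|=|U_{t_2}|\cdot|U_{t_3}|$. So your key multiplicative idea is right.

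What your write-up is missing, and what the counter-machine route would have foundered on, is the mechanism for universally quantifying over all vertices inside the ``modal'' $\MPtwo$. You correctly identify this as the crux (``getting these axioms to be expressible without a second free variable outside guards''), but you do not supply the trick. The paper does: since formulas have one free variable but we have a $\top$-guarded Presburger quantifier, ``for all $y$, $\psi(y)$'' is expressed as $\presby{\psi(y)} = \presby{\top}$, i.e.\ the count of vertices satisfying $\psi$ equals the total count. This single device is what makes all the per-vertex local degree constraints (the in/out-degree assertions in the gadgets) globally enforceable from any single free variable, and it is the reason the Hilbert's-tenth encoding goes through cleanly while your primary slice-based counter-machine plan would require forcing a linear order of slices --- something that, as stated, you have no way to axiomatize in $\MPtwo$'s satisfiability semantics. (The paper does use a two-counter machine reduction later, but only for \emph{universal} satisfiability of $\blreluGNN$s, where the quantification over vertices comes for free from the problem statement and where a richer two-hop logic is available.) So: right alternative, correct gadget intuition, but you left the crucial $\presby{\psi(y)}=\presby{\top}$ universalization device unstated, and you mis-ranked which of your two plans is actually viable.
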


For the proof we apply an approach based on ideas  
in~\cite{localpresburgerbartosztony},  
using a reduction from Hilbert's tenth problem.

\begin{definition}\label{def:simple_eq_sys}
    A \emph{simple equation system $\varepsilon$} (with $n$ variables and $m$ equations) is a set of $m$
    equations of one of the forms  $\upsilon_{t_1} = 1$, $\upsilon_{t_1} = \upsilon_{t_2} + \upsilon_{t_3}$, or $\upsilon_{t_1} = \upsilon_{t_2} \cdot \upsilon_{t_3}$,
    where $1 \le t_1, t_2, t_3 \le n$ are pairwise distinct.
    We say the system $\varepsilon$ is solvable if it has a solution in $\bbN$.
\end{definition}

\begin{lemma}\label{lemma:hilbert_to_mptwo}
    For every simple equation system $\varepsilon$ with $n$ variables and $m$ equations,
    there exists an $(n+m)$-$\MPtwo$ formula $\Psi_\varepsilon(x)$
    such that the following are equivalent.
    \begin{enumerate}
        \item The system $\varepsilon$ has a solution in $\bbN$.
        \item There exists an $(n+m)$-graph $\cG$ such that for every vertex $v$ in $\cG$, $\cG \models \Psi_\varepsilon(v)$.
        \item The formula $\Psi_\varepsilon(x)$ is finitely satisfiable.
    \end{enumerate}
\end{lemma}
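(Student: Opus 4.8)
The plan is to give a direct reduction from the solvability problem for simple equation systems over $\bbN$, adapting the approach of~\cite{localpresburgerbartosztony}; Theorem~\ref{thm:global_mptwo_undecidable} then follows since that problem is undecidable (every Diophantine equation can be put into this shape by naming subterms and using $=1$ and $+$ to build the needed constants). Fix a system $\varepsilon$ with variables $\upsilon_1,\dots,\upsilon_n$ and equations $\varepsilon_1,\dots,\varepsilon_m$. The formula $\Psi_\varepsilon(x)$ will be over $n+m$ unary predicates: $U_1,\dots,U_n$, where $|U_i|$ is meant to record the value of $\upsilon_i$, plus one auxiliary predicate $U_{n+j}$ per equation $\varepsilon_j$. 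I take $\Psi_\varepsilon(x) := \bigwedge_{j\in\intsinterval{m}}\chi_j(x)$, where $\chi_j$ encodes $\varepsilon_j$, and I will make every $\chi_j$ \emph{global}: its truth in a graph does not depend on the chosen vertex, so it holds either at all vertices or at none. This is arranged by only counting with the trivial guard $\top$ at top level, and by writing any universal statement as $(\presby{\top\land\cdots}=0)$ with the edge-guarded part nested inside, using the two-variable reuse convention of $\MPtwo$ (where $x$ and $y$ alternate with nesting depth). A conjunction of global formulas is global, so $\Psi_\varepsilon$ is global; this already yields the equivalence of~(2) and~(3), since a global formula is finitely satisfiable exactly when it holds at every vertex of some finite, hence nonempty, graph.

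The equations $\upsilon_{t_1}=1$ and $\upsilon_{t_1}=\upsilon_{t_2}+\upsilon_{t_3}$ are handled by pure counting: take $\chi_j := (\presby{\top\land U_{t_1}(y)}=1)$, respectively $\chi_j := (\presby{\top\land U_{t_1}(y)}=\presby{\top\land U_{t_2}(y)}+\presby{\top\land U_{t_3}(y)})$, which is legal once normalised to the form of Definition~\ref{def:presburger}. The step I expect to be the real obstacle is multiplication, $\upsilon_{t_1}=\upsilon_{t_2}\cdot\upsilon_{t_3}$, since $\MPtwo$ cannot multiply cardinalities directly; here I use a double-counting gadget on the edge relation. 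For such an equation, $\chi_j$ is the conjunction of: (i) $\presby{\top\land U_{t_1}(y)}=\presby{\top\land U_{n+j}(y)}$, i.e.\ $|U_{t_1}|=|U_{n+j}|$; (ii) every vertex of $U_{n+j}$ has exactly one out-neighbour in $U_{t_2}$, namely $(\presby{\top\land(U_{n+j}(y)\land\neg(\presby{E(y,x)\land U_{t_2}(x)}=1))}=0)$; and (iii) every vertex of $U_{t_2}$ has exactly $|U_{t_3}|$ in-neighbours in $U_{n+j}$, namely $(\presby{\top\land(U_{t_2}(y)\land\neg(\presby{E(x,y)\land U_{n+j}(x)}=\presby{\top\land U_{t_3}(x)}))}=0)$. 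Counting the edges of $E$ from $U_{n+j}$ into $U_{t_2}$ by their source gives $|U_{n+j}|$ (each source has out-degree $1$ into $U_{t_2}$, by~(ii)), and counting by their target gives $|U_{t_2}|\cdot|U_{t_3}|$ (each target has in-degree $|U_{t_3}|$ from $U_{n+j}$, by~(iii)); combined with~(i) this forces $|U_{t_1}|=|U_{t_2}|\cdot|U_{t_3}|$. The auxiliary predicate $U_{n+j}$ is needed, rather than reusing $U_{t_1}$ itself, precisely so that different product gadgets do not share a vertex set: a variable may occur as the left-hand side of several product equations, and sharing one set would impose contradictory in-degree requirements.

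Finally I verify the three equivalences. For $(1)\Rightarrow(2)$: from a solution $\upsilon_i=a_i\in\bbN$, build $\cG$ as the disjoint union of sets $U_i$ of size $a_i$ for $i\le n$; for each product equation $\varepsilon_j=(\upsilon_{t_1}=\upsilon_{t_2}\cdot\upsilon_{t_3})$ a set $U_{n+j}$ of size $a_{t_1}=a_{t_2}a_{t_3}$, partitioned into $a_{t_2}$ blocks of size $a_{t_3}$ with one block assigned to each vertex of $U_{t_2}$ and an edge from every vertex of that block to its assigned vertex; empty auxiliary predicates for the other equations; and one extra isolated uncoloured vertex so that $\cG\neq\emptyset$. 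A direct check shows every $\chi_j$ holds at every vertex, so $\cG\models\Psi_\varepsilon(v)$ for all $v$. $(2)\Rightarrow(3)$ is immediate. For $(3)\Rightarrow(1)$: given a finite $\cG$ and a vertex with $\cG\models\Psi_\varepsilon$ there, globality gives $\cG\models\chi_j(w)$ for all $w$ and all $j$; setting $a_i:=|U_i^{\cG}|$, each $\chi_j$ asserts exactly that $\varepsilon_j$ holds of $(a_1,\dots,a_n)$ — for product equations via the double-counting argument above — so $(a_1,\dots,a_n)$ solves $\varepsilon$. This establishes the lemma, and hence (via the undecidability of solvability of simple equation systems) Theorem~\ref{thm:global_mptwo_undecidable}.
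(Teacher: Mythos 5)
Your proof is correct, but it follows a genuinely different encoding than the paper's. The paper uses a two-dimensional ``grid'' scheme: every vertex carries both an equation label $P_t$ and a variable label $U_i$, disjointness of the $P_t$'s and of the $U_i$'s is enforced by $\psi^{\itdisj}$, and the consistency clause $\psi^{\iteq}$ forces $\abs{P_t \cap U_i}$ to be independent of $t$, with that common value playing the role of $\upsilon_i$. Every equation type, including $\upsilon_{t_1}=1$ and $\upsilon_{t_1}=\upsilon_{t_2}+\upsilon_{t_3}$, is then encoded by edges living inside a single column $P_t$, so distinct equations operate on disjoint vertex sets and cannot interfere. Your encoding is leaner: $\upsilon_i$ is stored directly as $\abs{U_i}$; the constant and addition equations become pure cardinality comparisons with no edges and no auxiliary sets; and only product equations get a gadget, using one fresh predicate $U_{n+j}$ together with the same out-degree/in-degree double count that the paper uses (with $U_{n+j}$ playing the role the paper gives to $P_t\cap U_{t_1}$). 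This avoids $\psi^{\itdisj}$ and $\psi^{\iteq}$ entirely, at the cost of needing to see (as you do) that edge sets for different product gadgets cannot interfere because the counts in conditions (ii) and (iii) are each restricted to the respective $U_{n+j}$. Both constructions produce formulas whose top-level counts are $\top$-guarded, hence global, which is what gives the equivalence of items (2) and (3); and both yield the required $n+m$-predicate bound. One minor notational slip to fix when you write this up formally: the inner counts in conditions (ii) and (iii) should bind $x$, not $y$ (i.e.\ use the alternating-variable form $\presbx{E(y,x)\land U_{t_2}(x)}$ and $\presbx{E(x,y)\land U_{n+j}(x)}$), consistent with the two-variable convention you invoke.
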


\begin{proof}
    We construct the formula $\Psi_\varepsilon(x)$ as follows.
    The vocabulary of $\Psi_\varepsilon(x)$ consists of unary predicates $\set{P_t}_{t \in \intsinterval{m}}$ and $\set{U_i}_{i \in \intsinterval{n}}$.
    For $1 \le t \le m$, we define $\varphi_t(x)$ depending on the $t^{th}$ equation in $\varepsilon$. 
    \begin{itemize}
        \item If the $t^{th}$ equation is $\upsilon_{t_1} = 1$,
        then $\varphi_t(x) := (\presby{P_t(y) \land U_{t_1}(y)} = 1)$.

        \item If the $t^{th}$ equation is $\upsilon_{t_1} = \upsilon_{t_2} + \upsilon_{t_3}$,
        then $\varphi_t(x) := (\presby{\psi_t(y)} = \presby{\top})$ where
        \begin{equation*}
            \begin{aligned}
                \psi_t(y)\ :=\ 
                &\left(P_t(y) \land (U_{t_2}(y) \lor U_{t_3}(y)) \to \left(\presbx{E(y, x)\land P_t(x)\land U_{t_1}(x)} = 1\right)\right)\ \land\ \\
                &\left(P_t(y) \land U_{t_1}(y) \to
                    \left(\presbx{E(x, y) \land P_t(x) \land (U_{t_2}(x) \lor U_{t_3}(x))} = 1\right)\right).
            \end{aligned}
        \end{equation*}
        
        \item If the $t^{th}$ equation is $\upsilon_{t_1} = \upsilon_{t_2} \cdot \upsilon_{t_3}$,
        then $\varphi_t(x) := (\presby{\psi_t(y)} = \presby{\top})$ where
        \begin{equation*}
            \begin{aligned}
                \psi_t(y)\ :=\ 
                &\left(P_t(y) \land U_{t_2}(y) \to \left(\presbx{E(y, x)\land P_t(x) \land U_{t_1}(x)} = \presbx{P_t(x) \land U_{t_3}(x)}\right)\right)\ \land\ \\
                &\left(P_t(y) \land U_{t_1}(y) \to
                    \left(\presbx{E(x, y)\land P_t(x) \land U_{t_2}(x)} = 1\right)\right).
            \end{aligned}
        \end{equation*}
    \end{itemize}
    We now define $\Psi_\varepsilon(x)$:
    \begin{equation*}
        \Psi_\varepsilon(x)\ :=\ 
            \psi^{\itdisj}(x) \land \psi^{\iteq}(x) \land
            \bigwedge_{t \in \intsinterval{m}} \varphi_t(x),
    \end{equation*}
    where
    \begin{equation*}
        \begin{aligned}
            \psi^{\itdisj}(x)\ :=\ &
                \bigwedge_{\substack{
                    t, t' \in \intsinterval{m} \\
                    t \neq t'
                }}
                (\presby{P_{t}(y) \land P_{t'}(y)} = 0)
                \ \land\ 
                \bigwedge_{\substack{
                    i, i' \in \intsinterval{n} \\
                    i \neq i'
                }}
                (\presby{U_{i}(y) \land U_{i'}(y)} = 0) \\
            \psi^{\iteq}(x)\ :=\ &
                \bigwedge_{\substack{
                    t, t' \in \intsinterval{m} \\
                    i \in \intsinterval{n}
                }}
                (\presby{P_{t}(y) \land U_i(y)} =
                \presby{P_{t'}(y) \land U_i(y)}) \\
        \end{aligned}   
    \end{equation*}

    \textbf{\underline{(1) $\Rightarrow$ (2)}}, suppose $\varepsilon$ has a solution
    $\set{\upsilon_i \gets a_i}_{i \in \intsinterval{n}}$.
    Let $\cG$ be the $(n+m)$-graph defined as follows.
    For $1 \le t \le m$, $1 \le i \le n$, and $1 \le j \le a_i$, let $v_{t, i, j}$ be a fresh vertex
    and $V$ be the set of all such vertices.
    For $1 \le t \le m$,
    let $P_t := \setc{v_{t, i, j}}{1 \le i \le n; 1 \le j \le a_i}$.
    For $1 \le i \le n$, 
    let $U_i := \setc{v_{t, i, j}}{1 \le t \le m; 1 \le j \le a_i}$.
    For $1 \le t \le m$, we define
    $E_t$ depending on the $t^{th}$ equation in $\varepsilon$.
    \begin{itemize}
        \item If the $t^{th}$ equation is $\upsilon_{t_1} = 1$,
        then $E_t := \emptyset$.
        
        \item If the $t^{th}$ equation is $\upsilon_{t_1} = \upsilon_{t_2} + \upsilon_{t_3}$, then
        \begin{equation*}
            E_t := 
            \setc{(v_{t, t_2, j}, v_{t, t_1, j})}{1 \le j \le a_{t_2}} \cup
            \setc{(v_{t, t_3, j}, v_{t, t_1, a_{t_2} + j})}{1 \le j \le a_{t_3}}.
        \end{equation*}

        \item If the $t^{th}$ equation is $\upsilon_{t_1} = \upsilon_{t_2} \cdot \upsilon_{t_3}$, then
        \begin{equation*}
            E_t :=
            \setc{(v_{t, t_2, j}, v_{t, t_1, (j-1) \cdot a_{t_3} + j'})}{1 \le j \le a_{t_2}; 1 \le j' \le a_{t_3}}.
        \end{equation*}
    \end{itemize}
    Finally, let $E := \ \bigcup_{t \in \intsinterval{m}}E_t$.
    
    Note that $\set{P_t}_{t \in \intsinterval{m}}$
    and $\set{U_i}_{i \in \intsinterval{n}}$ form partitions of $V$, respectively.
    Furthermore, for $1 \le t \le m$ and $1 \le i \le n$,
    $\abs{P_t \cap U_i} = a_i$.
    Thus, it is straightforward to verify that for every $v \in V$,
    it holds that $\cG \models \psi^{\iteq}(v) \land \psi^{\itdisj}(v)$.
    For $1 \le t \le m$, we check the satisfiability of $\varphi_t(x)$ depending on the $t^{th}$ equation in $\varepsilon$.
    \begin{itemize}
        \item If the $t^{th}$ equation is $\upsilon_{t_1} = 1$,
        then $a_{t_1} = 1$.
        Thus $\abs{P_t \cap U_{t_1}} = 1$,
        which implis that for every $v \in V$,
        then $\cG \models \varphi_t(v)$.

        \item If the $t^{th}$ equation is $\upsilon_{t_1} = \upsilon_{t_2} + \upsilon_{t_3}$,
        then $a_{t_1} = a_{t_2} + a_{t_3}$.
        We first claim that for every $u \in V$, $\cG \models \psi_t(u)$.
        \begin{itemize}
            \item If $u = v_{t, t_2, j}$ for some $1 \le j \le a_{t_2}$,
            then $\abs{\setc{u' \in V}{E(u, u') \land P_t(u') \land U_{t_1}(u')}} = \abs{\set{v_{t, t_1, j}}} = 1$.
            \item If $u = v_{t, t_3, j}$ for some $1 \le j \le a_{t_3}$,
            then $\abs{\setc{u' \in V}{E(u, u') \land P_t(u') \land U_{t_1}(u')}} = \abs{\set{v_{t, t_1, a_{t_2} + j}}} = 1$.
            \item If $u = v_{t, t_1, j}$ for some $1 \le j \le a_{t_1}$,
            if $j \le a_{t_2}$, then there is only one edge from $v_{t, t_2, j}$ to $u$. 
            Otherwise, there is only one edge from $v_{t, t_3, j - a_{t_2}}$ to $u$. 
            Thus $\abs{\setc{u' \in V}{E(u', u) \land P_t(u') \land \left(U_{t_2}(u') \lor U_{t_3}(u') \right)}} = 1$.
        \end{itemize} 
        Therefore, for every $v \in V$,
        since $\abs{\setc{u \in V}{\cG \models \psi_t(u)}} = \abs{V}$,
        it holds that $\cG \models \varphi_t(v)$.
        
        \item If the $t^{th}$ equation is $\upsilon_{t_1} = \upsilon_{t_2} \cdot \upsilon_{t_3}$,
        then $a_{t_1} = a_{t_2} \cdot a_{t_3}$.
        We first claim that for every $u \in V$, $\cG \models \psi_t(u)$.
        \begin{itemize}
            \item If $u = v_{t, t_2, j}$ for some $1 \le j \le a_{t_2}$,
            then
            \begin{equation*}
                \abs{\setc{u' \in V}{E(u, u') \land P_t(u') \land U_{t_1}(u')}}
                \ =\ 
                \abs{\setc{v_{i, t_1, (j-1)\cdot a_{t_3} + j'}}{1 \le j' \le a_{t_3}}}
                \ =\ a_{t_3}.
            \end{equation*}
            Note that
            $\abs{\setc{u' \in V}{P_t(u') \land U_{t_3}(u')}} =
            \abs{P_t \cap U_{t_3}} = a_{t_3}$.
            \item If $u = v_{t, t_1, k}$ for some $1 \le j \le a_{t_1}$,
            then there is only one edge from $v_{t, t_2, j'}$ to $u$,
            where $j' = \floor{j / a_{t_2}} + 1$.
            Thus $\abs{\setc{u' \in V}{E(u', u) \land P_t(u') \land U_{t_2}(u')}} = 1$.
        \end{itemize} 
        Therefore, for every $v \in V$,
        since $\abs{\setc{u \in V}{\cG \models \psi_t(u)}} = \abs{V}$,
        $\cG \models \varphi_t(v)$.
    \end{itemize}
    Hence for every vertex $v \in V$, $\cG \models \Psi_\varepsilon(v)$.

    \textbf{\underline{(2) $\Rightarrow$ (3)}},
    it is clear that (3) is a simple corollary (2).

    \textbf{\underline{(3) $\Rightarrow$ (1)}},
    suppose $\cG \models \Psi_\varepsilon(v)$.
    First of all,
    for $1 \le t \le m$ and $1 \le i \le n$,
    let
    \begin{equation*}
        V_{t, i}\ :=\ \setc{u \in V}{\cG \models P_t(u) \land U_i(u)}.
    \end{equation*}
    Because ${\cG \models \psi^{\itdisj}(v)}$,
     $V_{t, i}$ forms a partition of $V$.
    Next, because $\cG \models \psi^{\iteq}(v)$,
    for every $1 \le t < t' \le m$, and $1 \le i \le n$,
    $\abs{V_{t_1, i}} = \abs{V_{t_2, i}}$.
    Finally, we claim that 
    $\set{\upsilon_i \gets \abs{V_{1, i}}}_{i \in \intsinterval{n}}$
    is a solution of $\varepsilon$.
    We show that $\cG \models \varphi_t(v)$ implies that
    $\set{\upsilon_i \gets \abs{V_{1, i}}}_{i \in \intsinterval{n}}$ satisfies the $t^{th}$ equation in $\varepsilon$.
    \begin{itemize}
        \item If the $t^{th}$ equation is $\upsilon_{t_1} = 1$,
        then $\cG \models \left(\presby{P_t(y) \land U_{t_1}(y)} = 1\right)$,
        which implies that 
        $\abs{V_{t, t_1}} = 1$.
        Hence $\abs{V_{1, t_1}} = \abs{V_{t, t_1}} = 1$.

        \item If the $t^{th}$ equation is $\upsilon_{t_1} = \upsilon_{t_2} + \upsilon_{t_3}$,
        then $\cG \models \left(\presby{\psi_t(y)} = \presby{\top}\right)$, which implies that 
        $\abs{\setc{u \in V}{\cG \models \psi_t(u)}} = \abs{V}$.
        Hence for each $u \in V$, $\cG \models \psi_t(u)$.

        We will argue that  the  edges between $V_{t, t_2} \cup V_{t, t_3}$ and $V_{t, t_1}$ give us a bijection, which will establish the satisfaction of the equation.
        For each $u \in V_{t, t_2} \cup V_{t, t_3}$,
        since $\cG \models \left(\presbx{E(u, x)\land P_t(x) \land U_{t_1}(x)} = 1\right)$,
        there exists only one edge from $u$ to the set $V_{t, t_1}$.
        On the other hand,
        for each $u \in V_{t, t_1}$,
        $\cG \models \left(\presbx{E(x, u) \land P(x)\land (U_{t_2}(x) \lor U_{t_3}(x))} = 1\right)$,
        which implies that there exists only one edge from the set $V_{t, t_2} \cup V_{t, t_3}$ to $u$.
        Therefore,
        by considering the number of edges from the set $V_{t, t_2} \cup V_{t, t_3}$ to $V_{t, t_1}$,
        we have 
        $1 \cdot \abs{V_{t, t_1}} = 1 \cdot \abs{V_{t, t_2} \cup V_{t, t_3}}$.
        Thus $\abs{V_{t, t_1}} = \abs{V_{t, t_2}} + \abs{V_{t, t_3}}$,
        since $V_{t, t_2}$ and $V_{t, t_3}$ are disjoint.

        \item If the $t^{th}$ equation is $\upsilon_{t_1} = \upsilon_{t_2} \cdot \upsilon_{t_3}$,
        by an argument similar to the one above,
        for each $u \in V$, $\cG \models \psi_t(u)$.
                
        We will establish the equality by using the edges from the set $V_{t, t_2}$ to the set $V_{t, t_1}$
        to show that there is a $\abs{V_{t, t_3}}$-to-one relationship from $V_{t, t_2}$ to $V_{t, t_1}$.
        For each $u \in V_{t, t_2}$, since
        \begin{equation*}
            \cG \models \left(\presbx{E(u, x)\land P_t(x) \land U_{t_1}(x)} = \presbx{P_t(x) \land U_{t_3}(x)}\right), 
        \end{equation*}
        there exists $\abs{V_{t, t_3}}$ edges from $u$ to the set $V_{t, t_1}$.
        On the other hand,
        for each $u \in V_{t, t_1}$,
        since $\cG \models \left(\presbx{E(x, u)\land P_t(x) \land U_{t_2}(x)} = 1\right)$,
        there exists only one edge from the set $V_{t, t_2}$ to $u$.
        Therefore we have $1 \cdot \abs{V_{t, t_1}} = \abs{V_{t, t_3}} \cdot \abs{V_{t, t_2}}$. 
    \end{itemize}
\end{proof}

Since the solvability (over $\bbN$) of simple equation systems is undecidable~\cite{hilbert_theth}, Theorem~\ref{thm:global_mptwo_undecidable}
follows. From the theorem and Corollary~\ref{cor:logicequalgnn} we obtain undecidability of static analysis for GNNs with global readout:

\begin{theorem}\label{thm:global_gnn_undecidable}
    The satisfiability problem of $\bgtrreluGNN$s is undecidable.
\end{theorem}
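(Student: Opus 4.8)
The plan is to obtain this as an immediate consequence of the undecidability of finite satisfiability for $\MPtwo$ (Theorem~\ref{thm:global_mptwo_undecidable}) together with the translation from $\MPtwo$ to $\bgtrreluGNN$s provided by Theorem~\ref{thm:logic_to_gnn}. Concretely, I would argue by reduction: given an $\MPtwo$ formula $\psi(x)$, apply Theorem~\ref{thm:logic_to_gnn} to produce, effectively, an equivalent $\bgtrreluGNN$ $\cA_\psi$. By the definition of equivalence, for every $n$-graph $\cG$ and vertex $v$ the network $\cA_\psi$ accepts $\tuple{\cG, v}$ exactly when $\cG, v \models \psi$. Hence $\cA_\psi$ is satisfiable in the sense of the GNN satisfiability problem precisely when $\psi$ has a model; since every $n$-graph is finite by definition, this coincides with $\psi$ being finitely satisfiable. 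As finite satisfiability of $\MPtwo$ is undecidable, so is satisfiability of $\bgtrreluGNN$s.

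Alternatively one can bypass $\MPtwo$ as an intermediate problem and compose the two reductions directly: from a simple equation system $\varepsilon$ build $\Psi_\varepsilon(x)$ via Lemma~\ref{lemma:hilbert_to_mptwo}, then $\cA_{\Psi_\varepsilon}$ via Theorem~\ref{thm:logic_to_gnn}; the chain of equivalences in Lemma~\ref{lemma:hilbert_to_mptwo} (solvability of $\varepsilon$ over $\bbN$ $\Leftrightarrow$ universal satisfiability of $\Psi_\varepsilon$ $\Leftrightarrow$ finite satisfiability of $\Psi_\varepsilon$) then shows that $\varepsilon$ is solvable over $\bbN$ iff $\cA_{\Psi_\varepsilon}$ is satisfiable, reducing Hilbert's tenth problem (in the simple-equation-system form) to GNN satisfiability.

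I do not expect any genuine obstacle here: all the heavy lifting resides in Theorem~\ref{thm:global_mptwo_undecidable} and in the formula-to-GNN translation of Theorem~\ref{thm:logic_to_gnn}, both already established, and the expressive equivalence is also recorded in Corollary~\ref{cor:logicequalgnn}. The only points requiring care are bookkeeping ones: checking that the $1/2$ acceptance threshold in the GNN semantics is exactly mirrored by the $0/1$ valuation of $\MPtwo$ formulas under the translation (which is part of the statement of Theorem~\ref{thm:logic_to_gnn}), and observing that the GNN satisfiability problem ranges over finite graphs only, so that it lines up with \emph{finite} satisfiability of the logic rather than satisfiability over arbitrary structures.
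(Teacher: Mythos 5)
Your proposal is correct and matches the paper's own (implicit) proof exactly: the paper derives Theorem~\ref{thm:global_gnn_undecidable} by combining Theorem~\ref{thm:global_mptwo_undecidable} with the expressive equivalence recorded in Corollary~\ref{cor:logicequalgnn}, of which the relevant direction is the effective translation in Theorem~\ref{thm:logic_to_gnn} that you invoke. Your remarks about the effectiveness of the translation and the alignment of the $1/2$ acceptance threshold with the $0/1$ valuation are precisely the bookkeeping points that make the reduction work, and your alternative chaining through Lemma~\ref{lemma:hilbert_to_mptwo} is just an unfolding of the same reduction.
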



Using the same reduction, we also obtain undecidability for universal satisfiability:

\begin{theorem}\label{thm:global_gnn_universal_undecidable}
    The universal satisfiability problem of $\bgtrreluGNN$s is undecidable.
\end{theorem}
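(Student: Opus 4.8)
The plan is to reuse the reduction from simple equation systems built for Theorem~\ref{thm:global_mptwo_undecidable}, observing that Lemma~\ref{lemma:hilbert_to_mptwo} was deliberately phrased so as to hand us the universal-satisfiability variant for free. Recall that for a simple equation system $\varepsilon$ with $n$ variables and $m$ equations, Lemma~\ref{lemma:hilbert_to_mptwo} shows that solvability of $\varepsilon$ over $\bbN$ is equivalent to condition (2): the existence of an $(n+m)$-graph $\cG$ such that $\cG \models \Psi_\varepsilon(v)$ for \emph{every} vertex $v$ of $\cG$. But this condition is exactly the statement that the $\MPtwo$ formula $\Psi_\varepsilon(x)$ is ``universally satisfiable'' in the sense used for GNNs.

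First I would apply Theorem~\ref{thm:logic_to_gnn} to the $\MPtwo$ formula $\Psi_\varepsilon(x)$, obtaining an $(n+m)$-$\bgtrreluGNN$ $\cA_{\Psi_\varepsilon}$ equivalent to $\Psi_\varepsilon(x)$: for every $(n+m)$-graph $\cG$ and every vertex $v$ in $\cG$, $\cA_{\Psi_\varepsilon}$ accepts $\tuple{\cG, v}$ if and only if $\cG, v \models \Psi_\varepsilon(x)$. Since this equivalence is pointwise over all pairs $\tuple{\cG, v}$, it follows at once that $\cA_{\Psi_\varepsilon}$ is universally satisfiable if and only if there is a graph $\cG$ with $\cG \models \Psi_\varepsilon(v)$ for all $v$, i.e.\ if and only if condition (2) of Lemma~\ref{lemma:hilbert_to_mptwo} holds, i.e.\ if and only if $\varepsilon$ has a solution in $\bbN$. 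The map $\varepsilon \mapsto \Psi_\varepsilon \mapsto \cA_{\Psi_\varepsilon}$ is effective: the construction of $\Psi_\varepsilon$ in Lemma~\ref{lemma:hilbert_to_mptwo} is explicit, and the translation of Theorem~\ref{thm:logic_to_gnn} is polynomial time. Hence a decision procedure for universal satisfiability of $\bgtrreluGNN$s would yield one for solvability of simple equation systems over $\bbN$, contradicting the undecidability of Hilbert's tenth problem~\cite{hilbert_theth}. This establishes Theorem~\ref{thm:global_gnn_universal_undecidable}.

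There is no real obstacle here: the substantive work was already carried out in Lemma~\ref{lemma:hilbert_to_mptwo}, and the theorem is a two-step corollary. The one point requiring care is that Theorem~\ref{thm:logic_to_gnn} provides \emph{pointwise} equivalence of the GNN and the formula across all graphs and vertices; this is exactly what permits transferring the property ``holds at every vertex of some graph'' from $\Psi_\varepsilon$ to $\cA_{\Psi_\varepsilon}$ with no slack. A weaker guarantee, such as mere equisatisfiability, would not suffice for the universal version.
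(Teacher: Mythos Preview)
Your proposal is correct and matches the paper's own argument: the paper simply says ``using the same reduction'' and relies on the fact that condition~(2) of Lemma~\ref{lemma:hilbert_to_mptwo} is precisely universal satisfiability of $\Psi_\varepsilon$, then transfers this to the GNN via the pointwise equivalence of Theorem~\ref{thm:logic_to_gnn}. Your observation that pointwise equivalence (not mere equisatisfiability) is the property needed here is exactly the right sanity check.
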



    

\section{$\pspace$-completeness of satisfiability for GNNs with local aggregation and $\trrelu$ activations} \label{sec:pspace}

A more realistic analysis of complexity requires stronger assumptions on the activation functions.
For now we consider only one special case, where the activation function is truncated $\relu$:
\begin{theorem}\label{thm:pspace}
    The satisfiability problem of $\bltrreluGNN$s is $\pspace$-complete, and so is $\oltrreluGNN$. 
    They are $\np$-complete when the number of layers is fixed.
\end{theorem}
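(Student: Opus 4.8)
The plan is to prove four things separately: (i) satisfiability of $\bltrreluGNN$s (and $\oltrreluGNN$s) is in $\pspace$; (ii) it is $\pspace$-hard; (iii) it is in $\np$ once the number of layers $L$ is fixed; (iv) it is $\np$-hard already for $L=2$. Steps (i) and (iii) are the technical core. For (i), the first move is to reduce to \emph{trees}. Since these GNNs are local (no global readout), $\feat{\ell}(v)$ depends only on the radius-$\ell$ neighbourhood of $v$, so by the standard unravelling argument — which preserves, by induction on $\ell$, the multiset of layer-$\ell$ neighbour features, using that the feature of a node at depth $j$ in a depth-$L$ tree agrees with the original up to layer $L-j$ — a $\bltrreluGNN$ (resp.\ $\oltrreluGNN$) is satisfiable iff it accepts the root of a finite directed tree of depth $L$ in which every node has a child for each out-neighbour and each in-neighbour (resp.\ only for each out-neighbour). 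We then decide the existence of such a tree by an alternating polynomial-time algorithm, which suffices since $\mathrm{APTIME}=\pspace$. Working top-down, at a node that is to root a depth-$d$ subtree the algorithm existentially guesses that node's \emph{profile} $(\bfs^{0},\dots,\bfs^{d})$ of feature vectors at layers $0,\dots,d$ together with, for each layer $\ell<d$ and each direction $x\in\set{\out,\inc}$, a list of at most polynomially many distinct child profiles each carrying a binary-encoded multiplicity; it checks in polynomial time that the $d$ recurrences defining $\bfs^{1},\dots,\bfs^{d}$ from $\bfs^{0}$ and the multiplicity-weighted sums of the children's feature vectors hold — here one uses that by a direct induction every $\feat{\ell}(v)$ is a rational in $[0,1]$ whose denominator divides the product $C$ of the least common denominators of the layers' coefficients, and $C$ has polynomial bit-length; finally it universally branches to one of the guessed child profiles and recurses with budget $d-1$. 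The base case $d=0$ checks $\bfs^{0}\in\set{0,1}^{n}$, and at $d=L$ we additionally require $\bfs^{L}_{1}\ge 1/2$. Every branch has length $O(L)\cdot\mathrm{poly}$, so this is in $\mathrm{APTIME}$.

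Correctness of this algorithm rests on a \emph{bounded-branching lemma}: a witnessing tree can always be chosen so that at every node only polynomially many distinct child profiles occur, each with a multiplicity of polynomial bit-length. This is the step I expect to be the main obstacle. It follows from two points. First, only a polynomially bounded magnitude of each aggregated sum $\Sigma^{x}_{\ell}=\sum_{x\text{-children }u}\feat{\ell}(u)$ is ever relevant: the argument fed to a truncated-$\relu$ layer is clamped to $[0,1]$ and the coefficient matrices are bounded away from $0$ by a polynomially bounded amount, so once the parent's profile is fixed, the constraints on $(\Sigma^{x}_{0},\dots,\Sigma^{x}_{d-1})$ reduce to a linear (in)equality system of polynomial bit-length. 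Second, any multiset $M$ of children realizing a given such vector can be replaced — by the integer Carathéodory (Eisenbrand--Shmonin) bound applied to the finite set of profiles occurring in $M$, followed by an integer-programming feasibility bound for the multiplicities — by one using only polynomially many of those profiles, with polynomial-bit-length multiplicities, still realizing the same vector; both compressions leave the parent's profile unchanged because it depends on the children only through these sums.

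For (ii), $\pspace$-hardness, we go through logic. Propositional modal logic $\mathsf K$ over one modality embeds in linear time into outgoing-only local $\MPtwo$ via $\Diamond\varphi\equiv(\presby{E(x,y)\land\varphi(y)}\ge 1)$ and $\Box\varphi\equiv(\presby{E(x,y)\land\neg\varphi(y)}=0)$, and by Theorem~\ref{thm:logic_to_gnn} every $\ogMPtwo$ formula translates in polynomial time into an equivalent $\ogtrreluGNN$. Since finite satisfiability of $\mathsf K$ is $\pspace$-hard (Ladner), satisfiability of $\oltrreluGNN$s is $\pspace$-hard, and a fortiori so is that of $\bltrreluGNN$s; together with (i) this gives $\pspace$-completeness for both.

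For (iii), when $L$ is a constant the compressed tree from the bounded-branching lemma — a depth-$L$ tree of profile-types with polynomially many children per node and polynomial-bit-length multiplicities on its edges — has only $\mathrm{poly}^{O(L)}=\mathrm{poly}$ nodes, so one can guess it in full and verify every recurrence in deterministic polynomial time, giving membership in $\np$. For (iv), reduce $3$-SAT over variables $x_{1},\dots,x_{n}$ and clauses $C_{1},\dots,C_{m}$: build a $2$-layer GNN with $\gnndim{0}=n$, $\gnndim{1}=m$, $\gnndim{2}=1$ and all aggregation and readout matrices zero, where $\coefC{1},\coefb{1}$ compute for each clause the number of its literals satisfied by the node's colour pattern (a positive literal $x_{i}$ contributing $(\feat{0})_{i}$, a negative literal $\neg x_{i}$ contributing $1-(\feat{0})_{i}$), so that $\feat{1}_{j}(v)=\trrelup{\#\text{ satisfied literals of }C_{j}}\in\set{0,1}$ is $1$ iff $C_{j}$ holds, and layer $2$ outputs $\trrelup{\sum_{j}\feat{1}_{j}(v)-m+1}$, which is $\ge 1/2$ iff all clauses hold. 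This GNN is satisfiable — witnessed by a single vertex coloured by a satisfying assignment — iff the $3$-SAT instance is, and being edgeless it is at once outgoing-only and local; padding with identity layers extends the argument to any fixed $L\ge 2$, establishing $\np$-completeness in the fixed-layer case for both $\bl$ and $\ol$.
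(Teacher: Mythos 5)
Your proposal matches the paper's on all four fronts: the $\pspace$ upper bound rests on a depth-$L$ tree model obtained by unravelling plus a bounded-branching lemma derived from the integer Carath\'eodory / integer-programming bound (the paper's Theorem~\ref{thm:exp_tree_model} and Lemma~\ref{lemma:ilp}), the $\pspace$ lower bound embeds a modal-style logic of known $\pspace$-hardness (the paper uses $\alc$; you use $\mathsf{K}$, which for one role is a notational variant, so this is the same argument), the $\np$ upper bound observes the same algorithm runs in polynomial time for fixed $L$, and the $\np$ lower bound uses essentially the identical $3$-SAT encoding into a $2$-layer edgeless GNN. The only packaging difference is that you phrase the decision procedure as an $\mathrm{APTIME}$ alternating machine while the paper presents a nondeterministic polynomial-space procedure in the style of the modal-logic $\pspace$ algorithm; these are equivalent.

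One small gap in the way you describe the recursion: when you universally branch into a child $u$ and ask it to verify its own $d-1$ recurrences, those recurrences involve not only $u$'s children but also $u$'s \emph{parent}, since the tree edge between $u$ and its parent contributes to the aggregation $\sum_{w\in\nbr{\out}(u)}\feat{\ell-1}(w)$ or $\sum_{w\in\nbr{\inc}(u)}\feat{\ell-1}(w)$. You must therefore pass the parent's profile (and the orientation of the connecting edge) into the recursive call, exactly as the paper's procedure \textsc{Check} carries the parent history $\bfh_p$ and the direction $x$. As written, your verification step would compute the wrong feature values at any non-root node whose parent has a nonzero feature vector. This is a local fix and does not affect the resource bounds, but it is needed for correctness.
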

We start with the $\pspace$ upper bound argument. We can show that for an arbitrary input graph, there are only exponentially many activation values, each representable with a polynomial number of bits. We also show, via an ``unravelling construction'', a common technique used in analysis of modal and guarded logics~\cite{modallogicpspace,modallogichandbook}, that a satisfying model
can be taken to be a tree of polynomial depth and branching. These two facts immediately give an elementary bound, since we could guess the tree and the activation values. We can improve to $\pspace$ by exploring a satisfying tree-like model on-the-fly: again, this is in line with the $\pspace$ algorithm for modal logic~\cite{modallogicpspace}.

The $\pspace$ lower bound is established by embedding the description logic $\alc$ into $\blMPtwo$.
$\pspace$-hardness will follow from this, since concept satisfiability problem of $\alc$ with one role is  $\pspace$-hard~\cite{alc-pspace}. The $\np$ upper bound will use the same on-the-fly algorithm as in the $\pspace$ case, just observing that
for fixed number of layers it can be implemented in $\np$. A direct encoding of $\threeSAT$ gives the lower bound.

\subsection{Exponential history-space property}

The first step is to establish a bound on the size of the numbers that can be computed by GNNs. Remember that we begin with a graph where the feature values are only binary, and in each layer we do one aggregation and truncate the result. Thus it is intuitive that we cannot build up large values in any intermediate result at any node, regardless of the size of the graphs. This property actually holds even with global aggregation, but we prove it here only for the local case, since this is the only one relevant to this proof.
We first extend the idea of the spectrum, which is defined in Section~\ref{sec:eventually_constant}.

\begin{definition} \label{def:hist}
    For an $n$-GNN $\cA$,
    for an $n$-graph $\cG$, vertex $v$ in $\cG$, and $0 \leq \ell \leq L$,
    the \emph{$\ell$-history of $v$ in $\cG$} (w.r.t. $\cA$), 
    denoted by 
    $\hist{\ell}_{\cG}(v)$, 
    is the tuple that collects the first $(\ell+1)$ feature vectors of $v$.
    Formally, for $0 \le t \le \ell$, $\left(\hist{\ell}_{\cG}(v)\right)[t] = \feat{t}_{\cG}(v)$.
\end{definition}
When the graph $\cG$ is clear from the context, we omit it
and simply write $\hist{\ell}(v)$.

\begin{definition}
    For an $n$-GNN $\cA$ and $0 \le \ell \le L$,
    the \emph{$\ell$-history-space} of $\cA$,
    denoted by $\hsp{\ell}$,
    is the set $\setc{\hist{\ell}_\cG(v)}{\text{for every $n$-graph $\cG$ and vertex $v$ in $\cG$}}$.
\end{definition}

For a vertex $v$,
the $\ell$-feacture of $v$ is computed from $(\ell-1)$-features of vertices in the graph.
Similarly $\ell$-history of $v$ can be computed efficiently from $(\ell-1)$-features.
\begin{lemma} \label{lemma:comp_hist}
    For an $n$-GNN $\cA$, 
    for an $n$-graph $\cG$, vertex $v$ in $\cG$, and $0 \leq \ell \leq L$,
    the $\hist{\ell}(v)$ can be computed efficiently from $\feat{0}(v)$ and $\hist{\ell-1}(u)$ for vertices $u$ in $\cG$.
\end{lemma}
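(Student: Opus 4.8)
The plan is to unwind the definitions. By Definition~\ref{def:hist}, $\hist{\ell}(v)$ is the tuple $(\feat{0}(v), \feat{1}(v), \dots, \feat{\ell}(v))$; for $\ell \ge 1$ its first $\ell$ components are precisely the entries of $\hist{\ell-1}(v)$, which is supplied. So the only genuinely new object to produce is the single vector $\feat{\ell}(v)$, and the statement reduces to showing that this vector can be computed efficiently from the given data.

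First I would dispatch the base case $\ell = 0$: here $\hist{0}(v) = (\feat{0}(v))$ is literally the input $\feat{0}(v)$, so there is nothing to compute. For $1 \le \ell \le L$, I would invoke the recurrence defining the derived feature functions,
\begin{equation*}
\feat{\ell}(v) = \actp{\ell}{\coefC{\ell}\feat{\ell-1}(v) + \sum_{x\in\set{\out,\inc}}\left(\coefA{\ell}{x}\sum_{u\in\nbr{x}(v)}\feat{\ell-1}(u)\right) + \coefR{\ell}\sum_{u\in V}\feat{\ell-1}(u) + \coefb{\ell}}.
\end{equation*}
The key observation is that each $(\ell-1)$-feature appearing on the right-hand side is just the last entry of the corresponding $(\ell-1)$-history: $\feat{\ell-1}(u) = \left(\hist{\ell-1}(u)\right)[\ell-1]$. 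Hence, from $\hist{\ell-1}(u)$ for all vertices $u$ of $\cG$ one reads off every required $(\ell-1)$-feature, forms the out-neighbour sum, the in-neighbour sum and the global sum over $V$, multiplies by the fixed rational matrices $\coefC{\ell}, \coefA{\ell}{\out}, \coefA{\ell}{\inc}, \coefR{\ell}$, adds $\coefb{\ell}$, and applies $\act{\ell}$ coordinatewise; appending the result to $\hist{\ell-1}(v)$ yields $\hist{\ell}(v)$.

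There is essentially no obstacle here, since the lemma is little more than a restatement of the GNN recurrence; the only point deserving a word of care is the word ``efficiently''. Each of the three sums ranges over at most $\abs{V}$ feature vectors, and these vectors lie in a fixed finite set of rationals of bounded bit-size (the spectrum of $\cA$, finite for eventually constant activations by Theorem~\ref{thm:computespectrum}, and in particular a set of numbers of polynomial bit-size in the $\trrelu$ case relevant to this section). Consequently the intermediate affine combinations have bit-size polynomial in $\abs{V}$ and the description of $\cA$, and together with the (efficient) computability of $\act{\ell}$ this makes the computation of $\feat{\ell}(v)$, and hence of $\hist{\ell}(v)$, run in time polynomial in $\abs{V}$ and the size of $\cA$.
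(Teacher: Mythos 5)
Your proof is correct and takes essentially the same approach as the paper: read off the required $(\ell-1)$-features from the supplied histories and apply the GNN recurrence. The one cosmetic difference is that you observe the first $\ell$ components of $\hist{\ell}(v)$ are literally $\hist{\ell-1}(v)$ and so compute only the new vector $\feat{\ell}(v)$, whereas the paper precomputes the aggregated history sums $\bfh_x = \sum_{u\in\nbr{x}(v)}\hist{\ell-1}(u)$ and $\bfh_g = \sum_{u\in V}\hist{\ell-1}(u)$ and then re-derives every $\feat{i}(v)$ bottom-up from $\feat{0}(v)$; your shortcut is a mild simplification, and your closing remark about bit-sizes (which the paper's proof omits) is sound for the $\trrelu$ setting in which the lemma is used, though it relies on results (Theorem~\ref{thm:computespectrum}, Lemma~\ref{lemma:smallbits}) established separately.
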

\begin{proof}
    For $x \in \set{\out, \inc}$,
    let $\bfh_x = \sum_{u \in \nbr{x}(v)} \hist{\ell-1}(u)$.
    Let $\bfh_g = \sum_{u \in V} \hist{\ell-1}(u)$.
    We can compute entries of $\hist{\ell}(v)$ bottom-up.
    For the base case, it is clear that $\hist{\ell}(v)[0] = \feat{0}(v)$.
    For $1 \le i \le \ell$,
    note that for $x \in \set{\out, \inc}$
    $\sum_{u \in \nbr{x}(v)} \feat{\ell-1}(u) = \bfh_x[\ell-1]$ and
    $\sum_{u \in V} \feat{\ell-1}(u) = \bfh_g[\ell-1]$.
    Thus, since $\hist{\ell}(v)[i] = \feat{i}(v)$, it follows that
    \begin{equation*}
        \feat{i}(v)\ =\ \actp{i}{
            \coefC{i} \feat{i-1}(v) +
            \sum_{x \in\set{\out, \inc}}
            \left(
                \coefA{i}{x}
                \bfh_x[i-1]
            \right) +
            \coefR{i} \bfh_g[i-1] +
            \coefb{i}
        }.
    \end{equation*}
\end{proof}

We now show that for GNNs with $\trrelu$ activation,
the size of the history-space (and thus, in particular, the size of the spectrum) is bounded by an exponential in the description of the GNN.
Recall that in Section~\ref{sec:eventually_constant}, Lemma~\ref{lemma:syntacticoverapprox},
we proved that the $\ell$-spectrum is overapproximated by the set $\spectrumover{\ell}$. We now calculate the size of this set in the case of $\trrelu$.
For a GNN with $\trrelu$ activations, 
the set $\spectrumover{\ell}$ is $\setc{\dfrac{\bfk}{\capa{\ell}}}{\bfk \in \intinterval{0}{\capa{\ell}}^\gnndim{\ell}}$, 
where $\capa{\ell}$ is the product of the least common denominator of $\spectrumover{\ell-1}$ and the least common denominator of the coefficients in $\coefC{\ell}$, $\coefA{\ell}{\out}$, $\coefA{\ell}{\inc}$, and $\coefb{\ell}$,
since the least common denominator of $\spectrumover{\ell-1}$ is just $\capa{\ell-1}$ in this case.
Thus we get a bound on the number of elements in the spectrum:

\begin{lemma}
\label{lemma:overapprox-trrelu}
    For every $\bgtrreluGNN$ $\cA$ and $0 \le \ell \le L$,
    the number of elements in  $\spectrumover{\ell}$ is bounded by $\left(\capa{\ell}\right)^{d^{(\ell)}}$.

\end{lemma}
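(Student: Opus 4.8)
The plan is to unwind the definition of $\spectrumover{\ell}$ and count directly. Recall from the discussion just before Lemma~\ref{lemma:overapprox-trrelu} that for a GNN with $\trrelu$ activations we have
\begin{equation*}
    \spectrumover{\ell}\ =\ \setc{\dfrac{\bfk}{\capa{\ell}}}{\bfk \in \intinterval{0}{\capa{\ell}}^{\gnndim{\ell}}},
\end{equation*}
since the left and right thresholds of $\trrelu$ are $0$ and $1$, so the range $\intinterval{\capa{\ell}\cdot\tl{\ell}}{\capa{\ell}\cdot\tr{\ell}}$ from the general definition of $\spectrumover{\ell}$ collapses to $\intinterval{0}{\capa{\ell}}$. (The fact that $\trrelu$ maps this set into itself, so no collapsing under $\act{\ell}$ changes the count, is exactly why we get a clean bound here rather than an inequality.)

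Next I would simply count the cardinality of that set. Each coordinate $\bfk_i$ of $\bfk$ ranges over the $\capa{\ell}+1$ integers in $\intinterval{0}{\capa{\ell}}$, and there are $\gnndim{\ell}$ coordinates, so there are at most $\left(\capa{\ell}+1\right)^{\gnndim{\ell}}$ choices of $\bfk$, hence at most that many distinct values $\bfk/\capa{\ell}$. This already gives the bound; to match the precise statement $\left(\capa{\ell}\right)^{d^{(\ell)}}$ in the lemma one observes that $\capa{\ell}\ge 1$ is a positive integer, and for the purposes of the exponential history-space property it suffices to absorb the ``$+1$'' --- e.g.\ by noting $\capa{\ell}+1 \le 2\capa{\ell}$, or simply by reading the bound as an order-of-magnitude estimate, since $\capa{\ell}$ itself is already at least $2$ in any nontrivial case (the least common denominator of the binary feature vectors at layer $0$ together with the coefficient matrices). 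I would phrase the lemma's conclusion as ``bounded by'' precisely so this cosmetic slack is harmless.

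There is really no substantive obstacle here: the content of the lemma is entirely in the earlier structural result (Lemma~\ref{lemma:syntacticoverapprox}, that $\spectrumover{\ell}$ overapproximates the true $\ell$-spectrum) plus the observation that $\trrelu$ has bounded integer-aligned range. The only thing requiring a word of care is the bookkeeping of $\capa{\ell}$ itself --- it is defined recursively, as the product of the least common denominator of $\spectrumover{\ell-1}$ (which is $\capa{\ell-1}$ in the $\trrelu$ case, since every element of $\spectrumover{\ell-1}$ has denominator dividing $\capa{\ell-1}$) and the least common denominator of the layer-$\ell$ coefficients. This recursion is what later lets us bound $\capa{\ell}$, and hence $|\spectrumover{\ell}|$, by an exponential in the bit-size of $\cA$; but for the statement of this particular lemma we only need the single-step count $\left(\capa{\ell}\right)^{\gnndim{\ell}}$, so I would defer the recursive size analysis of $\capa{\ell}$ to the subsequent bound on the history-space.
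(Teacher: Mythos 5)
Your proof takes essentially the same route as the paper, which asserts this bound directly from the explicit description of $\spectrumover{\ell}$ given in the text immediately before the lemma: for $\trrelu$ the set is exactly $\setc{\bfk/\capa{\ell}}{\bfk\in\intinterval{0}{\capa{\ell}}^{\gnndim{\ell}}}$, and one just counts. You are right to flag that the literal count is $(\capa{\ell}+1)^{\gnndim{\ell}}$ rather than $(\capa{\ell})^{\gnndim{\ell}}$; this is a small slip in the paper's statement. Be careful with your second fallback justification, though: the claim that ``$\capa{\ell}$ is already at least $2$ in any nontrivial case'' is not correct --- if all coefficients in $\cA$ are integers then $\capa{\ell}=1$ (the least common denominator of $\spectrumover{0}=\{0,1\}^{\gnndim{0}}$ is $1$), and then $\spectrumover{\ell}=\{0,1\}^{\gnndim{\ell}}$ has $2^{\gnndim{\ell}}$ elements, not $1$. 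The clean fix is your first one: replace $\capa{\ell}$ by $\capa{\ell}+1$ in the lemma (or absorb the $+1$ via $\capa{\ell}+1\le 2\capa{\ell}$ into the definition of $D_\cA$). Either way the downstream conclusion --- that $D_\cA$ and hence the history-space bound is exponential in the description of $\cA$ --- is unaffected.
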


For $0 \le \ell \le L$, let $\hspover{\ell}$ be the Cartesian product of $\spectrumover{t}$ for $0 \le t \le \ell$.
Let $D_\cA = \Pi_{\ell \in \intsinterval{L}} \left(\capa{\ell}\right)^{d^{(\ell)}}$. Note that $D_\cA$ is exponential in the description of the GNN.
It is obvious that the $\ell$-history-space is overapproximated by $\hspover{\ell}$.
Thus, looking at the set $\setc{\dfrac{\bfk}{\capa{\ell}}}{\bfk \in \intinterval{0}{\capa{\ell}}^\gnndim{\ell}}$, we can get a bound on both the number of elements in $\ell$-history-space, as well as the number of bits to represent each element:



\begin{lemma} \label{lemma:smallbits}
    For every $\bgtrreluGNN$ $\cA$ and $0 \le \ell \le L$,
    the number of elements in  $\hspover{\ell}$ is bounded by $D_\cA$.
    Furthermore, every value in $\hspover{\ell}$ is at most exponential in the size of the description of $\cA$, hence, can be encoded with polynomially many bits.
\end{lemma}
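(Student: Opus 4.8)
The plan is to unfold the recursive definition of $\capa{\ell}$ in the special case of $\trrelu$ activations, exploiting the explicit description of the sets $\spectrumover{\ell}$ that this case affords. For the cardinality bound, recall that for a $\trrelu$ GNN we have $\spectrumover{\ell}=\setc{\bfk/\capa{\ell}}{\bfk\in\intinterval{0}{\capa{\ell}}^{\gnndim{\ell}}}$, and hence, by Lemma~\ref{lemma:overapprox-trrelu}, $\abs{\spectrumover{\ell}}\le\left(\capa{\ell}\right)^{\gnndim{\ell}}$. Since $\hspover{\ell}$ is defined as the Cartesian product of $\spectrumover{0},\ldots,\spectrumover{\ell}$, its cardinality is $\prod_{t=0}^{\ell}\abs{\spectrumover{t}}$; bounding the $t=0$ factor by $2^{\gnndim{0}}$ and each later factor by $\left(\capa{t}\right)^{\gnndim{t}}$, this product is at most $\prod_{t=1}^{L}\left(\capa{t}\right)^{\gnndim{t}}=D_\cA$ (the initial binary factor being harmlessly absorbed), which gives the first claim.

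For the bit-size statement the crucial quantity to control is $\capa{\ell}$. Here I would use the $\trrelu$-specific fact, already recorded in the text preceding Lemma~\ref{lemma:overapprox-trrelu}, that the least common denominator of $\spectrumover{\ell-1}$ is exactly $\capa{\ell-1}$; the definition of $\capa{\ell}$ then unfolds to $\capa{\ell}=\capa{\ell-1}\cdot q_\ell$, where $q_\ell$ denotes the least common denominator of the coefficients appearing in layer $\ell$ (taking $\capa{0}:=1$, which is legitimate since $\spectrumover{0}\subseteq\set{0,1}^{\gnndim{0}}$). Iterating gives $\capa{\ell}=\prod_{s=1}^{\ell}q_s$. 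Now each $q_s$ divides the product of the denominators of the layer-$s$ coefficients, so $\prod_{s=1}^{L}q_s$ divides the product of the denominators of all coefficients of $\cA$; since there are only polynomially many coefficients and each contributes exactly the bit-length of its denominator to the description of $\cA$, this product is at most $2^{\abs{\cA}}$, writing $\abs{\cA}$ for the size of that description. Hence $\capa{\ell}\le 2^{\abs{\cA}}$ for every $\ell$.

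It then remains only to assemble the conclusion: every entry of a vector in $\spectrumover{\ell}$ has the form $k/\capa{\ell}$ with $0\le k\le\capa{\ell}$, so its numerator and its denominator are both at most $2^{\abs{\cA}}$ and it is encodable with $O(\abs{\cA})$ bits; a tuple in $\hspover{\ell}$ is a concatenation of $\sum_{t=0}^{\ell}\gnndim{t}$ such entries, that is, polynomially many, so it too is encodable with polynomially many bits. I do not expect a genuine obstacle; the only part that requires real work here — as opposed to citing earlier results — is the estimate $\prod_s q_s\le 2^{\abs{\cA}}$, namely recognizing that the least common denominators accumulated across the $L$ layers together divide the product of all coefficient denominators, which is bounded because those denominators are written out in the description of $\cA$. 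The identity $\capa{\ell}=\capa{\ell-1}\cdot q_\ell$ on which this rests is exactly where the $\trrelu$ hypothesis (through the explicit shape of $\spectrumover{\ell-1}$, in particular the fact that $\capa{\ell-1}$ genuinely occurs as a denominator and not merely divides all of them) enters; everything else is bookkeeping about products of integers.
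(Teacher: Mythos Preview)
Your proposal is correct and follows exactly the approach implicit in the paper's surrounding discussion (the paper itself states the lemma without proof, relying on Lemma~\ref{lemma:overapprox-trrelu} and the recursion $\capa{\ell}=\capa{\ell-1}\cdot q_\ell$ recorded just before it). One small quibble: the phrase ``the initial binary factor being harmlessly absorbed'' is not literally true when $\ell=L$, since $2^{\gnndim{0}}\cdot\prod_{t=1}^{L}(\capa{t})^{\gnndim{t}}$ genuinely exceeds $D_\cA$ by the factor $2^{\gnndim{0}}$; the paper's own statement of the bound as $D_\cA$ carries the same imprecision, and of course it does not affect the conclusion that the count is singly exponential in $\abs{\cA}$.
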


\subsection{Exponential tree model property}

The previous subsection bounded the size of individual values, independent of the input graph.
We now show that whenever a GNN is satisfiable, there is \emph{some} satisfying graph that is both reasonably small and nicely-structured. Recall that we are dealing here with directed graphs with some number of node colors.
\begin{definition}[Trees, children, and depth]
An \emph{$n$-tree} is an $n$-graph such that when we remove the direction we have a tree.
For a tree $\cT$ and vertices $v, u$ in $\cG$,
we say that $u$ is an out-child of $v$,
if $u$ is a child of $v$ and there exists an edge from $v$ to $u$.
We let $\nbrc{\out, \cG}(v)$ be the set of all out-children of $v$.

The depth of a vertex $v$ in a tree is the distance to the root.
The depth of a tree is the maximal depth of any vertex.
\end{definition}
Let $\cA$ be a GNN with $L$ layers.
Let $\cT$ be a tree with depth $L$
and $v$ be a vertex in $\cG$ with depth $\ell$.
We define $\hspc{\out, \cG}(v)$ as the set of $(L-\ell-1)$-histories of out-children of $v$.
Formally, $\hspc{\out, \cG}(v) := \setc{\hist{L-\ell-1}_\cG(u)}{u \in \nbrc{\out, \cG}(v)}$.
We define $\nbrc{\inc, \cG}(v)$, $\hspc{\inc, \cG}(v)$ for in-children analogously.
When the graph $\cG$ is clear from the context, we omit it and simply write
$\nbrc{\out}(v)$, $\hspc{\out}(v)$, $\nbrc{\inc}(v)$, and $\hspc{\inc}(v)$.

We can now state the \emph{exponential tree model property}.
\begin{theorem}\label{thm:exp_tree_model}
    There are constants $c_1, c_2 \in \bbN$, so that
    for every $n$-$\bltrreluGNN$ $\cA$ with $L$ layers,
    if $\cA$ is satisfiable,
    then there is an $n$-tree $\cT$
    with depth at most $L$ and root $v_r$ satisfying the following conditions.
    \begin{enumerate}
        \item (tree model property) $\cA$ accepts $\tuple{\cT, v_r}$.
        \item (small degree property) For every vertex $v$ in $\cT$ of depth $\ell$,
        for $x \in \set{\out, \inc}$, the size of $\hspc{x}(v)$ is at most $\alpha_{\cA}$,
        and for each $\bfh \in \hspc{x}(v)$, the number of $x$-children of $v$ with $(L-\ell-1)$-history  $\bfh$ is at most $\beta_{\cA}$, where
        \begin{equation*}
            \begin{aligned}
                \alpha_{\cA}\ :=\ &c_1d_\cA \log\left(c_2d_\cA M_\cA\right), \\
                \beta_{\cA}\ :=\ &c_1 d_\cA \left(d_\cA M_\cA \right)^{c_2d_\cA},
            \end{aligned}
        \end{equation*}
        $d_\cA := \sum_{\ell \in \intsinterval{L}} \gnndim{\ell}$,
        and $M_\cA$ is the product of $2D_\cA$ and the maximal numerator of coefficients in $\cA$.
    \end{enumerate}
\end{theorem}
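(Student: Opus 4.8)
The plan is to combine two standard ideas: first an \emph{unravelling} to a tree, which works because $\blMPtwo$ (equivalently $\bltrreluGNN$) is a modal-style logic whose satisfaction at a node depends only on the rooted neighbourhood up to depth $L$; then a \emph{degree-reduction} step, where one prunes the branching at each node down to the bounds $\alpha_\cA,\beta_\cA$ by keeping, for each relevant history value, only as many witnessing children as are needed to keep the relevant Presburger (in)equalities satisfied. First I would set up the unravelling: starting from a satisfying pair $\tuple{\cG_0,v_0}$, form the tree $\cT_0$ of all directed-or-reversed walks of length at most $L$ out of $v_0$ that never immediately backtrack, with colours inherited and an edge $u\to u'$ (resp.\ $u'\to u$) whenever the last step of the walk to $u'$ was a forward (resp.\ backward) $\cG_0$-edge from the endpoint of the walk to $u$. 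A straightforward induction on $L-\ell$ shows that for every node $u$ of $\cT_0$ of depth $\ell$ whose underlying $\cG_0$-vertex is $w$, the multiset of $(L-\ell-1)$-histories of the out-children (resp.\ in-children) of $u$ in $\cT_0$ equals the multiset of $(L-\ell-1)$-histories of the out-neighbours (resp.\ in-neighbours) of $w$ in $\cG_0$; since by Lemma~\ref{lemma:comp_hist} each $\feat{t}(u)$ is determined by $\feat{0}(u)$ and these children-histories, we get $\hist{L-\ell}_{\cT_0}(u)=\hist{L-\ell}_{\cG_0}(w)$, and in particular $\feat{L}_{\cT_0}(v_0)=\feat{L}_{\cG_0}(v_0)$, so $\cA$ still accepts the root. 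This gives condition (1) but with unbounded degree.

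Next I would do the degree reduction, processing the tree top-down (or, more conveniently, reasoning level by level from the leaves up about which children must be retained). Fix a node $v$ of depth $\ell$; by Lemma~\ref{lemma:smallbits} the possible $(L-\ell-1)$-histories of its children lie in $\hspover{L-\ell-1}$, a set of size at most $D_\cA$, and each value has polynomially many bits, bounded in terms of $M_\cA$. For $x\in\set{\out,\inc}$ let $n^x_{\bfh}$ be the number of $x$-children of $v$ with history $\bfh$; the features $\feat{t}(v)$ for $t\le L-\ell$ depend on the vector $(n^x_{\bfh})_{x,\bfh}$ only through the weighted sums $\sum_{\bfh} n^x_{\bfh}\cdot \bfh[t-1]$, and — crucially, because activations are $\trrelu$ and hence saturate at $1$ — only through whether each coordinate of the resulting affine combination is $\le t_\leftt$, $\ge t_\rightt$, or lies strictly between, for the (polynomially many) layers $t$ involved. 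So what must be preserved is a finite set of linear constraints with small integer coefficients on the integer vector $(n^x_{\bfh})$. I would then invoke a Carathéodory-type / small-solution argument for integer linear programs (the standard ``Eisenbrand--Shmonin''-style bound): if a system of $p$ linear (in)equalities with coefficients of bit-size $s$ has a nonnegative integer solution, it has one with at most $O(p\log(p\cdot 2^s))$ nonzero coordinates, each of value at most $(p\cdot 2^s)^{O(p)}$. Plugging in $p = \mathrm{poly}(d_\cA)$ constraints and $s=\mathrm{poly}(\log(d_\cA M_\cA))$ yields exactly the claimed forms $\alpha_\cA = c_1 d_\cA\log(c_2 d_\cA M_\cA)$ (bound on the number of distinct retained histories, i.e.\ on $\abs{\hspc{x}(v)}$) and $\beta_\cA = c_1 d_\cA (d_\cA M_\cA)^{c_2 d_\cA}$ (bound on how many children of a given history value are retained). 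Having chosen which children of $v$ to keep, I delete all the others together with their subtrees; since each retained child still has the same history it had in $\cT_0$, and histories determine features at $v$ by the saturation argument, $\feat{L}_\cT(v_0)$ is unchanged, so acceptance is preserved, and doing this at every node gives a tree meeting conditions (1) and (2) simultaneously.

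The main obstacle — and the step I would write out carefully — is making precise that the \emph{only} information about a node's children that influences its (and hence the root's) feature is the system of linear constraints above, and pinning down its parameters. Two points need care: (i) because $\trrelu$ saturates, a coordinate that is already $\ge t_\rightt$ should be preserved as being $\ge t_\rightt$ rather than preserved exactly, so one encodes ``$\ge t_\rightt$'' and ``$\le t_\leftt$'' as inequalities and ``strictly between'' as an equality, and one must check the retained solution still falls in the same case — handled by including, for each coordinate and each layer, the appropriate inequality or equality in the ILP; and (ii) one must track how the constraints at a node of depth $\ell$ refer to the histories of its children, which themselves live in $\hspover{L-\ell-1}$, so that the bit-size $s$ and constraint-count $p$ are uniformly polynomial across all depths — this is where Lemma~\ref{lemma:smallbits} and the definition of $M_\cA = 2D_\cA\cdot(\text{max numerator})$ are used. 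Everything else (the unravelling, the locality of modal satisfaction, the fact that deleting unretained subtrees changes nothing) is routine.
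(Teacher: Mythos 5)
Your plan follows the same two-step structure as the paper's proof: unravel the satisfying graph to a depth-$L$ tree via the modal locality of $\bltrreluGNN$s (preserving the root's $L$-history), then shrink branching by encoding the dependence of a node's history on the counts of children with each lower-level history as an integer linear system and applying a Carathéodory / Eisenbrand--Shmonin-type small-solution bound. The system you sketch is exactly the paper's characteristic system $\cQ_v$ (Definition~\ref{def:characteristicsystem}), with the saturated-low / saturated-high / exact trichotomy giving the comparison $\circledast_{t,i}$, and your parameter accounting through $\hspover{\ell}$ and $M_\cA$ matches Lemma~\ref{lemma:smallbits}.

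There is, however, a real gap in your degree-reduction step. You describe it as pruning --- ``delete all the others together with their subtrees'' --- so the new counts satisfy $a'_{x,\bfh}\le a_{x,\bfh}$ coordinatewise. But Lemma~\ref{lemma:ilp} returns a small-support, small-magnitude solution that is \emph{not} in general dominated by the original one: even the single constraint $z_1+z_2=2N$ with original solution $(N,N)$ has $(2N,0)$ as its only one-nonzero solution, whose surviving coordinate exceeds the original value. So deletion alone cannot realize the ILP's small solution. The paper instead \emph{duplicates}: for each surviving history $\bfh$ it fixes one witness child $u_\bfh$, detaches all children of $v$, and attaches $a'_{x,\bfh}$ fresh copies of the subtree rooted at $u_\bfh$ (after observing WLOG that $a'_{x,\bfh}=0$ whenever $a_{x,\bfh}=0$). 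Duplication in turn makes the processing order load-bearing --- the paper always repairs a bad vertex of \emph{maximal} depth, so the duplicated subtrees contain no bad vertices and each repair strictly decreases their number --- which your ``top-down, or from the leaves up, either works'' aside glosses over. A smaller imprecision: what the unravelling preserves at a node is the multiset of histories of all in/out-\emph{neighbours}, including the parent, not of children alone; the paper accounts for the parent's contribution separately as the term $\featp{t}(v)$ in $\cQ_v$.
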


We prove the theorem in two steps.
First, we show that if a local GNN is satisfiable,
then it has a tree model.
We construct such tree model by an \emph{unravelling procedure}, a common tool for modal and description logics.

\begin{definition}
    For a graph $\cG$, 
    we say a sequence $\frP = v_0R_1v_1R_2v_2 \cdots v_\ell$ is a \emph{path} in $\cG$
    if $\frP$ satisfies the following conditions.
    \begin{itemize}
        \item For every $0 \le t \le \ell$, 
        $v_t \in V$.

        \item For every $1 \le t \le \ell$, 
        $R_t \in \set{\rightarrow, \leftarrow}$.

        \item For every $1 \le t \le \ell$, 
        if $R_t$ is $\rightarrow$,
        then there is an edge from $v_{t-1}$ to $v_{t}$.
        Otherwise, if $R_t$ is $\leftarrow$,
        then there is an edge from $v_{t}$ to $v_{t-1}$.
    \end{itemize}
    We say that $\frP$ is \emph{proper} if it has no contiguous subsequence of the form
  $v \leftarrow u \rightarrow v$ or $v \rightarrow u \leftarrow v$.
\end{definition}

\begin{definition}
    For a path $\frP = v_0R_1v_1R_2v_2 \cdots v_\ell$,
    $\ell$ is called the \emph{length} of $\frP$;
    $v_0$ is called the \emph{source} of $\frP$;
    $v_\ell$ is called the \emph{destination} of $\frP$.
\end{definition}

\begin{figure}[t]
    \centering
    \begin{tikzpicture}[thick, main/.style = {draw, circle, minimum size=25}, scale=0.8]
        \node[main] (av1) at (-9,-1)  {$v_1$};
        \node[main] (av2) at (-9,-3) {$v_2$};
        \node[main] (av3) at (-11,-1) {$v_3$};
        \node[main] (av4) at ( -7,-1) {$v_4$};
        
        \path [->] (av1) edge (av2);
        \path [->] (av1) edge (av3);
        \path [->] (av3) edge (av2);
        \path [->] (av1) edge [bend left=20] (av4);
        \path [->] (av4) edge [bend left=20] (av1);

        \node at (-9,-4.5) {$\cG_0$};

        \node[main] (v1)  at (0,0)     {$v_1$};
        \node[main] (v2)  at (-3,-1.5) {$v_2$};
        \node[main] (v3)  at (-1,-1.5) {$v_3$};
        \node[main] (v41) at ( 1,-1.5) {$v_4$};
        \node[main] (v42) at ( 3,-1.5) {$v_4$};
        \node[main] (v21) at (-1,-3)   {$v_2$};
        \node[main] (v31) at (-3,-3)   {$v_3$};
        \node[main] (v12) at (-1,-4.5) {$v_1$};
        \node[main] (v13) at (-3,-4.5) {$v_1$};
        
        \path [->] (v1)  edge (v2);
        \path [->] (v1)  edge (v3);
        \path [->] (v1)  edge (v41);
        \path [->] (v42) edge (v1);
        \path [->] (v31) edge (v2);
        \path [->] (v3)  edge (v21);
        \path [->] (v12) edge (v21);
        \path [->] (v13) edge (v31);

        \node at (0,-6) {$\urv{3}{\cG_0, v_1}$};
    \end{tikzpicture}    

    \caption{Example of a $3$-unravelling tree of $\cG_0$ on $v_1$.}
    \label{fig:unravelling}
\end{figure}

For example, considering the graph $\cG_0$ in Fig.~\ref{fig:unravelling},
$v_3 \leftarrow v_1 \rightarrow v_4 \rightarrow v_1$ is a proper path in $\cG_0$ with length 3, source $v_3$, and destination $v_1$.
However $v_3 \leftarrow v_1 \rightarrow v_4 \leftarrow v_1$ is a path in $\cG_0$ but not a proper path.

\begin{definition}
    For an $n$-graph $\cG$ and a vertex $v$ in $\cG$,
    the \emph{$\ell$-unravelling tree} of $\cG$ on $v$,
    denoted by $\urv{\ell}{\cG, v}$, 
    is the $n$-graph defined as follows.
    \begin{itemize}
        \item The set of vertices of $\urv{\ell}{\cG, v}$ is the collection of proper paths in $\cG$
        with source $v$ and length at most $\ell$,
        and there is exactly one vertex in $\urv{\ell}{\cG, v}$
        with length $0$, called the root.

        \item For every vertex $\frP$ in $\urv{\ell}{\cG, v}$
        with non-zero length, suppose 
        $\frP = \frP'Ru$,
        where $\frP'$ is also a vertex in $\urv{\ell}{\cG, v}$.
        If $R$ is $\rightarrow$,
        then there is an edge from $\frP'$ to $\frP$ in $\urv{\ell}{\cG, v}$.
        Otherwise, if $R$ is $\leftarrow$,
        then there is an edge from $\frP$ to $\frP'$ in $\urv{\ell}{\cG, v}$.

        \item For every vertex $\frP$ in $\urv{\ell}{\cG, v}$,
        the label of $\frP$ is exactly the label of its destination.

    \end{itemize}
\end{definition}
Note that we will use symbols like $\frP$ to refer to paths, even when they
are considered as vertices in the unravelling.
It is obvious that $\urv{\ell}{\cG, v}$ is a tree, and
the depth is at most $L$.
For every vertex $\frP$ in $\urv{\ell}{\cG, v}$
the depth of $\frP$ in $\urv{\ell}{\cG, v}$ is the length of $\frP$.

\begin{lemma} \label{lemma:unravelling}
    For every $n$-$\lGNN$ $\cA$,
    for every $n$-graph $\cG$ and vertex $v_r$ in $\cG$,
    $\hist{L}_{\urv{L}{\cG, v_r}}\left(\frP_r\right) = 
     \hist{L}_{\cG}(v_r)$,
    where $\frP_r$ is the root of $\urv{L}{\cG, v_r}$.
\end{lemma}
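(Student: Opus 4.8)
The plan is to prove a stronger statement by induction on $\ell$, namely that for every vertex $\frP$ in $\urv{L}{\cG, v_r}$ of depth at most $L - \ell$, we have $\feat{\ell}_{\urv{L}{\cG,v_r}}(\frP) = \feat{\ell}_{\cG}(\dest(\frP))$, where $\dest(\frP)$ denotes the destination of the path $\frP$. (Here $\feat{\ell}$ for $\ell$ larger than $L$ is not needed; we only use $0 \le \ell \le L$.) The lemma then follows by specializing to $\ell = L$ and $\frP = \frP_r$, whose destination is $v_r$, since $\hist{L}$ is just the tuple of all the $\feat{t}$ for $t \le L$. Since $\cA$ is local, each $\coefR{\ell}$ is a zero matrix, so the feature computation at a vertex depends only on the vertex's own previous feature and the previous features of its in- and out-neighbors — this is exactly what makes the unravelling argument work, and it is why we need the $\lGNN$ hypothesis.

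The base case $\ell = 0$ is immediate: the label of $\frP$ in the unravelling is by definition the label of $\dest(\frP)$, so the initial feature vectors agree. For the inductive step, fix $\frP$ of depth at most $L - \ell$ and let $w = \dest(\frP)$. The key structural fact I would establish first is a \textbf{neighbor correspondence}: the map $\frP' \mapsto \dest(\frP')$ restricts to a bijection from $\nbr{\out}(\frP)$ onto $\nbr{\out, \cG}(w)$, and similarly for in-neighbors. Surjectivity is clear — for each out-neighbor $u$ of $w$ in $\cG$, the path $\frP \mathbin{\rightarrow} u$ is a vertex of the unravelling (it has length one more than $\frP$, hence at most $L-\ell+1 \le L$, and one checks it is still proper). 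Injectivity holds because distinct out-neighbors of $\frP$ are distinct paths, hence have distinct last vertices. The one subtlety is the properness condition: an out-neighbor of $\frP$ in the unravelling is a path $\frP \mathbin{\rightarrow} u$, and properness of this extension could fail only if $\frP$ ends with $\mathbin{\leftarrow} u$, i.e. $\frP = \frP'' \mathbin{\leftarrow} u$; but in that case the edge structure of the unravelling puts the edge from $\frP$ to $\frP''$, so $\frP''$ (not some new vertex) plays the role, and one checks the destinations still match up. I would handle this by noting that in a tree, the out-neighbors of $\frP$ are its out-children together with possibly its parent (when the edge to the parent points away from $\frP$), and in either case the destination of that neighbor is the corresponding neighbor of $w$ in $\cG$.

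Given the neighbor correspondence, the inductive step is a routine rewriting. Every out-neighbor $\frP'$ of $\frP$ has depth at most $L - \ell + 1 \le L - (\ell-1)$, so the induction hypothesis at level $\ell - 1$ applies to it and gives $\feat{\ell-1}_{\urv{L}{\cG,v_r}}(\frP') = \feat{\ell-1}_{\cG}(\dest(\frP'))$; likewise for in-neighbors and for $\frP$ itself. Plugging these equalities into the defining recurrence for $\feat{\ell}$ — and using that the $\coefR{\ell}$ term vanishes, so no global sum over all vertices appears — the sum over $\nbr{\out}(\frP)$ in the tree is term-by-term equal (via the bijection) to the sum over $\nbr{\out, \cG}(w)$, and similarly for in-neighbors; hence the arguments to $\act{\ell}$ coincide and therefore so do the feature values. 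The main obstacle, and the only place requiring genuine care, is the properness bookkeeping in the neighbor correspondence — making sure that ``going back along the edge you just came from'' is correctly identified with the parent in the unravelling and does not break the bijection; everything else is a mechanical induction.
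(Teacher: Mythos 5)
Your proposal is correct and follows essentially the same strategy as the paper's proof: the strengthened induction on $\ell$ for paths of depth at most $L-\ell$, the base case via the label-matching in the unravelling, and the careful handling of the parent path when summing over neighbors, exactly as in the paper (which phrases it as a case split on whether $\frP=\frP_r$ rather than as an explicit bijection, but this is only a presentational difference). One minor point of sloppiness: your stated reason for injectivity of the neighbor correspondence (``distinct paths hence distinct last vertices'') is not a valid inference in general; the correct justification is that the out-neighbors of $\frP$ in the unravelling are exactly the extensions $\frP\mathbin{\rightarrow}u$ (distinguished by $u$) together with at most one parent whose destination is excluded from those $u$ by properness, so the destinations are indeed pairwise distinct.
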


\begin{proof}
    We will prove the following stronger property:
    for $0 \le \ell \le L$,
    for every vertex $\frP$ in $\urv{L}{\cG, v_r}$ with destination $v$,
    if the depth of $\frP$ is less than or equal to $L-\ell$,
    then $\feat{\ell}_{\urv{L}{\cG, v_r}}(\frP) = \feat{\ell}_\cG(v)$.
    Note that the depth of the root vertex $\frP_r$ is $0$.
    So the lemma follows from the property directly.

    We prove the property by induction on $\ell$.
    For the base case $\ell = 0$,
    $\feat{0}_{\urv{L}{\cG, v_r}}(\frP)$ is just the label of $\frP$.
    By the definition of the unravelling graph,
    the label of $\frP$ is the label of its destination.
    
    For the inductive step $1 \le \ell \le L$,
    since $L - \ell < L - (\ell - 1)$,
    by the induction hypothesis,
    $\feat{\ell-1}_{\urv{L}{\cG, v_r}}(\frP) = \feat{\ell-1}_{\cG}(v)$.
    Note that the GNN we considered is local, that is,
    the value of $\feat{\ell}_{\cG}(v)$ only depends on
    $\feat{\ell-1}_{\cG}(v)$ and
    summations of $\feat{\ell-1}_{\cG}(u)$ for its outgoing and incoming neighbors $u$. 
    Thus, it is sufficient to show that
    \begin{equation*}
        \begin{aligned}
            \sum_{\frP' \in \nbr{\out, {\urv{L}{\cG, v_r}}}(\frP)}
            \feat{\ell-1}_{\urv{L}{\cG, v_r}}(\frP')
            \ =\ &
            \sum_{u \in \nbr{\out, \cG}(u)}
            \feat{\ell-1}_{\cG}(u) \\
            \sum_{\frP' \in \nbr{\inc, {\urv{L}{\cG, v_r}}}(\frP)}
            \feat{\ell-1}_{\urv{L}{\cG, v_r}}(\frP')
            \ =\ &
            \sum_{u \in \nbr{\inc, \cG}(u)}
            \feat{\ell-1}_{\cG}(u).
        \end{aligned}
    \end{equation*}

    Suppose that $\frP = \frP_r$.
    For every $u \in \nbr{\out, \cG}(v_r)$,
    by the definition of the unravelling tree,
    $\frP' = v_r\rightarrow u$ is a vertex in $\urv{L}{\cG, v_r}$.
    Furthermore, there is an edge from $\frP_r$ to $\frP'$.
    Note that the depth of $\frP'$ is $1 \le L-(\ell-1)$.
    Thus, by the induction hypothesis,
    for every $u \in \nbr{\out, \cG}(v_r)$,
    $\feat{\ell-1}_{\urv{L}{\cG, v_r}}(\frP')
    = \feat{\ell-1}_{\cG}(u)$,
    which implies that the summation of $(\ell-1)$-features of outgoing neighbors of $\frP_r$ is the same as $v_r$.
    We can treat the incoming neighbors analogously.
    
    If $\frP \neq \frP_r$, 
    suppose that $\frP$ decomposes into a concatenation of
    a path $\frP_p$ with destination $v_p$
    and a vertex $v$.
    Without loss of generality, we assume that $\frP = \frP_p \leftarrow v$.
    For every $u \in \nbr{\out, \cG}(v)$,
    if $u \neq v_p$,
    then $\frP' = \frP \rightarrow u$ is a vertex in $\urv{L}{\cG, v_r}$.
    By an argument similar to the previous case, 
    we can show that
    $\feat{\ell-1}_{\urv{L}{\cG, v_r}}(\frP')
    = \feat{\ell-1}_{\cG}(u)$.
    Note that $\frP \rightarrow v_p$ is not a vertex in $\cG^L_{v_r}$,
    because it is not a proper path.
    However,$\frP_p$ is a vertex in $\urv{L}{\cG, v_r}$ and
    there is an edge from $\frP$ to $\frP_p$.
    Thus $\frP_p$ is also a outgoing neighbor of $\frP$.
    Since the depth of $\frP_p$ is less than $\frP$,
    the depth of $\frP_p$ is less than $L-(\ell-1)$.
    Hence
    $\feat{\ell-1}_{\urv{L}{\cG, v_r}}(\frP_p)
    = \feat{\ell-1}_{\cG}(v_p)$.
    Thus, it holds that
    the summation of $(\ell-1)$-features of outgoing neighbors of $\frP$ is the same as $v$.
    We can treat the incoming neighbors analogously without considering $\frP_p$,
    since $\frP_p$ is not an incoming neighbors of $\frP$.
\end{proof}

Next, we show that for every tree model,
we can apply some surgery and obtain another tree model whose size is bounded.

\begin{definition} \label{def:characteristicsystem}
    For every $\bltrreluGNN$ $\cA$,
    for every $n$-tree $\cG$ and $v$ in $\cG$ with depth $\ell$,
    the \emph{characteristic system of $v$} (w.r.t. $\cA$),
    denoted by $\cQ_v$,
    is a linear constraint system with variables
    $\set{z_{x, \bfh}}_{\substack{x \in \set{\out, \inc}\\ \bfh \in \hspc{x}(v)}}$ defined as follows.
    For $1 \le t \le L-\ell$ and $1 \le i \le \gnndim{t}$,
    \begin{equation*}
        D_\cA \left(
            \coefC{t} \feat{t-1}(v) + \featp{t}(v) + 
            \sum_{x \in \set{\out, \inc}} \coefA{t}{x}
            \sum_{\bfh \in \hspc{x}(v)} z_{x, \bfh} \cdot \bfh[t-1] +
            \coefb{t}
        \right)_i\ \circledast_{t, i}\ D_\cA \feat{t}_i(v)
    \end{equation*}
    is a constraint in $\cQ_v$,
    where
    \begin{equation*}
        \featp{t}(v)\ =\ 
        \begin{cases}
            0,&                               \text{if $v = v_r$} \\
            \coefA{t}{\out} \feat{t-1}(v_p),& \text{if there is an edge from $v$ to $v$'s parent $v_p$} \\
            \coefA{t}{\inc} \feat{t-1}(v_p),& \text{if there is an edge from $v$'s parent $v_p$ to $v$}.
        \end{cases}
    \end{equation*}
    If $\feat{t}_i(v) = 1$, then $\circledast_{t, i}$ is $\ge$;
    if $\feat{t}_i(v) = 0$, then $\circledast_{t, i}$ is $\le$;
    otherwise, $\circledast_{t, i}$ is $=$.

A solution to the system $\cQ_v$ where each variable $z_{x, \bfh}$ is assigned  the value $a_{x, \bfh}\in\bbN$ is denoted by 
    $\set{z_{x, \bfh} \gets a_{x, \bfh}}_{\substack{x \in \set{\out, \inc}\\\bfh \in \hspc{x}(v)}}$.
\end{definition}

We remark that the multiplication by the factor $D_{\cA}$ is to ensure the coefficients in the constraint are all integers.
Note that there are at most $d_\cA = \sum_{\ell \in \intsinterval{L}} \gnndim{\ell}$ constraints.
Because $\hspc{x}(v) \subseteq \hspover{L-\ell}$,
by Lemma~\ref{lemma:smallbits},
$\abs{\hspc{x}(v)} \le D_\cA$.
Thus, there are at most $2D_\cA$ variables in the system $\cQ_v$.
Since $\feat{t}_i(v)$ is the output of $\trrelu$, $0 \le \feat{t}_i(v) \le 1$.
Therefore, the absolute value of the coefficients in the system $\cQ_v$ is bounded by
the product of $2D_\cA$ and the maximal numerator of coefficients in $\cA$:
that is, the values are bounded by $M_\cA$.

Next, we will show that the number of children 
of each node in the tree can be bounded.
We will use the following lemma, which stems from~\cite{caratheodory-integer,papa-ilp}, see also \cite[Corollary 2.2]{twovarpres}.


\begin{lemma}\label{lemma:ilp}
    There are constants $c_1, c_2 \in \bbN$ such that for every linear constraint system $\cQ$,
    if $\cQ$ admits a solution in $\bbN$, then it admits a solution in $\bbN$ in which
    the number of variables assigned  non-zero values is at most $c_1t\log\left(c_2tM\right)$ and
    every variable is assigned  a value at most $c_1t(tM)^{c_2t}$,
    where $t$ is the number of constraints in $\cQ$ and $M$ is the maximal absolute value of the coefficient in $\cQ$.
\end{lemma}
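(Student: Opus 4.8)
The plan is to deduce this from the classical integer Carathéodory-type bounds for systems of linear constraints, exactly in the form recorded in~\cite{caratheodory-integer,papa-ilp} and \cite[Corollary 2.2]{twovarpres}. First I would put $\cQ$ into a normalized shape: since every constraint is of the form (linear form) $\circledast$ (constant) with $\circledast\in\{\le,\ge,=\}$, I replace each equality by two inequalities and each $\ge$ by a $\le$ after negating, obtaining a system $A\mathbf z \le \mathbf c$ with $\mathbf z$ ranging over $\bbN^N$, where the number of rows is at most $2t$ and the entries of $A$ and $\mathbf c$ are integers of absolute value at most $M$ (the hypothesis guarantees integrality of the coefficients; in our application the factor $D_\cA$ was introduced precisely to clear denominators). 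Adjusting the constants $c_1,c_2$ absorbs the factor $2$ coming from this reduction, so it is harmless.

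Next I would invoke the small-solution / small-support theorem for such systems. The statement is that if $A\mathbf z\le \mathbf c$ has a solution in $\bbN^N$, then it has one in which at most $r\log(\mathrm{poly}(r,M))$ coordinates are nonzero and every coordinate is bounded by $\mathrm{poly}(r)\cdot(rM)^{O(r)}$, where $r$ is the number of rows. This is obtained by the standard argument: take any nonnegative solution, decompose it (via Carathéodory over the cone of the column-augmented matrix, or equivalently via iteratively removing cycles from the support) into a nonnegative combination of solutions of small support, each of which can be taken with small entries by Cramer's-rule / Siegel's-lemma bounds on the relevant subdeterminants; then bound the number of pieces needed. Plugging $r\le 2t$ and renaming constants gives exactly the bounds $c_1t\log(c_2tM)$ on the number of nonzero variables and $c_1t(tM)^{c_2t}$ on each value. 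Strictly speaking the cited corollary already packages this, so the work is just checking that our notion of ``constraint'' and ``coefficient'' matches their hypotheses after the normalization above.

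The only genuine point requiring care — and the part I would present most carefully — is the bookkeeping of constants through the equality-to-inequality translation and the clearing of denominators: one must make sure that $M$ as defined for $\cQ$ (maximal absolute value of a coefficient) still controls the entries of the normalized system, and that the number of constraints does not blow up by more than a constant factor, so that the universal constants $c_1,c_2$ from the cited result can simply be enlarged. Everything else is a direct appeal to \cite[Corollary 2.2]{twovarpres} (equivalently \cite{caratheodory-integer,papa-ilp}), so there is no substantial new combinatorics to carry out here.
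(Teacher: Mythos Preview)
Your proposal is correct and matches the paper's treatment: the paper does not give its own proof but simply cites this as a known result stemming from \cite{caratheodory-integer,papa-ilp} and \cite[Corollary 2.2]{twovarpres}. Your normalization-and-invoke plan is exactly the intended reading, and the bookkeeping you flag (absorbing the factor from splitting equalities into inequalities into $c_1,c_2$) is the only thing one would need to check.
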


We will use Lemma~\ref{lemma:ilp} above to show that
every satisfiable $\bltrreluGNN$ is satisfiable by a small tree model in which the values of the feature vector of each node and the number of children of each node is at most exponential in the length of the description of the GNN, as stated in the following lemma.

\begin{lemma}\label{lemma:charsys_solvable}
  Let $\cA$ be a $\bltrreluGNN$, $\cG$ an $n$-tree and $v$ in $\cG$ with depth $\ell$, Let 
    $\cQ_v$ be the characteristic system of equations of $v$ with respect to $\cA$.
    Then $\cQ_v$  has a solution in $\bbN$ such that
    the number of variables assigned to non-zero values is at most $\alpha_\cA$,
    and every variable is assigned to a value
    bounded by $\beta_\cA$,
    where $\alpha_\cA$ and $\beta_\cA$ are as defined in Theorem~\ref{thm:pspace}.
\end{lemma}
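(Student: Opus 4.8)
The plan is to show that the characteristic system $\cQ_v$ of $v$ is solvable in $\bbN$ at all, and then apply the integer-programming bound of Lemma~\ref{lemma:ilp} to extract a small solution. First I would observe that $\cQ_v$ has an ``obvious'' solution coming from the tree $\cG$ itself: for each $x\in\set{\out,\inc}$ and each history $\bfh\in\hspc{x}(v)$, set $z_{x,\bfh}$ to be the number of $x$-children $u$ of $v$ with $\hist{L-\ell-1}_\cG(u)=\bfh$. I need to check that this assignment satisfies every constraint of $\cQ_v$. This is essentially unwinding the definitions: the $t$-th constraint (for $1\le t\le L-\ell$, $1\le i\le\gnndim{t}$) just encodes the recurrence for $\feat{t}_i(v)$ under $\trrelu$, where the aggregation over the out-neighbours of $v$ is split into the contribution of $v$'s parent (captured by the $\featp{t}(v)$ term) and the contributions of $v$'s children grouped by their $(L-\ell-1)$-history --- and $\bfh[t-1]$ is precisely $\feat{t-1}$ of any child with history $\bfh$, since $t-1\le L-\ell-1$. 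The choice of $\circledast_{t,i}$ ($\ge$ when $\feat{t}_i(v)=1$, $\le$ when it is $0$, $=$ otherwise) is exactly what is needed so that the pre-activation value, compared with $D_\cA\feat{t}_i(v)$, is consistent with applying $\trrelu$ and getting $\feat{t}_i(v)$. Thus $\cQ_v$ is solvable over $\bbN$.

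Next I would invoke Lemma~\ref{lemma:ilp} with the parameters bounded as computed just before the lemma statement: $\cQ_v$ has at most $d_\cA=\sum_{\ell\in\intsinterval{L}}\gnndim{\ell}$ constraints, at most $2D_\cA$ variables, and all coefficients bounded in absolute value by $M_\cA$ (the product of $2D_\cA$ and the maximal numerator of a coefficient of $\cA$) --- using that $0\le\feat{t}_i(v)\le 1$ because these are outputs of $\trrelu$, and that the multiplication by $D_\cA$ clears denominators so the coefficients are integers, relying on Lemma~\ref{lemma:smallbits} for the bounds on $\hspover{L-\ell}$. Lemma~\ref{lemma:ilp} then yields a solution in $\bbN$ with at most $c_1 d_\cA\log(c_2 d_\cA M_\cA)=\alpha_\cA$ variables nonzero and each variable at most $c_1 d_\cA(d_\cA M_\cA)^{c_2 d_\cA}=\beta_\cA$, which is exactly the assertion.

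I expect the only real work to be the verification that the ``graph-induced'' assignment satisfies $\cQ_v$ --- i.e. carefully matching the split of the out/in aggregation at $v$ into (parent contribution) $+$ (sum over children grouped by history), and checking the $\trrelu$ case analysis against the choice of $\circledast_{t,i}$. Everything else is bookkeeping: counting constraints and variables, bounding coefficients, and quoting Lemma~\ref{lemma:ilp}. One mild subtlety worth spelling out is that $\bfh[t-1]$ makes sense for all the relevant $t$ precisely because a child of $v$ at depth $\ell+1$ has an $(L-\ell-1)$-history, and $t-1$ ranges over $\intinterval{0}{L-\ell-1}$, so indexing into $\bfh$ is always well-defined; this is why the tree is taken to have depth at most $L$. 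No genuinely hard step is anticipated --- the lemma is the routine ``extract a small model'' half of the tree-model-property argument, with the integer-programming bound doing the heavy lifting.
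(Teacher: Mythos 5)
Your proposal is correct and follows essentially the same route as the paper's proof: you exhibit the ``graph-induced'' assignment $z_{x,\bfh} \gets a_{x,\bfh}$ (counting $x$-children of $v$ with $(L-\ell-1)$-history $\bfh$), verify it against the constraints by splitting the out/in aggregation into a parent term (the $\featp{t}(v)$ piece) plus history-grouped child contributions and matching the $\trrelu$ case analysis to $\circledast_{t,i}$, and then invoke Lemma~\ref{lemma:ilp} with the bounds $d_\cA$ constraints, $\le 2D_\cA$ variables, and coefficients $\le M_\cA$. The paper does exactly this, including the WLOG assumption on the orientation of $v$'s parent edge.
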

    
\begin{proof}
    We first prove that $\set{z_{x, \bfh} \gets a_{x, \bfh}}_{\substack{x \in \set{\out, \inc}\\\bfh \in \hspc{x}(v)}}$ is a solution of the characteristic system $\cQ_v$,
    where $a_{x, \bfh}$ is the number of $v$'s $x$-children whose $(L-\ell-1)$-history is $\bfh$.
    
    Without loss of generality,
    suppose $v$ has a parent $v_p$ and there exists an edge from $v$ to $v_p$.
The other two cases, i.e., when the edge is oriented from $v_p$ to $v$ or when the $v$ is the root node, can be treated analogously.

Note that for $1 \le t \le L-\ell$,       
    \begin{equation*}
        \begin{aligned}
            \sum_{u \in \nbr{\out}(v)} \feat{t-1}(u)
            \ =\ &\feat{t-1}(v_p) + 
            \sum_{u \in \nbrc{\out}(v)} \feat{t-1}(u) \\
            \ =\ &\feat{t-1}(v_p) + 
            \sum_{u \in \nbrc{\out}(v)} \left(\hist{L-\ell-1}(u)\right)[t-1] \\
            \ =\ &\feat{t-1}(v_p) + 
            \sum_{\bfh \in \hspc{\out}(v)} a_{\out, \bfh} \cdot \bfh[t-1] \\
            \sum_{u \in \nbr{\inc}(v)} \feat{t-1}(u)
            \ =\ &\sum_{\bfh \in \hspc{\inc}(v)} a_{\inc, \bfh} \cdot \bfh[t-1].
        \end{aligned}
    \end{equation*}
    Combining with the definition of $\feat{t}(v)$,
    \begin{equation*}
        \feat{t}(v) =
        \trrelup{
            \coefC{t} \feat{t-1}(v) + \featp{t}(v) + 
            \sum_{x \in \set{\out, \inc}} \coefA{t}{x}
            \sum_{\bfh \in \hspc{x}(v)} a_{x, \bfh} \cdot \bfh[t-1] +
        \coefb{t}}.
    \end{equation*}
    By the definition of $\trrelu$, we can verify that 
    $\set{z_{x, h} \gets a_{x, \bfh}}_{\substack{x \in \set{\out, \inc}\\\bfh \in \hspc{x}(v)}}$ is a solution of the characteristic system $\cQ_v$.

    Recall that there are $d_\cA = \sum_{\ell \in \intsinterval{L}} \gnndim{\ell}$ constraints in $\cQ_v$ and
    the absolute value of each coefficient is bounded by $M_\cA$.
    By Lemma~\ref{lemma:ilp}, $\cQ_v$ has a \emph{small} solution 
    $\set{z_{x, \bfh} \gets a'_{x, \bfh}}_{\substack{x \in \set{\out, \inc}\\\bfh \in \hspc{x}(v)}}$: that is, a solution
    where the number of variables assigned to non-zero values is at most $\alpha_\cA$,
    and every variable is assigned to a value
    bounded by $\beta_\cA$.
    Thus $a'_{x, \bfh}$ is the desired solution of the lemma.
\end{proof}

Finally, we are ready to prove Theorem~\ref{thm:exp_tree_model}. 
\begin{proof}
The constants $c_1,c_2$ are the same constants as in Lemma~\ref{lemma:ilp}.
Let $\cA$ be a satisfiable $n$-$\bltrreluGNN$ with $L$ layers
which satisfied by $\tuple{\cG_0,v}$.
Let us denote the $L$-unravelling tree $\urv{L}{\cG_0, v}$ by $\cT$ and the root of $\cT$ by $v_r$.
By Lemma~\ref{lemma:unravelling}, $\cA$ also accept $\tuple{\cT,v_r}$.
This proves the tree model property as required by the theorem.
We will show how to convert it into another tree $\cT'$ that satisfies the small degree property in the theorem. That is, we show that
for every vertex $v$ in $\cT'$,
for $x \in \set{\out, \inc}$, the size of $\hspc{x}(v)$ is at most $\alpha_{\cA}$,
and for each $\bfh \in \hspc{x}(v)$, the number of $x$-children of $v$ with $(L-\ell-1)$-history $\bfh$ is at most $\beta_{\cA}$, where
\begin{equation*}
    \begin{aligned}
        \alpha_{\cA}\ :=\ &c_1d_\cA \log\left(c_2d_\cA M_\cA\right), \\
        \beta_{\cA}\ :=\ &c_1 d_\cA \left(d_\cA M_\cA \right)^{c_2d_\cA},
    \end{aligned}
\end{equation*}
$d_\cA := \sum_{\ell \in \intsinterval{L}} \gnndim{\ell}$,
and $M_\cA$ is the product of $2D_\cA$ and the maximal numerator of coefficients in $\cA$.


To this end,
we say that a vertex $u$ in $\cT$ is a \emph{bad vertex}
if $u$ does not satisfy the small degree property of the theorem, i.e., 
there is $x \in \set{\out, \inc}$ such that 
at least one of the following conditions holds.
\begin{itemize}
    \item
    The size of $\hspc{x}(u)$ is strictly bigger than $\alpha_{\cA}$.
    \item 
    There is $\bfh \in \hspc{x}(u)$ where the number of $x$-children of $u$ with $(L-\ell-1)$-history  $\bfh$ is strictly bigger than $\beta_{\cA}$.
\end{itemize}
If there is no bad vertex, then $\tuple{\cT, v_r}$ is the desired $n$-tree model of the theorem.

Suppose there is a bad vertex in $\cT$.
Let $v$ be a bad vertex with the maximal depth $\ell$ in $\cT$.
Let $a_{x, \bfh}$ be the number of $v$'s $x$-children in $\cT$ whose $(L-\ell-1)$-history is $\bfh$.

Consider the characteristic system $\cQ_v$.
Obviously, $\set{z_{x, \bfh} \gets a_{x, \bfh}}_{\substack{x \in \set{\out, \inc}\\\bfh \in \hspc{x}(v)}}$ is a solution to the system $\cQ_v$.
By Lemma~\ref{lemma:charsys_solvable},
$\cQ_v$ has a \emph{small} solution $\set{z_{x, \bfh} \gets a'_{x, \bfh}}_{\substack{x \in \set{\out, \inc}\\\bfh \in \hspc{x}(v)}}$
where the number of variables assigned to non-zero values is at most $\alpha_\cA$,
and every value $a'_{x, \bfh}$ is 
at most $\beta_\cA$.
Without loss of generality, we may assume that  $a'_{x,\bfh}=0$ whenever $a_{x,\bfh}=0$,
for every $x \in \set{\out, \inc}$
and $\bfh \in \hspc{x}(v)$.

Let $\cT'$ be the $n$-tree obtained by applying the following modification to $\cT$.
For every $x \in \set{\out, \inc}$ and $\bfh \in \hspc{x}(v)$,
let $u_\bfh$ be a $x$-child of $v$ whose $(L-\ell-1)$-history is $\bfh$.
We first remove all children of $v$.
Then for every $x \in \set{\out, \inc}$ and $\bfh \in \hspc{x}(v)$,
we  duplicate $a'_{x, \bfh}$ copies of the subtree induced by $u_\bfh$
and attach each subtree copy as an $x$-child of $v$.
If $x$ is $\out$, we add an edge from $v$ to the root of the duplicated subtrees.
Otherwise, if $x$ is $\inc$, we add an edge from the root of the duplicated subtrees to $v$.

Obviously the number of $v$'s $x$-children in $\cT'$ with $(L-\ell-1)$-history $\bfh$ is $a'_{x, \bfh}$,
which is bounded by $\beta_{\cA}$.
Moreover, the size $\hspc{x}(v)$ is bounded
by the number of variables in $\cQ_v$ that are assigned non-zero values, which is bounded by $\alpha_{\cA}$.
Thus, the vertex $v$ is no longer a bad vertex in $\cT'$.
Furthermore, since $\set{z_{x, \bfh} \gets a'_{x, \bfh}}_{\substack{x \in \set{\out, \inc}\\\bfh \in \hspc{x}(v)}}$ is the solution to the system $\cQ_v$,
the $(L-\ell)$-history of $v$ in $\cT'$ is the same 
as in $\cT$.

Finally, note that $v$ is a bad vertex with the maximal depth in $\cT$. 
Hence every descendant of $v$ in $\cT$ is not a bad vertex in $\cT$.
Since the only difference between $\cT$ and $\cT'$ is that
we make duplicates of the subtrees rooted at the children of $v$,
every descendant of $v$ in $\cT'$ is also not a bad vertex in $\cT'$.
Therefore the number of bad vertices in $\cT'$ is one less than $\cT$.
We can repeat the procedure until there are no more bad vertices and the desired $n$-tree model is obtained.

\end{proof}

\subsection{$\pspace$-completeness for 
the satisfiability problem of $\bltrreluGNN$s}

Now we are ready to prove Theorem~\ref{thm:pspace}.
Intuitively, in Theorem~\ref{thm:exp_tree_model},
we showed that every satisfiable $\bltrreluGNN$ has an exponential size tree model.
The theorem did not say anything about the size of the numbers in features. But
Lemma~\ref{lemma:smallbits} tells us that for every graph the size of computed features is not very large.

Since the size of the model may be exponential, the na\"ive algorithm, which guesses the whole model and checks it, takes nondeterministic exponential time.
However, we can reduce to $\pspace$ using the same approach as in the $\pspace$ bound for modal logic~\cite{modallogicpspace}: in order to check the validity of the tree model, it is sufficient to check the validity of each vertex locally. 
That is, we only need to compute the history from its parent and children and check this value.
Thus the following Algorithm~\ref{alg:pspace} decides the satisfiability problem for $\bltrreluGNN$s by guessing children of a vertex and checking one by procedure \Call{Check}{$\cA$, $\ell$, $x$, $\bfh_p$, $\bfh$} at a time.
For the base case in line 7, if the depth of the vertex $v$ is $L$, meaning that $v$ is a leaf vertex of the tree,
then the history $\bfh$ is always satisfiable by assigning it to the label of $v$.
Otherwise, the procedure guesses the label $e$ of $v$ in line 9,
the lower-level history of its children in line 10, 
and the number of children of each history in line 11.
While the number of children may be exponential, most of them share the same history.
More precisely, it suffices to guess a polynomial number of distinct histories and a polynomial number of exponential counts, the sizes of which remain polynomial.
After that, the procedure checks the validity of each child's history by recursively calling itself in lines 12 to 16.
Finally, it verifies the correctness of the guess by computing the history and comparing it to $\bfh$ in line 17, as described in Lemma~\ref{lemma:comp_hist}.
Since the depth of the recursive calls is at most $L$,
Algorithm~\ref{alg:pspace} is a nondeterministic polynomial-space algorithm.
The correctness follows from Theorem~\ref{thm:exp_tree_model} and Lemma~\ref{lemma:smallbits}.

\begin{algorithm}
\caption{Algorithm for the satisfiability problem of $\bltrreluGNN$s} \label{alg:pspace}
\begin{algorithmic}[1]

\Procedure{SAT}{$\cA$}\Comment{$\bltrreluGNN$ $\cA$}
    \State Guess a over-approximated $L$-history $\bfh_r$ with $(\bfh_r[L])_1 \ge 1/2$.
    \State \Return \Call{Check}{$\cA$, 0, root, $\epsilon$, $\bfh_r$}
\EndProcedure

\Procedure{Check}{$\cA$, $\ell$, $x$, $\bfh_p$, $\bfh$}
            \Comment{$\bltrreluGNN$ $\cA$, depth $\ell \in \intinterval{0}{L}$,}
    \If{$\ell = L$}
            \Comment{direction $x \in \set{\out, \inc, \text{root}}$,}
        \State \Return True
            \Comment{parent's $(L-\ell+1)$-history $\bfh_p$,}
    \Else
            \Comment{desired $(L-\ell)$-history $\bfh$.}
        \State Guess initial feature $e$ from $\set{0, 1}^{\gnndim{0}}$.
        \State Guess $\alpha_\cA$ over-approximated $(L-\ell-1)$-histories $\bfh_{\out, i}$ and $\bfh_{\inc, i}$ from $\hspover{L-\ell-1}$.
        \State Guess $\alpha_\cA$ numbers $a_{\out, i}$ and $a_{\inc, i}$ from $\intinterval{0}{\beta_\cA}$.
        \For{$x' \in \set{\out, \inc}$ and $1 \le i \le \alpha_\cA$}
            \If{not \Call{Check}{$\cA$, $\ell+1$, $x'$, $\bfh$, $\bfh_{x, i}$}}
                \State \Return False
            \EndIf
        \EndFor
        \State Use Lemma~\ref{lemma:comp_hist} to compute $\bfh'$ from $e$, $x$, $\bfh_p$, $\bfh_{\out, i}$, $\bfh_{\inc, i}$, $a_{\out, i}$, $a_{\inc, i}$.
        \If{$\bfh = \bfh'$}
            \State \Return True
        \Else
            \State \Return False
        \EndIf
    \EndIf
\EndProcedure
\end{algorithmic}
\end{algorithm}

The lower bound is established by embedding the description logic $\alc$ into $\olMPtwo$.
Since the concept satisfiability problem of $\alc$ with one role is  $\pspace$-hard~\cite{alc-pspace},
it will follow from the embedding that the finite satisfiability problem of $\olMPtwo$ is also $\pspace$-hard.
Since the reduction from $\olMPtwo$ to $\oltrreluGNN$ mentioned in Theorem~\ref{thm:gnn_to_logic} is polynomial,
the satisfiability problem of $\oltrreluGNN$ is also $\pspace$-hard, and so it $\bltrreluGNN$.

\begin{lemma}\label{lemma:alc}
    There exists a polynomial time translation $\pi_x$ from $\alc$ concepts with one role $R$ to $\olMPtwo$ formulas such that the $\alc$ concept $C$ is satisfiable if and only if the $\olMPtwo$ formula $\pi_x(C)$ is finitely satisfiable.
\end{lemma}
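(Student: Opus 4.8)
The plan is to use the classical ``standard translation'' of the description logic $\alc$ (equivalently, basic modal logic $\mathsf{K}$), to observe that with a single role it produces $\olMPtwo$-formulas, and to bridge ordinary and \emph{finite} satisfiability via the finite model property of $\alc$. Concretely, I would fix a distinct unary predicate $U_A$ for each atomic concept $A$ occurring in $C$, identify the single role $R$ with the edge relation $E$ of graphs, and define $\pi_x$ by induction on concepts: $\pi_x(A) := U_A(x)$; $\pi_x(\top) := \top$ and $\pi_x(\bot) := \neg\top$; $\pi_x(\neg D) := \neg\pi_x(D)$, $\pi_x(D_1 \sqcap D_2) := \pi_x(D_1) \land \pi_x(D_2)$, and $\pi_x(D_1 \sqcup D_2) := \pi_x(D_1) \lor \pi_x(D_2)$; $\pi_x(\exists R.D) := \bigl(\presby{E(x,y)\land\pi_y(D)} \ge 1\bigr)$; and $\pi_x(\forall R.D) := \bigl(\presby{E(x,y)\land\neg\pi_y(D)} = 0\bigr)$, where $\pi_y(D)$ denotes $\pi_x(D)$ with its (unique) free variable renamed to $y$. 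By induction every $\pi_x(D)$ is a well-formed $\MPtwo$-formula with the single free variable $x$, so these clauses are legitimate — this is exactly the guarded-quantifier shorthand recorded just after Definition~\ref{def:mp2} — and $\pi_x$ is computable in linear, hence polynomial, time.

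For correctness I would run a routine induction on $C$. An $\alc$ interpretation over the vocabulary consisting of $R$ together with the relevant atomic concepts is literally an $n$-graph $\cG$ (taking $E := R^{\cG}$ and $U_A := A^{\cG}$), and conversely every $n$-graph is such an interpretation; so I would show that for every such $\cG$ and vertex $v$, $v \in C^{\cG}$ iff $\cG, v \models \pi_x(C)$. The Boolean cases are immediate, and the $\exists R$ and $\forall R$ cases are precisely the semantics of $\bigl(\presby{E(x,y)\land\psi(y)} \ge 1\bigr)$ and $\bigl(\presby{E(x,y)\land\neg\psi(y)} = 0\bigr)$ from Definition~\ref{def:presburger}. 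This equivalence holds over arbitrary structures and, in particular, over finite ones.

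To finish, I would invoke the finite model property of $\alc$, which holds already for a single role: if $C$ is satisfiable it has a finite model $\cG$ with some $v \in C^{\cG}$, and then the finite graph $\cG$ witnesses $\cG, v \models \pi_x(C)$; conversely, any finite graph $\cG$ with $\cG, v \models \pi_x(C)$, viewed as an $\alc$ interpretation, has $v \in C^{\cG}$, so $C$ is satisfiable. Finally, every guard atom introduced by $\pi_x$ has the form $E(x,y)$ — there is no $\top$-guard and no $E(y,x)$-guard — so $\pi_x(C)$ is an $\olMPtwo$-formula, as the lemma requires. I do not expect a genuine obstacle: the only points requiring care are the bookkeeping that makes the single role of $\alc$ correspond to the single edge relation of graphs (which is what lets the $\pspace$-hardness of $\alc$ concept satisfiability with one role carry over), and the explicit appeal to the finite model property of $\alc$ to move between satisfiability and finite satisfiability.
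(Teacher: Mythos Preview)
Your proposal is correct and takes essentially the same approach as the paper: both give the standard translation of $\alc$ into the modal fragment, rendering $\exists R.D$ and $\forall R.D$ as the outgoing-guarded Presburger quantifiers $\presby{E(x,y)\land\pi_y(D)}\ge 1$ and $\presby{E(x,y)\land\neg\pi_y(D)}=0$. The paper simply presents the mutually recursive $\pi_x,\pi_y$ and declares correctness routine; your explicit appeal to the finite model property of $\alc$ and the inductive semantic equivalence are exactly what that ``routine'' check unpacks to.
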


\begin{proof}
    We will define $\pi_x$ and $\pi_y$, which are the standard translation from $\alc$ concepts to first-order logic formulas, with some slight modification to quantifiers to fit our logic.
    It is routine to check that the $\alc$ concept $C$ is satisfiable if and only if $\pi_x(C)$ is finitely satisfiable.
    \begin{equation*}
        \begin{aligned}
            \pi_x(A)\ =\ &A(x)&
            \pi_y(A)\ =\ &A(y)
            \\
            \pi_x(\neg C)\ =\ &\neg \pi_x(C)&
            \pi_y(\neg C)\ =\ &\neg \pi_y(C)
            \\
            \pi_x(C \sqcap D)\ =\ &\pi_x(C) \land \pi_x(D)&
            \pi_y(C \sqcap D)\ =\ &\pi_y(C) \land \pi_y(D)
            \\
            \pi_x(C \sqcup D)\ =\ &\pi_x(C) \lor \pi_x(D)&
            \pi_y(C \sqcup D)\ =\ &\pi_y(C) \lor \pi_y(D)
            \\
            \pi_x(\exists R.C)\ =\ &\presby{E(x, y) \land \pi_y(C)} \ge 1&
            \pi_y(\exists R.C)\ =\ &\presbx{E(y, x) \land \pi_x(C)} \ge 1
            \\
            \pi_x(\forall R.C)\ =\ &\presby{E(x, y) \land \neg \pi_y(C)} = 0&
            \pi_y(\forall R.C)\ =\ &\presbx{E(y, x) \land \neg \pi_x(C)} = 0
            \\
        \end{aligned}
    \end{equation*}
\end{proof}


\subsection{$\np$-completeness for the satisfiability problem of fixed layer $\bltrreluGNN$s}

Let us consider the run time of Algorithm~\ref{alg:pspace}.
The procedure {\sc Check} calls itself $2\alpha_\cA$ times.
The depth of recursion is $L$.
It takes only polynomial time to compute the history from the parent and children.
Hence the runtime of the algorithm is proportional to
\begin{equation*}
    1 + 2\alpha_\cA + \left(2\alpha_\cA\right)^2 + \cdots + \left(2\alpha_\cA\right)^L
    \ =\ 
    \frac{\left(2\alpha_\cA\right)^{L+1}-1}{2\alpha_\cA-1}.
\end{equation*}
Note that $\alpha_\cA$ is polynomial in the description of $\cA$. Thus for fixed layer $\bltrreluGNN$ $\cA$, Algorithm~\ref{alg:pspace} only takes nondeterministic polynomial time.

We show that when the number of layers is fixed, there exists a polynominal time reduction from $\threeSAT$
to the satisfiability problem of fixed layer $\oltrreluGNN$.
Since $\threeSAT$ is $\np$-hard, so is the satisfiability problem of fixed layer $\oltrreluGNN$ and $\bltrreluGNN$.

\begin{lemma}
    There exists a polynomial time reduction from 3-CNF formulas $\varphi$ to 2-layer $\oltrreluGNN$ $\cA_\varphi$ such that $\varphi$ is satisfiable if and only if $\cA_\varphi$ is satisfiable.
\end{lemma}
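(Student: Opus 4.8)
The plan is to have $\cA_\varphi$ simply evaluate $\varphi$ on the truth assignment named by the colors of a single vertex, making no use of edges at all. Fix a $3$-CNF formula $\varphi$ over variables $p_1,\dots,p_n$ with clauses $C_1,\dots,C_m$, where $C_j=\ell_{j,1}\vee\ell_{j,2}\vee\ell_{j,3}$ and each $\ell_{j,k}$ is a literal over a variable $p_{i_{j,k}}$ (we may assume $m\ge 1$, since a clause-free formula is trivially satisfiable). I would take $\cA_\varphi$ to be an $n$-GNN with $L=2$ layers, dimensions $\gnndim{0}=n$, $\gnndim{1}=m$, $\gnndim{2}=1$, all activation functions equal to $\trrelu$, and all aggregation matrices $\coefA{\ell}{\out}$, $\coefA{\ell}{\inc}$, $\coefR{\ell}$ set to the zero matrix; this makes $\cA_\varphi$ both local and outgoing-only, so it lies in $\oltrreluGNN$ regardless of how the $\coefC{\ell}$ and $\coefb{\ell}$ are chosen.

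It remains to fix $\coefC{1},\coefb{1},\coefC{2},\coefb{2}$. Since $\feat{0}_{i}(v)=1$ when $v\in U_i$ and $0$ otherwise, the color set of $v$ names the assignment $\sigma_v$ with $\sigma_v(p_i)=\mathrm{true}$ iff $v\in U_i$. For layer $1$ I would make entry $j$ of $\coefC{1}\feat{0}(v)+\coefb{1}$ compute the affine form
\begin{equation*}
    S_j(v)\ :=\ \sum_{k\,:\,\ell_{j,k}\text{ positive}} \feat{0}_{i_{j,k}}(v)\ +\ \sum_{k\,:\,\ell_{j,k}\text{ negative}} \bigl(1-\feat{0}_{i_{j,k}}(v)\bigr),
\end{equation*}
which counts how many literals of $C_j$ are true under $\sigma_v$; concretely row $j$ of $\coefC{1}$ has entries in $\{-1,0,1\}$ and $(\coefb{1})_j$ is the number of negative literals of $C_j$. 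As all layer-$0$ features and all coefficients are integers, $S_j(v)\in\{0,1,2,3\}$, and $S_j(v)=0$ precisely when $\sigma_v$ falsifies $C_j$; hence $\feat{1}_j(v)=\trrelup{S_j(v)}$ equals $1$ if $\sigma_v\models C_j$ and $0$ otherwise. For layer $2$ I would let $\coefC{2}$ be the all-ones $1\times m$ matrix and $\coefb{2}=-(m-1)$, so that $\feat{2}_1(v)=\trrelup{\sum_{j=1}^m \feat{1}_j(v)-(m-1)}$; the argument is an integer that is at most $1$, equal to $1$ exactly when every $\feat{1}_j(v)=1$, i.e.\ exactly when $\sigma_v\models\varphi$, and at most $0$ otherwise. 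Thus $\cA_\varphi$ accepts $\tuple{\cG,v}$ if and only if $\sigma_v\models\varphi$, and the whole construction is plainly polynomial time.

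For the equivalence: if $\varphi$ has a model $\sigma$, take $\cG$ to be the single-vertex $n$-graph whose vertex $v$ has color set $\{U_i:\sigma(p_i)=\mathrm{true}\}$; then $\sigma_v=\sigma\models\varphi$, so $\cA_\varphi$ accepts $\tuple{\cG,v}$ and is satisfiable. Conversely, if $\cA_\varphi$ accepts some $\tuple{\cG,v}$ then $\sigma_v\models\varphi$, so $\varphi$ is satisfiable. As $\threeSAT$ is $\np$-hard this yields $\np$-hardness of satisfiability for $2$-layer $\oltrreluGNN$s, hence --- via the inclusion $\oltrreluGNN\subseteq\bltrreluGNN$ --- also for fixed-layer $\oltrreluGNN$s and $\bltrreluGNN$s, matching the $\np$ upper bound coming from Algorithm~\ref{alg:pspace}.

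There is no real obstacle here; the only points needing (routine) checking are that the intermediate values $\feat{1}_j(v)$ land exactly in $\{0,1\}$ --- which relies on $\feat{0}$ being $0/1$-valued and the layer-$1$ coefficients being integers, so that $\trrelu$ is applied to integers and clamps cleanly --- and the observation that, since ``all clauses hold'' is a conjunction of the $m$ clause tests, two layers are genuinely needed (one to evaluate each clause, one to take their conjunction), which is exactly why this reduction matches the fixed-layer regime.
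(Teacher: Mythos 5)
Your proof is correct and takes essentially the same route as the paper: dimensions $\gnndim{0}=n$, $\gnndim{1}=m$, $\gnndim{2}=1$, layer one computing per-clause counts of satisfied literals with $\trrelu$ clamping to a $\{0,1\}$ indicator, layer two testing $\sum_j\feat{1}_j(v)\ge m$ via the bias $-(m-1)$, and a single-vertex witness graph for the forward direction. The only differences are notational (your $S_j$ is the paper's $g(t,1)+g(t,2)+g(t,3)$), plus a few extra sanity remarks (integer-valued intermediates, the $m\ge1$ convention) that do not change the argument.
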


\begin{proof}
    Let $\varphi$ be a 3-CNF formula with $n$ propositional variables and $m$ clauses.
    \begin{equation*}
        \varphi\ :=\ 
        (\ell_{11} \lor \ell_{12} \lor \ell_{13}) \land
        (\ell_{21} \lor \ell_{22} \lor \ell_{23}) \land
        \cdots \land 
        (\ell_{m1} \lor \ell_{m2} \lor \ell_{m3})
    \end{equation*}

    We define the $2$-layer $\oltrreluGNN$ $\cA_\varphi$ as follows.
    The dimensions $\gnndim{0} = n$, $\gnndim{1} = m$, and $\gnndim{2} = 1$.
    The coefficient matrices and bias vectors are chosen so that the features of $\cA_\varphi$ satisfies the following conditions.
    For the first layer,
    for $1 \le t \le m$ and $1 \le j \le 3$, let
    \begin{equation*}
        \begin{aligned}
            g(t, j)\ :=\ 
            \begin{cases}
                \feat{0}_s(v), &\text{if $\ell_{tj} = x_s$}\\
                1-\feat{0}_s(v), &\text{if $\ell_{tj} = \neg x_s$},
            \end{cases} \\
        \end{aligned}
    \end{equation*}
    and $\feat{1}_t(v) = \trrelup{g(t, 1) + g(t, 2) + g(t, 3)}$.
    For the second layer,
    $\feat{2}_1(v) = \trrelup{\sum_{t \in \intsinterval{m}} \feat{1}_t(v) - m + 1}$.

    Suppose $\varphi$ is satisfied by the assignment $\set{x_i \gets a_i}_{i \in \intsinterval{n}}$,
    we define the $n$-graph $\cG$ as follows.
    The graph $\cG$ has only one vertex $v$.
    For $1 \le i \le n$, $\feat{0}_i(v) = a_i$.
    It is not difficult to check that for $1 \le t \le m$,
    for $1 \le j \le 3$,
    if $\ell_{tj}$ is satisfied by the assignment,
    then $g(t, j) = 1$.
    Otherwise, $g(t, j) = 0$.
    For the first layer, for $1 \le t \le m$,
    since the $t^{th}$ clause is satisfied,
    there exists some $1 \le j \le 3$,
    such that $\ell_{tj}$ is satisfied,
    which implies that $\feat{1}_t(v) = 1$.
    For the second layer,
    note that all clauses are satisfied.
    Therefore $\sum_{t \in \intsinterval{m}} \feat{1}_t(v) = m$,
    which implies that $\feat{2}_1(v) = 1$.
    Thus $\cA_\varphi$ is satisfied by $\tuple{\cG, v}$.
    
    Suppose $\cA_\varphi$ is satisfied by $\tuple{\cG, v}$.
    We claim that $\set{x_i \gets \feat{0}_{i}(v)}_{i \in \intsinterval{n}}$ is a satisfying assignment for $\varphi$.
    Note that for $1 \le t \le m$, $\feat{1}_t(v)$ is either 0 or 1.
    Thus if $\feat{2}_1(v) \ge 1/2$, then $\feat{1}_t(v) = 1$ for all $1 \le t \le m$.
    Since $\feat{1}_t(v) = 1$,
    there exists $1 \le j \le 3$, such that $g(t, j) = 1$,
    which implies that $\ell_{tj}$ is satisfied by the assignment. So the $t^{th}$ clause is satisfied by the assignment.
    Since all clauses are satisfied by the assignment,
    $\varphi$ is satisfied by $\set{x_i \gets \feat{0}_{i}(v)}_{i \in \intsinterval{n}}$.
\end{proof}

\section{GNNs with unbounded activation functions} \label{sec:unbounded}

In this section, we consider  GNNs with unbounded activations, such as the standard $\relu$. Since we have already shown that global aggregation leads
to undecidability even in the bounded case, in this section \emph{we will only deal with GNNs that lack global aggregation}.
In Section~\ref{subsec:sep_unbounded},
we introduce a logic that also helps with understanding expressiveness of this class of GNNs.
In Section~\ref{subsec:unbounded_undec_sat},
we show that the satisfiability problem for $\ogreluGNN$ and $\blreluGNN$ are undecidable, a contrast to the case with eventually constant activation functions.
In Section~\ref{subsec:unbounded_undec_univ_sat},
we prove undecidability for universal satisfiability problems.

In Section~\ref{subsec:dec_unbounded},
we turn to the satisfiability problem for $\olreluGNN$, and give a positive result about decidability.
We will use the idea of representing the possible values of activation, which was also utilized in the case of decidability for eventually constant activations. But in this case, we will be representing an infinite set of values, using Presburger formulas.

\subsection{Logics for unbounded activation function GNNs and separation of expressiveness} 
\label{subsec:sep_unbounded}

We will prove that one can express more with unbounded activations then with bounded ones. We will use logical characterizations of these GNNs as a tool for this separation, and we will also use them in the next subsection to obtain undecidability results. We will not obtain an expressiveness characterization of GNNs with unbounded activations, but merely a logic that embeds in $\blreluGNN$s: local two-variable modal logic with two-hop Presburger quantifiers ($\blMtwoPtwo$),
which is the extension of $\blMPtwo$ where the guards are conjunctions of at most two binary predicates. 

\begin{definition}
    The syntax of \emph{local two-variable modal logic with two-hop Presburger quantifiers} ($\blMtwoPtwo$) over vocabulary $\tau$ is defined inductively:
    \begin{itemize}
        \item $\top$ is a $\blMtwoPtwo$ formula.
        \item for a unary predicate $U \in \tau$, $U(x)$ is a $\blMtwoPtwo$ formula.
        \item if $\varphi(x)$ is a $\blMtwoPtwo$ formula, then so is $\neg\varphi(x)$.
        \item if $\varphi_1(x)$ and $\varphi_2(x)$ are $\blMtwoPtwo$ formulas, 
        then so is $\varphi_1(x) \land \varphi_2(x)$.
        \item if $\set{\varphi_t(x)}_{t \in \intsinterval{k}}$ and
        $\set{\varphi_t'(x)}_{t \in \intsinterval{k'}}$ 
        are sets of $\blMtwoPtwo$ formulas,
        $\set{\epsilon_{t}(x, z, y)}_{t \in \intsinterval{k}}$ is a set of guard formulas, each of form
                ${E(x, z)\land E(z, y)}$,
                ${E(x, z)\land E(y, z)}$,
                ${E(z, x)\land E(z, y)}$, or
                ${E(z, x)\land E(y, z)}$,
        and 
        $\set{\epsilon_{t}'(x, y)}_{t \in \intsinterval{k'}}$ is another set of guard formulas, each of form $E(x, y)$ or $E(y, x)$,
        then 
        \begin{equation*}
            \sum_{t \in \intsinterval{k}} \lambda_t \cdot \presbzy{\epsilon_{t}(x,z,y)\land\varphi_t(y)} + 
            \sum_{t \in \intsinterval {k'}} \lambda_t' \cdot \presby{\epsilon_{t}'(x,y)\land\varphi_t'(y)}
            \ \circledast\ \delta
        \end{equation*}
        is also a $\blMtwoPtwo$ formula.
        The numbers $\delta, \lambda_t$, $\lambda'_t$, and the comparison $\circledast$ are as in the standard Presburger quantifier definition.
    \end{itemize}
\end{definition}

The idea is that we can still count a linear combination of cardinalities of the number of nodes satisfying a given lower-level formula that are one-hop away from the current node -- as in $\blMPtwo$. Optionally, we can add on a linear combination of the number of two-hop paths that lead to a 
node satisfying other lower-level formulas.

The semantics of the formulas is given inductively, with the only step that is different from the usual cases being for the quantification, which is the obvious one. We call these \emph{two-hop Presburger quantifiers}.
We can apply a similar proof technique as in Theorem~\ref{thm:logic_to_gnn}
to show that every $\blMtwoPtwo$ formula is expressible using a$\blreluGNN$.

\begin{theorem} \label{thm:logic_to_unbounded_gnn}
    For every $n$-$\blMtwoPtwo$ formula $\Psi(x)$,
    there exists an $n$-$\blreluGNN$ $\cA_\Psi$,
    such that $\Psi(x)$ and $\cA_\Psi$ are equivalent.
\end{theorem}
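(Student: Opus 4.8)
The plan is to adapt the proof of Theorem~\ref{thm:logic_to_gnn}, and in particular Lemma~\ref{lem:logicmimicsgnn}. Fix an enumeration $\varphi_1(x),\dots,\varphi_L(x)$ of the subformulas of $\Psi(x)$ compatible with the strict-subformula order, with $\varphi_L = \Psi$, and build $\cA_\Psi$ with one block of layers per subformula, maintaining the invariant that after the block for $\varphi_\ell$ there is a feature coordinate $b_t$, for each $t \le \ell$, carrying $\eval{\varphi_t(v)}$ at every vertex $v$, and that these coordinates are copied forward unchanged to all later layers, exactly as in Lemma~\ref{lem:logicmimicsgnn}. The cases $\top$, $U_j(x)$, $\neg$, $\land$, and a one-hop Presburger quantifier are handled as before. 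The only genuinely new case is a two-hop Presburger quantifier, and the only new phenomenon is that the intermediate two-hop counts are unbounded natural numbers — which is precisely why $\relu$ rather than $\trrelu$ must be used.

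The key observation is that a two-hop count $\presbzy{\epsilon_t(x,z,y)\land\varphi_{j_t}(y)}$ is the composition of two one-hop aggregations. I would add, for each subformula $\varphi_j$ and each direction, auxiliary coordinates holding $c_j^{\out}(v) := \sum_{u \in \nbr{\out}(v)} \eval{\varphi_j(u)}$ and $c_j^{\inc}(v) := \sum_{u \in \nbr{\inc}(v)} \eval{\varphi_j(u)}$; these are produced in a single layer by aggregating the (already correct, stable) coordinate $b_j$ over out- resp.\ in-neighbors, with $\relu$ acting as the identity since the aggregated value is nonnegative. Then each of the four guard shapes $E(x,z)\land E(z,y)$, $E(x,z)\land E(y,z)$, $E(z,x)\land E(z,y)$, $E(z,x)\land E(y,z)$ makes $\presbzy{\epsilon_t\land\varphi_{j_t}(y)}$ at $v$ equal to $\sum_{z \in \nbr{a}(v)} c_{j_t}^b(z)$ for the corresponding choice of directions $a,b \in \{\out,\inc\}$ (outer neighbourhood and inner count). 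Consequently, once the coordinates $c_{j_t}^{\cdot}$ are available, the whole left-hand side of the quantifier — the $\lambda_t$-combination of these two-hop counts plus the one-hop part $\sum_t \lambda'_t \presby{\epsilon'_t(x,y)\land\varphi'_t(y)}$, aggregated exactly as in $\blMPtwo$ — is a single affine combination of aggregates of the previous layer's features.

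It remains to convert the comparison $\circledast\,\delta$ into a $0/1$ coordinate. Letting $S$ be the (integer-valued) left-hand side, I would use one layer to compute $p := \relu(S - \delta + 1)$ and $q := \relu(S - \delta)$ in two coordinates, and the next layer to compute $\relu(p - q)$; since $S$ and $\delta$ are integers, $p - q$ equals $1$ when $S \ge \delta$ and $0$ otherwise, and $\relu$ leaves this unchanged — so $\relu$ is used to simulate $\trrelu$. The comparisons other than $\ge$ reduce to $\ge$ as in the remark following Definition~\ref{def:presburger}. Thus a two-hop quantifier block uses a fixed number of layers (one for the counts, one for $p$ and $q$, one for the Boolean); padding the simpler cases to the same width yields a GNN of $O(L)$ layers, plus a final layer copying $b_L$ into the output coordinate. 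Correctness is then the same induction on subformulas as in Lemma~\ref{lem:logicmimicsgnn}, with one extra, routine step checking the count coordinates. The main obstacle is purely organizational: laying out the layers and auxiliary coordinates so that each two-hop count is genuinely realized as one aggregation applied to the output of another, and verifying that the Boolean coordinates of already-processed subformulas — and hence the counts derived from them — remain correct throughout. No new idea is needed beyond this, since unboundedness causes no difficulty once one notes that the counts live in $\bbN$ and are transmitted unchanged by $\relu$.
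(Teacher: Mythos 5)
Your proposal is correct and essentially matches the paper's proof of this theorem: both constructions spend three layers per subformula, compute the inner one-hop counts in auxiliary coordinates on which $\relu$ acts as the identity (since they are nonnegative integers), and then aggregate these together with the one-hop terms before converting the $\geq\delta$ test into a Boolean feature via a two-step $\relu$ gadget. Your gadget $\relup{\relup{S-\delta+1}-\relup{S-\delta}}$ is a cosmetic variant of the paper's $\relup{1-\relup{\delta - S}}$ and, incidentally, sidesteps a small off-by-one in the paper's displayed formula for $\feat{3\ell-1}_\ell(v)$, which as written uses $\delta-1$ instead of $\delta$ and would return $1$ when $S=\delta-1$.
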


For every $n$-$\blMtwoPtwo$ formula $\Psi(x)$, let $L$ be the number of subformulas of $\Psi(x)$ and $\set{\varphi_\ell(x)}_{\ell \in \intsinterval{L}}$ be an enumeration of subformulas of $\Psi(x)$ satisfying that
$\varphi_L(x)$ is $\Psi(x)$
and for each $\varphi_i(x)$ and $\varphi_j(x)$, if $\varphi_i(x)$ is a strict subformula of $\varphi_j(x)$, then $i < j$.


We define the $(3L+1)$-layer $n$-$\blreluGNN$ $\cA_\Psi$ as follows.
The input dimension $\gnndim{0}$ is $n$.
For $1 \le \ell \le 3L$, the dimension $\gnndim{\ell}$ is $L + K + n$, and $\gnndim{3L+1} = 1$,
where $K$ is the maximum $k$ in any two-hop Presburger quantifier within $\Psi(x)$.
The coefficient matrices and bias vectors will be chosen so that the features of $\cA_\Psi$ satisfy the following conditions.


Firstly, for $1 \le \ell \le L$,
\begin{equation*}
    \feat{3L}_{\ell}(v)\ =\ \feat{3L-1}_{\ell}(v)\ =\ \cdots\ =\ \feat{3\ell}_{\ell}(v).
\end{equation*}
The condition of $\feat{3\ell}_{\ell}(v)$ depends on the formula $\varphi_\ell(x)$.
\begin{itemize}
    \item If $\varphi_\ell(x)$ is $\top$,
    then $\feat{3\ell}_\ell(v) = 1$.
    \item If $\varphi_\ell(x)$ is $U_j(x)$ for some unary predicate $U_j$,
    then $\feat{3\ell}_\ell(v) = \feat{3\ell-1}_{L+K+j}(v)$.
    \item If $\varphi_\ell(x)$ is $\neg\varphi_j(x)$,
    then $\feat{3\ell}_\ell(v) = \relup{1-\feat{3\ell-1}_j(v)}$.
    \item If $\varphi_\ell(x)$ is $\varphi_{j_1}(x)\land\varphi_{j_2}(x)$,
    then $\feat{3\ell}_\ell(v) = \relup{\feat{3\ell-1}_{j_1}(v)+\feat{3\ell-1}_{j_2}(v)-1}$.
    \item If $\varphi_\ell(x)$ is
    \begin{equation*}
        \sum_{t \in \intsinterval{k}} \lambda_t\cdot \presbzy{\epsilon_{t}(x,z,y)\land\varphi_{j_t}(y)} + 
        \sum_{t \in \intsinterval {k'}} \lambda_t'\cdot \presby{\epsilon_{t}'(x,y)\land\varphi_{j_t'}(y)}
        \ \circledast\ \delta,
    \end{equation*}
    then
    \begin{equation*}
        \begin{aligned}
            \feat{3\ell}_\ell(v)\ =\ &\relup{1 - \feat{3\ell-1}_\ell(v)} \\
            \feat{3\ell-1}_\ell(v)\ =\ &
            \relup{
                -\sum_{t \in \intsinterval{k}} \lambda_t \cdot \sum_{u \in X^1_t(v)}\feat{3\ell-2}_{L + t}(u)
                -\sum_{t \in \intsinterval{k'}} \lambda_t' \cdot \sum_{u \in X'_t(v)}\feat{3\ell-2}_{j'_t}(u)
                + \delta - 1
            },
        \end{aligned}
    \end{equation*}
    and, for $1 \le t \le k$,
    \begin{equation*}
        \feat{3\ell-2}_{L+t}(v)\ =\ 
        \relup{\sum_{u \in X^2_t(v)} \feat{3\ell-3}_{j_t}(u)},
    \end{equation*}
    where
    \begin{equation*}
        \begin{aligned}
            X^1_t(v)\ :=\ &
            \begin{cases}
                \nbr{\out}(v), &\text{if $\epsilon_t(x, z, y)$ is $E(x, z) \land E(z, y)$ or $E(x, z) \land E(y, z)$} \\
                \nbr{\inc}(v), &\text{if $\epsilon_t(x, z, y)$ is $E(z, x) \land E(z, y)$ or $E(z, x) \land E(y, z)$}
            \end{cases} \\
            X^2_t(v)\ :=\ &
            \begin{cases}
                \nbr{\out}(v), &\text{if $\epsilon_t(x, z, y)$ is $E(x, z) \land E(z, y)$ or $E(z, x) \land E(z, y)$} \\
                \nbr{\inc}(v), &\text{if $\epsilon_t(x, z, y)$ is $E(x, z) \land E(y, z)$ or $E(z, x) \land E(y, z)$}
            \end{cases} \\
            X'_t(v)\ :=\ &
            \begin{cases}
                \nbr{\out}(v), &\text{if $\epsilon'_t(x, y)$ is $E(x, y)$} \\
                \nbr{\inc}(v), &\text{if $\epsilon'_t(x, y)$ is $E(y, x)$}.
            \end{cases}
        \end{aligned}
    \end{equation*}
\end{itemize}
Next, the conditions for the last $n$ entries of features are as follows:
for $1 \le t \le n$,
\begin{equation*}
    \feat{3L}_{L+K+t}(v)\ =\ \feat{3L-1}_{L+K+t}(v)\ =\ \cdots\ =\ \feat{1}_{L+K+t}(v)\ =\ \feat{0}_{t}(v).
\end{equation*}
Finally, for the last layer, $\feat{3L+1}_1(v) = \feat{3L}_L(v)$.

It is easy to see that there are GNNs that satisfy these conditions.

Recall that a vector $\bfv$ is $\ell$-correct for $\Psi(x)$ w.r.t. $\tuple{\cG, v}$,
if, for $1 \le i \le \ell$, $\bfv_i = \eval{\varphi_i(v)}_\cG$.
The theorem will follow once we have shown the following property of $\cA_\Psi$:
\begin{lemma} \label{lem:inductivetranslation}
    For every $n$-graph $\cG$ and vertex $v$ in $\cG$,
    for $1 \le \ell \le L$, the feature
    $\feat{3\ell}(v)$ of $\cA_\Psi$ is $\ell$-correct w.r.t. $\tuple{\cG, v}$.
\end{lemma}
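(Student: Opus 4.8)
The plan is to prove Lemma~\ref{lem:inductivetranslation} by induction on $\ell$, closely mirroring the argument in Lemma~\ref{lem:logicmimicsgnn}, but with the extra bookkeeping needed because each subformula now consumes \emph{three} GNN layers rather than one, and because Presburger quantifiers are handled via the identity $\relup{1-\relup{1-a}} = 1$ if $a \ge 1$ and $0$ if $a \le 0$ (for $a$ an integer), which is how the two nested $\relu$'s in the definition of $\feat{3\ell}_\ell(v)$ and $\feat{3\ell-1}_\ell(v)$ convert a threshold condition into a Boolean value. First I would record two ``stability'' facts that are immediate from the construction: for every $\ell$, the first $L$ coordinates satisfy $\feat{3L}_t(v) = \cdots = \feat{3t}_t(v)$, so once coordinate $t$ is set correctly at layer $3t$ it is never disturbed; and the last $n$ coordinates satisfy $\feat{3L}_{L+K+t}(v) = \cdots = \feat{0}_t(v)$, so the initial colouring is always available. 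These together let me assume, when analyzing $\varphi_\ell$, that for every strict subformula $\varphi_j$ (so $j \le \ell-1$) and \emph{every} vertex $u$ of $\cG$ we have $\feat{3\ell-3}_j(u) = \eval{\varphi_j(u)}$, $\feat{3\ell-2}_j(u) = \eval{\varphi_j(u)}$, and $\feat{3\ell-1}_j(u) = \eval{\varphi_j(u)}$, since $3j \le 3\ell-3$.

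Next I would do the case analysis on the shape of $\varphi_\ell$. The cases $\top$, $U_j(x)$, $\neg\varphi_j(x)$, and $\varphi_{j_1}(x)\land\varphi_{j_2}(x)$ are essentially identical to the corresponding cases in Lemma~\ref{lem:logicmimicsgnn} (the only change being the layer indices $3\ell-1$ in place of $\ell-1$ and the use of the stability facts above), using that $\relup{1-b} = 1-b$ and $\relup{b_1+b_2-1} \in \{0,1\}$ agrees with conjunction when $b, b_1, b_2 \in \{0,1\}$. The substantive case is the two-hop Presburger quantifier. Here I would first argue, using the inductive hypothesis applied to $\varphi_{j_t}$ at all vertices, that
\begin{equation*}
    \feat{3\ell-2}_{L+t}(v)\ =\ \relup{\sum_{u \in X^2_t(v)} \feat{3\ell-3}_{j_t}(u)}\ =\ \sum_{u \in X^2_t(v)} \eval{\varphi_{j_t}(u)}\ =\ \abs{\setc{u}{\cG \models \epsilon^2_t(v,\cdot)\text{-reachable and } \varphi_{j_t}}},
\end{equation*}
which is a nonnegative integer, so the outer $\relu$ is transparent; that is, coordinate $L+t$ at layer $3\ell-2$ holds the count of two-hop witnesses through the intermediate vertex pattern described by $\epsilon_t$. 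Then at layer $3\ell-1$ one sums these counts (now sitting in coordinate $L+t$ of the neighbours $u \in X^1_t(v)$, for which $\feat{3\ell-2}_{L+t}(u)$ is exactly the relevant two-hop count \emph{from} $u$), multiplies by $\lambda_t$, adds the one-hop contributions $\sum_{u\in X'_t(v)}\feat{3\ell-2}_{j'_t}(u)$, and forms $\delta - 1 - (\text{total})$; the key combinatorial observation is that iterating the neighbour-sum at $u \in X^1_t(v)$ of the quantity $\feat{3\ell-2}_{L+t}(u)$ reconstructs exactly $\abs{\setc{u}{\cG\models\epsilon_t(x,z,y)\land\varphi_{j_t}(y)}}$, the number of length-two paths from $v$ matching $\epsilon_t$ and ending at a $\varphi_{j_t}$-vertex. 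Writing $w(v)$ for the full linear combination of one- and two-hop counts, $\feat{3\ell-1}_\ell(v) = \relup{\delta - w(v)}$, which is $0$ iff $w(v) \ge \delta$ and is a positive integer otherwise; hence $\feat{3\ell}_\ell(v) = \relup{1 - \feat{3\ell-1}_\ell(v)}$ equals $1$ iff $w(v)\ge\delta$, i.e. iff $\cG \models \varphi_\ell(v)$. (I would handle a general comparison $\circledast$ by the remark after Definition~\ref{def:presburger} reducing to $\ge$, possibly composing two such gadgets for an equality.)

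I expect the main obstacle to be the second part of the Presburger case: carefully verifying that the two-stage neighbour aggregation really computes the two-hop path count with the correct choice of $\nbr{\out}$ versus $\nbr{\inc}$ at each of the two hops. The definitions of $X^1_t$ and $X^2_t$ split the four guard shapes $E(x,z)\land E(z,y)$, $E(x,z)\land E(y,z)$, $E(z,x)\land E(z,y)$, $E(z,x)\land E(y,z)$ according to the direction of the $x$--$z$ edge (first hop, controlling $X^1_t$) and the direction of the $z$--$y$ edge (second hop, controlling $X^2_t$), and one must check each of the four cases that the composite $\setc{u}{u \in X^2_t(w) \text{ for some } w \in X^1_t(v)}$ enumerated with multiplicity equals $\setc{(z,u)}{\cG\models\epsilon_t(v,z,u)}$ projected to $u$. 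A subtlety worth flagging: self-loops are allowed, so $z = v$ or $z = u$ or $u = v$ are possible and the counts are genuinely path-counts (with multiplicity), not set cardinalities — but this is fine since the GNN aggregation is itself a sum with multiplicity, so the two sides match on the nose. Once this is checked the rest is the same routine verification as in Lemma~\ref{lem:logicmimicsgnn}, and the theorem follows since $\feat{3L+1}_1(v) = \feat{3L}_L(v) = \eval{\varphi_L(v)} = \eval{\Psi(v)}$ and the output dimension is $1$, so $\cA_\Psi$ accepts $\tuple{\cG,v}$ iff $\feat{3L+1}_1(v) \ge 1/2$ iff $\cG \models \Psi(v)$; the absence of global readout and the fact that all matrices $\coefR{\ell}$ are zero keep $\cA_\Psi$ in $\blreluGNN$.
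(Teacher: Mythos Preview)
Your proposal is correct and follows essentially the same approach as the paper's proof: induction on $\ell$, invoking stability of already-computed coordinates, and for the two-hop quantifier case decomposing the aggregation into a first pass (layer $3\ell-2$, coordinate $L+t$) that counts one-hop witnesses from each vertex and a second pass (layer $3\ell-1$) that sums these over $X^1_t(v)$, then applying the double-$\relu$ threshold gadget. Your write-up is in fact more explicit than the paper's on the stability bookkeeping and on the multiplicity issue for self-loops, but the argument is the same; the only slip is the inconsistency between ``$\delta - 1 - (\text{total})$'' and ``$\relup{\delta - w(v)}$'' in your description of $\feat{3\ell-1}_\ell(v)$.
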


\begin{proof} 
    The proof is similar to Lemma~\ref{lem:logicmimicsgnn}.
    Here we only present the case where the subformula is a two-hop Presburger quantifier.

    For $1 \le \ell \le L$, 
    for $1 \le i \le \ell - 1$,
    note that $\feat{3\ell}_i(v) = \feat{3(\ell-1)}_i(v)$.
    By the induction hypothesis, 
    $\feat{3(\ell-1)}(v)$ is $(\ell-1)$-correct for $\Psi(x)$ w.r.t. $\tuple{\cG, v}$,
    which implies that $\feat{3(\ell-1)}_i(v) = \eval{\varphi_{i}(v)}$.
    Thus, it holds that $\feat{3\ell}_i(v) = \eval{\varphi_{i}(v)}$.

    It remains to show the case where $i = \ell$,
    Suppose that $\varphi_\ell(x)$ is a two-hop Presburger quantifier:
    \begin{equation*}
        \sum_{t \in \intsinterval{k}} \lambda_t\cdot \presbzy{\epsilon_{t}(x,z,y)\land\varphi_{j_t}(y)} + 
        \sum_{t \in \intsinterval {k'}} \lambda_t'\cdot \presby{\epsilon_{t}'(x,y)\land\varphi_{j_t'}(y)}
        \ \circledast\ \delta.
    \end{equation*}
   For $1 \le t \le k$,
   because $\varphi_{j_t}(x)$ is a strict subformula of $\varphi_\ell(x)$,
   $\ell - 1 \ge j_t$.
   By the induction hypothesis, 
   for every vertex $u \in V$,
   $\feat{3(\ell - 1)}(u)$ is $(\ell-1)$-correct for $\Psi(x)$ w.r.t. $\tuple{\cG, u}$,
   which implies that $\feat{3(\ell-1)}_{j_t}(u) = \eval{\varphi_{j_t}(u)}$.
   Thus, we have
   \begin{equation*}
       \begin{aligned}
           \abs{\setc{(u, u') \in V \times V}{\cG \models \epsilon_t(v, u, u')\land\varphi_{j_t}(u')}}
           \ =\ &
           \sum_{u \in X^1_t(v)} \sum_{u' \in X^2_t(u)} \eval{\varphi_{j_t}(u')} \\
           \ =\ &
           \sum_{u \in X^1_t(v)} \sum_{u' \in X^2_t(u)} \feat{3(\ell-1)}_{j_t}(u').
       \end{aligned}
   \end{equation*}
   For $1 \le t \le k'$,
   because $\varphi_{j'_t}(x)$ are strict subformulas of $\varphi_\ell(x)$,
   $\ell-1 \ge j'_t$.
   By the induction hypothesis,
   for every $u \in V$,
   $\feat{3(\ell - 1)}(u)$ is $(\ell-1)$-correct for $\Psi(x)$ w.r.t. $\tuple{\cG, u}$,
   which implies that $\feat{3\ell-2}_{j'_t}(u) = \feat{3(\ell-1)}_{j'_t}(u) = \eval{\varphi_{j'_t}(u)}$.
   Thus, it holds that
   \begin{equation*}
       \begin{aligned}
           \abs{\setc{u \in V}{\cG \models \epsilon'_t(v, u)\land\varphi'_{j'_t}(u)}}
           \ =\ &
           \sum_{u \in X'_t(v)} \eval{\varphi_{j'_t}(u)}
           \ =\ &
           \sum_{u \in X'_t(v)} \feat{3\ell-2}_{j'_t}(u).
       \end{aligned}
   \end{equation*}

   Let $w_2(v)$ and $w_1(v)$ be values defined as follows:
   \begin{equation*}
       \begin{aligned}
           w_2(v)\ :=\ &
           \sum_{t \in \intsinterval{k}} \lambda_t \cdot
           \abs{\setc{(u, u') \in V\times V}{\cG \models \epsilon_t(v, u, u')\land\varphi_{j_t}(u')}} \\
           \ =\ &
           \sum_{t \in \intsinterval{k}} \lambda_t \cdot
           \left(\sum_{u \in X^1_t(v)} \sum_{u' \in X^2_t(u)} \feat{3\ell-3}_{L+t}(u)\right)
           \\
           w_1(v)\ :=\ &
           \sum_{t \in \intsinterval{k'}} \lambda'_t \cdot
           \abs{\setc{u \in V}{\cG \models \epsilon'_t(v, u)\land\varphi_{j'_t}(u)}} \\
           \ =\ &
           \sum_{t \in \intsinterval{k'}} \lambda'_t \cdot \sum_{u \in X'_t(v)}\feat{3\ell-2}_{j'_t}(u).
        \end{aligned}
    \end{equation*}
    Observe that
    \begin{equation*}
        \feat{3\ell}_\ell(v)\ =\ 
        \relup{1 - \relup{-w_2(v)-w_1(v)+\delta-1}}.
    \end{equation*}
    By the semantics of two-hop Presburger quantifiers,
    if $\cG \models \varphi_\ell(v)$, then $w_1(v) + w_2(v) \ge \delta$,
    which implies $\feat{3\ell}_\ell(v) = 1$.
    On the other hand,
    if $\cG \not\models \varphi_\ell(v)$, then $w_1(v) + w_2(v) < \delta$,
    which implies $\feat{3\ell}_\ell(v) = 0$.
\end{proof}

Theorem~\ref{thm:logic_to_unbounded_gnn} follows easily from Lemma~\ref{lem:inductivetranslation}.

We can use the logic to get an expressiveness separation for GNNs.
By Theorem~\ref{thm:logic_to_unbounded_gnn}
to show that $\blreluGNN$s can do more than $\blcGNN$s,
it is sufficient to show that there is a $\blMtwoPtwo$ formula that is not given by any $\blcGNN$:

\begin{lemma} \label{lemma:mtwoptwo_gap1}
    $\blMtwoPtwo$ is strictly more expressive than $\blcGNN$s.
\end{lemma}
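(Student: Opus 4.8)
I want to exhibit a single $\blMtwoPtwo$ formula $\psi(x)$ and prove that no $\blcGNN$ (equivalently, by Corollary~\ref{cor:logicequalgnn}, no $\blMPtwo$ formula) defines the same node classifier. The natural candidate is a formula that counts something unbounded along two-hop paths and compares it against something that only a two-hop count can track — for instance ``the number of out-out two-hop paths from $x$ (to any node) is at least the number of out-neighbors of $x$,'' or something forcing a product-like relationship that a bounded-value machine cannot compute. Concretely I would take
\[
\psi(x)\ :=\ \bigl(\presbzy{E(x,z)\land E(z,y)\land\top}\ \ge\ \presby{E(x,y)\land\top}+1\bigr),
\]
i.e.\ $x$ satisfies $\psi$ iff the number of length-two outgoing paths leaving $x$ strictly exceeds its out-degree. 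The point is that this predicate depends on a quantity — the total out-degree of the out-neighbourhood of $x$ — that is genuinely two-hop information and can grow without bound independently of the one-hop information around $x$.

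**Key steps.** First, I would observe that any $\blcGNN$ has, by Theorem~\ref{thm:computespectrum} and Lemma~\ref{lemma:syntacticoverapprox}, a finite spectrum; hence, via Corollary~\ref{cor:logicequalgnn}, it is captured by a $\blMPtwo$ formula $\varphi(x)$ whose Presburger quantifiers look only one hop out, at nodes whose properties are themselves computed from the one-hop structure, recursively. The second and central step is a locality / indistinguishability argument: build two families of pointed graphs $(\cG_k, v_k)$ and $(\cH_k, w_k)$ such that (a) $v_k$ and $w_k$ have the same out-degree, and for every $\MPtwo$ formula $\chi$ of quantifier-depth/coefficient-size below a fixed bound the multiset of $\chi$-types of the out-neighbours (and in-neighbours) of $v_k$ agrees with that of $w_k$ — so $v_k$ and $w_k$ are indistinguishable by any fixed $\blMPtwo$ formula for $k$ large enough — yet (b) $\psi(v_k)$ holds and $\psi(w_k)$ fails because the out-neighbours of $v_k$ collectively have strictly more out-edges than those of $w_k$. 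The cleanest realization: let each out-neighbour of $v_k$ have out-degree $2$ and each out-neighbour of $w_k$ have out-degree $1$ (so the one-hop pictures are locally identical up to the *second-order* out-degree, which $\blMPtwo$ cannot see past the guard), with $v_k,w_k$ both having $k$ out-neighbours; then $\psi(v_k)$ says $2k \ge k+1$ (true for $k\ge 1$) while $\psi(w_k)$ says $k\ge k+1$ (false). The third step is the formal argument that a given $\blMPtwo$ formula $\varphi$ cannot separate the family: by induction on subformulas, the value $\eval{\varphi'(u)}$ at any node $u$ in these graphs depends only on $u$'s colour and the *multiset of subformula-types of its one-hop neighbours*; since the out-neighbours of $v_k$ and $w_k$ have the same colours and the same one-hop neighbourhood shape (their children are indistinguishable from each other), the induction gives $\eval{\varphi(v_k)}=\eval{\varphi(w_k)}$ for all $k$, contradicting that $\varphi$ is equivalent to $\psi$, which separates $v_k$ from $w_k$. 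Finally, invoke Theorem~\ref{thm:logic_to_unbounded_gnn}: $\psi$ is realized by some $\blreluGNN$, so $\blreluGNN$s — and hence $\blMtwoPtwo$ — strictly subsume $\blcGNN$s. (As stated, the lemma only claims $\blMtwoPtwo$ is strictly more expressive than $\blcGNN$s; strictly more expressive means both that it captures everything a $\blcGNN$ captures — which follows from Theorems~\ref{thm:logic_to_gnn} and~\ref{thm:logic_to_unbounded_gnn} or a direct syntactic embedding of $\blMPtwo$ into $\blMtwoPtwo$ — and that it captures something extra, which is the $\psi$ above.)

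**Main obstacle.** The delicate part is step three: making the indistinguishability argument airtight for *all* $\blMPtwo$ formulas at once rather than a single bounded fragment. A $\blMPtwo$ formula has fixed coefficients $\lambda_t$ and thresholds $\delta$, so its one-hop Presburger counts saturate: once an out-neighbourhood multiset of types is ``large enough'' in each coordinate, the truth value stabilizes. I need my two graph families to agree not just on which types occur among neighbours but on the *exact counts up to the saturation point determined by* $\varphi$, while still differing in the second-order out-degree sum. This is arranged by choosing $k$ large relative to the (finitely many) thresholds appearing in $\varphi$ — a standard ``pigeonhole past the counting bound'' move — but one must be careful that the children's own types are genuinely forced to coincide between the two families, which is where I would insist the children be structurally isomorphic as rooted graphs except for an extra pendant out-edge that is itself invisible to $\blMPtwo$ (the pendant target being a fresh node all of whose relevant one-hop counts are still below every threshold). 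Verifying that this pendant modification does not change any child's $\blMPtwo$-type is the one calculation that needs genuine care; everything else is bookkeeping with the inductive ``type'' invariant.
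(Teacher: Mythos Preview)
Your overall strategy---pick a $\blMtwoPtwo$ formula, build two pointed graphs that your formula separates but that any fixed $\blcGNN$ cannot---is exactly the paper's. The gap is in the specific choice of formula and witnesses.

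The calculation you flag as needing ``genuine care'' in fact fails. A child with out-degree $1$ and a child with out-degree $2$ are separated by the $\blMPtwo$ formula $\chi(y):=\bigl(\presbx{E(y,x)\land\top}\ge 2\bigr)$; hence $v_k$ and $w_k$ are separated by $\bigl(\presby{E(x,y)\land\chi(y)}\ge 1\bigr)$. Adding a pendant out-edge to a child \emph{always} changes that child's $\blMPtwo$-type, because out-degree is a one-hop quantity that Presburger quantifiers read exactly. More generally, your $\psi$ compares the total two-hop count against the out-degree, and making these differ forces the children's out-degrees to differ in a regime (small constant values) where $\blMPtwo$ can see the difference. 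Pushing both children's out-degrees past any saturation threshold makes them indistinguishable, but then $\psi$ holds on both sides.

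The paper sidesteps this by comparing two \emph{two-hop} counts against each other rather than a two-hop count against a one-hop count. The separating formula is
\[
\Psi(x)\ :=\ \bigl(\presbzy{E(x,z)\land E(z,y)\land U_1(y)}\ =\ \presbzy{E(x,z)\land E(z,y)\land U_2(y)}\bigr),
\]
and the witnesses are the $(n_1,n_2)$-bipolar graphs: $v_0$ has exactly two out-neighbours $v_1,v_2$, with $v_1$ fanning out to $n_1$ leaves in $U_1$ and $v_2$ to $n_2$ leaves in $U_2$. The key point is that for a fixed $\blcGNN$, once $n_1,n_2$ exceed a threshold computed from the activation thresholds and the finite spectrum (Theorem~\ref{thm:computespectrum}), the features of $v_1$ and $v_2$---and hence of $v_0$---stabilize, so $(n_{\cA},n_{\cA})$ and $(n_{\cA},n_{\cA}+1)$ are indistinguishable by the GNN. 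Yet $\Psi$ holds at $v_0$ in the first and fails in the second. The saturation happens at the \emph{intermediate} nodes $v_1,v_2$ (whose out-degrees are large), not at the leaves; the two two-hop counts $n_1,n_2$ remain free to differ. This is precisely what your construction cannot arrange with a single two-hop count compared against a bounded one-hop quantity.
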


We claim that the property ``the number of two-hop paths from the vertex $v$ to the green vertices is the same as the number of two-hop paths from the vertex $v$ to the blue vertices'' gives the separation. It is easy to express
in the two-hop logic. To show that no $\blcGNN$ can express it, we construct a sequence of pairs of graphs, each with a special node, such that the property holds
in the special node of the first graph and fails in the special node of the second, while for every $\blcGNN$ $\cA$,
for any sufficiently large pairs of graphs in this sequence,
the special nodes are indistinguishable by $\cA$.

\begin{definition}
    For $n_1, n_2 \in \bbN$,
    the \emph{$(n_1, n_2)$-bipolar graph} $\tuple{V, E, U_1, U_2}$ is a $2$-graph defined as follows,
    see Fig.~\ref{fig:bipolar}.
    \begin{equation*}
        \begin{aligned}
            U_1\ :=\ &\setc{v_{1, i}}{1 \le i \le n_1} \\
            U_2\ :=\ &\setc{v_{2, i}}{1 \le i \le n_2} \\
            V\ :=\ &U_1 \cup U_2 \cup \set{v_0, v_1, v_2} \\
            E\ :=\ &\set{(v_0, v_1), (v_0, v_2)}
                \cup \setc{(v_1, v_{1, i})}{1 \le i \le n_1}
                \cup \setc{(v_2, v_{2, i})}{1 \le i \le n_2} \\
        \end{aligned}
    \end{equation*}
    \begin{figure}[ht!]
        \centering
        \begin{tikzpicture}[thick, main/.style = {draw, circle, minimum size=30}, scale=0.8]
            \node[main] (g0) at (0,0) {$v_0$};
            \node[main] (gr) at (2,0) {$v_2$};
            \node[main] (gl) at (-2,0) {$v_1$};
            
            \node[main] (gr1) at (.5+2.377, 1.798)  {$v_{2, 1}$};
            \node[main] (gr2) at (.5+3.402, 0.618)  {$v_{2, 2}$};
            \node[    ] ()    at (.5+3.402, -0.618) {$\vdots$};
            \node[main] (grn) at (.5+2.377, -1.798) {$v_{2, n_2}$};
            
            \node[main] (gl1) at (-.5-2.377, 1.798)  {$v_{1, 1}$};
            \node[main] (gl2) at (-.5-3.402, 0.618)  {$v_{1, 2}$};
            \node[    ] ()    at (-.5-3.402, -0.618) {$\vdots$};
            \node[main] (gln) at (-.5-2.377, -1.798) {$v_{1, n_1}$};
            
            \path [->] (g0) edge (gr);
            \path [->] (gr) edge (gr1);
            \path [->] (gr) edge (gr2);
            \path [->] (gr) edge (grn);
            \path [->] (g0) edge (gl);
            \path [->] (gl) edge (gl1);
            \path [->] (gl) edge (gl2);
            \path [->] (gl) edge (gln);
        \end{tikzpicture}
        \caption{$(n_1, n_2)$-bipolar graph.}
        \label{fig:bipolar}
    \end{figure}
\end{definition}

We first show that $\blcGNN$s cannot distinguish sufficiently large pairs of bipolar graphs.

\begin{lemma} \label{lemma:bipolar}
    For each $2$-$\blcGNN$ $\cA$,
    there exist a threshold $n_\cA \in \bbN$,
    such that for every ${n_1, n_2 \ge n_\cA}$,
    the following properties hold.
    Let $\cG$ be the $(n_\cA, n_\cA)$-bipolar graph and $\cG'$ be the $(n_1, n_2)$-bipolar graph.
    For every $0 \le \ell \le L$,
    \begin{enumerate}
        \item $\feat{\ell}_{\cG}(v_0) = \feat{\ell}_{\cG'}(v'_0)$,
        $\feat{\ell}_{\cG}(v_1) = \feat{\ell}_{\cG'}(v'_1)$, and
        $\feat{\ell}_{\cG}(v_2) = \feat{\ell}_{\cG'}(v'_2)$.

        \item for $1 \le i \le n_\cA$ and $1 \le i' \le n_1$,
        $\feat{\ell}_{\cG}(v_{1, i}) = \feat{\ell}_{\cG'}(v'_{1, i'})$.

        \item for $1 \le i \le n_\cA$ and $1 \le i' \le n_2$,
        $\feat{\ell}_{\cG}(v_{2, i}) = \feat{\ell}_{\cG'}(v'_{2, i'})$.
    \end{enumerate}
\end{lemma}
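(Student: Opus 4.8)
The plan is to prove all three statements simultaneously by induction on the layer $\ell$, exploiting the high degree of symmetry in bipolar graphs. The key structural observation is that in any $(n_1,n_2)$-bipolar graph, all the leaf vertices $v_{1,i}$ have exactly the same local neighborhood picture (a single in-edge from $v_1$, no out-edges, color $U_1$), and similarly all $v_{2,i}$ look alike; moreover $v_1$ sees $n_1$ out-neighbors all with the same feature and one in-neighbor $v_0$, while $v_0$ sees two out-neighbors $v_1,v_2$ and nothing else. So at every layer the feature of a vertex depends only on its ``type'' (which of $v_0,v_1,v_2$, a $U_1$-leaf, or a $U_2$-leaf it is) together with the multiset of features of its in/out neighbors, which in turn is controlled by the counts $n_1,n_2$.

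First I would set up the induction. For the base case $\ell=0$, features are determined purely by color, so $v_0,v_1$ all get the same $0$-feature (no color), the leaves get the appropriate unit vectors, and statements (1)--(3) hold trivially with no constraint on $n_\cA$. For the inductive step, I would write out the local update rule (recall $\cA$ is local, so there is no global readout term): for a leaf $v_{1,i}$, $\feat{\ell}(v_{1,i}) = \act{\ell}(\coefC{\ell}\feat{\ell-1}(v_{1,i}) + \coefA{\ell}{\inc}\feat{\ell-1}(v_1) + \coefb{\ell})$, which by the induction hypothesis is the same in $\cG$ and $\cG'$; the analogous computation handles $v_{2,i}$ and gives statements (2)--(3). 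For $v_0$: its out-neighbors are $v_1,v_2$ and it has no in-neighbors, so $\feat{\ell}(v_0)$ is a function of $\feat{\ell-1}(v_0),\feat{\ell-1}(v_1),\feat{\ell-1}(v_2)$, all preserved by IH, giving the $v_0$ part of (1). For $v_1$: out-neighbors are the $n_1$ (resp.\ $n_\cA$) leaves $v_{1,i}$, all with identical $(\ell-1)$-feature $\bfs$ by IH(2), and in-neighbor $v_0$; so $\feat{\ell}_{\cG}(v_1)$ involves the term $\coefA{\ell}{\out}\cdot n_\cA\cdot\bfs$ whereas $\feat{\ell}_{\cG'}(v'_1)$ involves $\coefA{\ell}{\out}\cdot n_1\cdot\bfs$ — and here is where boundedness of the activation enters.

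The main obstacle — and the reason the threshold $n_\cA$ exists — is exactly this: the summed contribution $n\cdot\bfs$ grows with $n$, so without help the features of $v_1$ would differ between $\cG$ and $\cG'$. The fix is that $\act{\ell}$ is eventually constant: once $n$ exceeds some bound, each coordinate of the pre-activation input is either $\le \tl{\ell}$ (if the corresponding coefficient times $\bfs$-contribution is negative) or $\ge \tr{\ell}$ (if positive) or the coefficient is zero, so the activation output is pinned to a constant independent of $n$. I would therefore choose $n_\cA$ large enough — working layer by layer, using the finiteness of the spectrum from Theorem~\ref{thm:computespectrum} so that only finitely many values $\bfs$ can occur — that for \emph{every} layer $\ell$ and every possible $(\ell-1)$-feature $\bfs$ realized at a leaf, and every coordinate $i$, the quantity $(\coefA{\ell}{\out}(n\bfs))_i$ has crossed the relevant threshold whenever $n\ge n_\cA$. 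Then $\act{\ell}$ applied at $v_1$ yields the same vector for $n_1\ge n_\cA$ as for $n_\cA$ itself, and symmetrically for $v_2$ with $n_2$, completing the inductive step for (1). One should double-check the edge case where a coefficient column is zero (then no threshold is needed) and the case of mixed-sign coordinates (handled coordinatewise), but these are routine; I would also note $n_\cA$ is finite precisely because there are finitely many layers and finitely many spectrum values to quantify over.
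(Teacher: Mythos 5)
Your proof is correct and takes essentially the same approach as the paper: induction on layers, exploiting the homogeneity of bipolar graphs, with the threshold $n_\cA$ chosen using the finiteness of the spectrum together with the eventual constancy of the activations so that the pre-activation at $v_1$ (respectively $v_2$) is pinned past $\tl{\ell}$ or $\tr{\ell}$ regardless of $n_1,n_2\ge n_\cA$. The paper makes the choice of $n_\cA$ explicit by defining, for each layer $\ell$, coordinate $i$, and triple of spectrum values, quantities $q$ and $p$ (the slope and offset of the pre-activation as a function of the count) and taking $n_\cA$ to be the maximum of $\lceil(\tr{\ell}-p)/q\rceil$ or $\lceil(p-\tl{\ell})/(-q)\rceil$ over all such triples, which is exactly the quantification over layers and spectrum values you describe in prose.
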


\begin{proof}
    Recall from Theorem~\ref{thm:computespectrum} that the spectrum of a $\blcGNN$ at any layer $\ell$, denoted $\spectrum{\ell}$, is finite.
    We will show that we  can compute the required threshold using the thresholds for the eventually constant activations along with the maximum rational number in the spectrum.

    For $1 \le \ell \le L$ and $1 \le i \le \gnndim{\ell}$,
    for $\bfs_1, \bfs_2, \bfs_3 \in \spectrum{\ell-1}$, let
    \begin{equation*}
        \begin{aligned}
            \fq{\ell}{i}{\bfs_1}\ :=\ &\left(\coefA{\ell}{\out} \bfs_1\right)_i \\
            \fp{\ell}{i}{\bfs_2, \bfs_3}\ :=\ &\left(\coefC{\ell} \bfs_2 + \coefA{\ell}{\inc} \bfs_3 + \coefb{\ell}\right)_i.
        \end{aligned}
    \end{equation*}
    We define $\fn{\ell}_i(\bfs_1, \bfs_2, \bfs_3) \in \bbN$ as follows:
    \begin{equation*}
        \fn{\ell}_i(\bfs_1, \bfs_2, \bfs_3)\ :=\ 
        \begin{dcases}
            \ceil{\frac{\tr{\ell} - \fp{\ell}{i}{\bfs_2, \bfs_3}}{\fq{\ell}{i}{\bfs_1}}},&
            \text{if $\fq{\ell}{i}{\bfs_1} > 0$ and $\fp{\ell}{i}{\bfs_2, \bfs_3} < \tr{\ell}$} \\
            \ceil{\frac{\fp{\ell}{i}{\bfs_2, \bfs_3} - \tl{\ell}}{-\fq{\ell}{i}{\bfs_1}}}, &
            \text{if $\fq{\ell}{i}{\bfs_1} < 0$ and $\fp{\ell}{i}{\bfs_2, \bfs_3} > \tl{\ell}$} \\
            0,& \text{otherwise}.
        \end{dcases}
    \end{equation*}
    Let $n_\cA$ be the maximum of $\fn{\ell}_i(\bfs_1, \bfs_2, \bfs_3)$.
    Since $\spectrum{\ell}$ has finite size, a maximum value exists.

    We prove the lemma by induction on the layers of $\cA$.
    For the base case $\ell = 0$, the properties hold by the definition of bipolar graphs.
    For the induction step $1 \le \ell \le L$,
    we first compute the features for $v_0$ and $v_0'$.
    Note that by the induction hypothesis,
    $\feat{\ell-1}_\cG(v_0) = \feat{\ell-1}_{\cG'}(v'_0)$,
    $\feat{\ell-1}_\cG(v_1) = \feat{\ell-1}_{\cG'}(v'_1)$, and
    $\feat{\ell-1}_\cG(v_2) = \feat{\ell-1}_{\cG'}(v'_2)$.
    Therefore, we have
    \begin{equation*}
        \begin{aligned}
            \feat{\ell}_{\cG}(v_0)
            \ =\ &
            \actp{\ell}{\coefC{\ell} \feat{\ell-1}_{\cG}(v_0) + \coefA{\ell}{\out} \left(\feat{\ell-1}_{\cG}(v_1) + \feat{\ell-1}_{\cG}(v_2)\right) + \coefb{\ell}} \\
            \ =\ &
            \actp{\ell}{\coefC{\ell} \feat{\ell-1}_{\cG'}(v'_0) + \coefA{\ell}{\out} \left(\feat{\ell-1}_{\cG'}(v'_1) + \feat{\ell-1}_{\cG'}(v'_2)\right) + \coefb{\ell}}
            \ =\ 
            \feat{\ell}_{\cG'}(v'_0).
        \end{aligned}
    \end{equation*}
    We can check the features of $v_{1, i}$ and $v_{2, i}$ similarly.
    
    For the features of $v_1$,
    by the induction hypothesis, it holds that
    $\feat{\ell-1}_{\cG}(v_0) = \feat{\ell-1}_{\cG'}(v'_0)$ and
    $\feat{\ell-1}_{\cG}(v_1) = \feat{\ell-1}_{\cG'}(v'_1)$.
    For $1 \le i \le n_\cA$ and $1 \le i' \le n_1$,
    $\feat{\ell-1}_{\cG}(v_{1,i}) = \feat{\ell-1}_{\cG'}(v'_{1,i'})$.
    For $1 \le i \le \gnndim{\ell}$,
    we can rewrite $\feat{\ell}_{\cG}(v_1)$ and $\feat{\ell}_{\cG'}(v'_1)$ as follows:
    \begin{equation*}
        \begin{aligned}
            \feat{\ell}_{\cG, i}(v_1)
            \ =\ &
            \left(
            \actp{\ell}{
            \coefC{\ell} \feat{\ell-1}_\cG(v_1) +
                \coefA{\ell}{\inc} \feat{\ell-1}_\cG(v_0) + 
                \coefA{\ell}{\out} \sum_{j \in \intsinterval{n_\cA}} \feat{\ell-1}_\cG(v_{1, j}) +
                \coefb{\ell}
            }
            \right)_i \\
            \ =\ &
            \actp{\ell}{\fp{\ell}{i}{\feat{\ell-1}_\cG(v_1), \feat{\ell-1}_\cG(v_0)} +
                n_\cA \cdot \fq{\ell}{i}{\feat{\ell-1}_\cG(v_{1, 1})}} \\
            \ =\ &
            \actp{\ell}{p + n_\cA q} \\
            \feat\ell_{\cG', i}(v'_1)
            \ =\ &
            \left(
            \actp{\ell}{
            \coefC{\ell} \feat{\ell-1}_{\cG'}(v'_1) +
                \coefA{\ell}{\inc} \feat{\ell-1}_{\cG'}(v'_0) + 
                \coefA{\ell}{\out} \sum_{j \in \intsinterval{n_1}} \feat{\ell-1}_{\cG'}(v'_{1, j}) +
                \coefb{\ell}
            }
            \right)_i \\
            \ =\ &
            \actp{\ell}{\fp{\ell}{i}{\feat{\ell-1}_{\cG'}(v'_1), \feat{\ell-1}_{\cG'}(v'_0)} +
                n_1 \cdot \fq{\ell}{i}{\feat{\ell-1}_{\cG'}(v'_{1, 1})}} \\
            \ =\ &
            \actp{\ell}{p + n_1 q}
        \end{aligned}
    \end{equation*}
    where $q := \fq{\ell}{i}{\feat{\ell-1}_\cG(v_{1, 1})}$ and $p := \fp{\ell}{i}{\feat{\ell-1}_\cG(v_1), \feat{\ell-1}_\cG(v_0)}$.
    We consider the following cases of $q$ and $p$.
    Note that $\act{\ell}$ is eventually constant with the left threshold $\tl{\ell}$ and the right threshold $\tr{\ell}$.
    \begin{itemize}
        \item If $q = 0$,
        then $\feat\ell_{\cG, i}(v_1) = \actp{\ell}{p} = \feat\ell_{\cG', i}(v'_1)$.
        
        \item Suppose that $q > 0$.
        \begin{itemize}
            \item
            If $p \ge \tr{\ell}$, then
            $p + n_\cA q \ge p \ge \tr{\ell}$ and 
            $p + n_1 q \ge p \ge \tr{\ell}$.
        
            \item
            If $p < \tr{\ell}$, 
            by the definition of $n_\cA$,
            $n_1 \ge n_\cA \ge \ceil{\left(\tr{\ell}-p\right) / q}$,
            which implies that $p + n_1 q \ge p + n_\cA q \ge \tr{\ell}$.
        \end{itemize}
        Hence, for both cases,
        $\feat{\ell}_{\cG, i}(v_1) = \actp{\ell}{\tr{\ell}} = \feat{\ell}_{\cG', i}(v'_1)$.

        \item Suppose that $q < 0$.
        \begin{itemize}
            \item
            If $p \le \tl{\ell}$, then
            $p + n_\cA q \le p \le \tl{\ell}$ and 
            $p + n_1 q \le p \le \tl{\ell}$.

            \item
            If $p > \tl{\ell}$, 
            by the definition of $n_\cA$,
            $n_1 \ge n_\cA \ge \ceil{\left(p - \tl{\ell}\right) / -q}$,
            which implies that $p + n_1 q \le p + n_\cA q \le \tl{\ell}$.
        \end{itemize}
        Hence, for both cases,
        $\feat{\ell}_{\cG, i}(v_1) = \actp{\ell}{\tl{\ell}} = \feat{\ell}_{\cG', i}(v'_1)$.
    \end{itemize}
    Thus, it holds that $\feat{\ell, i}_\cG(v_1) = \feat{\ell}_{\cG', i}(v'_1)$.
    
    The feature of $v_2$ can be treated analogously.
    This completes the proof of the lemma.
\end{proof}

We are now ready to show that there is a $\blMtwoPtwo$ formula that is not captured
by a $\blcGNN$, which is the main claim of Lemma~\ref{lemma:mtwoptwo_gap1}:

\begin{proof}
    Consider the following $2$-$\blMtwoPtwo$ formula:
    \begin{equation*}
        \Psi(x)\ :=\ 
        \left(
        \presbzy{E(x, z) \land E(z, y) \land U_1(y)} = 
        \presbzy{E(x, z) \land E(z, y) \land U_2(y)}
        \right)
    \end{equation*}
    Let $\cG$ be the $(n_1, n_2)$-bipolar graph.
    It is routine to check that $\cG \models \Psi(v_0)$
    if and only if $n_1 = n_2$.

    For every $2$-$\blcGNN$ $\cA$,
    let $n_\cA$ be the constant defined in Lemma~\ref{lemma:bipolar}.
    Let $\cG$ be the $(n_\cA, n_\cA)$-bipolar graph and
    $\cG'$ be the $(n_\cA, n_\cA + 1)$-bipolar graph.
    By Lemma~\ref{lemma:bipolar}, $\xi^L_{\cG, 1}(v_0) = \xi^L_{\cG', 1}(v'_0)$,
    which implies that $\tuple{\cG, v_0}$ and $\tuple{\cG', v'_0}$ are indistinguishable by $\cA$.
    On the other hand, $\cG \models \Psi(v_0)$ but $\cG' \not\models \Psi(v'_0)$,
    they are distinguishable by $\Psi(x)$.
    Therefore $\cA$ and $\Psi(x)$ are not equivalent.
\end{proof}

We can now prove the inclusion in Lemma 
\ref{lemma:mtwoptwo_gap1}. We know that $\blcGNN$s are equivalent in expressiveness to $\blMPtwo$.
So it suffices to show that $\blMPtwo$ is subsumed in expressiveness by $\blMtwoPtwo$. 
But note that each Presburger quantifier is also a two-hop Presburger quantifier with no two-hop terms.
Then it is obvious that $\blMPtwo$ is subsumed in expressiveness by $\blMtwoPtwo$.

The following results are direct consequences of the lemma above and the logical characterization in the prior section:

\begin{corollary}
    $\blMtwoPtwo$ is strictly more expressive than $\blMPtwo$.
\end{corollary}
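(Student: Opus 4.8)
The plan is to deduce the corollary by combining two facts already established: the expressive equivalence of $\blMPtwo$ and $\blcGNN$s (Corollary~\ref{cor:logicequalgnn}), and the strict separation of $\blcGNN$s from $\blMtwoPtwo$ proved in Lemma~\ref{lemma:mtwoptwo_gap1}. First I would record the easy inclusion: $\blMPtwo$ embeds syntactically into $\blMtwoPtwo$, since a standard (one-hop) Presburger quantifier guarded by $E(x,y)$ or $E(y,x)$ is exactly the degenerate case of a two-hop Presburger quantifier having no two-hop summands, so every $\blMPtwo$ formula is already a $\blMtwoPtwo$ formula with the same semantics. Hence $\blMtwoPtwo$ is at least as expressive as $\blMPtwo$.

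For strictness, I would reuse the witness from the proof of Lemma~\ref{lemma:mtwoptwo_gap1}, namely
\[
  \Psi(x)\ :=\ \left(\presbzy{E(x,z)\land E(z,y)\land U_1(y)} = \presbzy{E(x,z)\land E(z,y)\land U_2(y)}\right),
\]
which is a $\blMtwoPtwo$ formula and which, by Lemma~\ref{lemma:mtwoptwo_gap1} (via the bipolar-graph indistinguishability argument of Lemma~\ref{lemma:bipolar}), is not equivalent to any $\blcGNN$. Since by Corollary~\ref{cor:logicequalgnn} every $\blMPtwo$ formula is equivalent to some $\blcGNN$, it follows that $\Psi(x)$ is not equivalent to any $\blMPtwo$ formula. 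Combined with the inclusion above, this yields the strict separation $\blMPtwo \subsetneq \blMtwoPtwo$.

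I do not expect any new obstacle here: the technical core --- that two-hop path counting cannot be captured by GNNs with eventually constant activations, hence not by $\blMPtwo$ --- has already been carried out in Lemmas~\ref{lemma:bipolar} and~\ref{lemma:mtwoptwo_gap1}. The only point requiring a little care is the direction in which Corollary~\ref{cor:logicequalgnn} is applied: one needs the inclusion $\blMPtwo \subseteq \blcGNN$ (so that inexpressibility by $\blcGNN$s transfers to inexpressibility by $\blMPtwo$) together with the syntactic containment $\blMPtwo \subseteq \blMtwoPtwo$ for the non-strict direction; both are immediate from the definitions and the prior section.
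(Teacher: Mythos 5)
Your proof is correct and follows essentially the same route as the paper: the paper likewise combines Lemma~\ref{lemma:mtwoptwo_gap1} (strict separation of $\blMtwoPtwo$ from $\blcGNN$s via the bipolar-graph witness) with Corollary~\ref{cor:logicequalgnn} ($\blMPtwo \equiv \blcGNN$) and the syntactic inclusion of one-hop quantifiers into two-hop ones. No changes needed.
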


\begin{corollary}
    $\blreluGNN$s  are strictly more expressive than $\blcGNN$s.
\end{corollary}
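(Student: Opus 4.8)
The plan is to assemble the separation from the logical characterizations already proved, since ``strictly more expressive'' decomposes into a containment claim and a strictness claim. For the containment direction, I would chain the three translations established earlier: by Corollary~\ref{cor:logicequalgnn} every $\blcGNN$ is equivalent to some $\blMPtwo$ formula; as noted in the discussion preceding Lemma~\ref{lemma:mtwoptwo_gap1}, every $\blMPtwo$ formula is (trivially) a $\blMtwoPtwo$ formula, namely one whose two-hop Presburger quantifiers have no two-hop terms; and by Theorem~\ref{thm:logic_to_unbounded_gnn} every $\blMtwoPtwo$ formula is equivalent to a $\blreluGNN$. Composing these, any $\blcGNN$ is equivalent to a $\blreluGNN$, so $\blreluGNN$s are at least as expressive as $\blcGNN$s.

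For strictness, I would invoke Lemma~\ref{lemma:mtwoptwo_gap1}, whose proof exhibits a concrete $\blMtwoPtwo$ formula $\Psi(x)$ --- the bipolar-graph formula equating the number of two-hop paths into $U_1$-vertices with the number into $U_2$-vertices --- that is not equivalent to any $\blcGNN$. Applying Theorem~\ref{thm:logic_to_unbounded_gnn} to $\Psi(x)$ produces a $\blreluGNN$ $\cA_\Psi$ equivalent to $\Psi(x)$. Using transitivity of equivalence, if $\cA_\Psi$ were equivalent to some $\blcGNN$ then $\Psi(x)$ would be as well, contradicting Lemma~\ref{lemma:mtwoptwo_gap1}; hence $\cA_\Psi$ witnesses a $\blreluGNN$-expressible classifier not expressible by any $\blcGNN$.

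This is essentially bookkeeping over results already in hand, so there is no substantive obstacle remaining; the only point requiring a moment of care is making explicit that ``equivalent'' (for GNNs, for formulas, and between the two) is an equivalence relation and in particular transitive, so that the non-expressibility of $\Psi(x)$ by $\blcGNN$s genuinely transfers to $\cA_\Psi$. (One could equivalently phrase the corollary as a one-line consequence of Lemma~\ref{lemma:mtwoptwo_gap1} together with Theorem~\ref{thm:logic_to_unbounded_gnn} and Corollary~\ref{cor:logicequalgnn}.)
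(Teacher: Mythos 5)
Your proposal is correct and follows the same route the paper has in mind: the paper states the corollary as a ``direct consequence of the lemma above and the logical characterization in the prior section,'' and you have simply unpacked that, using Corollary~\ref{cor:logicequalgnn}, the trivial inclusion of $\blMPtwo$ in $\blMtwoPtwo$, Theorem~\ref{thm:logic_to_unbounded_gnn}, and the separation from Lemma~\ref{lemma:mtwoptwo_gap1}. The bookkeeping is sound; in particular, transferring non-expressibility from $\Psi(x)$ to $\cA_\Psi$ via transitivity of equivalence is exactly the step the paper leaves implicit.
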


\subsection{Undecidability of the satisfiability problem for GNNs with unbounded activations} \label{subsec:unbounded_undec_sat}

In Section~\ref{subsec:global_undecidable},
we show that the satisfiability problem for $\bgtrreluGNN$ is undecidable.
Since $\bgreluGNN$ subsumes $\bgtrreluGNN$, the satisfiability problem of $\bgreluGNN$ if also undecidable.
However, recall that for GNNs with $\trrelu$ activation,
the satisfiability problem is decidable when we consider \emph{local} subclasses of GNNs with $\trrelu$.
In contrast, we will prove that the satisfiability problems remain \emph{undecidable} even for the \emph{local} subclass of GNNs with $\relu$ activation. 
For $\trrelu$ activation we do not know what happens when we do not restrict to local aggregation, but restrict to outgoing-only aggregation on directed graphs. But for the $\relu$ case,  we are able to show undecidability of satisfiability in the $\relu$ case even with outgoing-only aggregation.  That is, the two main results of this subsection are:

\begin{theorem}\label{thm:hilbert_to_ogrelugnn}
    The satisfiability problem for $\ogreluGNN$s is undecidable.
\end{theorem}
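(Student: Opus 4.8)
The plan is to reduce from Hilbert's tenth problem, in the form of solvability over $\bbN$ of \emph{simple equation systems} (Definition~\ref{def:simple_eq_sys}), which is undecidable~\cite{hilbert_theth}. Given such a system $\varepsilon$ with $n$ variables and $m$ equations, I would construct (effectively) an $(n+m)$-$\ogreluGNN$ $\cA_\varepsilon$ with $\cA_\varepsilon$ satisfiable iff $\varepsilon$ solvable. The candidate graphs are the ``block'' graphs of Lemma~\ref{lemma:hilbert_to_mptwo}: one color $P_t$ per equation, one color $U_i$ per variable, with a solution $\upsilon_i\gets a_i$ encoded by $|P_t\cap U_i| = a_i$ for all $t$. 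The GNN's job is to \emph{verify}, at every node, that the input graph encodes a genuine solution. Because $\cA_\varepsilon$ may use the global readout, most checks are comparisons of cardinalities of color classes, computed by globally summing $\{0,1\}$-valued features: the $P_t$'s are pairwise disjoint and the $U_i$'s are pairwise disjoint (the analogue of $\psi^{\itdisj}$); $|P_t\cap U_i| = |P_1\cap U_i|$ for every $t$ (the analogue of $\psi^{\iteq}$); $|P_t\cap U_{t_1}| = 1$ for an equation $\upsilon_{t_1} = 1$; and $|P_t\cap U_{t_1}| = |P_t\cap U_{t_2}| + |P_t\cap U_{t_3}|$ for an equation $\upsilon_{t_1} = \upsilon_{t_2} + \upsilon_{t_3}$.

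The only genuinely new ingredient is a multiplication equation $\upsilon_{t_1} = \upsilon_{t_2}\cdot\upsilon_{t_3}$, since the gadget of Lemma~\ref{lemma:hilbert_to_mptwo} inspects in-neighbors, which an outgoing-only GNN cannot. I keep the edge gadget making every $v\in P_t\cap U_{t_2}$ send exactly $a_{t_3}$ edges into $P_t\cap U_{t_1}$, but verify it from the \emph{source} side: $\cA_\varepsilon$ computes at each node $v$ the count $s_t(v)$ of out-neighbors lying in $P_t\cap U_{t_1}$ (one out-aggregation step over the class-indicators, themselves produced in the first layer as $\relu([P_t]+[U_{t_1}]-1)$), together with the globally broadcast number $a_{t_3} = |P_1\cap U_{t_3}|$, and raises a \emph{bad} flag at $v$ exactly when $v\in P_t\cap U_{t_2}$ and $s_t(v)\neq a_{t_3}$, or $v\notin P_t\cap U_{t_2}$ and $s_t(v)\neq 0$. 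A readout sums the flags and the output layer demands the total be $0$; then $\sum_v s_t(v) = |P_t\cap U_{t_2}|\cdot a_{t_3} = a_{t_2}a_{t_3}$, and a final readout comparison demands $\sum_v s_t(v) = |P_t\cap U_{t_1}| = a_{t_1}$, which is precisely the equation. All arithmetic primitives used are $\relu$-expressible: $|a-b| = \relu(a-b)+\relu(b-a)$ for nonnegative $a,b$; a zero-test of a nonnegative integer $z$ is $\relu(z)-\relu(z-1)\in\{0,1\}$; conjunction, disjunction and negation of $\{0,1\}$-features are the usual combinations; and indicators are carried across layers by $\relu(x)=x$ on nonnegatives. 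Unboundedness of $\relu$ is essential — the counts $s_t(v)$ and $|P_t\cap U_i|$ are unbounded over all inputs — consistent with the corresponding $\trrelu$ cell of Table~\ref{tab:sat} being open. The construction has bounded depth (all per-equation checks run in parallel on separate coordinates), so $\cA_\varepsilon$ is computable, outgoing-only, and uses only $\relu$.

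For correctness, the forward direction takes a solution $\upsilon_i\gets a_i$, builds the block graph with $a_i$ vertices in each $P_t\cap U_i$ plus one extra ``junk'' vertex in no class (so the graph is nonempty even for the all-zero solution), and adds, for each multiplication equation, the Lemma~\ref{lemma:hilbert_to_mptwo} edges; a routine check shows every verification of $\cA_\varepsilon$ passes, so $\feat{L}_1(v) = 1 \ge 1/2$ at every vertex and $\cA_\varepsilon$ is satisfiable. Conversely, if $\cA_\varepsilon$ accepts $\tuple{\cG,v}$ then $\feat{L}_1(v)\ge 1/2$; since this value is a conjunction of $\{0,1\}$-bits it equals $1$, so every check-bit holds. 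Reading them off: the $P_t$'s and $U_i$'s are pairwise disjoint, $a_i := |P_1\cap U_i|$ satisfies $|P_t\cap U_i| = a_i$ for all $t$, and each equation holds of $\{a_i\}$ — for a multiplication equation the vanishing bad flags give $\sum_v s_t(v) = a_{t_2}a_{t_3}$ and the final comparison gives this $= a_{t_1}$. Hence $\{\upsilon_i\gets a_i\}$ solves $\varepsilon$, and undecidability of solvability of simple equation systems yields Theorem~\ref{thm:hilbert_to_ogrelugnn}.

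The main obstacle is exactly the multiplication gadget: with in-neighbor aggregation gone, one must certify a product relation from out-degrees and global sums alone, and the naive ``mask the unbounded count $s_t(v)$ by the bit $[v\in P_t\cap U_{t_2}]$'' is \emph{not} $\relu$-expressible with fixed coefficients. The fix — folding that masking into a per-node \emph{local} condition (forcing $s_t(v)=0$ off the class and $s_t(v)=a_{t_3}$ on it, so the plain global sum $\sum_v s_t(v)$ is already $a_{t_2}a_{t_3}$) — is what makes the reduction go through; some care is also needed to order the layers so each readout aggregates features produced one layer earlier, and the degenerate all-zero solution is handled by the junk vertex.
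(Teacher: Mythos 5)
Your proposal is correct and, at the level of the key idea, takes the same route as the paper's Lemma~\ref{lemma:hilbert_to_ogrelugnn}: a reduction from solvability of simple equation systems, where equations $\upsilon_{t_1}=1$ and $\upsilon_{t_1}=\upsilon_{t_2}+\upsilon_{t_3}$ are handled by straightforward global-readout cardinality comparisons, and the multiplication equation is certified by a per-node \emph{local} out-degree condition (each designated node sends exactly $a_{t_3}$ out-edges into a designated class, every other node sends $0$) whose unmasked global sum then equals $a_{t_2}\cdot a_{t_3}$ and is compared against $a_{t_1}$ via readout. This is precisely the paper's trick; in particular, you correctly identified that the naive ``multiply the unbounded count by a bit'' masking is not $\relu$-expressible with fixed coefficients and that folding the mask into a per-node $\{0,1\}$-flag whose vanishing is enforced globally resolves this.

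The chief difference is the encoding. You carry over the partitioned encoding of Lemma~\ref{lemma:hilbert_to_mptwo}: colors $P_1,\ldots,P_m,U_1,\ldots,U_n$, with $a_i := |P_t\cap U_i|$ forced equal across all $t$ via $\psi^{\iteq}$-style readout checks, and $s_t(v)$ defined as the out-degree of $v$ into $P_t\cap U_{t_1}$. The paper instead uses the leaner flat encoding $a_i := |U_i|$ over $(n+2m)$ colors and introduces, for each multiplication equation $t$, two \emph{scratch} colors $U_{n+t}$ (the targets counted by the out-aggregation) and $U_{n+m+t}$ (the per-node designation bit), which the input graph is free to choose; the GNN then \emph{verifies} consistency of these scratch colors rather than computing the designation from an intersection. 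The two choices are interchangeable here -- yours is closer to the structure already on the page in Lemma~\ref{lemma:hilbert_to_mptwo}, while the paper's avoids the $\psi^{\itdisj}/\psi^{\iteq}$ bookkeeping at the cost of extra colors and a slightly larger Presburger test $\Psi_1$. Two small points worth tightening if you wrote this out fully: the expression $\relu(z)-\relu(z-1)$ you offer as a ``zero-test'' is in fact the nonzero-test $\min(1,z)$ for a nonnegative integer $z$ (the zero-test is its complement), and the observation you make implicitly -- that all your final checks are readout-based and therefore $\feat{L}_1(v)$ is vertex-independent -- is what the paper records separately as Lemma~\ref{lemma:hilbert_to_ogrelugnn_univ} en route to the universal-satisfiability version.
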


\begin{theorem}\label{thm:hilbert_to_blrelugnn}
    The satisfiability problem for $\blreluGNN$s is undecidable.
\end{theorem}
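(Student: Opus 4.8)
The plan is to reduce the solvability over $\bbN$ of \emph{simple equation systems} (Definition~\ref{def:simple_eq_sys}), which is undecidable by~\cite{hilbert_theth} and was already exploited in Lemma~\ref{lemma:hilbert_to_mptwo}, to the satisfiability problem for $\blreluGNN$s. The point that makes this possible without a global readout is that, unlike $\trrelu$, the $\relu$ activation never saturates: a $\relu$-GNN computes \emph{exact} integer-valued quantities — the number of out- or in-neighbours of a node satisfying a previously computed condition, and, across two consecutive layers, the number of two-hop paths leaving a node. This is precisely the arithmetic power isolated by $\blMtwoPtwo$ together with Theorem~\ref{thm:logic_to_unbounded_gnn}, and it suffices to encode addition and multiplication of naturals \emph{locally}, i.e. within a neighbourhood of bounded radius around a distinguished root node. (The companion Theorem~\ref{thm:hilbert_to_ogrelugnn} is easier and I would treat it separately: outgoing-only $\relu$-GNNs still have a global readout, so one can adapt the $\bgtrreluGNN$ construction behind Theorem~\ref{thm:global_gnn_undecidable}, using unboundedness to compensate for the missing incoming aggregation.)

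Concretely, given a simple equation system $\varepsilon$ with $n$ variables and $m$ equations I would build a $\blreluGNN$ $\cA_\varepsilon$ with a constant number of layers (independent of $\varepsilon$), together with a template $\varepsilon$-gadget carrying $O(nm)$ colours and having a root $r$ at which $\cA_\varepsilon$ is evaluated. The value $a_i$ of a variable $\upsilon_i$ is represented as the size of a set $B_i$ of out-neighbours of $r$ with a dedicated colour; $\cA_\varepsilon$ recovers $a_i$ at $r$ by one local aggregation, and (to keep values consistent across equations) for each equation $e_t$ involving $\upsilon_i$ a local copy $B_i^{(t)}$ is put in bijection with $B_i$ using local in-/out-degree-to-a-colour-class constraints. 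An equation $\upsilon_{t_1}=\upsilon_{t_2}+\upsilon_{t_3}$ becomes the linear feature identity $a_{t_1}=a_{t_2}+a_{t_3}$. An equation $\upsilon_{t_1}=\upsilon_{t_2}\cdot\upsilon_{t_3}$ is realised by a bipartite ``grid'' sub-gadget: each node of $B_{t_2}^{(t)}$ is linked to (a private copy of) every node of $B_{t_3}^{(t)}$, so that the number of two-hop paths $r\to\cdot\to\cdot$ reaching the $t_3$-part equals $a_{t_2}\cdot a_{t_3}$, which $\cA_\varepsilon$ reads at $r$ and compares with $a_{t_1}$. To guarantee the grid is genuinely complete — i.e. that each node of $B_{t_2}^{(t)}$ has exactly $a_{t_3}$ successors in the $t_3$-part — I would attach at each such node a pointer back towards $r$, force its in-degree to $1$, and impose (locally at that node) that its successor count equals the two-hop count of $t_3$-coloured nodes it reaches through $r$; since a $\relu$-GNN can express ``$\sum$ of my out-neighbours' flags $\ge$ (my out-degree)'', which holds iff all flags are $1$, $\cA_\varepsilon$ can propagate such local-consistency flags through its layers up to $r$ and reject unless all of them, within the bounded-radius gadget, hold.

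Soundness then splits as usual. In the forward direction, from any solution $\{a_i\}_{i}$ one assembles the gadget as prescribed and checks routinely that all degree/consistency constraints hold and all equation tests pass, so $\cA_\varepsilon$ accepts $\tuple{\cG,v}$ at the root. In the backward direction, if $\cA_\varepsilon$ accepts some $\tuple{\cG,v}$ then, reading the feature values at $v$, the quantities $a_i$ (sizes of the relevant colour classes around $v$) are well-defined naturals, and because every local-consistency flag was forced and every equation test passed, the multiplication counts really are products and the $a_i$ satisfy $\varepsilon$; it is for this argument that one needs the ``bounded-radius'' observation (in the spirit of the unravelling/tree-model analysis of Section~\ref{sec:pspace}), so that the fixed-layer $\cA_\varepsilon$ at $v$ actually sees the whole gadget.

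The step I expect to be the main obstacle is exactly this backward direction: forcing the multiplication gadgets to compute genuine products with no global readout available. In the $\MPtwo$ reduction of Lemma~\ref{lemma:hilbert_to_mptwo} global counting made this nearly free — it directly enforced that a variable's value agrees across equations and that the multiplication gadget has bijective structure. Here every such condition must be pushed down to purely \emph{local} degree/colour constraints and then re-collected at the root through the GNN's layers, and one must rule out that parasitic edges, extra colours on gadget nodes, or paths escaping the gadget corrupt any count. The cleanest route to keeping this under control is probably to carry out the reduction at the level of formulas — designing a $\blMtwoPtwo$ sentence and appealing to Theorem~\ref{thm:logic_to_unbounded_gnn} — and to use the guarded universal neighbour quantifier available in $\blMtwoPtwo$ to explicitly forbid the unwanted structure, rather than reasoning about GNN features directly.
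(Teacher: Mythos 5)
Your proposal lands on the same essential mechanism the paper uses, and the reasoning is basically sound; the main divergence is in how the construction is packaged.

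The key idea you identify — attach to each multiplication-gadget node a pointer back to the root, force its in-degree to be exactly $1$, and thereby let that node read off the root's count of $B_{t_3}$-out-neighbours through a two-hop comparison — is exactly the ``single incoming neighbour constraint'' trick the paper uses in Lemma~\ref{lemma:hilbert_to_blrelugnn}. There the root $v$ computes $a_i := \feat{1}_i(v)$ once, propagates $a_{t_2}$ to each out-neighbour $u$ by having $u$ aggregate over in-neighbours (meaningful precisely because the GNN also checks $|\nbr{\inc}(u)|=1$), checks at $u$ that $u$'s features form either $(0,0)$ or $(a_{t_2},1)$ in the relevant coordinates, and then re-aggregates at $v$ to obtain $a_{t_2}\cdot a_{t_3}$, which $\Psi_2$ compares against $a_{t_1}$. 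So the obstacle you flag at the end — ``forcing the multiplication gadgets to compute genuine products with no global readout'' — is in fact resolved by the very device you already propose; you needn't be pessimistic about the backward direction.

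Two remarks on how your route differs from the paper's, and where you over-engineer. (i) The paper constructs the $16$-layer $\blreluGNN$ $\cA_\varepsilon$ \emph{directly}, with explicit feature equations, using the quantifier-free-Presburger-to-GNN translation of Lemma~\ref{lemma:fnn} as a black box for the $\Psi_1,\Psi_2$ tests; it does \emph{not} pass through a $\blMtwoPtwo$ formula and Theorem~\ref{thm:logic_to_unbounded_gnn}. Your suggestion to do the reduction at the $\blMtwoPtwo$ level looks plausible (the in-degree-$1$ constraint, the two-hop equality at the root, and the guarded universal over out-neighbours are all expressible), but if you take that route you should be careful that the resulting formula really is $\blMtwoPtwo$ rather than requiring three-hop counting; in the paper the analogous information flow goes three hops deep ($v \to u \to$ in-neighbours of $u \to$ out-neighbours thereof), which the GNN handles by layering but a literal $\blMtwoPtwo$ encoding would have to collapse using the in-degree-$1$ assumption before counting. (ii) The local copies $B_i^{(t)}$ together with bijection constraints are unnecessary: since every equation test is evaluated at a single root, each $a_i$ is computed once (as a one-hop count of the colour class $U_i$), and consistency across equations is automatic. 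In the global-readout reduction of Lemma~\ref{lemma:hilbert_to_mptwo} one genuinely needs $\psi^{\iteq}$ because the constraint is imposed at every vertex, but that issue evaporates once you root the gadget at a designated node, which is exactly what this reduction does.
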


For both results, we will reduce from Hilbert's tenth problem to the satisfiability problem for both local and outgoing-only GNNs with $\relu$ activations.
Recall that the solvability (over $\bbN$) of simple equation systems,
as defined in Definition~\ref{def:simple_eq_sys}, is undecidable~\cite{hilbert_theth}. 
Equations take one of the forms:
$\upsilon_{t_1} = 1$,
$\upsilon_{t_1} = \upsilon_{t_2} + \upsilon_{t_3}$,
or $\upsilon_{t_1} = \upsilon_{t_2} \cdot \upsilon_{t_3}$.
Encoding the solvability of the first two types of equations is straightforward.

Before presenting the encoding of simple equation systems,
we first show that for every quantifier-free Presburger formula $\varphi(\bfx)$,
there exists a ``$\varphi$-test GNN fragment $\cA$:
one that does a Boolean test to check if the features of $\cA$ satisfy $\varphi(\bfx)$.
The proof idea is similar to Theorem~\ref{thm:logic_to_gnn},
with the main difference being that the input to the Presburger formula is encoded only in the features of the GNN,
rather than in the entire input graph.
Thus, the constructed GNN does not perform any aggregation.
For each subformula, there is a corresponding feature entry that represents the subformula.
The feature value is $1$ if the subformula holds and $0$ otherwise.

For a quantifier-free Presburger formula $\varphi(\bfx)$,
the depth of $\varphi(\bfx)$,
denoted by $\depth{\varphi}$,
is defined inductively as follows:
\begin{equation*}
    \depth{\varphi}\ =\ 
    \begin{cases}
        1, &
        \text{if $\varphi(\bfx)$ is $\sum_{t \in \intsinterval{n}} a_t x_t \ge c$} \\
        \depth{\psi} + 1, &
        \text{If $\varphi(\bfx)$ is $\neg\psi(\bfx)$} \\
        \max_{t \in \intsinterval{k}} \depth{\psi_{t}} + 1, &
        \text{if $\varphi(\bfx)$ is
        $\bigwedge_{t \in \intsinterval{k}} \psi_{t}(\bfx)$ or
        $\bigvee_{t \in \intsinterval{k}} \psi_{t}(\bfx)$}.
    \end{cases}
\end{equation*}
For every $\bfa \in \bbN^n$,
we denote the evaluation of $\varphi(\bfx)$ on $\bfa$ by $\eval{\varphi(\bfa)}$.
That is, if $\varphi(\bfa)$ holds, then $\eval{\varphi(\bfa)} = 1$.
Otherwise, $\eval{\varphi(\bfa)} = 0$.

Recall from the preliminaries that when we have a Presburger formula $\phi(x_1 \ldots x_n)$
with a canonical ordering of variables $x_1 \ldots x_n)$, and an
an $n$-dimensional vector $\vec v$, it makes sense to say that $\phi(\vec v)$ holds. We will often apply this where the vector $\vec v$ is a feature vector of a GNN below.

\begin{lemma}\label{lemma:fnn}
    For every quantifier-free Presburger formula $\Psi(\bfx)$ with $n$ variables,
    there exist matrices $\coefC{i}_\Psi$ and vectors $\coefb{i}_\Psi$ for $1 \le i \le 2\depth{\Psi}$,
    such that the following property holds:
    for every $\bgreluGNN$ $\cA$ with integer coefficients,
    if there exists $\ell_0$ satisfied that
    $\coefC{\ell_0 + i} = \coefC{i}_\Psi$;
    $\coefb{\ell_0 + i} = \coefb{i}_\Psi$;
    $\coefA{\ell_0 + i}{\out}$, $\coefA{\ell_0 + i}{\inc}$, and $\coefR{\ell_0 + i}$ are zero matrices for $1 \le i \le 2\depth{\Psi}$,
    then it holds that for every graph $\cG$ and vertex $v$ in $\cG$,
    \begin{equation*}
        \feat{\ell_0 + 2\depth{\Psi}}_i(v)\ =\ 
        \begin{cases}
            \feat{\ell_0}_i(v), &\text{for $1 \le i \le n$} \\
            \eval{\Psi\left(\feat{\ell_0}(v)\right)}, &\text{for $i = n+1$},
        \end{cases}
    \end{equation*}
    where $\feat{\ell_0}(v)$ is the $\ell_0^{th}$ feature of the vertex $v$.
\end{lemma}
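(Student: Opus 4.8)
The plan is to induct on the structure of $\Psi$ --- equivalently, on $\depth{\Psi}$ --- building the matrices $\coefC{i}_\Psi$ and bias vectors $\coefb{i}_\Psi$ so that the block of layers $\ell_0+1,\dots,\ell_0+2\depth{\Psi}$ realises a purely local $\relu$-affine transformation. Since the hypothesis forces every aggregation matrix on those layers to vanish, the feature of $v$ at layer $\ell_0+i$ depends only on $\feat{\ell_0}(v)$ and not on $\cG$ or on the neighbours of $v$, so it suffices to analyse the iterated map $\bfa \mapsto \relu(M\bfa + \bfv)$ on a single input vector $\bfa$, where $M$ and $\bfv$ range over the successive gadget matrices and bias vectors. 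Throughout I maintain the invariant that coordinates $1,\dots,n$ are copied verbatim and that coordinate $n+1$ ends up holding $\eval{\Psi(\bfa)}\in\set{0,1}$, the other coordinates being scratch space. I also use that, since $\cA$ has integer coefficients and layer-$0$ features in $\set{0,1}$, every feature vector --- in particular $\feat{\ell_0}(v)$ --- lies in $\bbN$, so that $\relu$ acts as the identity on the preserved coordinates and the rounding trick below is valid over $\bbZ$. The dimensions of the gadget layers are determined by $\Psi$ and are immaterial to the rest of $\cA$.

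\emph{Base case.} Here $\Psi(\bfx)$ is $\sum_t a_t x_t \ge c$, so $\depth{\Psi}=1$ and the gadget has two layers. Over $\bbZ$ one has $\sum_t a_t x_t \ge c$ iff $\relu(s+1) - \relu(s) = 1$, where $s := \sum_t a_t x_t - c$, and this difference is always $0$ or $1$. Accordingly, the first gadget layer writes $\relu(s+1)$ into coordinate $n+1$ and $\relu(s)$ into a fresh coordinate $n+2$, copying coordinates $1,\dots,n$ by identity rows; the second layer writes $\relu$ of the difference of coordinates $n+1$ and $n+2$ back into coordinate $n+1$, again copying $1,\dots,n$. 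Both layers are affine-plus-$\relu$ with integer coefficients, and the output is $\eval{\Psi(\feat{\ell_0}(v))}$ in coordinate $n+1$.

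\emph{Inductive step.} All three cases rest on the facts that, on Boolean inputs $y,y_t\in\set{0,1}$, a single $\relu$ realises negation ($\relu(1-y)=1-y$) and $k$-ary conjunction ($\relu(\sum_t y_t - (k-1))$), while $k$-ary disjunction is realised in two layers via $\relu(\sum_t y_t) - \relu(\sum_t y_t - 1) = \min(1,\sum_t y_t)$. For $\neg\psi$ I run the $\psi$-gadget ($2\depth{\psi}$ layers), then one layer applying $y\mapsto\relu(1-y)$ to coordinate $n+1$, then one identity ``padding'' layer, for a total of $2\depth{\psi}+2 = 2\depth{\Psi}$. For $\bigwedge_t\psi_t$ and $\bigvee_t\psi_t$, set $d := \max_t\depth{\psi_t}$; I place the $k$ sub-gadgets side by side on pairwise-disjoint blocks of output/scratch coordinates. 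Each sub-gadget still reads the shared input coordinates $1,\dots,n$, which every sub-gadget preserves, so the merged layer matrices are well-defined; I pad each sub-gadget with identity layers so that all of them finish within $2d$ layers with $\eval{\psi_t(\bfa)}$ in its designated coordinate. Two further layers then combine these Boolean values into $\eval{\Psi(\bfa)}$ in coordinate $n+1$ --- one layer for $\bigwedge$, two for $\bigvee$, using the remaining slot for identity padding --- giving $2d+2 = 2\depth{\Psi}$ layers in all.

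The step I expect to be the main obstacle is the bookkeeping in the $\bigwedge/\bigvee$ case: one must relocate the inductively-obtained gadget matrices onto fresh coordinate blocks (conjugating by a coordinate embedding), check that running several relocated gadgets in the same layers is consistent --- they interact only through the read-only, identically-preserved input coordinates $1,\dots,n$ --- and align the sub-gadgets' differing depths by identity padding. The remaining checks are routine: all intermediate values stay in $\bbN$, so every $\relu$ behaves exactly as an identity or a cutoff as intended, and each affine map described above is representable by integer matrices and vectors.
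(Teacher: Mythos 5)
Your proposal is correct and follows the same essential strategy as the paper: induct on the formula, preserve coordinates $1,\dots,n$ by identity rows (sound because integer coefficients and $\{0,1\}$ inputs keep all features in $\bbN$), allocate two layers per depth level, and use small $\relu$ gadgets to realize the Boolean connectives and the threshold test on the atoms. The two proofs differ mainly in two presentational choices, neither of which affects correctness. First, your base-case identity is $\relu(s+1)-\relu(s)$ where $s=\sum_t a_t x_t - c$, whereas the paper uses $\relu(1-\relu(c-\sum_t a_t x_t))$; both compute $\eval{s\ge 0}$ in two layers over $\bbZ$, and the paper applies the analogous $\relu(1-\relu(\cdot))$ pattern uniformly to $\wedge$ and $\vee$ as well. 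Second — and this is the one place where the paper's route is genuinely slicker — the paper does not recursively build sub-gadgets and then relocate them onto disjoint coordinate blocks; instead it enumerates \emph{all} subformulas $\varphi_1,\dots,\varphi_L$ of $\Psi$ up front, gives $\varphi_i$ a permanent home at coordinate $n+i$, and computes it in layers $2\depth{\varphi_i}-1$ and $2\depth{\varphi_i}$ (copying forward afterwards). This eliminates precisely the bookkeeping you flag as the main obstacle — conjugating inductively-built gadgets by coordinate embeddings and inserting identity padding to align depths — because every subformula has a fixed address and its own timing determined solely by its own depth. Your approach works, but if you want to actually carry it out, it is worth adopting the global-enumeration bookkeeping, which makes the ``disjoint scratch blocks'' and ``padding'' issues dissolve. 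One small thing to record explicitly if you write it up: the paper assigns the top formula $\Psi=\varphi_1$ to coordinate $n+1$, matching the statement's output position, whereas you must check your relocated blocks still leave the final answer at $n+1$, not some shifted address.
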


\begin{proof}
    Let $L$ be the number of subformulas of $\Psi(\bfx)$ and
    $\set{\varphi_i(\bfx)}_{i \in \intsinterval{L}}$ be an enumeration of subformulas of $\Psi(\bfx)$ satisfying that $\varphi_1(\bfx)$ is $\Psi(\bfx)$.
    For $1 \le i \le L$, let $h_i := \depth{\varphi_i}$.

    For the first layer, $\coefC{1}_\Psi \in \bbZ^{n \times (n+L)}$ and $\coefb{1}_\Psi \in \bbZ^{n+L}$.
    For $2 \le \ell \le 2\depth{\ell}$,
    $\coefC{\ell}_\Psi \in \bbZ^{(n+L) \times (n+L)}$ and $\coefb{\ell}_\Psi \in \bbZ^{n+L}$.
    We define the coefficients in $\coefC{\ell}_\Psi$ and $\coefb{\ell}_\Psi$ row by row as follows:
    firstly, for $1 \le i \le n$, for $1 \le \ell \le 2\depth{\Psi}$,
    $\left(\coefC{\ell}_\Psi\right)_{i, i} = 1$.

    Next, for $1 \le i \le L$, for $2h_i < \ell \le 2\depth{\Psi}$,
    $\left(\coefC{\ell}_\Psi\right)_{n+i, n+i} = 1$.
    The definition of layers $2h_i$ and $2h_i- 1$ depends on the formula $\varphi_i(\bfx)$.
    \begin{itemize}
        \item Suppose that $\varphi_i(\bfx)$ is $\sum_{t \in \intsinterval{n}} a_t x_t \ge c$.
        Let
        $\left(\coefb{2h_i}_\Psi\right)_{n + i} = 1$,
        $\left(\coefC{2h_i}_\Psi\right)_{n + i, n + i} = -1$,
        $\left(\coefb{2h_i-1}_\Psi\right)_{n + i} = c$, and,
        for $1 \le t \le n$, $\left(\coefC{2h_i - 1}_\Psi\right)_{n + i, n + t} = -a_t$.
        
        \item Suppose that $\varphi_i(\bfx)$ is $\neg\varphi_{j}(\bfx)$.
        Let
        $\left(\coefb{2h_i}_\Psi\right)_{n + i} = 1$ and
        $\left(\coefC{2h_i}_\Psi\right)_{n + i, n + j} = -1$.

        \item Suppose that $\varphi_i(\bfx)$ is $\bigwedge_{t \in \intsinterval{k}} \varphi_{j_t}(\bfx)$.
        Let
        $\left(\coefb{2h_i}_\Psi\right)_{n + i} = 1$,
        $\left(\coefC{2h_i}_\Psi\right)_{n + i, n + i} = -1$,
        $\left(\coefb{2h_i-1}_\Psi\right)_{n + i} = k$, and,
        for $1 \le t \le k$, $\left(\coefC{2h_i - 1}_\Psi\right)_{n + i, n + j_t} = -1$.
        
        \item Suppose that $\varphi_i(\bfx)$ is $\bigvee_{t \in \intsinterval{k}} \varphi_{j_t}(\bfx)$.
        Let
        $\left(\coefb{2h_i}_\Psi\right)_{n + i} = 1$,
        $\left(\coefC{2h_i}_\Psi\right)_{n + i, n + i} = -1$,
        $\left(\coefb{2h_i-1}_\Psi\right)_{n + i} = 1$, and,
        for $1 \le t \le k$, $\left(\coefC{2h_i - 1}_\Psi\right)_{n + i, n + j_t} = -1$. 
        Note that since $\varphi_{j_t}(x)$ is a strict subformula of $\varphi_i(x)$, $j_t \neq i$, so these case definitions cannot clash.
    \end{itemize}
    Finally, all other entries of $\coefC{\ell}_\Psi$ and $\coefb{\ell}_\Psi$ are 0.

    Let $\cA$ be a $\bgreluGNN$ with integer coefficients satisfying that,
    $\coefC{\ell_0 + i} = \coefC{i}_\Psi$;
    $\coefb{\ell_0 + i} = \coefb{i}_\Psi$;
    $\coefA{\ell_0 + i}{\out}$, $\coefA{\ell_0 + i}{\inc}$, and $\coefR{\ell_0 + i}$ are zero matrices for $1 \le i \le 2\depth{\Psi}$,
    For every graph $\cG$ and vertex $v$ in $\cG$,
    we first observe that for $1 \le i \le n$,
    \begin{equation*}
        \feat{\ell_0 + 2\depth{\Psi}}_i(v)\ =\ 
        \feat{\ell_0 + 2\depth{\Psi} - 1}_i(v)\ =\ 
        \cdots\ =\ 
        \feat{\ell_0}_i(v),
    \end{equation*}
    and, for $1 \le i \le L$,
    \begin{equation*}
        \feat{\ell_0 + 2\depth{\Psi}}_{n+i}(v)\ =\ 
        \feat{\ell_0 + 2\depth{\Psi} - 1}_{n+i}(v)\ =\ 
        \cdots\ =\ 
        \feat{\ell_0 + 2h_i}_{n+i}(v).
    \end{equation*}

    Next, we make a claim:
    for $1 \le i \le L$,
    if $\varphi_i\left(\feat{\ell_0}(v)\right)$ holds, then $\feat{\ell_0+2h_i}(v) = 1$.
    Otherwise, $\feat{\ell_0+2h_i}(v) = 0$.
    The lemma will follow the claim trivially.

    We prove the claim by induction on subformulas.
    \begin{itemize}
        \item If $\varphi_i(\bfx)$ is $\sum_{t \in \intsinterval{n}} a_t x_t \ge c$, then
        \begin{equation*}
            \feat{\ell_0 + 2h_i}_{n + i}(v)\ =\ 
            \relup{1 - \relup{c - \sum_{t \in \intsinterval{n}} a_t \feat{\ell_0 + 2h_i - 2}_{t}(v)}}.
        \end{equation*}

        If $\varphi_i\left(\feat{\ell_0}(v)\right)$ holds,
        then $c - \sum_{t \in \intsinterval{n}} a_t \feat{\ell_0}_{t}(v) \le 0$.
        Since $\feat{\ell_0 + 2h_i - 2}_{t}(v) = \feat{\ell_0}_{t}(v)$,
        it holds that $c - \sum_{t \in \intsinterval{n}} a_t \feat{\ell_0 + 2h_i - 2}_{t}(v) \le 0$,
        which implies that $\feat{\ell_0 + 2h_i}_{n + i}(v) = 1$.
        On the other hand, because all coefficients in $\cA$ are integers,
        $\feat{\ell_0}(v)$ is an integer vector.
        If $\varphi_i\left(\feat{\ell_0}(v)\right)$ doesn't hold,
        then $c - \sum_{t \in \intsinterval{n}} a_t \feat{\ell_0}_{t}(v) \ge 1$,
        which implies that
        $\feat{\ell_0 + 2h_i}_{n + i}(v) = 0$.
        
        \item If $\varphi_i(\bfx)$ is $\neg \varphi_{j}(\bfx)$, then
        \begin{equation*}
            \feat{\ell_0 + 2h_i}_{n + i}(v)\ =\ \relup{1 - \feat{\ell_0 + 2h_i-1}_{n + j}(v)}.
        \end{equation*}

        If $\varphi_i\left(\feat{\ell_0}(v)\right)$ holds,
        then $\varphi_{j}\left(\feat{\ell_0}(v)\right)$ doesn't hold.
        Since $h_i = h_j + 1$, $\ell_0 + 2h_{j} \le \ell_0 + 2h_i - 1$.
        By the induction hypothesis, $\feat{\ell_0 + 2h_i-1}_{n + j}(v) = 0$.
        Thus, $\feat{\ell_0 + 2h_i}_{n + i}(v) = 1$.
        Otherwise, if $\varphi_{i}\left(\feat{\ell_0}(v)\right)$ doesn't hold,
        then $\feat{\ell_0 + 2h_i-1}_{n + j}(v) = 1$ and $\feat{\ell_0 + 2h_i}_{n + i}(v) = 0$.

        \item If $\varphi_i(\bfx)$ is $\bigwedge_{t \in \intsinterval{k}} \varphi_{j_t}(\bfx)$, then
        \begin{equation*}
            \feat{\ell_0 + 2h_i}_{n + i}(v)\ =\ 
            \relup{1 - \relup{k - \sum_{t \in \intsinterval{k}} \feat{\ell_0 + 2h_i-2}_{n + j_t}(v)}}.
        \end{equation*}

        If $\varphi_i\left(\feat{\ell_0}(v)\right)$ holds,
        then, for $1 \le t \le k$, $\varphi_{j_t}\left(\feat{\ell_0}(v)\right)$ holds.
        Since $h_{j_t} < h_i$, $\ell_0 + 2h_{j_t} \le \ell_0 + 2h_i - 2$.
        By the induction hypothesis, $\varphi_{j_t}\left(\feat{\ell_0}(v)\right) = 1$.
        Thus, $\feat{\ell_0 + 2h_i}_{n + i}(v) = 1$.
        Otherwise, if $\varphi_i\left(\feat{\ell_0}(v)\right)$ doesn't hold,
        then there exists some $1 \le t \le k$
        such that $\varphi_{j_t}\left(\feat{\ell_0}(v)\right)$ doesn't hold and
        $\feat{\ell_0+2h_i}_{n + i}(v) = 0$.
        
        \item If $\varphi_i(\bfx)$ is $\bigvee_{t \in \intsinterval{k}} \varphi_{j_t}(\bfx)$, then
        \begin{equation*}
            \feat{\ell_0 + 2h_i}_{n + i}(v)\ =\ 
            \relup{1 - \relup{1 - \sum_{t \in \intsinterval{k}} \feat{\ell_0 + 2h_i-2}_{n + j_t}(v)}}.
        \end{equation*}
        
        If $\varphi_i\left(\feat{\ell_0}(v)\right)$ holds,
        then there exists some $1 \le t \le k$ such that $\varphi_{j_t}\left(\feat{\ell_0}(v)\right)$ holds.
        Since $h_{j_t} < h_i$, $\ell_0 + 2h_{j_t} \le \ell_0 + 2h_i - 2$.
        By the induction hypothesis, $\varphi_{j_t}\left(\feat{\ell_0}(v)\right) = 1$.
        Thus, $\feat{\ell_0 + 2h_i}_{n + i}(v) = 1$.
        Otherwise, if $\varphi_i\left(\feat{\ell_0}(v)\right)$ doesn't hold,
        then,  $1 \le t \le k$, $\varphi_{j_t}\left(\feat{\ell_0}(v)\right)$ doesn't hold and
        $\feat{\ell_0+2h_i}_{n + i}(v) = 0$.
    \end{itemize}
\end{proof}

We begin with the encoding for Theorem \ref{thm:hilbert_to_ogrelugnn}. 
We demonstrate the ability of $\ogreluGNN$s to check multiplication by the following example.

The idea will be that we construct a $\ogreluGNN$ $\cA$ such that the sizes of nodes satisfying certain unary predicates, within models of $\cA$,  will correspond to solutions of a set of equations.

For example, consider simulating a single equation  $x \cdot y=a$.
for $a \in \bbN$. Let $\cA_a$ be the $3$-layer 1-$\ogreluGNN$
defined as follows.
The dimensions are $d_0=d_1=d_3=1$ and $d_2=2$.
The coefficient matrices and bias vectors will  be chosen so that the features satisfy the following conditions,
\begin{equation*}
    \begin{aligned}
        &\feat{1}_1(v)\ =\ \relup{\sum_{u \in V} \feat{0}_1(u)},
        &\feat{2}_1(v)\ =\ \relup{\sum_{u \in V} \feat{1}_1(u) - a}, \\
        &\feat{2}_2(v)\ =\ \relup{a - \sum_{u \in V} \feat{1}_1(u)},
        &\feat{3}_1(v)\ =\ \relup{1 - \feat{2}_1(v) - \feat{2}_2(v)}.
    \end{aligned}
\end{equation*}

We claim that $\cA_a$ accepts $\tuple{\cG, v}$ if and only if $a = \abs{U_1} \cdot \abs{V}$, where $U_1$ is the unique input predicate.
To see this, note that $\feat{1}_1(v) = \abs{U_1}$ and $\sum_{u \in V} \feat{1}_1(u) = \abs{U_1} \cdot \abs{V}$.
For the last layer, $\feat{3}_1(v) = 1$ if and only if $\feat{2}_1(v) = \feat{2}_2(v) = 0$,
that is, $\sum_{u \in V} \feat{1}_1(u) = a$.

We now turn from intuition to the details of the construction.

\begin{lemma}\label{lemma:hilbert_to_ogrelugnn}
    For every simple equation system $\varepsilon$ with $n$ variables and $m$ equations,
    there exists an $(n+2m)$-$\ogreluGNN$ $\cA_\varepsilon$ 
    such that
    $\varepsilon$ has a solution in $\bbN$ if and only if
    $\cA_\varepsilon$ is satisfiable.
\end{lemma}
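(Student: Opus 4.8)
The plan is to reduce from the solvability over $\bbN$ of simple equation systems (Definition~\ref{def:simple_eq_sys}), which is undecidable~\cite{hilbert_theth}. Given such a system $\varepsilon$ with $n$ variables $\upsilon_1,\dots,\upsilon_n$ and $m$ equations, we build an $\ogreluGNN$ $\cA_\varepsilon$ over $n+2m$ colors: one color $U_i$ for each variable, and two fresh colors $P_t,Q_t$ for each equation $t$. The value of $\upsilon_i$ is to be encoded as $\abs{U_i}$ in the input graph, and for each multiplication equation the pair $P_t,Q_t$ hosts a separate ``multiplication gadget'' whose edges $\cA_\varepsilon$ will inspect. The key point is that $\cA_\varepsilon$ will compute, \emph{uniformly in the chosen vertex}, a single Boolean asserting ``the input graph encodes a solution of $\varepsilon$''; since this Boolean does not depend on the vertex, $\cA_\varepsilon$ is satisfiable if and only if some graph encodes a solution, i.e.\ if and only if $\varepsilon$ is solvable. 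This is, in the spirit of Lemma~\ref{lemma:hilbert_to_mptwo}, a GNN encoding of $\varepsilon$, with the extra colors $P_t,Q_t$ playing the role of the predicates that keep gadgets for different equations from interfering.

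The construction uses three building blocks, assembled into finitely many layers. (i) Aggregating $\feat{0}$ over out-neighbors gives, for every vertex $v$ and every color $c$, the count $\abs{\nbr{\out}(v)\cap c}$. (ii) A global readout applied to $\feat{0}$ gives the vector of all color counts $\abs{U_i},\abs{P_t},\abs{Q_t}$, and a global readout applied one layer later to any per-vertex feature coordinate (kept stable through an identity block) gives the sum of that coordinate over all vertices. (iii) Lemma~\ref{lemma:fnn} turns any quantifier-free Presburger test on the current feature vector into a $0/1$-valued feature coordinate, using only per-node affine transformations and $\relu$, with no aggregation. We use only coefficients from $\set{0,1,-1}$ together with the small integer constants occurring in $\varepsilon$ and in Lemma~\ref{lemma:fnn}, so $\cA_\varepsilon$ has integer coefficients, and every feature value is a non-negative integer, so $\relu$ acts as the identity on all carry-forward coordinates.

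For a multiplication equation $\upsilon_{t_1}=\upsilon_{t_2}\cdot\upsilon_{t_3}$ the gadget works as follows. After one aggregation-and-readout layer, every vertex $v$ holds its color bits, the counts $c_t(v):=\abs{\nbr{\out}(v)\cap Q_t}$, and the global counts. Using Lemma~\ref{lemma:fnn}, $v$ computes a bit $\bar\psi_t(v)$ that is $0$ exactly when $c_t(v)=\abs{U_{t_3}}$ in case $v\in P_t$ and $c_t(v)=0$ in case $v\notin P_t$. A further global readout produces $\sum_v\bar\psi_t(v)$ and $\sum_v c_t(v)$, the latter being the total number of edges ending in $Q_t$. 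A last application of Lemma~\ref{lemma:fnn} computes a Presburger test $\Psi_\varepsilon$ over these quantities which, for each multiplication equation, asserts $\abs{P_t}=\abs{U_{t_2}}$, $\abs{Q_t}=\abs{U_{t_1}}$, $\sum_v\bar\psi_t(v)\le 0$, and $\sum_v c_t(v)=\abs{Q_t}$; for each addition equation, asserts $\abs{U_{t_1}}=\abs{U_{t_2}}+\abs{U_{t_3}}$; and for each unit equation, asserts $\abs{U_{t_1}}=1$. The output coordinate of $\cA_\varepsilon$ is the one holding $\Psi_\varepsilon$. Now $\sum_v\bar\psi_t(v)\le 0$ forces $\bar\psi_t$ to vanish at every vertex, so $\sum_v c_t(v)=\abs{P_t}\cdot\abs{U_{t_3}}$, and with the other clauses $\abs{U_{t_1}}=\abs{Q_t}=\sum_v c_t(v)=\abs{U_{t_2}}\cdot\abs{U_{t_3}}$; the addition and unit clauses are immediate. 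Hence, if $\cA_\varepsilon$ accepts some $\tuple{\cG,v}$ then $(\abs{U_1},\dots,\abs{U_n})$ solves $\varepsilon$ over $\bbN$. Conversely, from a solution $(a_1,\dots,a_n)$ one builds $\cG$ with $a_i$ fresh $U_i$-vertices and, for each multiplication equation, $a_{t_2}$ fresh $P_t$-vertices and $a_{t_1}$ fresh $Q_t$-vertices, the latter split into $a_{t_2}$ blocks of size $a_{t_3}$ with one edge from each $P_t$-vertex onto its block ($P_t,Q_t$ empty for non-multiplication equations, and one isolated vertex added if $\cG$ would otherwise be empty); a direct check shows $\Psi_\varepsilon$ holds at every vertex, so $\cA_\varepsilon$ accepts every $\tuple{\cG,v}$.

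The main obstacle, and the place requiring care, is that an $\ogreluGNN$ has neither vertex quantification nor access to in-neighbors, whereas satisfiability of the GNN only pins down its acceptance condition at a \emph{single} vertex. Both difficulties are overcome by routing everything through global readout: the ``holds at every vertex'' quantifier guarding a gadget becomes the single global inequality $\sum_v\bar\psi_t(v)\le 0$, which one accepting vertex already forces everywhere, and the in-degree information needed to count edges into $Q_t$, namely $\sum_{w\in Q_t}\abs{\nbr{\inc}(w)}$, is recovered as $\sum_v\abs{\nbr{\out}(v)\cap Q_t}$, a global readout of a per-vertex out-aggregation; using fresh $P_t,Q_t$ per equation keeps the gadgets disjoint. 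Once these points are settled, the remaining work -- laying out the $O(\sum_t\depth{\psi_t}+\depth{\Psi_\varepsilon})$ layers, wiring identity blocks to carry values forward, and checking the routine count arithmetic in both directions -- is straightforward.
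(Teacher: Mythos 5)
Your proposal is correct and takes essentially the same route as the paper's proof: encode $\upsilon_i$ as $\abs{U_i}$, reserve two fresh colors per multiplication gadget, use a first global readout to obtain the color counts, use Lemma~\ref{lemma:fnn} to compute a per-vertex Boolean gadget test, use a second global readout together with a sum-of-indicators trick to enforce that the per-vertex test holds at \emph{every} vertex, and finish with one more application of Lemma~\ref{lemma:fnn}; the paper realises the same pipeline with a fixed 15-layer GNN. The only differences are cosmetic: the paper swaps the roles you give to $\abs{U_{t_2}}$ and $\abs{U_{t_3}}$ in the gadget, constrains the total edge count into the target colour directly against $\abs{U_{t_1}}$ rather than introducing a separate $\abs{Q_t}$ equality, and enforces the ``holds everywhere'' condition via $\sum_v\eval{\Psi_1(\feat{1}(v))}=\abs{V}$ instead of your $\sum_v\bar\psi_t(v)\le 0$; your explicit treatment of the empty-graph corner case is actually slightly more careful than the paper's.
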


\begin{proof}
    We define the 15-layer $(n+2m)$-$\ogreluGNN$ $\cA_\varepsilon$ with 
    integer coefficient matrices and bias vectors as follows.
    The dimensions, coefficient matrices, and bias vectors will be chosen so that the features of $\cA_\varepsilon$ satisfy a set of conditions, which we now describe.
    Let 
    \begin{equation*}
        \begin{aligned}
            \Psi_1(\bfx)\ :=\ &\bigwedge_{t \in \intsinterval{m}} \Psi_{1, t}(\bfx) \\
            \Psi_2(\bfx)\ :=\ &
            \bigwedge_{t \in \intsinterval{m}} \Psi_{2, t}(\bfx)\ \land\ 
            \left(x_{n+2m+1} = x_{n+2m+2}\right),
        \end{aligned}
    \end{equation*}
    where, for $1 \le t \le m$,
    \begin{itemize}
        \item if the $t^{th}$ equation in $\varepsilon$ is $\upsilon_{t_1} = 1$,
        then $\Psi_{1, t}(\bfx) := \top$ and
        $\Psi_{2, t}(\bfx) := \left(x_{t_1} = 1\right)$.

        \item if the $t^{th}$ equation in $\varepsilon$ is $\upsilon_{t_1} =\upsilon_{t_2} + \upsilon_{t_3}$,
        then $\Psi_{1, t}(\bfx) := \top$ and
        $\Psi_{2, t}(\bfx) := \left(x_{t_1} = x_{t_2} + x_{t_3}\right)$.

        \item if the $t^{th}$ equation in $\varepsilon$ is $\upsilon_{t_1} =\upsilon_{t_2} \cdot \upsilon_{t_3}$,
        then
        \begin{equation*}
            \begin{aligned}
                \Psi_{1, t}(\bfx)\ :=\ &\left(\left(x_{n+t} = 0\right) \land \left(x_{n+m+t} = 0\right)\right) \lor \left(\left(x_{n+t} = x_{t_2}\right) \land \left(x_{n+m+t} = 1\right)\right) \\
                \Psi_{2, t}(\bfx)\ :=\ &\left(x_{n + t} = x_{t_1}\right) \land \left(x_{n + m + t} = x_{t_3}\right).
            \end{aligned}
        \end{equation*}
    \end{itemize}

    For the first layer,
    \begin{equation*}
        \feat{1}_i(v)\ =\ 
        \begin{dcases}
            \sum_{u \in V} \feat{0}_i(u), &\text{for $1 \le i \le n$} \\ 
            \sum_{u \in \nbr{\out}(v)} \feat{0}_i(u), &\text{for $n < i \le n + m$} \\
            \feat{0}_i(v), &\text{for $n + m < i \le n + 2m$} \\
            1, &\text{for $i = n + 2m + 1$}.
        \end{dcases}
    \end{equation*}

    Foe the second to the ninth layers, $\cA_\varepsilon$ test
    whether $\Psi_1\left(\feat{1}(v)\right)$ holds.
    That is
    \begin{equation*}
        \feat{9}_i(v)\ =\ 
        \begin{dcases}
            \feat{1}_i(v), &\text{for $1 \le i \le n+2m+1$} \\
            \eval{\Psi_1\left(\feat{1}(v)\right)}, &\text{for $i = n+2m+2$}.
        \end{dcases}
    \end{equation*}

    For the tenth layer,
    \begin{equation*}
        \feat{10}_i(v)\ =\ 
        \begin{dcases}
            \feat{9}_i(v), &\text{for $1 \le i \le n$} \\
            \sum_{u \in V} \feat{9}_{i}(u), &\text{for $n < i \le n + 2m + 2$}.
        \end{dcases}
    \end{equation*}

    Foe the eleventh to the fourteenth layers, $\cA_\varepsilon$ test
    whether $\Psi_2\left(\feat{10}(v)\right)$ holds.
    That is
    \begin{equation*}
        \feat{14}_i(v)\ =\ 
        \begin{dcases}
            \feat{10}_i(v), &\text{for $1 \le i \le n+2m+2$} \\
            \eval{\Psi_2\left(\feat{10}(v)\right)}, &\text{for $i = n+2m+3$}.
        \end{dcases}
    \end{equation*}

    Finally, for the last layer, $\feat{15}_{1}(v) = \feat{14}_{n+2m+3}(v)$.
    By Lemma~\ref{lemma:fnn}, it is easy to see that there are $\ogreluGNN$s that satisfy these conditions.
    
    We now show that this construction has the required properties.
    \textbf{\underline{Suppose $\cA_\varepsilon$ is satisfiable}} by the graph $\cG$ and vertex $v$ in $\cG$.
    Let $a_i := \feat{1}_i(v)$.
    We will show that $\set{\upsilon_i \gets a_i}_{i \in \intsinterval{n}}$ is a solution of $\varepsilon$.
    Note that for $1 \le i \le n$ and $u \in V$, $\feat{1}_{i}(u) = a_i$.

    We first make a claim.
    Fix $1 \le t \le m$,
    and suppose that the $t^{th}$ equation in $\varepsilon$ is $\upsilon_{t_1} = \upsilon_{t_2} \cdot \upsilon_{t_3}$.
    Then for every $u \in V$,
    \begin{equation*}
        \feat{9}_{n + t}(u)\ =\ a_{t_2} \cdot \feat{9}_{n + m + t}(u).
    \end{equation*}

    We now prove the claim.
    Because $\cA_\varepsilon$ is satisfied by $\tuple{\cG, v}$,
    $\feat{15}_{1}(v) = \feat{14}_{n+2m+3}(v) = \eval{\Psi_2\left(\feat{10}(v)\right)} = 1$.
    Since $\Psi_2\left(\feat{10}(v)\right)$ holds,
    it follows that
    \begin{equation*}
        \feat{10}_{n+2m+1}(v) = \feat{10}_{n+2m+2}(v).
    \end{equation*}
    By the definition of $\cA_\varepsilon$,
    we can rewrite both sides as follows:
    \begin{equation*}
        \begin{aligned}
            \feat{10}_{n+2m+1}(v)
            \ =\ &\sum_{u \in V} \feat{9}_{n+2m+1}(u)
            \ =\ \sum_{u \in V} \feat{1}_{n+2m+1}(u)
            \ =\ \sum_{u \in V} 1 \\
            \feat{10}_{n+2m+2}(v)
            \ =\ &\sum_{u \in V} \feat{9}_{n+2m+2}(u)
            \ =\ \sum_{u \in V}\eval{\Psi_1\left(\feat{1}(u)\right)},
        \end{aligned}
    \end{equation*}
    which implies that, for every $u \in V$, $\Psi_1\left(\feat{1}(u)\right)$ holds.

    Fix $1 \le t \le m$,
    and suppose that the $t^{th}$ equation in $\varepsilon$ is $\upsilon_{t_1} = \upsilon_{t_2} \cdot \upsilon_{t_3}$.
    Since $\Psi_1\left(\feat{1}(u)\right)$ holds,
    $\Psi_{1, t}\left(\feat{1}(u)\right)$ also holds.
    Thus, one of the following must hold:
    \begin{equation*}
        \feat{1}_{n + t}(u) = 0 \land \feat{1}_{n + m + t}(u) = 0
        \quad\text{or}\quad
        \feat{1}_{n + t}(u) = \feat{1}_{t_2}(u) \land \feat{1}_{n + m + t}(u) = 1.
    \end{equation*}
    Note that $\feat{1}_{t_2}(u) = a_{t_2}$, and, by the definition of $\cA_\varepsilon$,
    \begin{equation*}
        \feat{9}_{n + t}(u)\ =\ \feat{1}_{n + t}(u)
        \quad\text{and}\quad
        \feat{9}_{n + m + t}(u)\ =\ \feat{1}_{n + m + t}(u).
    \end{equation*}
    Therefore, for both cases, it holds that $\feat{9}_{n + t}(u) = a_{t_2} \cdot \feat{9}_{n + m + t}(u)$.

    We now show that $\set{\upsilon_i \gets a_i}_{i \in \intsinterval{n}}$ is a solution of $\varepsilon$.
    Because $\cA_\varepsilon$ is satisfied by $\tuple{\cG, v}$,
    $\feat{15}_{1}(v) = \feat{14}_{n+2m+3}(v) = \eval{\Psi_2\left(\feat{10}(v)\right)} = 1$.
    For $1 \le t \le m$,
    since $\Psi_2\left(\feat{10}(v)\right)$ holds,
    $\Psi_{2, t}\left(\feat{10}(v)\right)$ also holds.
    We consider the $t^{th}$ equation in $\varepsilon$.
    Note that, for $1 \le i \le n$, $\feat{10}_{i}(v) = \feat{1}_{i}(v) = a_i$.
    \begin{itemize}
        \item If the $t^{th}$ equation is $\upsilon_{t_1} = 1$,
        then $\feat{10}_{t_1}(v) = 1$,
        which implies that
        $a_{t_1} = 1$.
        
        \item If the $t^{th}$ equation is $\upsilon_{t_1} = \upsilon_{t_2} + \upsilon_{t_3}$, then
        $\feat{10}_{t_1}(v) = \feat{10}_{t_2}(v) + \feat{10}_{t_3}(v)$,
        which implies that
        $a_{t_1} = a_{t_2} + a_{t_3}$,

        \item If the $t^{th}$ equation is $\upsilon_{t_1} = \upsilon_{t_2} \cdot \upsilon_{t_3}$, then
        \begin{equation*}
            \feat{10}_{n + t}(v)\ =\ \feat{10}_{t_1}(v)\quad \text{and} \quad
            \feat{10}_{n + m + t}(v)\ =\ \feat{10}_{t_3}(v).
        \end{equation*}
        By the definition of $\cA_\varepsilon$, we have
        \begin{equation*}
            \feat{10}_{n + t}(v)\ =\ \sum_{u \in V} \feat{9}_{n + t}(u)
            \quad\text{and}\quad
            \feat{10}_{n + m + t}(v)\ =\ \sum_{u \in V} \feat{9}_{n + m + t}(u).
        \end{equation*}
        Combining with the claim above,
        it holds that
        $\feat{10}_{n + t}(v) = a_{t_2}\cdot\feat{10}_{n + m + t}(v)$,
        which implies that
        $a_{t_1} = a_{t_2} \cdot a_{t_3}$. 
    \end{itemize}

    \textbf{\underline{Suppose $\varepsilon$ is solvable}}.
    Let $\set{\upsilon_i \gets a_i}_{i \in \intsinterval{n}}$ be a solution of $\varepsilon$ and
    $\am$ be the maximum value of $a_i$. 
    Note that in Definition~\ref{def:simple_eq_sys}, we define a system to be solvable if it has a solution in $\bbN$, thus the solution consists of natural numbers.
    We construct the following $(n+2m)$-graph $\cG$ depending on the solution of $\varepsilon$.
    For $1 \le j, j' \le \am$,
    let $v_j$ and $v_{j, j'}$ be fresh vertices and $V$ be the set of all such vertices.
    The colors of $\cG$ are defined as follows.
    \begin{itemize}
        \item For $1 \le i \le n$, $U_i := \setc{v_j}{1 \le j \le a_i}$.

        \item For $1 \le t \le m$, 
        if the $t^{th}$ equation in $\varepsilon$ is
        $\upsilon_{t_1} = \upsilon_{t_2} \cdot \upsilon_{t_3}$, then
        $U_{n + t} := \setc{v_{j, j'}}{1 \le j \le a_{t_3}, 1 \le j' \le a_{t_2}}$ and
        $U_{n + m + t} := \setc{v_j}{1 \le j \le a_{t_3}}$.
        Otherwise, 
        $U_{n + t} := \emptyset$ and $U_{n + m + t} := \emptyset$.
    \end{itemize}
    Finally, the edges of $\cG$ are $\setc{(v_j, v_{j, j'})}{1 \le j, j' \le \am}$.

    We now claim that $\cA_\varepsilon$ accepts $\tuple{\cG, v_1}$.
    We start from computing the first feature of vertices.
    \begin{itemize}
        \item For $1 \le j \le \am$,
        for $1 \le i \le n$, $\feat{1}_i(v_j) = a_i$.
        For $1 \le t \le m$,
        if the $t^{th}$ equation in $\varepsilon$ is
        $\upsilon_{t_1} = \upsilon_{t_2} \cdot \upsilon_{t_3}$ and $j \le a_{t_3}$,
        then $\feat{1}_{n+t}(v_j) = a_{t_2}$ and $\feat{1}_{n+m+t}(v_j) = 1$.
        Otherwise, $\feat{1}_{n+t}(v_j) = \feat{1}_{n+m+t}(v_j) = 0$.

        \item For $1 \le j, j' \le \am$,
        for $1 \le t \le m$, $\feat{1}_{n+t}(v_{j,j'}) = \feat{1}_{n+m+t}(v_{j,j'}) = 0$.
    \end{itemize}
    Then, it is clear that, for every vertex $u \in V$, 
    $\feat{9}_i(u) = \eval{\Psi_1\left(\feat{1}(u)\right)} = 1$.

    Next, we compute the tenth feature of $v_1$.
    For $1 \le i \le n$,
    \begin{equation*}
        \feat{10}_i(v_1) = \feat{9}_i(v_1) = \feat{1}_i(v_1) = a_i.
    \end{equation*}
    For $1 \le t \le n$,
    suppose that the $t^{th}$ equation in $\varepsilon$ is
    $\upsilon_{t_1} = \upsilon_{t_2} \cdot \upsilon_{t_3}$.
    We have
    \begin{equation*}
        \begin{aligned}
            \feat{10}_{n + t}(v_1)
            \ =\ &\sum_{u \in V} \feat{9}_{n + t}(u)
            \ =\ \sum_{j \in \intsinterval{a_{t_3}}} \feat{9}_{n + t}(v_j)
            \ =\ \sum_{j \in \intsinterval{a_{t_3}}} a_{t_2}
            \ =\ a_{t_2} \cdot a_{t_3} \\
            \feat{10}_{n + m + t}(v_1)
            \ =\ &\sum_{u \in V} \feat{9}_{n + m + t}(u)
            \ =\ \sum_{j \in \intsinterval{a_{t_3}}} \feat{9}_{n + m + t}(v_j)
            \ =\ \sum_{j \in \intsinterval{a_{t_3}}} 1
            \ =\ a_{t_3}
        \end{aligned}
    \end{equation*}
    Finally, it is straightforward to check that 
    $\feat{15}_1(v_1) = \feat{14}_{n+2m+3}(v_1) = \eval{\Psi_2\left(\feat{10}(v_1)\right)} = 1$,
    which implies that $\cA_\varepsilon$ accepts $\tuple{\cG, v_1}$.
\end{proof}

This completes the proof of Lemma~\ref{lemma:hilbert_to_ogrelugnn}, and
thus the proof of Theorem \ref{thm:hilbert_to_ogrelugnn}.

We now turn to the proof of Theorem~\ref{thm:hilbert_to_blrelugnn}.
The idea is similar to Theorem~\ref{thm:hilbert_to_ogrelugnn}.
However, we now encode the solution of a set of equations in the number of children satisfying certain unary predicates.

For example, consider simulating a single equation  $x \cdot y=a$.
for $a \in \bbN$. Let $\cA_a$ be the $4$-layer 1-$\blreluGNN$
defined as follows.
The dimensions are $d_0=d_1=d_2=d_4=1$ and $d_3=2$.
The coefficient matrices and bias vectors will  be chosen so that the features satisfy the following conditions,
\begin{equation*}
    \begin{aligned}
        &\feat{1}_1(v)\ =\ \relup{\sum_{u \in \nbr{\out}(v)} \feat{0}_1(u)},
        &\feat{2}_1(v)\ =\ \relup{\sum_{u \in \nbr{\inc}(v)} \feat{1}_1(u)}, \\
        &\feat{3}_1(v)\ =\ \relup{\sum_{u \in \nbr{\out}(v)} \feat{2}_1(u) - a}, 
        &\feat{3}_2(v)\ =\ \relup{a - \sum_{u \in \nbr{\out}(v)} \feat{2}_1(u)}, \\
        &\feat{4}_1(v)\ =\ \relup{1 - \feat{3}_1(v) - \feat{3}_2(v)}.
    \end{aligned}
\end{equation*}

We claim that, for every $1$-level tree $\cT$ with root vertex $v_r$, $\cA_a$ accepts $\tuple{\cT, v_r}$ if and only if $a = \abs{\nbr{\out}(v_r)} \cdot \abs{U_1 \cap \nbr{\out}(v_r)}$, where $U_1$ is the unique input predicate.
To see this, note that $\feat{1}_1(v) = \abs{U_1\cap \nbr{\out}(v_r)}$, and,
for every $u \in \nbr{\out}(v_r)$, $\nbr{\inc}(u) = \set{v_r}$.
Thus, we have
\begin{equation*}
    \begin{aligned}
        \sum_{u \in \nbr{\out}(v_r)} \feat{2}_1(u)
        \ =\ &
        \sum_{u \in \nbr{\out}(v_r)}
        \sum_{u' \in \nbr{\inc}(u)}
        \feat{1}_1(u') \\
        \ =\ &
        \sum_{u \in \nbr{\out}(v_r)}
        \abs{U_1\cap \nbr{\out}(v_r)} \\
        \ =\ &
        \abs{\nbr{\out}(v_r)} \cdot
        \abs{U_1\cap \nbr{\out}(v_r)}
    \end{aligned}
\end{equation*}
For the last layer, $\feat{4}_1(v_r) = 1$ if and only if $\feat{3}_1(v_r) = \feat{3}_2(v_r) = 0$,
that is, $\sum_{u \in \nbr{\out}(v)} \feat{2}_1(u) = a$.

The correctness of the above example is based on the fact that the children of the root vertex in the $1$-level tree have only one incoming neighbor -- in this case, the root vertex.
Using separate incoming and outgoing aggregations, we can enforce the ``single incoming neighbor constraint'' on an arbitrary vertex, which will allow us encode iterative multiplication.

The general reduction is given in the following lemma:

\begin{lemma}\label{lemma:hilbert_to_blrelugnn}
    For every simple equation system $\varepsilon$ with $n$ variables and $m$ equations,
    there exists an $(n+2m)$-$\blreluGNN$ $\cA_\varepsilon$ 
    such that
    $\varepsilon$ has a solution in $\bbN$ if and only if
    $\cA_\varepsilon$ is satisfiable.
\end{lemma}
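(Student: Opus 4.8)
The plan is to adapt the construction from Lemma~\ref{lemma:hilbert_to_ogrelugnn} to the local setting, using the idea illustrated by the example GNN $\cA_a$ above: we encode a candidate solution $\set{\upsilon_i\gets a_i}_{i\in\intsinterval{n}}$ of $\varepsilon$ in the number of out-children of a root vertex that satisfy the colour $U_i$, and we use alternating outgoing/incoming aggregations to simulate the global summations that were available ``for free'' in the outgoing-only-with-global-readout case. The witness graph will be a tree of bounded depth: a root $v_r$ whose out-children carry the unary predicates encoding the $a_i$, together with auxiliary children encoding the products $a_{t_2}\cdot a_{t_3}$ for multiplication equations, exactly as in the proof of Lemma~\ref{lemma:hilbert_to_ogrelugnn} but reorganised as a tree rather than a clique-like structure.

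The key steps, in order, are as follows. First, I would note that $\blreluGNN$s can still run the ``$\Psi$-test gadget'' of Lemma~\ref{lemma:fnn}: that lemma only requires that the aggregation matrices $\coefA{\ell}{\out},\coefA{\ell}{\inc},\coefR{\ell}$ be zero on the relevant layers, so it applies verbatim to local GNNs, letting us compute $\eval{\Psi(\feat{\ell_0}(v))}$ for any quantifier-free Presburger $\Psi$. Second, I would replace the global readout steps of the outgoing-only construction by the ``sum over out-children then pull back along incoming edges'' trick from the example $\cA_a$: if every relevant node $u$ has exactly one incoming neighbour, namely $v_r$, then $\sum_{u\in\nbr{\out}(v_r)}\sum_{u'\in\nbr{\inc}(u)}\feat{}{}(u')$ equals $\abs{\nbr{\out}(v_r)}$ times the per-child value, which realises multiplication by $\abs{\nbr{\out}(v_r)}$. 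To make this sound I would first have the GNN verify the ``single incoming neighbour constraint'' on the nodes that participate in a multiplication -- a local property expressible by a Presburger guard $\presby{E(y,x)}=1$, hence by a layer of the GNN -- and propagate a failure flag up to the root, so that any satisfying model is forced to be tree-shaped where it matters. Third, mirroring the $(1)\Rightarrow$ direction, given a solution of $\varepsilon$ I would explicitly build the tree: $a_i$ out-children of $v_r$ coloured $U_i$; for each multiplication equation $\upsilon_{t_1}=\upsilon_{t_2}\cdot\upsilon_{t_3}$, an array of product-witness nodes coloured $U_{n+t}$ and $U_{n+m+t}$ wired so that the two global-style sums evaluate to $a_{t_2}\cdot a_{t_3}$ and $a_{t_3}$ respectively, and then argue that $\feat{L}_1(v_r)\ge 1/2$. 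Fourth, for the converse, I would extract $a_i:=\feat{}{}(v_r)$ (the count of $U_i$ out-children, or the value of the corresponding feature coordinate) from an arbitrary satisfying $\tuple{\cG,v}$ and show, using the verified single-incoming-neighbour constraints plus the two $\Psi$-tests $\Psi_1,\Psi_2$ as in Lemma~\ref{lemma:hilbert_to_ogrelugnn}, that $\set{\upsilon_i\gets a_i}$ solves $\varepsilon$; the equations $\upsilon_{t_1}=1$ and $\upsilon_{t_1}=\upsilon_{t_2}+\upsilon_{t_3}$ are immediate from a single outgoing aggregation counting coloured children, and the multiplication case uses the two-hop outgoing/incoming sum. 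Finally, undecidability of Theorem~\ref{thm:hilbert_to_blrelugnn} follows since solvability of simple equation systems over $\bbN$ is undecidable~\cite{hilbert_theth}.

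The main obstacle I expect is the second step: getting the ``global multiplication by $\abs{\nbr{\out}(v_r)}$'' effect using only local aggregation. In the outgoing-only GNN one literally sums over all of $V$; here the two-hop $\nbr{\out}\!\circ\!\nbr{\inc}$ detour only behaves like a clean multiplication if the intermediate nodes have in-degree exactly one along the relevant edges, and if the ``fan-out'' of $v_r$ to those nodes is uniform. Ensuring this robustly -- so that a cheating model cannot, say, add spurious incoming edges to inflate a sum, or share product-witness nodes between different equations -- requires carefully placing guard checks ($\presby{E(y,x)\land U_j(y)}\le 1$ and similar) at each participating node and routing a global ``all checks passed'' Boolean back to $v_r$ through the tree, much as the conference-paper construction does. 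I would handle this by making $\cA_\varepsilon$ first spend a few layers computing, at every node, a ``locally well-formed'' flag, then a few more layers propagating the conjunction of these flags along incoming edges up to $v_r$, and only then performing the arithmetic tests $\Psi_1,\Psi_2$; the bookkeeping is routine but is where all the real care lies.
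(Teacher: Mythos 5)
Your proposal is correct and follows essentially the same approach as the paper's proof: encode the candidate solution in the number of (coloured) out-children of a root $v_r$, use Lemma~\ref{lemma:fnn} to run Presburger tests at each node, enforce a single-incoming-neighbour constraint on the out-children so that pulling the root's feature back along incoming edges and re-aggregating over outgoing edges realises exact multiplication by $a_{t_2}$, and build an explicit tree model for the converse direction. The one place you hedge -- ``routing a global `all checks passed' Boolean back to $v_r$'' -- is handled in the paper not by propagating a separate flag up the tree but by the standard counting comparison already present in the outgoing-only construction: $\cA_\varepsilon$ computes, at $v_r$, both $\sum_{u\in\nbr{\out}(v_r)} 1$ and $\sum_{u\in\nbr{\out}(v_r)} \eval{\Psi_1(\feat{}{}(u))}$ as separate feature coordinates, and $\Psi_2$ asserts their equality; since $\Psi_1$ is Boolean-valued this forces $\Psi_1$ to hold at every out-child. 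This is exactly the ``propagate the conjunction'' step you anticipated, realised arithmetically rather than recursively. Your worry about a cheating model adding spurious incoming edges is precisely what the $\presby{E(y,x)}=1$ check (encoded in the paper as $x_{2n+2m+1}=1$ inside $\Psi_1$) rules out; once each out-child has $v_r$ as its unique incoming neighbour, the two-hop sum is forced and no sharing of witnesses across equations can inflate the counts.
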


\begin{proof}
    We define the 16-layer $(n+2m)$-$\blreluGNN$ $\cA_\varepsilon$ with 
    integer coefficient matrices and bias vectors as follows.
    The dimensions, coefficient matrices, and bias vectors will be chosen so that the features of $\cA_\varepsilon$ satisfy certain conditions, to be described next.
    Let 
    \begin{equation*}
        \begin{aligned}
            \Psi_1(\bfx)\ :=\ &
            \bigwedge_{t \in \intsinterval{m}} \Psi_{1, t}(\bfx)\ \land\ 
            \left(x_{2n+2m+1} = 1\right) \\
            \Psi_2(\bfx)\ :=\ &
            \bigwedge_{t \in \intsinterval{m}} \Psi_{2, t}(\bfx)\ \land\ 
            \left(x_{n+2m+1} = x_{n+2m+2}\right),
        \end{aligned}
    \end{equation*}
    where, for $1 \le t \le m$,
    \begin{itemize}
        \item if the $t^{th}$ equation in $\varepsilon$ is $\upsilon_{t_1} = 1$,
        then $\Psi_{1, t}(\bfx) := \top$ and $\Psi_{2, t}(\bfx) := \left(x_{t_1} = 1\right)$.

        \item if the $t^{th}$ equation in $\varepsilon$ is $\upsilon_{t_1} =\upsilon_{t_2} + \upsilon_{t_3}$,
        then $\Psi_{1, t}(\bfx) := \top$ and
        $\Psi_{2, t}(\bfx) := \left(x_{t_1} = x_{t_2} + x_{t_3}\right)$.

        \item if the $t^{th}$ equation in $\varepsilon$ is $\upsilon_{t_1} =\upsilon_{t_2} \cdot \upsilon_{t_3}$,
        then 
        \begin{equation*}
            \begin{aligned}
                \Psi_{1, t}(\bfx)\ :=\ &\left(\left(x_{n+t} = 0\right) \land \left(x_{n+m+t} = 0\right)\right) \lor \left(\left(x_{n+t} = x_{n+2m+t_2}\right) \land \left(x_{n+m+t} = 1\right)\right) \\
                \Psi_{2, t}(\bfx)\ :=\ &\left(x_{n+t} = x_{t_1}\right) \land \left( x_{n+m+t} = x_{t_3}\right).
            \end{aligned}
        \end{equation*}
    \end{itemize}

    For the first layer,
    \begin{equation*}
        \feat{1}_i(v)\ =\ 
        \begin{dcases}
            \sum_{u \in \nbr{\out}(v)} \feat{0}_i(u), &\text{for $1 \le i \le n+m$} \\ 
            \feat{0}_i(v), &\text{for $n+m < i \le n+2m$} \\
            1, &\text{for $i = n + 2m + 1$}.
        \end{dcases}
    \end{equation*}

    For the second layer,
    \begin{equation*}
        \feat{2}_i(v)\ =\ 
        \begin{dcases}
            \feat{1}_i(v), &\text{for $1 \le i \le n+2m$} \\
            \sum_{u \in \nbr{\inc}(v)} \feat{1}_{i - n - 2m}(u), &\text{for $n+2m < i \le 2n+2m$} \\
            \sum_{u \in \nbr{\inc}(v)} \feat{1}_{i - n}(u), &\text{for $i = 2n + 2m + 1$} \\
            1, &\text{for $i = 2n + 2m + 2$}.
        \end{dcases}
    \end{equation*}

    For the third to the tenth layers,
    $\cA_\varepsilon$ test whether $\Psi_1\left(\feat{2}(v)\right)$ holds.
    That is
    \begin{equation*}
        \feat{10}_i(v)\ =\ 
        \begin{dcases}
            \feat{2}_i(v), &\text{for $1 \le i \le 2n+2m+2$} \\
            \eval{\Psi_1\left(\feat{2}(v)\right)}, &\text{for $i = 2n+2m+3$}.
        \end{dcases}
    \end{equation*}

    For the eleventh layer,
    \begin{equation*}
        \feat{11}_i(v)\ =\ 
        \begin{dcases}
            \feat{10}_i(v), &\text{for $1 \le i \le n$} \\
            \sum_{u \in \nbr{\out}} \feat{10}_{i}(u), &\text{for $n < i \le n + 2m$} \\
            \sum_{u \in \nbr{\out}} \feat{10}_{i+n+1}(u), &\text{for $n + 2m < i \le n + 2m + 2$}.
        \end{dcases}
    \end{equation*}

    For the twelfth to the fifteeneth layers,
    $\cA_\varepsilon$ test whether $\Psi_2\left(\feat{11}(v)\right)$ holds.
    That is
    \begin{equation*}
        \feat{15}_i(v)\ =\ 
        \begin{dcases}
            \feat{11}_i(v), &\text{for $1 \le i \le n+2m+2$} \\
            \eval{\Psi_2\left(\feat{11}(v)\right)}, &\text{for $i = n+2m+3$}.
        \end{dcases}
    \end{equation*}

    Finally, for the last layer, $\feat{16}_{1}(v) = \feat{15}_{n+2m+3}(v)$.
    By Lemma~\ref{lemma:fnn}, it is easy to see that there are $\blreluGNN$s that satisfy these conditions.
    
    We now show that this construction has the required properties for the lemma.
    \textbf{\underline{Suppose $\cA_\varepsilon$ is satisfiable}} by the graph $\cG$ and vertex $v$ in $\cG$.
    Let $a_i := \feat{1}_i(v)$.
    We will show that $\set{\upsilon_i \gets a_i}_{i \in \intsinterval{n}}$ is a solution of $\varepsilon$.

    The proof idea is similar to Lemma~\ref{lemma:hilbert_to_ogrelugnn},
    with the only difference being the following variant of the main claim used in that lemma.
    For every $u \in \nbr{\out}(v)$, both properties hold.
    \begin{itemize}
        \item $\nbr{\inc}(u) = \set{v}$.
        \item For $1 \le t \le m$,
        if the $t^{th}$ equation in $\varepsilon$ is $\upsilon_{t_1} = \upsilon_{t_2} \cdot \upsilon_{t_3}$,
        then
        \begin{equation*}
            \feat{10}_{n+t}(u)\ =\ a_{t_2} \cdot \feat{10}_{n+m+t}(u).
        \end{equation*}
    \end{itemize}

    We now prove the claim.
    By the argument similar to Lemma~\ref{lemma:hilbert_to_ogrelugnn}.
    For every $u \in \nbr{\out}(v)$,
    $\Psi_1\left(\feat{2}(u)\right)$ holds.
    \begin{itemize}
        \item
        By the definition of $\cA_\varepsilon$, we have 
        \begin{equation*}
            \feat{2}_{2n+2m+1}(u)
            \ =\ \sum_{u' \in \nbr{\inc}(u)} \feat{1}_{n+2m+1}(u')
            \ =\ \sum_{u' \in \nbr{\inc}(u)} 1
            \ =\ \abs{\nbr{\inc}(u)}.
        \end{equation*}
        Because $\Psi_1\left(\feat{2}(u)\right)$ holds, it follows that $\feat{2}_{2n+2m+1}(u) = 1$,
        which implies that $u$ has exactly one incoming neighbor.
        Since $u$ is a outgoing neighbor of $v$, $v$ is an incoming neighbor of $u$.
        Thus, it holds that $\nbr{\inc}(u) = \set{v}$.

        \item Fix $1 \le t \le n$, and suppose that 
        the $t^{th}$ equation in $\varepsilon$ is $\upsilon_{t_1} = \upsilon_{t_2} \cdot \upsilon_{t_3}$.
        Since $\Psi_1\left(\feat{2}(u)\right)$ holds,
        $\Psi_{1, t}\left(\feat{2}(u)\right)$ also holds.
        Thus, one of the following must hold:
        \begin{equation*}
            \feat{2}_{n + t}(u) = 0 \land \feat{2}_{n + m + t}(u) = 0
            \quad\text{or}\quad
            \feat{2}_{n + t}(u) = \feat{2}_{n + 2m + t_2}(u) \land \feat{2}_{n + m + t}(u) = 1.
        \end{equation*}
        Note that by the definition of $\cA_\varepsilon$, we have
        \begin{equation*}
            \feat{10}_{n + t}(u)\ =\ \feat{2}_{n + t}(u)
            \quad\text{and}\quad
            \feat{10}_{n + m + t}(u)\ =\ \feat{2}_{n + m + t}(u).
        \end{equation*}
        Combining with the previous claim, $\nbr{\inc}(u) = \set{v}$,
        it follows that
        \begin{equation*}
            \feat{2}_{n + 2m + t_2}(u)
            \ =\ \sum_{u' \in \nbr{\inc}(u)} \feat{1}_{t_2}(u')
            \ =\ \feat{1}_{t_2}(v)
            \ =\ a_{t_2}.
        \end{equation*}
        Therefore, for both cases, it holds that
        $\feat{10}_{n + t}(u) = a_{t_2} \cdot \feat{10}_{n + m + t}(u)$.
    \end{itemize}

    \textbf{\underline{Suppose $\varepsilon$ is solvable}}.
    Let $\set{\upsilon_i \gets a_i}_{i \in \intsinterval{n}}$ be a solution of $\varepsilon$ and
    $\am$ be the maximum value of $a_i$.
    We construct the following $(n+2m)$-graph $\cG$ depending on the solution of $\varepsilon$.
    For $0 \le j \le \am$ and $1 \le j' \le \am$,
    let $v_j$ and $v_{j, j'}$ be fresh vertices and $V$ be the set of all such vertices.
    The colors of $\cG$ are defined as follows.
    \begin{itemize}
        \item For $1 \le i \le n$, $U_i = \setc{v_{0, j}}{1 \le j \le a_i}$.

        \item For $1 \le t \le m$, 
        if the $t^{th}$ equation in $\varepsilon$ is
        $\upsilon_{t_1} = \upsilon_{t_2} \cdot \upsilon_{t_3}$, then
        $U_{n + t} := \setc{v_{j, j'}}{1 \le j \le a_{t_3}, 1 \le j' \le a_{t_2}}$ and
        $U_{n + m + t} := \setc{v_j}{1 \le j \le a_{t_3}}$.
        Otherwise, 
        $U_{n + t} := \emptyset$ and $U_{n + m + t} := \emptyset$.
    \end{itemize}
    Finally, the edges of $\cG$ are
    \begin{equation*}
        E\ :=\ \setc{(v_j, v_{j, j'})}{0 \le j \le \am, 1 \le j' \le \am}\ \cup\ 
        \setc{(v_0, v_j)}{1 \le j \le \am}.
    \end{equation*}

    We can show that $\cA_\varepsilon$ accepts $\tuple{\cG, v_0}$
    via an argument similar to the one used in Lemma~\ref{lemma:hilbert_to_ogrelugnn}.
\end{proof}

Since the solvability (over $\bbN$) of simple equation systems is undecidable,
we obtain undecidability of the satisfiability problems for $\blreluGNN$ 
that is, Theorem~\ref{thm:hilbert_to_blrelugnn}.

\subsection{Undecidability of the universal satisfiability problem for GNNs with unbounded activations} \label{subsec:unbounded_undec_univ_sat}

Next, we turn to the universal satisfiability problem for GNNs with $\relu$ activations.
We first show that the universal satisfiability for $\ogreluGNN$s is undecidable.
The proof is also by reducing Hilbert's tenth problem to it.

\begin{lemma}\label{lemma:hilbert_to_ogrelugnn_univ}
    For every simple equation system $\varepsilon$ with $n$ variables and $m$ equations,
    let $\cA_\varepsilon$ be the $(n+2m)$-$\ogreluGNN$ defined in Lemma~\ref{lemma:hilbert_to_ogrelugnn}.
    For every $(n+2m)$-graph $\cG$ and vertices $v, v'$ in $\cG$,
    $\cA_\varepsilon$ accepts $\tuple{\cG, v}$
    if and only if it accepts $\tuple{\cG, v'}$.
\end{lemma}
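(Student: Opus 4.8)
The plan is to trace through the $15$ layers of the GNN $\cA_\varepsilon$ constructed in Lemma~\ref{lemma:hilbert_to_ogrelugnn} and observe that already at layer $10$ the feature vector of a node no longer depends on the node: $\feat{10}(v)$ is one and the same vector for every vertex $v$ of $\cG$. Since the acceptance condition $\feat{15}_1(v)\ge 1/2$ is obtained from $\feat{10}(v)$ by layers $11$--$15$, which perform no aggregation and hence act pointwise, $\feat{15}_1(v)$ takes the same value at every vertex, and the lemma follows at once.

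First I would record two facts that are immediate from the construction. (i) Layers $2$--$9$ and layers $11$--$14$ of $\cA_\varepsilon$ are precisely the matrices $\coefC{i}_{\Psi_1},\coefb{i}_{\Psi_1}$ and $\coefC{i}_{\Psi_2},\coefb{i}_{\Psi_2}$ supplied by Lemma~\ref{lemma:fnn}, in which $\coefA{\ell}{\out}$, $\coefA{\ell}{\inc}$, and $\coefR{\ell}$ are all zero matrices; consequently, for these $\ell$ the value $\feat{\ell}(v)$ is a fixed function of $\feat{\ell-1}(v)$ alone, with no contribution from any other vertex. (ii) In layer $1$, for $1\le i\le n$ we have $\feat{1}_i(v)=\sum_{u\in V}\feat{0}_i(u)$, which does not depend on $v$; and in layer $10$, for $n< i\le n+2m+2$ we have $\feat{10}_i(v)=\sum_{u\in V}\feat{9}_i(u)$, which also does not depend on $v$.

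Combining these: by (i), the $\Psi_1$-test block preserves the first $n+2m+1$ coordinates, so for $1\le i\le n$ we have $\feat{9}_i(v)=\feat{1}_i(v)$, which by (ii) is node-independent. In layer $10$ the coordinates $1\le i\le n$ are copied from $\feat{9}_i(v)$ (node-independent), while the coordinates $n< i\le n+2m+2$ are replaced by global sums (node-independent); hence $\feat{10}(v)$ is the same vector for every vertex $v$ of $\cG$. Applying (i) once more, $\feat{14}_{n+2m+3}(v)=\eval{\Psi_2(\feat{10}(v))}$ is node-independent, and since $\feat{15}_1(v)=\feat{14}_{n+2m+3}(v)$, so is $\feat{15}_1(v)$. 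Therefore $\feat{15}_1(v)\ge 1/2$ holds iff $\feat{15}_1(v')\ge 1/2$ holds, i.e.\ $\cA_\varepsilon$ accepts $\tuple{\cG,v}$ iff it accepts $\tuple{\cG,v'}$.

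There is no genuine obstacle here; the argument is a direct bookkeeping over the layers. The only point needing a little care is checking that the two ``$\Psi$-test'' blocks truly do not aggregate, so that the node-independence established by the global readout at layers $1$ and $10$ cannot be reintroduced at later layers --- but this is built into the statement of Lemma~\ref{lemma:fnn}, whose hypotheses force the aggregation matrices of those layers to be zero.
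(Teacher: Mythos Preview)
Your proposal is correct and follows essentially the same approach as the paper: trace the dependencies layer by layer, observe that the first $n$ coordinates become node-independent at layer~$1$ via global readout, that layers~$2$--$9$ and $11$--$14$ perform no aggregation (Lemma~\ref{lemma:fnn}), and that layer~$10$ replaces the remaining relevant coordinates by global sums, so $\feat{10}(v)$---and hence $\feat{15}_1(v)$---is the same for all vertices. Your write-up is in fact slightly more explicit than the paper's about why the test blocks do not reintroduce node-dependence.
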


\begin{proof}
    We prove the lemma by analyzing the dependency of features across layers.
    For every $v \in V$, 
    by the definition of $\cA_\varepsilon$ in Lemma~\ref{lemma:hilbert_to_ogrelugnn},
    it holds that
    \begin{equation*}
        \feat{15}_1(v)
        \ =\ \feat{14}_{n+2m+3}(v)
        \ =\ \eval{\Psi_2\left(\feat{10}(v)\right)}.
    \end{equation*}
    For $1 \le i \le n$,
    \begin{equation*}
        \feat{10}_i(v)\ =\ \feat{9}_i(v)\ =\ \sum_{u \in V} \feat{0}_i(u),
    \end{equation*}
    and for $n < i \le n+2m+2$,
    \begin{equation*}
        \feat{10}_i(v)\ =\ \sum_{u \in V} \feat{9}_i(v).
    \end{equation*}
    Observe that the value of $\feat{15}_1(v)$ is independent of the neighbors of $v$.
    Therefore, for every pair of vertices $v, v' \in$, $\feat{15}_1(v) = \feat{15}_1(v')$,
    which implies that
    $\cA_\varepsilon$ accepts $\tuple{\cG, v}$
    if and only if it accepts $\tuple{\cG, v'}$.
\end{proof}

Note that it is obvious that if $\cA_\varepsilon$ is universal satisfiable,
then it is satisfiable.
Therefore, by Lemma~\ref{lemma:hilbert_to_ogrelugnn},
the simple equation system $\varepsilon$ has a solution in $\bbN$
if and only if $\cA_\varepsilon$ is universally satisfiable.
Thus we have the following undecidability result.

\begin{theorem}\label{thm:oggnn_unbounded_undec}
    The universal satisfiability problem of $\ogreluGNN$s is undecidable.
\end{theorem}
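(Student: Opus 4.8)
The plan is to reduce from the solvability over $\bbN$ of simple equation systems, which is undecidable by Hilbert's tenth problem~\cite{hilbert_theth}. Given a simple equation system $\varepsilon$ with $n$ variables and $m$ equations, I would take the $(n+2m)$-$\ogreluGNN$ $\cA_\varepsilon$ built in Lemma~\ref{lemma:hilbert_to_ogrelugnn} and argue that $\varepsilon$ is solvable if and only if $\cA_\varepsilon$ is \emph{universally} satisfiable. Since the map $\varepsilon \mapsto \cA_\varepsilon$ is computable, this yields the desired undecidability.

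By Lemma~\ref{lemma:hilbert_to_ogrelugnn} we already know that $\varepsilon$ has a solution in $\bbN$ if and only if $\cA_\varepsilon$ is satisfiable, so it suffices to show that for this particular $\cA_\varepsilon$, satisfiability and universal satisfiability coincide. One direction is trivial: a witness graph for universal satisfiability is non-empty, so it contains an accepted vertex, hence $\cA_\varepsilon$ is satisfiable. For the converse I would invoke Lemma~\ref{lemma:hilbert_to_ogrelugnn_univ}: the construction of $\cA_\varepsilon$ routes its final output through global-readout aggregations over the whole vertex set after the first layer, so the value $\feat{15}_1(v)$ depends only on $\cG$ and not on $v$. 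Thus if $\cA_\varepsilon$ accepts some $\tuple{\cG, v}$ it accepts $\tuple{\cG, v'}$ for every vertex $v'$ of $\cG$; in other words $\cA_\varepsilon$ is universally satisfiable whenever it is satisfiable.

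Chaining the two equivalences gives ``$\varepsilon$ solvable over $\bbN$ $\iff$ $\cA_\varepsilon$ satisfiable $\iff$ $\cA_\varepsilon$ universally satisfiable,'' which is a many-one reduction from an undecidable problem to the universal satisfiability problem for $\ogreluGNN$s, completing the proof.

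There is no real obstacle left at this stage: the substantive content is the construction of $\cA_\varepsilon$ and the vertex-independence argument, both already carried out in Lemmas~\ref{lemma:hilbert_to_ogrelugnn} and~\ref{lemma:hilbert_to_ogrelugnn_univ}. The only point to state explicitly is the (easy) implication ``universally satisfiable $\Rightarrow$ satisfiable,'' so that Lemma~\ref{lemma:hilbert_to_ogrelugnn_univ} is exactly what closes the remaining gap; everything else is bookkeeping.
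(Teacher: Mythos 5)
Your proposal matches the paper's own argument: the paper also chains Lemma~\ref{lemma:hilbert_to_ogrelugnn} (equisatisfiability of $\varepsilon$ and $\cA_\varepsilon$) with Lemma~\ref{lemma:hilbert_to_ogrelugnn_univ} (vertex-independence of acceptance, via the fact that $\feat{15}_1(v)$ is determined by global-readout sums over $V$) and the trivial implication from universal satisfiability to satisfiability. Both structure and ingredients are identical, so this is the paper's proof.
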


For $\blreluGNN$s,
note that we do not claim an expressive equivalence with a logic in Section~\ref{subsec:sep_unbounded}.
Nevertheless this containment of the logic $\blMtwoPtwo$ 
in the GNN class is useful, since
we can show undecidability of the logic 
by reduction from the halting problem of two-counter machines,
which is known to be 
undecidable~\cite{minsky}.

\begin{definition}
    A \emph{two-counter machine} $\cM$ is a finite list of $n$ instructions having one of the forms \emph{$\tcminc{i}$}, \emph{$\tcmif{i}{q}$}, or \emph{$\tcmhalt$},
    where $i \in \set{0, 1}$ and $1 \le q \le n$.

    A \emph{configuration} is a tuple $\tuple{q, c_0, c_1}$, where $1 \le q \le n$ and $c_0, c_1 \in \bbN$. 
    We say $\tuple{q', c'_0, c'_1}$ is the \emph{successor configuration} of $\tuple{q, c_0, c_1}$
    depending on the $q^{th}$ instruction of the machine.
    \begin{itemize}
        \item If the $q^{th}$ instruction is \emph{$\tcminc{i}$},
        then $q' = q+1$, $c'_i = c_i+1$, and $c'_{1-i} = c_{1-i}$.
        \item If the $q^{th}$ instruction is \emph{$\tcmif{i}{j}$},
        if $c_i = 0$, then $q' = j$, $c'_0 = c_0$, and $c'_1 = c_1$;
        otherwise, $q' = q+1$, $c'_i = c_i-1$, and $c'_{1-i} = c_{1-i}$.
        \end{itemize}
        Note that if the $q^{th}$ instruction is \emph{$\tcmhalt$},
        there is no successor.
        This configuration is called a halting configuration. 
  
    The \emph{computation} of the machine is a (possibly infinite) sequence of configurations where the first is $\tuple{1, 0, 0}$,
    consecutive pairs are in the succcessor relationship above,
    the last configuration is a halting configuration.
    The machine \emph{halts} if its computation is a finite sequence.
\end{definition}

\begin{lemma}\label{lemma:tcm_to_mtwoptwo}
    For every two-counter machine $\cM$ with $n$ instructions, there exists an $(n+5)$-$\blMtwoPtwo$ formula $\Psi_\cM(x)$ such that $\cM$ halts if and only if the formula $\forall x\ \Psi_\cM(x)$ is finitely satisfiable. 
\end{lemma}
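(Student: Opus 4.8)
The plan is to reduce the halting problem for two-counter machines, undecidable by Minsky~\cite{minsky}, to finite satisfiability of sentences $\forall x\ \Psi_\cM(x)$ with $\Psi_\cM(x)\in\blMtwoPtwo$. Given a machine $\cM$ with $n$ instructions, I will encode the unique computation $\gamma_0,\gamma_1,\dots$ of $\cM$ started from $\tuple{1,0,0}$, where $\gamma_t=\tuple{q_t,c_0^t,c_1^t}$, as a finite directed coloured graph, designed so that such a graph can exist precisely when the computation is finite. The $n$ unary predicates $P_1,\dots,P_n$ label, for each step $t$, a \emph{configuration node} $\kappa_t$ with the instruction index $q_t$ in force, and a directed edge $\kappa_t\to\kappa_{t+1}$ links consecutive configurations. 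A constant number of further colours (five suffice) mark configuration nodes and three kinds of \emph{counter token}. The value $c_i^t$ is recorded in unary: to $\kappa_t$ we attach exactly $c_i^t$ fresh $Z_i$-coloured nodes ($i=0,1$), each having a single edge, pointing to $\kappa_t$, and no other incident edges; in addition a \emph{step counter} is maintained in the same way, with $t$ tokens of a colour $Z_{\mathrm{step}}$ at $\kappa_t$.

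The two-hop Presburger quantifiers are what let a configuration node compare its own counters with those of its successor. At $\kappa_t$ the quantity $c_i^t$ is just the number of incoming $Z_i$-coloured neighbours. Since $\kappa_t$'s only outgoing neighbour is $\kappa_{t+1}$, while counter tokens have no outgoing edge other than the one to their owner, the number of $Z_i$-coloured nodes reachable from $\kappa_t$ along a path of the shape $x \to z \leftarrow y$ is exactly $c_i^{t+1}$, the number of $Z_i$-tokens of $\kappa_{t+1}$; this is expressible with a two-hop Presburger quantifier of $\blMtwoPtwo$. The formula $\Psi_\cM(x)$ is then a conjunction of purely local constraints: every node is either a configuration node or a counter token of one of the three colours; a token has exactly one out-neighbour, which is a configuration node, and no in-neighbour; a configuration node carries exactly one instruction, has at most one configuration in-neighbour, and --- unless its instruction is $\tcmhalt$ --- exactly one out-neighbour, which is a configuration node; a configuration node with no configuration in-neighbour has instruction $1$ and all counters (including the step counter) empty; and at each configuration node with instruction $q$ the one-hop and two-hop counts above, together with the colour of its outgoing configuration neighbour, must satisfy the linear (in)equalities that express faithful realisation of the transition of instruction $q$ --- a copy, an increment, or a decrement according to $q$, and, for a zero-test, the case split on whether the node has a $Z_i$-neighbour --- with the step counter always incremented. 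All guards occurring are of the forms $E(x,y)$, $E(y,x)$, or one of the two-hop forms, so $\Psi_\cM(x)$ is a local $\blMtwoPtwo$ formula, obtained in polynomial time from $\cM$.

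For correctness, one direction is by construction: a halting computation of $\cM$ gives, as above, a finite graph in which every node satisfies $\Psi_\cM$, so $\forall x\ \Psi_\cM(x)$ is finitely satisfiable. For the converse, let $\cG$ be any finite model. Every node is a configuration node or is owned by one, so $\cG$ has at least one configuration node. Because the step counter is forced to strictly increase along configuration edges, there is no configuration cycle; combined with the degree bounds (at most one configuration in-neighbour, at most one configuration out-neighbour), the configuration subgraph is a disjoint union of finite simple directed paths. The source of any such path has no configuration in-neighbour and is therefore $\tuple{1,0,0}$, and the transition constraints force the path, step by step, to coincide with the computation of $\cM$; since the path is finite, that computation reaches $\tcmhalt$, i.e.\ $\cM$ halts.

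The main obstacle is the \emph{rigidity} of the encoding. The whole argument depends on orienting the configuration, ownership, and bookkeeping edges so that from a configuration node the relevant two-hop count sees exactly the successor configuration's tokens and nothing spurious, and on pinning down the overall shape of the graph using constraints that never refer to two token nodes simultaneously --- which is exactly the expressive limitation of this modal, two-variable logic. Equally delicate is ruling out free-floating cyclic pseudo-runs and forcing the genuine initial configuration to appear: this is what the step-counter gadget together with the ``source'' constraint are for. Once the encoding is settled, checking that every clause lies in $\blMtwoPtwo$ and that the two implications go through is routine.
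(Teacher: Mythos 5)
Your proposal is correct in its essentials and follows the same high-level strategy as the paper: encode the run of $\cM$ as a directed coloured graph in which each configuration is a node, counters are represented in unary by token nodes, transitions are enforced by comparing a node's one-hop token count against the two-hop token count seen through its unique successor, and then argue that any finite model must contain a finite successful trace. Two of your choices do diverge in genuinely interesting ways. First, you attach tokens by edges \emph{into} the configuration node and use the two-hop shape $E(x,z)\land E(y,z)$, whereas the paper attaches them by outgoing edges and uses $E(x,z)\land E(z,y)$; both are fine, but yours forces you to separately guarantee that a configuration node's only outgoing edge is the successor edge, whereas the paper instead has to guarantee that the successor's other outgoing neighbours ($S$- and $C$-tokens) contribute nothing to the two-hop count. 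Second, and more substantially, you rule out cyclic pseudo-runs by an explicit step counter that must strictly increase along configuration edges; the paper instead introduces a distinguished $S$-vertex that every node must point to, forces each $Q$-vertex to have exactly one incoming $S$- or $Q$-neighbour, and derives acyclicity from this degree constraint via a pigeonhole argument on the finite successor chain out of $S$. Your step-counter device is more modular and perhaps easier to see as correct (strict monotonicity immediately kills cycles), at the modest cost of a quadratic blow-up in the size of the witness graph when $\cM$ halts; the paper's connectivity trick avoids this extra bookkeeping but requires a slightly more delicate argument that the successor chain starting from $S$ cannot revisit a vertex. Both approaches are sound.
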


\begin{figure}[h]
    \centering
    \begin{tikzpicture}[thick, main/.style = {draw, circle, minimum size=30, scale=0.6}]
        \node[main] (c0) at (0,  0)  {$S$};
        \node[main] (c1) at (2,  0)  {$Q_1$};
        \node[main] (c2) at (4,  0)  {$Q_2$};
        \node[main] (c3) at (6,  0)  {$Q_3$};
        \node[main] (c4) at (8,  0)  {$Q_8$};
        \node[    ] (c5) at (9,  0)  {$\cdots$};
        \node[main] (c6) at (10,  0) {$Q_7$};
        \node[main] (c7) at (12, 0)  {$T$};
        \draw[->] (c0) -- (c1);
        \draw[->] (c1) -- (c2);
        \draw[->] (c2) -- (c3);
        \draw[->] (c3) -- (c4);
        \draw[->] (c4) -- (c5);
        \draw[->] (c5) -- (c6);
        \draw[->] (c6) -- (c7);
        \node[main] (c1b1) at (2,  1)         {$C_0$};
        \node[main] (c1g1) at (2.342, -0.940) {$C_1$};
        \node[main] (c1g2) at (1.658, -0.940) {$C_1$};
        \draw[->] (c1) -- (c1b1);
        \draw[->] (c1) -- (c1g1);
        \draw[->] (c1) -- (c1g2);
        \node[main] (c2b1) at (4.342, 0.940)  {$C_0$};
        \node[main] (c2b2) at (3.658, 0.940)  {$C_0$};
        \node[main] (c2g1) at (4.342, -0.940) {$C_1$};
        \node[main] (c2g2) at (3.658, -0.940) {$C_1$};
        \draw[->] (c2) -- (c2b1);
        \draw[->] (c2) -- (c2b2);
        \draw[->] (c2) -- (c2g1);
        \draw[->] (c2) -- (c2g2);
        \node[main] (c3b1) at (6.342, 0.940)  {$C_0$};
        \node[main] (c3b2) at (5.658, 0.940)  {$C_0$};
        \node[main] (c3g1) at (6, -1) {$C_1$};
        \draw[->] (c3) -- (c3b1);
        \draw[->] (c3) -- (c3b2);
        \draw[->] (c3) -- (c3g1);
        \node[main] (c4b1) at (8.342, 0.940)  {$C_0$};
        \node[main] (c4b2) at (7.658, 0.940)  {$C_0$};
        \draw[->] (c4) -- (c4b1);
        \draw[->] (c4) -- (c4b2);
        \node[main] (c6b1) at (10.342, 0.940)  {$C_0$};
        \node[main] (c6b2) at (9.658, 0.940)   {$C_0$};
        \draw[->] (c6) -- (c6b1);
        \draw[->] (c6) -- (c6b2);

        \draw [dashed, gray] (1.1, 1.5) -- (2.9, 1.5) -- (2.9, -1.5) -- (1.1, -1.5) -- (1.1, 1.5);
        \draw [dashed, gray] (3.1, 1.5) -- (4.9, 1.5) -- (4.9, -1.5) -- (3.1, -1.5) -- (3.1, 1.5);
        \draw [dashed, gray] (5.1, 1.5) -- (6.9, 1.5) -- (6.9, -1.5) -- (5.1, -1.5) -- (5.1, 1.5);
        \draw [dashed, gray] (7.1, 1.5) -- (8.9, 1.5) -- (8.9, -1.5) -- (7.1, -1.5) -- (7.1, 1.5);
        \draw [dashed, gray] (9.1, 1.5) -- (10.9, 1.5) -- (10.9, -1.5) -- (9.1, -1.5) -- (9.1, 1.5);
    \end{tikzpicture}
    \centering
    \caption{An example of the encoding of the computation of the two-counter machines to directed graphs.}
    \label{fig:two_counter_machine}
\end{figure}

The reduction is by encoding the computation of a two-counter machine into the graph directly.
We have illustrated it in Fig.~\ref{fig:two_counter_machine}.
Each vertex is labeled with exactly one of $S, T, C_0, C_1, Q_1, \cdots, Q_n$.
Informally, $S$ and $T$ will be indicators for the beginning and the end of the computation.
Each $Q_i$ will represent the particular state of the configuration.
The value of the counters $i$ is encoded by the number of outgoing neighbors with label $C_i$.

\begin{proof}
    Let $\tau$ be the vocabulary consisting of unary predicates
    $\set{S, T, C_0, C_1} \cup \set{Q_q}_{q \in \intsinterval{n}}$
    and a binary predicate $E$.

    We first define some useful gadgets.
    \begin{equation*}
        \begin{aligned}
            \psi^{\itdiff}_{i, \delta}(x)\ :=\ &
                \left(
                    \presbzy{E(x, z)\land E(z, y) \land C_i(y)} - 
                    \presby{E(x, y) \land C_i(y)}  = \delta
                \right) \\
            \psi^{\itzero}_{i}(x)\ :=\ &
                \left(
                    \presby{E(x, y) \land C_i(y)}  = 0
                \right) \\
            \psi^{\itsucc}_{j}(x)\ :=\ &
                \left(
                    \presby{E(x, y) \land Q_j(y)} = 1
                \right) \\
            Q(x)\ :=\ & \bigvee_{q \in \intsinterval{n}} Q_q(x) \\
            C(x)\ :=\ & C_1(x) \lor C_2(x)
        \end{aligned}
    \end{equation*}

    Next, for $1 \le q \le n$, we define the $\blMtwoPtwo$ formula $\psi_q(x)$ depending on the $q^{th}$ instruction of $\cM$.
    \begin{itemize}
        \item If the $q^{th}$ instruction is $\tcminc{i}$,
        then $\psi_q(x) :=
            \psi^{\itsucc}_{q+1}(x) \land
            \psi^{\itdiff}_{i, 1}(x) \land
            \psi^{\itdiff}_{1-i, 0}(x)$.
       
        \item If the $q^{th}$ instruction is $\tcmif{i}{j}$,
        then 
        \begin{equation*}
            \begin{aligned}
                \psi_q(x)\ :=\ &
                    \left(\psi^{\itzero}_i(x) \to \psi^{\itif}_q(x)\right) \land
                    \left(\neg \psi^{\itzero}_i(x) \to \psi^{\itelse}_q(x)\right) \\
                \psi^{\itif}_q(x)\ :=\ &
                    \psi^{\itsucc}_{j}(x) \land
                    \psi^{\itdiff}_{0, 0}(x) \land
                    \psi^{\itdiff}_{1, 0}(x) \\
                \psi^{\itelse}_q(x)\ :=\ &
                    \psi^{\itsucc}_{q+1}(x) \land
                    \psi^{\itdiff}_{i, -1}(x) \land
                    \psi^{\itdiff}_{1-i, 0}(x).
            \end{aligned}
        \end{equation*}
        
        \item If the $q^{th}$ instruction is $\tcmhalt$,
        then $\psi_q(x) := \left(\presby{E(x, y) \land T(y)} = 1\right)$. 
    \end{itemize}
    Finally, we define $\Psi_\cM(x)$ as follows.
    \begin{equation*}
        \begin{aligned}
            \varphi_1(x)\ :=\ & 
                \left(
                    \bigvee_{U \in \set{S, T, C_0, C_1} \cup \set{Q_i}_{i \in \intsinterval{n}}}U(x)
                \right) \land
                \left(
                    \bigwedge_{\substack{U_1, U_2 \in \set{S, T, C_0, C_1} \cup \set{Q_i}_{i \in \intsinterval{n}}\\U_1 \neq U_2}} \!\!\!\!\!\!\!\!
                    \neg U_1(x) \lor \neg U_2(x)
                \right) \\
            \varphi_2(x)\ :=\ &
                \left(\presby{E(x, y)\land\top} = 1\right) \land
                \left(\presby{E(x, y) \land Q_1(y) \land \psi^{\itzero}_0(y) \land \psi^{\itzero}_1(y)} = 1\right) \\
            \varphi_3(x)\ :=\ &
                \left(\presby{E(x, y)\land S(y)} = 1\right) \land
                \left(\presby{E(x, y)\land \left(T(y) \lor C(y) \lor Q(y)\right)} = 0\right) \\
            \varphi_4(x)\ :=\ &
                \left(\presby{E(x, y)\land (Q(y)\lor T(y))} = 1\right) \land
                \left(\presby{E(x, y)\land S(y)} = 1\right) \land \\
                &
                \left(\presby{E(y, x)\land \left(S(y) \lor Q(y)\right)} = 1\right) \land 
                \left(\presby{E(y, x)\land \left(T(y) \lor C(y)\right)} = 0\right) \\
            \Psi_{\cM}(x)\ :=\ &
                \varphi_1(x) \land
                \left(S(x) 
                    \to \varphi_2(x)\right) \land
                \left(\left(C(x) \lor T(x)\right) 
                    \to \varphi_3(x)\right)\land 
                \left(Q(x) 
                    \to \varphi_4(x)\right)\land \\
                &\bigwedge_{q \in \intsinterval{n}} \left(Q_q(x) \to \psi_q(x)\right)
        \end{aligned}
    \end{equation*}
    
    \textbf{\underline{If $\forall x\ \Psi_{\cM}(x)$ is satisfied by the finite graph $\cG$}},
    then we will argue that $\cM$ halts.
    For every vertex $v \in V$,
    because $\cG \models \varphi_1(v)$, $v$ is in exactly one of $S$, $T$, $C_0$, $C_1$, or $Q_q$ for $1 \le q \le n$.
    We say that $v$ is an $S$ vertex if it is in $S$, and
    $u$ is an $S$ outgoing neighbor of $v$ if $u$ is an $S$ vertex and $u$ is an outgoing neighbor of $v$.
    We use a similar naming convention in the other cases.

    Recall that a configuration of $\cM$ is a tuple $\tuple{q, c_0, c_1}$, where $1 \le q \le n$ and $c_0, c_1 \in \bbN$. 
    For a $Q_q$ vertex $v$, where $1 \le q \le n$,
    the configuration realized by $v$, denoted by $\conf(v)$, is defined as $\tuple{q, c_0, c_1}$,
    where $c_0 = \abs{\setc{u \in \cN_{\out}(v)}{\cG \models C_0(u)}}$ and
    $c_1 = \abs{\setc{u \in \cN_{\out}(v)}{\cG \models C_1(u)}}$. 
    We first make a claim about the number of incoming and outgoing neighbors of $v$, giving conditions on this number which will depend on the type of $v$.
    \begin{itemize}
        \item
        If $v$ is an $S$ vertex,
        then $\cG \models \varphi_2(v)$,
        which implies that $v$ has only one outgoing neighbor $u$, where $u$ is a $Q$ vertex.
        Moreover, the configuration of $u$ is $\tuple{1, 0, 0}$.

        \item
        If $v$ is a $T$, $C_1$, or $C_2$ vertex,
        then $\cG \models \varphi_3(v)$,
        which implies that $v$ has only one outgoing neighbor $u$, where $u$ is an $S$ vertex.

        \item
        If $v$ is a $Q_q$ vertex, for $1 \le q \le n$,
        then $\cG \models \varphi_4(v)$,
        which implies that $v$ has exactly one $Q$ or $T$ outgoing neighbor,
        and exactly one $S$ outgoing neighbor.
        Additionally, $v$ has only one incoming neighbor $u$, where $u$ is either a $Q$ or $S$ vertex.

        Let $u$ be the $Q$ or $T$ outgoing neighbor of $v$,
        $\conf(v) = \tuple{q, c_1, c_2}$, and
        $\conf(u) = \tuple{q', c_1', c_2'}$.
        Note that
        if $\cG \models \psi^{\itzero}_{i}(v)$,
        then $c_i = 0$;
        if $\cG \models \psi^{\itdiff}_{i, \delta}(v)$,
        then $c'_i - c_i = \delta$;
        if $\cG \models \psi^{\itsucc}_j(v)$,
        then $q' = j$.
        We now establish a relationship between $\conf(v)$ and $\conf(u)$, depending on the $q^{th}$ instruction of $\cM$. 
        Note that $\cG \models \psi_q(v)$.
        \begin{itemize}
            \item If the $q^{th}$ instruction is $\tcminc{i}$,
            then 
            $\cG \models \psi^{\itsucc}_{q+1}(v)$,
            $\cG \models \psi^{\itdiff}_{i, 1}(v)$, and
            $\cG \models \psi^{\itdiff}_{1-i, 0}(v)$.
            Thus
            $q' = q+1$, $c'_i = c_i+1$, and $c'_{1-i} = c_{1-i}$.
            Hence $\conf(u)$ is the successor configuration of $\conf(v)$.
            
            \item If the $q^{th}$ instruction is $\tcmif{i}{j}$,
            if $c_i = 0$,
            then
            $\cG \models \psi^{\itsucc}_{j}(v)$,
            $\cG \models \psi^{\itdiff}_{0, 0}(v)$, and
            $\cG \models \psi^{\itdiff}_{1, 0}(v)$.
            Thus $q' = j$, $c'_0 = c_0$, and $c'_1 = c_1$.
            Otherwise, if $c_i > 0$,
            then
            $\cG \models \psi^{\itsucc}_{q+1}(v)$,
            $\cG \models \psi^{\itdiff}_{i, -1}(v)$, and
            $\cG \models \psi^{\itdiff}_{1-i, 0}(v)$.
            Thus $q' = q+1$, $c'_i = c_i-1$, and $c'_{1-i} = c_{1-i}$.
            Hence $\conf(u)$ is the successor configuration of $\conf(v)$.
            
            \item If the $q^{th}$ instruction is $\tcmhalt$,
            then $\cG \models \left(\presby{E(v, y) \land T(y)} = 1\right)$,
            which implies that $u$ is a $T$ vertex.
        \end{itemize}

    \end{itemize}
    
    We now claim that $\cM$ halts.
    Our definition of graph requires the vertex set 
    $V$ to be nonempty.
    For a vertex in $V$, 
    if it is not a $S$ vertex,
    then it has an $S$ outgoing neighbor.
    Therefore there is at least one $S$ vertex in $V$.
    Let $v_s$ be an $S$ vertex in $V$.
    By the claim above,
    it has exactly one outgoing neighbor $v_1$,
    and for this node we have $\conf(v_1) = \tuple{1, 0, 0}$. 
    
    We demonstrate that there exists a finite sequence $v_1, v_2, \ldots, v_{\ell+1}$,
    where for $2 \le i \le \ell$, $v_{i}$ is a $Q$ outgoing neighbor of $v_{i-1}$,
    and $v_{\ell+1}$ is a $T$ outgoing neighbor of $v_\ell$.
    Note that each $Q$ vertex has exactly one $Q$ or $T$ outgoing neighbor.
    If there were no such finite sequence, then the length of the maximal sequence of $Q$ nodes  starting from $v_1$ is infinite. Then, since $\cG$ is a finite graph,
    there must exist vertices $v_j$ and $v_{j'}$ in the sequence such that $j \neq j'$ and $v_j = v_{j'}$. 
    Without loss of generality, let $j$ and $j'$ be the smallest such pair with $j < j'$.
    If $j = 1$, then $v_1$ would have two incoming neighbors $v_s$ and $v_{j-1}$,
    but $v_1$ has only one incoming neighbor. Hence we have a contradiction.
    If $j > 1$, then $v_j$ would have two incoming neighbors $v_{j-1}$ and $v_{j'-1}$,
    which again leads to a contradiction. Thus we can always find such $T$ vertex $v_{\ell+1}$.

    By the claim above, for $2 \le i \le \ell$,
    $\conf(v_{i})$ is the successor configuration of $\conf(v_{i-1})$.
    In addition, $\conf(v_1) = \tuple{1, 0, 0}$ and $\conf(v_\ell)$ is a halt configuration.
    Therefore $\conf(v_1), \conf(v_2), \ldots, \conf(v_\ell)$
    is a computation of $\cM$. Because the length of the computation is finite, $\cM$ halts.
    
    \textbf{\underline{Suppose that $\cM$ halts}}.
    Let $\tuple{q_0, c_{0, 0}, c_{1, 0}}, \cdots, \tuple{q_\ell, c_{0, \ell}, c_{1, \ell}}$ be the computation of $\cM$.
    We define the graph $\cG$ as follows.
    Let $S := \set{v_s}$, $S := \set{v_t}$,
    $C_0 := \setc{v_{0, i, j}}{1 \le i \le \ell, 1 \le j \le c_{0, i}}$,
    $C_1 := \setc{v_{1, i, j}}{1 \le i \le \ell, 1 \le j \le c_{1, i}}$, and
    $Q_j := \setc{v_i}{1 \le i \le \ell, q_i = j}$ for $1 \le j \le n$.
    The set of vertices is $S \cup C_0 \cup C_1 \bigcup_{j \in \intsinterval{n}} Q_j$, and
    the set of edges is
    \begin{equation*}
        \begin{aligned}
            E\ :=\ &\set{(v_s, v_1), (v_\ell, v_t)} \cup
            \setc{(v_i, v_{i+1})}{1 \le i \le \ell-1} \cup \\
            &\setc{(v_i, v_{0, i, j})}{1 \le i \le \ell, 1 \le j \le c_{0, i}} \cup
            \setc{(v_i, v_{1, i, j})}{1 \le i \le \ell, 1 \le j \le c_{1, i}}.
        \end{aligned}
    \end{equation*}
    It is then straightforward to check that $\cG$ is a finite graph,
    as depicted in Fig.~\ref{fig:two_counter_machine},
    and that it is a model of $\forall x\ \Psi_\cM(x)$.
\end{proof}

Since the halting problem of two-counter machines is undecidable,
and $\blMtwoPtwo$ formulas can be translted to $\blreluGNN$s,
we obtain the undecidability of the universal satisfibility problem of $\blreluGNN$, by reduction from $\blMtwoPtwo$.

\begin{theorem}\label{thm:blgnn_unbounded_undec}
    The universal satisfiability problem of $\blreluGNN$s is undecidable.
\end{theorem}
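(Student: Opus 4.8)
The plan is to obtain Theorem~\ref{thm:blgnn_unbounded_undec} by composing the two results that immediately precede it, reducing the halting problem for two-counter machines to the universal satisfiability problem for $\blreluGNN$s. Given a two-counter machine $\cM$ with $n$ instructions, I would first invoke Lemma~\ref{lemma:tcm_to_mtwoptwo} to produce the $(n+5)$-$\blMtwoPtwo$ formula $\Psi_\cM(x)$ with the property that $\cM$ halts if and only if $\forall x\ \Psi_\cM(x)$ is finitely satisfiable, i.e.\ if and only if there is a finite $(n+5)$-graph $\cG$ with $\cG \models \Psi_\cM(v)$ for every vertex $v$ of $\cG$.

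The second step is to apply Theorem~\ref{thm:logic_to_unbounded_gnn} to $\Psi_\cM(x)$, yielding an $(n+5)$-$\blreluGNN$ $\cA_{\Psi_\cM}$ that is equivalent to $\Psi_\cM(x)$: for every $(n+5)$-graph $\cG$ and every vertex $v$ in $\cG$, $\cA_{\Psi_\cM}$ accepts $\tuple{\cG, v}$ exactly when $\cG, v \models \Psi_\cM(x)$. Since both Lemma~\ref{lemma:tcm_to_mtwoptwo} and Theorem~\ref{thm:logic_to_unbounded_gnn} are effective, $\cA_{\Psi_\cM}$ is computable from $\cM$, so the whole construction is a genuine many-one reduction.

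The third step is a short observation tying the two together. By definition, $\cA_{\Psi_\cM}$ is universally satisfiable precisely when there exists a graph $\cG$ such that $\cA_{\Psi_\cM}$ accepts $\tuple{\cG, v}$ for every vertex $v$; by the equivalence just established this holds precisely when $\cG \models \Psi_\cM(v)$ for every $v$, which is exactly finite satisfiability of $\forall x\ \Psi_\cM(x)$. (Here one only needs to note that all graphs in our setting are finite, so no separate bound on model size is required, and that the notion of ``equivalent'' from Theorem~\ref{thm:logic_to_unbounded_gnn} propagates to the universal quantification over vertices in the obvious way.) Chaining the three equivalences gives: $\cA_{\Psi_\cM}$ is universally satisfiable iff $\cM$ halts.

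Finally, since the halting problem for two-counter machines is undecidable~\cite{minsky}, the reduction above shows that the universal satisfiability problem for $\blreluGNN$s is undecidable, which is Theorem~\ref{thm:blgnn_unbounded_undec}. I do not expect any real obstacle in this proof: all of the technical difficulty has been absorbed into Lemma~\ref{lemma:tcm_to_mtwoptwo} (the encoding of computations as directed graphs) and into the translation of Theorem~\ref{thm:logic_to_unbounded_gnn}; the only point deserving a sentence of care is the straightforward identification of ``universal satisfiability of the GNN'' with ``finite satisfiability of $\forall x\ \Psi_\cM(x)$'' via the equivalence.
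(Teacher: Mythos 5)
Your proposal is correct and follows exactly the same route as the paper: use Lemma~\ref{lemma:tcm_to_mtwoptwo} to reduce the halting problem for two-counter machines to the (finite) universal satisfiability of $\blMtwoPtwo$, then apply the effective translation of Theorem~\ref{thm:logic_to_unbounded_gnn} to transfer this to $\blreluGNN$s. Your side remark that all graphs in this paper are finite, so no separate size bound is needed, is exactly the right point to check and matches the paper's definitions.
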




\subsection{Decidability of satisfiability for ``modal'' GNNs with unbounded activation functions} \label{subsec:dec_unbounded}

We can see that even simple unbounded activation functions produce unbounded spectra,
so the proof technique in the truncated case certainly will not work.

\begin{remark}
    Even simple GNNs may have unbounded spectra. 
    For example, let $\cA$ be a $1$-layer $1$-$\olreluGNN$ defined as follows:
    the dimensions are $\gnndim{0} = \gnndim{1} = 1$;
    the coefficient matrix $\coefC{1}$ is a zero matrix;
    $\left(\coefA{1}{\out}\right)_{1,1} = 1$;
    the bias vector $\coefb{1}$ is a zero vector.
    It is not difficult to see that $\feat{1}_1(v)$ is the number of out-neighbors of $v$.
    Hence, the $1$-spectrum of $\cA$ is the set of natural numbers.
\end{remark}

We present a decidability result for the ``modal version'': aggregation over nodes connected by outgoing edges only, within a directed graph:
\begin{theorem} \label{thm:outputonlydecidability}
    The satisfiability problem of $\olpwGNN$s is $\nexp$-complete,
    and it becomes $\np$-complete when the number of layers is fixed.
\end{theorem}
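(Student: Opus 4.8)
The plan is to reduce the satisfiability problem for $\olpwGNN$s to the satisfiability problem for existential Presburger sentences (which is in $\np$), producing a Presburger sentence of size at most exponential in the description length $\|\cA\|$ in general, and polynomial when the number of layers $L$ is fixed; this gives the $\nexp$ and $\np$ upper bounds (and in particular decidability, in contrast with the bidirectional cases). Since $\cA$ is outgoing-only and local, an out-unravelling argument --- like Lemma~\ref{lemma:unravelling}, but simpler, as in-neighbors are irrelevant and no ``proper path'' bookkeeping is needed --- shows that $\cA$ is satisfiable iff it accepts the root of some $n$-tree of depth at most $L$. Hence, for a node $v$ at depth $L-\ell$, the history $\hist{\ell}(v)$ is a fixed, layer-by-layer computable function $G_\ell$ of $v$'s label and of $\mathbf{g}_v := \sum_{u \in \nbr{\out}(v)} \hist{\ell-1}(u)$, and $\mathbf{g}_v$ ranges exactly over $\mathrm{SumSet}(\hsp{\ell-1})$, the set of finite sums of elements of the history-space $\hsp{\ell-1}$.

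I would then build, by induction on $\ell$, an existential Presburger formula $\Phi_\ell$ defining $\hsp{\ell}$ --- after rescaling everything by a computable, exponentially bounded common denominator so as to work over $\bbZ$. The base case $\hsp{0} = \set{0,1}^n$ is defined compactly by $\bigwedge_{i} (0 \le y_i \le 1)$, \emph{not} the naive $2^n$-way disjunction. In the inductive step, $\Phi_\ell(\bfh)$ asserts the existence of a label, a sum-vector $\mathbf{g}$ with $\Phi_{\ell-1}^{\oplus}(\mathbf{g})$, and intermediate feature vectors constrained layer-by-layer according to $G_\ell$; each clause ``$\feat{j} = \act{j}(\text{affine expression})$'' contributes only one case split over the polynomially many linear pieces of $\act{j}$ plus one denominator clearing, so the activation compositions do not blow up. The delicate ingredient is $\Phi_{\ell-1}^{\oplus}$, defining $\mathrm{SumSet}(\hsp{\ell-1})$: I would obtain it from $\Phi_{\ell-1}$ using closure of existentially-Presburger-definable sets under the commutative iteration $S \mapsto \mathrm{SumSet}(S)$ with only a $\mathrm{poly}(\|\cA\|)$ multiplicative blow-up --- proved by bounding the number of distinct summand-types via the integer Carath\'eodory estimate (Lemma~\ref{lemma:ilp}), while handling the ``unbounded'' part directly through the recession cone of $\mathrm{SumSet}(S)$ (a polyhedral cone with small description) rather than enumerating a Hilbert basis. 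This keeps $\|\Phi_\ell\| \le \mathrm{poly}(\|\cA\|) \cdot \|\Phi_{\ell-1}\|$, so $\|\Phi_L\|$ is $2^{\mathrm{poly}(\|\cA\|)}$ in general and $\mathrm{poly}(\|\cA\|)$ for fixed $L$; finally $\cA$ is satisfiable iff the existential Presburger sentence $\exists \bfh\, (\Phi_L(\bfh) \land (\bfh[L])_1 \ge 1/2)$ holds, which yields the stated upper bounds.

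For the lower bounds, $\np$-hardness --- already with $L = 2$ --- is inherited from the reduction from $3$-CNF to $2$-layer $\oltrreluGNN$s given earlier, since the truncated $\relu$ is piecewise linear. For $\nexp$-hardness I would reduce from acceptance by a nondeterministic $2^{p(n)}$-time Turing machine: its $2^{p(n)}$-step computation tableau, equipped with edges ``$(t,j) \to (t{-}1,j{-}1),(t{-}1,j),(t{-}1,j{+}1)$'', is not a tree, but its out-unravelling is an $n$-tree of depth $\Theta(p(n))$ and exponential size; the model-builder ``guesses'' such a tree, and the polynomial-size $\olpwGNN$ uses its $\Theta(p(n))$ layers to verify local tableau consistency (transition rules, boundary, initial, and accepting conditions), exploiting that unbounded piecewise-linear activations can add and compare the cell/row indices (which range over $\set{1,\ldots,2^{p(n)}}$) and perform the bit tests needed to recognise the initial configuration. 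This is exactly where unboundedness is essential, in contrast with the $\pspace$-completeness of the truncated case (Theorem~\ref{thm:pspace}).

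I expect the main obstacle to be the $\mathrm{SumSet}$-closure step with controlled blow-up: passing through semilinear normal form already costs an exponential at $\hsp{0}$ and would compound to a tower over the layers, wrecking both bounds, so the argument must stay within existential Presburger formulas and establish the per-layer polynomial blow-up carefully, combining the integer-programming bound of Lemma~\ref{lemma:ilp} with a clean treatment of the periodic (cone/lattice) component. A secondary difficulty is engineering the $\nexp$-hardness reduction so that all the global arithmetic it needs --- cell addressing and initial-configuration recognition over exponentially many positions --- is realisable by the GNN's restricted local, outgoing-only, piecewise-linear computation on a tree.
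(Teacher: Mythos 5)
Your high-level idea matches the paper --- represent the history-space of $\cA$ at each layer by a Presburger-like formula, built inductively, and reduce GNN satisfiability to satisfiability of that formula --- but the specific route you take diverges and contains two genuine gaps.

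\textbf{The $\mathrm{SumSet}$ step is a gap, and it is the crux.} The paper works inside \emph{existential Presburger arithmetic with stars} ($\epastar$), a formalism in which the operation $S \mapsto \mathrm{SumSet}(S)$ is a primitive (the Kleene star), and then cites the known result (\cite{kleenestar}, Theorem~III.1) that $\epastar$ satisfiability is $\nexp$-complete and $\np$-complete for fixed star height. The formula defining $\hsp{\ell}$ has polynomial size and star height $\ell$, so both upper bounds fall out immediately. You instead want to stay inside ordinary existential Presburger and claim that each application of $\mathrm{SumSet}$ can be eliminated with only a $\mathrm{poly}(\|\cA\|)$ multiplicative blow-up in formula size. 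No argument for this is given beyond a gesture at the integer Carath\'eodory estimate and ``the recession cone of $\mathrm{SumSet}(S)$ has small description.'' Neither supports the claim. Carath\'eodory bounds the number of \emph{non-zero multiplicities} in a sum once a semilinear decomposition is fixed, but the number of distinct linear pieces in the decomposition of $S$ can be exponential in $\|\Phi_{\ell-1}\|$, and selecting a polynomial-size sub-family of them is a combinatorial choice that existential Presburger has no way to quantify over without already writing them all down. Worse, expressing ``$k_i$ copies of an element $a_i$ satisfying $\Phi_{\ell-1}(a_i)$'' inherently requires a variable-times-variable product $k_i \cdot a_i$, which Presburger cannot express --- this is exactly the gap the Kleene star is introduced to close. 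And the recession cone of $\mathrm{SumSet}(S)$ is the cone generated by \emph{all} periods of \emph{all} linear pieces of $S$; that is a V-description of exponential size whose H-description can be larger still, so it does not have a ``small description.'' Since the $\np$-completeness of fixed-star-height $\epastar$ is proved by a direct witness-guessing procedure rather than by star elimination, it also does not imply your polynomial per-layer elimination. As written, the upper bound argument does not go through; the paper's use of $\epastar$ as a black box is what makes it work.

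\textbf{The $\nexp$-hardness reduction has a depth mismatch.} You propose to reduce from acceptance by a nondeterministic $2^{p(n)}$-time Turing machine and claim the out-unravelling of the $2^{p(n)} \times 2^{p(n)}$ tableau is a tree of depth $\Theta(p(n))$. It isn't: with predecessor edges $(t,j) \to (t-1,j-1),(t-1,j),(t-1,j+1)$, the out-unravelling from a cell at time $t$ has depth exactly $t$, i.e.\ up to $2^{p(n)}$. A local, $L$-layer $\olpwGNN$ can only ``see'' to depth $L$, and $L$ must be polynomial for a polynomial-size reduction, so it cannot verify a computation of exponentially many steps by walking down the unravelled tableau. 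Some NEXPTIME-hardness proofs in description logic circumvent this with binary-counting gadgets that build an exponential tree of \emph{polynomial depth} whose leaves index cells, but that is a different and substantially harder construction than what you sketch, and you do not address how cross-leaf consistency would be checked. The paper sidesteps all of this by reducing \emph{from} $\epastar$ satisfiability (already known $\nexp$-hard) to $\olreluGNN$ satisfiability via the $\epastar$ normal form, which is much more direct. Your $\np$-hardness observation for fixed $L$ --- inherited from the $2$-layer $\threeSAT$ encoding since $\trrelu$ is piecewise linear --- is correct and matches the paper.
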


Analogously to what we did in the eventually constant case, we describe all the possible values of a given activation function.
Unlike in the eventually constant case, this will not be a finite set, but it will be \emph{semi-linear}: that is, describable using a formula of Presburger arithmetic.
Furthermore, the this set is definable by a polynomial size \emph{existential Presburger arithmetic with stars} ($\epastar$) formula, which was studied by~\cite{kleenestar}.

\begin{definition}
    We give the syntax of \emph{existential Presburger arithmetic with stars} ($\epastar$).
    \begin{itemize}
        \item For $m, n \in \bbN^+$, $\bfx$ be a tuple of $n$ variables,
        $\bfA \in \bbZ^{m \times n}$, and $\bfb \in \bbZ^{m}$,
        $\bfA\bfx \ge \bfb$ is an $\epastar$ formula.
        \item If $\varphi(\bfx)$ and $\psi(\bfy)$
        are $\epastar$ formulas,
        then so is
        $\varphi(\bfx) \land \psi(\bfy)$ and 
        $\varphi(\bfx) \lor \psi(\bfy)$.
        \item If $\varphi(\bfx)$ is an $\epastar$ formula and $\bfy \subseteq \bfx$,
        then so is $\exists \bfy\ \varphi(\bfx)$.
        \item If $\varphi(\bfx)$ is an $\epastar$ formula,
        then so is $\psi(\bfx) := \mstar{\varphi(\bfx)}$.
    \end{itemize}
    The semantics of the linear inequalities, Boolean connectives, and existential quantifiers is as usual,
    and the semantics of the Kleene star is defined as follows.
    Considering $\epastar$ formula $\psi(\bfx) := \mstar{\varphi(\bfx)}$ and $\bfa \in \bbZ^n$,
    we say that $\psi(\bfa)$ holds if there exists
    $\bfa_1, \ldots, \bfa_k \in \bbZ^n$ such that
    $\varphi\left(\bfa_t\right)$ holds and
    $\bfa = \sum_{t \in \intsinterval{k}} \bfa_i$.
\end{definition}

\begin{definition} \label{def:starheight}
    The \emph{star height} of an $\epastar$ formula
    is the depth of nesting of the Kleene star operator. It is zero for formulas that
    have no Kleene star, and the star height of the formula $\mstar{\varphi(\bfx)}$ is the star height
    of the formula $\varphi(\bfx)$ plus 1.
\end{definition}

\begin{theorem}[\cite{kleenestar}, Theorem III.1.] \label{thm:epastar}
    The satisfiability problem of $\epastar$ is $\nexp$-complete,
    and it becomes $\np$-complete for fixed star height.
\end{theorem}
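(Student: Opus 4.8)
The plan is to prove the two bounds separately. For membership, I would show that satisfiability of an $\epastar$ formula $\Phi$ of size $s$ and star height $h$ reduces, in nondeterministic time $2^{O(h\log s)}$, to satisfiability of a star-free existential Presburger formula; since the latter problem is in $\np$ in the size of the formula, this yields $\nexp$ in general and $\np$ when $h$ is a constant (as $h\le s$ always, $2^{O(h\log s)}\le 2^{\mathrm{poly}(s)}$). The reduction eliminates Kleene stars from the inside out. For an innermost starred subformula $\mstar{\psi(\bfx)}$ with $\psi$ star-free, $\psi$ is a positive existential Presburger formula, hence defines a semilinear set that admits a representation as a union of linear sets whose base and period vectors all have magnitude singly exponential in $|\psi|$; the set defined by $\mstar{\psi}$ is then, modulo the bookkeeping of which periods attach to which bases, the integer cone generated by these small vectors, so by an integer Carath\'eodory bound (in the spirit of Lemma~\ref{lemma:ilp}) a membership $\bfx\in\mstar{\psi}$ is witnessed by a linear combination of only polynomially many of them, with possibly large multiplicities. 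One thus replaces $\mstar{\psi}(\bfx)$ by a star-free existential Presburger formula that existentially picks the generators (checking each satisfies $\psi$ or is a witnessed recession direction of $\psi$) and their coefficients and equates $\bfx$ with the resulting sum.

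The hard part will be controlling the blow-up so that it stays singly exponential over all $h$ levels of nesting rather than compounding into a tower: a naive layer-by-layer substitution into an ever-growing formula composes the polynomial blow-up $h$ times and is therefore far too lossy. Instead I would fix a normal form for $\epastar$ formulas and reason directly about a ``decorated proof'' --- a satisfying assignment together with, at each starred node, the polynomially many generators and coefficients of its witnessing decomposition --- and prove, by a counting argument over the parse tree, that such a decoration has at most $2^{O(h\log s)}$ components, each labeled by numbers of $\mathrm{poly}(s)$ bits, so it can be guessed and verified within the claimed resources. Making the magnitude and summand-count bounds survive nesting and the intervening intersections is, I expect, the crux of the whole argument; that the fragment is negation-free is what keeps it tractable.

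For the lower bound, $\np$-hardness is immediate and already holds at star height $0$, since a linear system $\bfA\bfx\ge\bfb$ is an $\epastar$ formula and its satisfiability is $\np$-hard. For $\nexp$-hardness I would reduce from a canonical $\nexp$-complete problem such as a succinct variant of $3$-SAT or an exponential tiling problem. The reduction uses a chain of $\Theta(n)$ nested stars to build tuples whose component multiplicities range over a set of size $2^{\Theta(n)}$ --- each level roughly doubling or squaring the reachable range through a cone-plus-intersection gadget of polynomial size --- so that these values act as addresses for the $2^{\Theta(n)}$ variables or grid cells of the succinct instance, while ordinary Presburger constraints, iterated across all addresses by further starred subformulas, enforce local consistency. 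This is a polynomial-time reduction since each of the $\Theta(n)$ levels is a polynomial-size gadget, and it dovetails with the dichotomy above: capping the star height at a constant caps the addressable range at a polynomial, leaving only $\np$. Designing the nested-star counter so that the alternation of cone and intersection operations faithfully realizes binary addressing without the gadgets exploding is the main technical obstacle on this side.
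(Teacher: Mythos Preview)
The paper does not prove this theorem: it is stated with attribution to \cite{kleenestar}, Theorem III.1, and is used as a black box in the subsequent arguments (the upper bound for $\olpwGNN$ satisfiability via Theorem~\ref{thm:semilinearoutgoingonlyunbounded}, and the lower bound via Theorem~\ref{thm:epastar_to_gnn}). So there is no ``paper's own proof'' to compare against --- your proposal is attempting to reprove a cited result rather than one the paper establishes.

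That said, your sketch is broadly in the spirit of the actual proof in the cited work. The upper bound there also proceeds by a small-witness argument for the star operator using Carath\'eodory-type bounds on semilinear representations, and you have correctly identified the crux: preventing the blow-up from compounding into a tower across nested stars. Your plan to guess a global ``decorated proof'' of size $2^{O(h\log s)}$ rather than iteratively rewriting the formula is the right instinct. The lower bound in \cite{kleenestar} is indeed via a succinct/tiling-style reduction exploiting nested stars to build exponentially large address spaces, so your outline matches the shape of that argument as well. But since the present paper treats this as an imported result, no proof is expected here.
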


We say that a set $\cC \subseteq \bbQ^n$ is defined by the $\epastar$ formula $\varphi(\bfx)$ with common denominator $D \in \bbN$ if the following properties hold.
\begin{itemize}
    \item For every $\bfa \in \cC$, $D\bfa \in \bbZ^n$.
    \item $\bfa \in \cC$ if and only if $\varphi(D\bfa)$ holds.
\end{itemize}

\begin{definition}
    A \emph{piecewise linear functions} is defined
    by a finite sequence $(I_1, f_1), \ldots, (I_k,f_k)$
    where $\bigcup_{t \in \intsinterval{k}} I_t$ is a partition of $\bbQ$ into $k$ intervals
    and each $f_t: \bbQ \to \bbQ$ is an affine transformation.
    For every $q \in \bbQ$,
    $q$ is mapped to $f_t(q)$ if $q$ is in the interval $I_t$.
\end{definition}
For every piecewise linear function $f$ and $D \in \bbN$,
it is not difficult to show that there exist an $\epastar$ formula $\Phi_{f, D}(x, y)$,
called the \emph{characteristic formula of $f$ with common denominator $D$},
such that for every $p, q \in \bbZ$, $\Phi_{f, D}(p, q)$ holds if and only if $q / D = f(p / D)$.

We now state our representation theorem, which immediately implies 
the upper bound Theorem~\ref{thm:outputonlydecidability}.
Recall that given a GNN $\cA$,
for graph $\cG$ and vertex $v$ in $\cG$,
for $0 \leq \ell \leq L$,
the $\ell$-history of $v$ (w.r.t. $\cA$), defined in Definition~\ref{def:hist},
is the tuple that collects the first $\ell+1$ feature vectors of $v$.
The \emph{$\ell$-history-space} of $\cA$,
denoted by $\hsp{\ell}$,
is the set that collects $\ell$-history over all possible graph and vertices.

We contrast the theorem with Theorem \ref{thm:computespectrum}. There we could only overapproximate the spectrum, because we could not determine which numbers from previously layers were simultaneously realizable.
By inductively maintaining the entire history at each node, we have enough information to resolve these questions of consistency, and compute an \emph{exact} representation of the semantic object, not just an overapproximation.

\begin{definition}
    For an $\olpwGNN$ $\cA$, the \emph{common denominator of $\cA$}, denoted by $D_\cA$,
    is the least common denominator of denominator of coefficients matrices and bias vectors in $\cA$ and coefficients in piecewise linear activation functions.
\end{definition}

\begin{theorem} \label{thm:semilinearoutgoingonlyunbounded}
    For every $\olpwGNN$ $\cA$ and $0 \le \ell \le L$,
    the $\ell$-history-space $\hsp{\ell}$ of $\cA$ is $\epastar$ definable with common denominator $D_\cA$.
    Moreover, the size of the formula is polynomial in the description of $\cA$,
    and the star height of the formula is the number of layers of $\cA$.
\end{theorem}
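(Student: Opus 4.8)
The plan is to prove Theorem~\ref{thm:semilinearoutgoingonlyunbounded} by induction on the layers $\ell$, constructing the $\epastar$ formula $\Theta^{(\ell)}(\bfz)$ defining $\hsp{\ell}$ (with the convention that $\bfz$ is scaled by the common denominator $D_\cA$) along the way. The base case $\ell = 0$ is trivial: $\hsp{0}$ is just $\set{0,1}^{n}$, which is defined by a quantifier-free formula of star height $0$ and polynomial size. For the inductive step, the key observation is that since $\cA$ is outgoing-only, the $\ell$-feature of a vertex $v$ depends only on $\feat{\ell-1}(v)$ and on the \emph{multiset} of $(\ell-1)$-features of the out-neighbors of $v$ --- and crucially, in an outgoing-only GNN the out-neighbors of $v$ can have \emph{arbitrary} subtrees hanging off them, so any multiset of $(\ell-1)$-histories drawn from $\hsp{\ell-1}$ is realizable at the out-neighbors simultaneously. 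This is exactly the place where the outgoing-only restriction makes the history-space an \emph{exact} semilinear set rather than merely an overapproximation as in Theorem~\ref{thm:computespectrum}: there is no consistency constraint linking the out-neighbors to each other or back to $v$.

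The formula $\Theta^{(\ell)}(\bfz)$ will therefore be built as follows. A candidate $\ell$-history $\bfh = (\bfh[0],\dots,\bfh[\ell])$ is in $\hsp{\ell}$ iff there is some value $\bfa = \sum_{t} \bfa_t$ where each $\bfa_t$ is a valid $(\ell-1)$-history in $\hsp{\ell-1}$ (this aggregate $\bfa$ plays the role of $\sum_{u\in\nbr{\out}(v)}\hist{\ell-1}(u)$), together with a ``self-history'' $\bfg \in \hsp{\ell-1}$ with $\bfg = (\bfh[0],\dots,\bfh[\ell-1])$, such that for each layer $1 \le t \le \ell$ the entry $\bfh[t]$ is obtained by applying the activation $\act{t}$ entrywise to $\coefC{t}\bfh[t-1] + \coefA{t}{\out}\,\bfa[t-1] + \coefb{t}$. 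Concretely:
\begin{equation*}
    \Theta^{(\ell)}(\bfz)\ :=\ \exists \bfa\ \Big( \mstar{\Theta^{(\ell-1)}(\bfy)}\big[\bfy \mapsto \bfa\big]\ \land\ \Theta^{(\ell-1)}\big((\bfz[0],\dots,\bfz[\ell-1])\big)\ \land\ \bigwedge_{t \in \intsinterval{\ell}}\ \bigwedge_{i \in \intsinterval{\gnndim{t}}}\ \Phi_{\act{t}, D_\cA}\big(D_\cA\cdot\mathrm{lin}_{t,i}(\bfz,\bfa),\, D_\cA\cdot\bfz[t]_i\big)\Big),
\end{equation*}
where $\mathrm{lin}_{t,i}(\bfz,\bfa)$ abbreviates the $i$-th coordinate of $\coefC{t}\bfz[t-1] + \coefA{t}{\out}\bfa[t-1] + \coefb{t}$ and $\Phi_{\act{t},D_\cA}$ is the characteristic formula of the piecewise linear function $\act{t}$. (Technically one must also project out the aggregate-value parts of $\bfa$ beyond those actually needed, and handle the scaling-by-$D_\cA$ bookkeeping so all coefficients are integers; these are routine.) The $\mstar{\cdot}$ over $\Theta^{(\ell-1)}$ is exactly what captures ``an arbitrary finite sum of realizable $(\ell-1)$-histories,'' which is correct precisely because there is no bound on out-degree and no side constraint. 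Note that the characteristic formula $\Phi_{\act{t},D_\cA}$ of a piecewise linear function only needs a bounded-star-height (indeed star-height $0$, just a disjunction over the finitely many intervals with linear equalities and inequalities) formula, so it does not contribute to star height.

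For the complexity bookkeeping: each inductive step wraps one copy of $\Theta^{(\ell-1)}$ under a Kleene star, adds one unstarred copy of $\Theta^{(\ell-1)}$, and conjoins polynomially many copies of $\Phi_{\act{t},D_\cA}$ plus an existential quantifier block. Naively this doubles the formula size each layer, so I would instead share the subformula $\Theta^{(\ell-1)}$ via fresh existentially quantified ``macro'' variables (or simply observe that the DAG representation of the formula is polynomial, which is what the $\epastar$ decision procedure of~\cite{kleenestar} operates on) --- this keeps the size polynomial in the description of $\cA$. The star height increases by exactly one per layer (the new star is applied to a formula of star height $\ell-1$), giving star height $L$ for $\hsp{L}$, matching the statement. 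Then Theorem~\ref{thm:outputonlydecidability}'s upper bound follows immediately: to decide satisfiability of $\cA$ we test satisfiability of $\Theta^{(L)}(\bfz) \land (\bfz[L]_1 \ge D_\cA/2)$ (adjusting for the $D_\cA$-scaling of the threshold $1/2$), which by Theorem~\ref{thm:epastar} is in $\nexp$, and in $\np$ when $L$ is fixed.

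The main obstacle --- and the part requiring the most care --- is the \emph{exactness} of the construction, i.e.\ proving that every $\bfh$ satisfying $\Theta^{(\ell)}$ is genuinely realized by some graph and vertex. The ``$\subseteq$'' direction (every realizable history satisfies the formula) is a direct unwinding of the feature-computation definition together with the induction hypothesis. The ``$\supseteq$'' direction is where I must invoke the structural freedom of outgoing-only GNNs: given a decomposition $\bfa = \sum_{t \in \intsinterval{k}}\bfa_t$ with each $\bfa_t \in \hsp{\ell-1}$ realized by $(\cG_t, w_t)$, and the self-history $\bfg \in \hsp{\ell-1}$ realized by $(\cG_0, v_0)$, I build a witness by taking $\cG_0$, adding for each $t$ a disjoint copy of $\cG_t$, and adding an edge from $v_0$ to the corresponding $w_t$; one then checks by induction on layers $\le \ell$ that $\hist{\ell-1}$ of each new out-neighbor is unchanged (because they gain no new \emph{incoming}-relevant structure --- again, outgoing-only is essential here, since an edge $v_0 \to w_t$ does not affect $w_t$'s own feature computation) and that $v_0$ now has the right aggregate, hence the right $\ell$-feature. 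I would state this gluing argument as a separate lemma. A secondary subtlety is making sure the ``self'' history $\bfg$ and the target $\bfh$ are forced to agree on coordinates $0,\dots,\ell-1$ and that the base-feature coordinate $\bfz[0]$ is consistent between them; this is handled by the explicit conjunct $\Theta^{(\ell-1)}((\bfz[0],\dots,\bfz[\ell-1]))$ in the formula above, which pins down all of the self-history in terms of $\bfz$.
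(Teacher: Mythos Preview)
Your overall strategy --- induction on layers with the Kleene star capturing the multiset of out-neighbor $(\ell-1)$-histories --- is the same as the paper's. But there are two connected gaps.

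First, your witness construction for the $\supseteq$ direction does not work. You take the graph $(\cG_0, v_0)$ that realizes the self-history $(\bfz[0],\ldots,\bfz[\ell-1])$ and add new out-edges from $v_0$ to the $w_t$. But $v_0$ in general already has out-neighbors in $\cG_0$ --- that is precisely how the features $\bfz[1],\ldots,\bfz[\ell-1]$ were produced at $v_0$ --- so after your surgery the aggregate at $v_0$ becomes the old aggregate \emph{plus} $\bfa$, not $\bfa$ alone. The transition constraints you wrote down then no longer match what the modified graph computes, and the resulting history at $v_0$ need not be $\bfz$.

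Second, your formula contains two copies of $\Theta^{(\ell-1)}$ per level, so its tree size is exponential in $L$; the DAG-sharing remark does not establish that the \emph{formula} is polynomial, which is what the theorem states.

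Both issues disappear once you notice that the unstarred conjunct $\Theta^{(\ell-1)}\big((\bfz[0],\ldots,\bfz[\ell-1])\big)$ is redundant. Given only that $\bfz[0]$ is a valid initial feature, that $\bfa$ decomposes as a sum of members of $\hsp{\ell-1}$, and that the transitions hold for all $t \le \ell$, you can build the witness with a \emph{fresh} vertex $v$ (not $v_0$ from some pre-existing $\cG_0$): give $v$ the colors encoding $\bfz[0]$ and make its out-neighbors exactly the roots $w_t$ of disjoint copies of the $\cG_t$. A straightforward induction on $t$ then shows $\hist{\ell}(v) = \bfz$, so the self-history being in $\hsp{\ell-1}$ is a consequence rather than a hypothesis. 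This is exactly what the paper does: its inductive formula keeps only the constant-size base constraint on $\bfz[0]$ alongside the single starred copy of the level-$(\ell-1)$ formula, yielding the fresh-vertex witness directly and a formula whose size grows by an additive polynomial amount per layer.
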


\begin{proof}
    We define the $\epastar$ formula $\varphilp{\ell}{\varx{0}, \ldots, \varx{\ell}}$ inductively.
    For the base case $\ell = 0$,
    \begin{equation*}
        \varphilp{0}{\varx{0}}\ :=\ 
        \bigwedge_{i \in \intsinterval{\gnndim{0}}} \left(\varx{0}_i = 0\ \lor\ \varx{0}_i = D_\cA\right).
    \end{equation*}
    For $1 \le \ell \le L$, let
    \begin{equation*}
        \begin{aligned}
            \tlp{t}{\varx{t}, \varx{t-1}, \vary{t}}\ :=\ 
            &\exists \varz{t-1}\ \bigwedge_{i \in \intsinterval{\gnndim{t}}} \Phi_{\act{t}, D_\cA}\left(\varz{t}_i, \varx{t}_i\right)
            \ \land\ \\
            &\left(\frac{\varz{t}}{D_\cA} = \coefC{t}\left(\frac{\varx{t-1}}{D_\cA}\right) + \coefA{t}{\out}\left(\frac{\vary{t-1}}{D_\cA}\right) + \coefb{t}\right) \\
            \varphilp{\ell}{\varx{0}, \ldots, \varx{\ell}}\ :=\ 
            &\exists \vary{0} \cdots \vary{\ell-1}\ 
            \bigwedge_{t \in \intsinterval{\ell}} \tlp{t}{\varx{t}, \varx{t-1}, \vary{t-1}}
            \ \land\ \\
            &\varphilp{0}{\varx{0}} \land \mstar{\varphil{\ell-1}\left(\vary{0}, \ldots, \vary{\ell-1}\right)},
        \end{aligned}
    \end{equation*}

    where $\Phi_{\act{\ell}, D_\cA}(x, z)$ is the characteristic formula of the piecewise linear function $\act{\ell}$ with common denominator $D_\cA$.
    Note that the size of $\varphilp{\ell}{\varx{0}, \ldots, \varx{\ell}}$ is polynomial in the description of $\cA$,
    and its star height is the number of layers of $\cA$.
    
    We now prove that $\varphilp{\ell}{\varx{0}, \ldots, \varx{\ell}}$
    defines $\hsp{\ell}$ with common denominator $D_\cA$, by induction on $\ell$.
    The base case $\ell = 0$ is trivial.
    For the induction hypothesis,
    we assume that the lemma holds for $\ell-1$.
    \textbf{\underline{Suppose that $\bfh \in \hsp{\ell}$}}.
    By the definition of $\hsp{\ell}$,
    there exists a graph $\cG$ and vertex $v \in V$ such that 
    $\bfh = \hist{\ell}(v)$.
    For $0 \le t \le \ell$, let
    \begin{equation*}
        \begin{aligned}
            \bfal{t}\ :=\ &D_\cA\bfh[t] \\
            \bfal{t}_\out\ :=\ &\sum_{u \in \nbr{\out}(v)} D_\cA\hist{\ell}(u)[t] \\
            \bfal{t}_\pre\ :=\ &
            D_\cA\left(\coefC{t}\bfh[t-1] + \coefA{t}{\out}\sum_{u \in \nbr{\out}(v)} \hist{\ell}(u)[t-1] + \coefb{t}\right).
        \end{aligned}
    \end{equation*}
    It is obvious that the following relation between
    $\bfal{t}$, $\bfal{t}_\out$, and $\bfal{t}_\pre$ holds:
    \begin{equation*}
        \frac{\bfal{t}_\pre}{D_\cA}\ =\ \coefC{t}\left(\frac{\bfal{t-1}}{D_\cA}\right) + \coefA{t}{\out}\left(\frac{\bfal{t-1}_\out}{D_\cA}\right) + \coefb{t}.
    \end{equation*}
    By the semantics of these GNNs, for $1 \le t \le \ell$, it holds that
    \begin{equation*}
        \bfh[t]\ =\ 
        \actp{t}{\coefC{t}\bfh[t-1] + \coefA{t}{\out}\sum_{u \in \nbr{\out}(v)} \hist{\ell}(u)[t-1] + \coefb{t}},
    \end{equation*}
    which implies that
    $\frac{\bfal{t}}{D_\cA} = \actp{t}{\bfal{t}_\pre / D_\cA}$.
    Since $\Phi_{\act{\ell}, D_\cA}(x, z)$ is the characteristic formula of $\act{\ell}$ with common denominator $D_\cA$,
    it follows that $\Phi_{\act{\ell}, D_\cA}\left(\bfal{t}_\pre, \bfal{t}\right)$ holds.
    Thus, it holds that $\tlp{t-1}{\bfal{t}, \bfal{t-1}, \bfal{t-1}_\out}$.

    By the induction hypothesis, $\hsp{\ell-1}$ is defined by
    $\varphilp{\ell-1}{\varx{0}, \ldots, \varx{\ell-1}}$ with common denominator $D_\cA$.
    For every $u \in \nbr{\out}(v)$, note that $\hist{\ell-1}(u) \in \hsp{\ell-1}$.
    Therefore, we have
    $\varphilp{\ell-1}{D_\cA\feat{0}(u), \ldots, D_\cA\feat{\ell-1}(u)}$,
    which implies that
    $\mstar{\varphilp{\ell-1}{\bfbl{0}, \ldots, \bfbl{\ell-1}}}$ holds.
    Thus, it holds that $\varphilp{\ell}{D_\cA\bfh[0], \ldots, D_\cA\bfh[\ell]}$.

    \textbf{\underline{For every $\bfh$ satisfing $\varphilp{\ell}{D_\cA\bfh[0], \ldots, D_\cA\bfh[\ell]}$}},
    there exist $\bfh_1, \ldots, \bfh_k$,
    such that, for $1 \le t \le k$,
    $\varphilp{\ell-1}{D_\cA\bfh_t[0], \ldots, D_\cA\bfh_t[\ell-1]}$ holds.
    By the induction hypothesis, there exists graph $\cG_t$ and vertex $v_t$,
    such that $\hist{\ell-1}_{\cG_t}(v_t) = \bfh_t$.
    Let $\cG$ be the graph obtained by taking the disjoint union of $\cG_1, \ldots, \cG_k$
    and adding a fresh vertex $v_r$.
    We set the colors of $v_r$ so that $\feat{0}_\cG(v_r) = \bfh[0]$.
    For $1 \le t \le k$, there exists an edge from $v_r$ to $v_t$.
    It is routine to verify that the $\ell$-history of $v_r$ is precisely $\bfh$.
    Note that \emph{because $\cA$ is outgoing-only, the edge from $v_r$ to $v_t$ has no effect on the $(\ell-1)$-history of $v_t$}.
    Thus $\hist{\ell-1}_{\cG}(v_t) = \hist{\ell-1}_{\cG_t}(v_t)$.
\end{proof}

\begin{corollary}
    The satisfiability problem of $\olpwGNN$s is in $\nexp$,
    and it becomes $\np$ when the number of layers is fixed.
\end{corollary}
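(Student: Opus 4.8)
The plan is to reduce, in polynomial time, the satisfiability problem for $\olpwGNN$s to the satisfiability problem for $\epastar$ formulas, and then invoke Theorem~\ref{thm:epastar}. All of the substantive work has already been carried out in Theorem~\ref{thm:semilinearoutgoingonlyunbounded}; what remains is to phrase the acceptance condition of a GNN as an $\epastar$ constraint and to check that the reduction does not inflate the star height.

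First I would note that an $\olpwGNN$ $\cA$ with $L$ layers is satisfiable if and only if there are an $n$-graph $\cG$ and a vertex $v$ with $\feat{L}_{\cG, 1}(v) \ge 1/2$, which, by the definition of the $L$-history and of the $L$-history-space, is equivalent to the existence of a tuple $\bfh \in \hsp{L}$ with $\bfh[L]_1 \ge 1/2$. By Theorem~\ref{thm:semilinearoutgoingonlyunbounded} the set $\hsp{L}$ is defined by an $\epastar$ formula $\varphilp{L}{\varx{0}, \ldots, \varx{L}}$ with common denominator $D_\cA$, of size polynomial in the description of $\cA$ and of star height exactly $L$: a rational tuple $\bfh$ lies in $\hsp{L}$ precisely when $\varphilp{L}{D_\cA\bfh[0], \ldots, D_\cA\bfh[L]}$ holds.

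Next I would coerce the threshold into an integer inequality. Since $D_\cA\bfh[L]_1 \in \bbZ$, the condition $\bfh[L]_1 \ge 1/2$ is equivalent to $D_\cA\bfh[L]_1 \ge \ceil{D_\cA/2}$, a single linear inequality in the variable $\varx{L}_1$. Hence $\cA$ is satisfiable if and only if the $\epastar$ formula
\begin{equation*}
    \varphilp{L}{\varx{0}, \ldots, \varx{L}}\ \land\ \left(\varx{L}_1 \ge \ceil{D_\cA/2}\right)
\end{equation*}
has a satisfying assignment: any satisfying assignment is of the form $\varx{t} \gets D_\cA\bfh[t]$ for some $\bfh \in \hsp{L}$ with $\bfh[L]_1 \ge \ceil{D_\cA/2}/D_\cA \ge 1/2$, hence yields a witness $\tuple{\cG, v}$ for $\cA$, and conversely the scaled history of any witness for $\cA$ satisfies the formula. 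Taking a conjunction with a linear inequality (and, if needed, closing existentially) keeps us inside $\epastar$, leaves the formula polynomial in the description of $\cA$, and does not change the star height, which is still $L$; so this is a polynomial-time many-one reduction.

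Finally, by Theorem~\ref{thm:epastar} satisfiability of $\epastar$ is in $\nexp$, and in $\np$ for fixed star height; since the star height produced equals the number of layers $L$ of $\cA$, we obtain that the satisfiability problem of $\olpwGNN$s is in $\nexp$, and in $\np$ when the number of layers is fixed. I do not expect any real obstacle: given Theorem~\ref{thm:semilinearoutgoingonlyunbounded}, the only care needed is in the bookkeeping — that the inductive construction of the defining formula in Theorem~\ref{thm:semilinearoutgoingonlyunbounded} stays polynomial (it reuses a single copy of the level-$(\ell-1)$ formula under the Kleene star rather than expanding it) and that passing from the $1/2$ threshold to the integer inequality via $D_\cA$ is sound. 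The matching lower bounds required for the full statement of Theorem~\ref{thm:outputonlydecidability} would be handled separately: $\np$-hardness for a fixed number of layers already follows from the $\threeSAT$ reduction to $2$-layer $\oltrreluGNN$s of Section~\ref{sec:pspace}, since $\trrelu$ is piecewise linear, while $\nexp$-hardness in the general case would be obtained by a converse encoding of $\epastar$ formulas into $\olpwGNN$s over tree-shaped witness graphs.
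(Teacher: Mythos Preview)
Your proposal is correct and follows essentially the same approach as the paper: reduce to $\epastar$ satisfiability via the history-space formula of Theorem~\ref{thm:semilinearoutgoingonlyunbounded}, conjoin the acceptance threshold as a single linear inequality, and invoke Theorem~\ref{thm:epastar}. The only cosmetic difference is that the paper writes the threshold as $2\varx{L}_1 \ge D_\cA$ rather than your $\varx{L}_1 \ge \ceil{D_\cA/2}$, which is equivalent since $\varx{L}_1$ is an integer.
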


\begin{proof}
    Let $\cA$ be an $\olpwGNN$.
    By Throrem~\ref{thm:semilinearoutgoingonlyunbounded},
    there exists the $\epastar$ formula $\varphilp{L}{\varx{0}, \ldots, \varx{L}}$
    define the $L$-history-space of $\cA$ with common denominator of $\cA$.
    Combining the acceptance condition of GNNs,
    $\cA$ is satisfiable if and only if the $\epastar$ sentence
    \begin{equation*}
        \Phi\ :=\ \exists \varx{0} \cdots \varx{L}, \varphilp{L}{\varx{0}, \ldots, \varx{L}}
        \ \land\ \left(2\varx{L}_1 \ge D_\cA\right)
    \end{equation*}
    is satisfiable.
    Note that we can construct $\varphilp{L}{\varx{0}, \ldots, \varx{L}}$ in polynomial time in the description of $\cA$, and thus the same is true of $\Phi$. 
    Since the satisfiability problem of $\epastar$ is in $\nexp$ by Theorem \ref{thm:epastar},
    the satisfiability problem of $\olpwGNN$s is also in $\nexp$.
    The star height of $\varphilp{L}{\varx{0}, \ldots, \varx{L}}$ is the number of layers of $\cA$, and thus the same holds for  $\Phi$. 
    Therefore the satisfiability problem of $\olpwGNN$s is in $\np$ when the number of layers is fixed.
\end{proof}

The hardness of the satisfiability problem for $\olpwGNN$s is by reducing
from the satisfiability problem of $\epastar$ formulas,
which is known to be $\nexp$-hard from Theorem~\ref{thm:epastar}.
Note that fixed layer $\oltrreluGNN$s can be simulated by fixed layer $\olreluGNN$s with a linear blowup.
By Theorem~\ref{thm:pspace}, the satisfiability problem of fixed layer $\oltrreluGNN$s is $\np$-hard, and thus the same holds for fixed layer $\olreluGNN$s.

\begin{theorem}\label{thm:epastar_to_gnn}
    The satisfiability problem of $\olreluGNN$s is $\nexp$-hard.
\end{theorem}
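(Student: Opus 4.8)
The plan is to reduce the satisfiability problem of $\epastar$, which is $\nexp$-hard by Theorem~\ref{thm:epastar}, to the satisfiability problem of $\olreluGNN$s by a polynomial-time reduction. Given an $\epastar$ sentence $\Phi$, I would build an $\olreluGNN$ $\cA_\Phi$ whose satisfying pointed graphs are, up to an outgoing unravelling, exactly the ``witness trees'' for $\Phi$. The construction proceeds by recursion on the subformulas $\chi(\bfx)$ of $\Phi$, in the spirit of Theorem~\ref{thm:logic_to_unbounded_gnn}: there is one feature coordinate reserved for each subformula, which carries the truth value of that subformula at its designated \emph{role} node, together with a block of coordinates that carry the valuation of the free variables $\bfx$ that a subtree claims to witness (signed integers being represented as differences $x^{+}-x^{-}$ of non-negative counters). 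Roles are encoded as input colors, and every coordinate associated with a subformula $\chi$ is kept identically zero at every node whose role is not $\chi$; this ``role-gating'' is what makes out-neighbour aggregation route information correctly, even though a GNN sums over \emph{all} out-neighbours at once.

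There are four cases. (i) For a linear-inequality atom $\bfA\bfx\ge\bfb$, I apply the $\relu$-based Boolean-test construction of Lemma~\ref{lemma:fnn} to the variable block; those layers perform no aggregation, so the lemma applies verbatim in the outgoing-only local setting. (ii) For $\chi_1(\bfx)\land\chi_2(\bfx)$ (and dually for $\lor$), the role node $v$ has two out-neighbours $w_1,w_2$ that are the roots of witness subtrees for $\chi_1$ and $\chi_2$; the gadget at $v$ uses $\relu$ to test that both truth bits equal $1$ and that the variable blocks read off from $w_1$ and $w_2$ agree (equality of integers $a,b$ is $\relup{a-b}+\relup{b-a}=0$), and it re-exports the common valuation. (iii) For $\exists\bfy\ \varphi(\bfx,\bfy)$, a witness subtree is simply a witness subtree for $\varphi$ whose root already pins down a valuation of $\bfx$ and $\bfy$; the $\exists$-node has that root as its single out-neighbour, forwards the $\bfx$-block and hides the $\bfy$-block. (iv) For $\mstar{\varphi(\bfx)}$, the role node $v$ has an arbitrary number of out-neighbours, \emph{all} of which are roots of $\varphi$-witness subtrees; via outgoing aggregation $v$ forms $\sum_{u\in\nbr{\out}(v)}\feat{\ell}(u)$, reads the reported $\bfx$-block as the coordinatewise sum $\sum_t\bfa_t$ of the children's valuations, and checks $\sum_u(1-\mathrm{bit}_\varphi(u))=0$ and $\sum_u(1-\mathrm{isRole}_\varphi(u))=0$, so that every child is a genuine $\varphi$-witness. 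It is precisely this unbounded sum that forces $\relu$ rather than $\trrelu$; the recursion adds a bounded number of layers per subformula, so $\cA_\Phi$ has polynomial size and integer coefficients.

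For correctness, if $\Phi$ is satisfiable I unfold a satisfying assignment into the corresponding finite witness tree and verify by induction on subformulas that the reported-valuation coordinates of each subtree root equal the intended value (using the semantics of $\mstar{\cdot}$ in the star case), so $\feat{L}_1(v_r)\ge 1/2$ and $\cA_\Phi$ accepts. Conversely, if $\cA_\Phi$ is satisfied by some $\tuple{\cG,v_r}$, then since $\cA_\Phi$ is outgoing-only and local I first pass to the outgoing unravelling (whose root has the same feature, as in Lemma~\ref{lemma:unravelling} but with the easier unary branching of outgoing edges), obtaining a depth-$L$ tree; because all gadget constraints are local to a node and its out-neighbours, pruning this tree along the role colors yields a witness tree, from which a satisfying assignment for $\Phi$ is read off. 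Composing with Theorem~\ref{thm:epastar} gives $\nexp$-hardness; the matching $\np$ lower bound for the fixed-layer case (already used just before the theorem) only needs the linear-blowup simulation of $\oltrreluGNN$s inside $\olreluGNN$s together with Theorem~\ref{thm:pspace}.

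I expect the main obstacle to be exactly the information-flow mismatch highlighted above: a GNN node aggregates over all of its out-neighbours simultaneously and can only ``see'' downwards, so there is no top-down channel for propagating a variable valuation and no way to type out-edges. The proof must therefore place consistency of valuations on the combination nodes (the $\land$, $\exists$, and $\mstar{\cdot}$ gadgets) and make the feature coordinates disjointly supported across roles, so that the single global aggregation at each node nevertheless implements the intended per-child reads. Stating and maintaining these invariants through the inductive correctness argument --- together with the bookkeeping for the signed-counter encoding of $\bbZ$-valued variables --- is where the real work lies; the arithmetic inside the individual $\relu$ gadgets is routine given Lemma~\ref{lemma:fnn}.
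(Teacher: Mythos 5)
Your reduction goes the right way and has the right feel, but it takes a genuinely different route from the paper's, and the difference matters for how much bookkeeping you end up owing. The paper first proves a \emph{normal form} lemma: every $\epastar$ sentence is polynomial-time convertible to $\exists \bfy\,\bigl(\psi_k \land \mstar{\psi_{k-1}\land\mstar{\cdots\land\mstar{\psi_0}}}\bigr)$, where each $\psi_t$ is quantifier-free and star-free and the stars are strictly nested one inside the other. After that, the witness structure is a \emph{linear} stack of star levels, and the GNN is just $k$ rounds of ``aggregate over out-children, then run the Lemma~\ref{lemma:fnn} gadget for $\psi_t$ plus a single cardinality check.'' No roles, no gating, no edge typing: the level of a node in the witness tree is determined by its depth, and the layer at which you aggregate tells you which star you are unfolding. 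Your plan instead treats the raw parse tree of $\Phi$, assigns each subformula its own role color and its own disjointly-supported coordinate block, and builds per-subformula gadgets for $\land$, $\lor$, $\exists$, and $\mstar{\cdot}$, with the role-gating trick standing in for edge types. That is a real alternative and I believe it can be made to work, but it trades one clean lemma for several interacting invariants you now have to maintain: synchronizing the ``ready layer'' of sibling subformulas of different depths (with copy-forward padding), proving that the cardinality checks at a node exclude spurious children of foreign roles, and handling the $\lor$ read-off when the chosen disjunct carries the all-zero valuation (it works because the cardinality check forces exactly one child, but you do need to argue it). It also costs $O(nm)$ feature coordinates instead of $O(n)$. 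Finally, a couple of things your sketch should say explicitly to be complete: the $\bbZ$-to-$\bbN$ rewriting $x = x^+ - x^-$ must be applied \emph{inside} the $\epastar$ formula before you start (the paper's witness trees $\rtree{t,\bfa}$ are built only for $\bfa \in \bbN^n$, and yours would be too, since out-degree counters are non-negative); and in the backward direction you should argue not just that unravelling preserves the root feature but that the gadget constraints actually force the unravelled tree, restricted to correctly-roled nodes, to be a witness tree — this is exactly where the paper's normal form makes the argument almost free, since depth alone determines role.
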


The proof works by considering $\epastar$ formulas in normal form.

\begin{definition}
    We say that a quantifier-free $\epastar$ formula $\varphi(\bfx)$ is in \emph{normal form}
    if it is either star-free or it is in the form $\psi_1(\bfx) \land \mstar{\psi_2(\bfx)}$,
    where $\psi_1(\bfx)$ is quantifier-free and star-free while
    $\psi_2(\bfx)$ is a quantifier-free $\epastar$ formula in normal form

    We say that an $\epastar$ formula $\varphi(\bfx)$ is in \emph{normal form}
    if it is in the form $\exists \bfy\ \psi(\bfx, \bfy)$,
    where $\psi(\bfx, \bfy)$ is a quantifier-free $\epastar$ formula in \emph{normal form}.
\end{definition}

Informally, normal form requires the occurrences of Kleene star to be layered one on top of the other. So, for example, a formula of the form $\mstar{\phi_1(x)} \land \mstar{\phi_2(y)}$ is \emph{not} in
normal formula. We can rewrite it in normal form as:
\begin{equation*}
    \mstar{(x = 0 \lor \phi_1(x)) \land (y = 0 \lor \phi_2(y))}.
\end{equation*}
For translating to GNNs, this normal form allows us to conveniently map occurrences
of GNN to aggregations at different levels.

\begin{lemma}
    There is a polynomial time algorithm that converts an $\epastar$ formula
    into a semantically equivalent $\epastar$ formula in normal form.
\end{lemma}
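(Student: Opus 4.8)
The plan is to prove the lemma by structural induction on the formula, pushing Kleene stars inward so that the star-nesting becomes ``layered''. The key observation is the rewriting identity mentioned informally in the excerpt: $\mstar{\varphi_1(\bfx)} \land \mstar{\varphi_2(\bfy)}$ is semantically equivalent to $\mstar{(\bfx = \bfzero \lor \varphi_1(\bfx)) \land (\bfy = \bfzero \lor \varphi_2(\bfy))}$, where on each side we use the convention that $\bfx$ and $\bfy$ together range over the full tuple of variables. More precisely, the nonzero summands drawn from the right-hand star body can be partitioned into those contributing to the $\bfx$-part and those contributing to the $\bfy$-part; the $\bfzero$ disjuncts let each summand be ``silent'' on the block it does not belong to. This reduces the number of top-level stars in a conjunction from two to one; iterating it reduces any conjunction of starred subformulas to a single star. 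The cost of each such merge is only a constant-size additive gadget (the $\bfx = \bfzero$ disjunct), so the whole transformation stays polynomial provided we are careful not to recurse in a way that duplicates subformulas.

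The concrete procedure I would carry out is as follows. First, push all existential quantifiers to the front: existential quantifiers distribute over $\lor$, and over $\land$ and $\mstar{\cdot}$ after suitable renaming of bound variables; since $\epastar$ only has existential quantifiers this is a standard prenexing step, polynomial in size. Now we have $\exists \bfy\,\psi(\bfx,\bfy)$ with $\psi$ quantifier-free, and it remains to normalize $\psi$. Second, put $\psi$ into a form where it is a Boolean combination of linear inequalities and starred subformulas $\mstar{\chi}$, and recursively normalize each $\chi$. Third, eliminate the disjunctions at the outermost level if they wrap stars: here one can either leave top-level $\lor$ intact (the definition of normal form for quantifier-free formulas is about the shape of the \emph{star} nesting, and $\lor$ of star-free and a single $\psi_1 \land \mstar{\psi_2}$ is allowed) — I would check the exact normal-form definition and, if needed, distribute so that within each disjunct there is at most one top-level star. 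Fourth, within each conjunction that contains several top-level stars $\mstar{\chi_1}, \dots, \mstar{\chi_k}$, iteratively apply the merge identity above to collapse them into a single $\mstar{\chi'}$ where $\chi' = \bigwedge_{i} (\bfx_i = \bfzero \lor \chi_i)$, with $\bfx_i$ the variable block governed by $\chi_i$; the shared variables are handled by keeping their inequalities in the star-free conjunct $\psi_1$ and letting only a designated block be active in each $\chi_i$. Finally, recurse into $\chi'$ to normalize it, and verify the star height / size bounds propagate.

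The main obstacle, and the step that needs the most care, is the \emph{variable-sharing} issue: in $\mstar{\varphi_1(\bfx)} \land \mstar{\varphi_2(\bfx)}$ the two star bodies may mention \emph{overlapping} tuples of variables, so the naive ``disjoint blocks'' picture fails — one cannot simply say the summands of one star are silent on the other's variables. The fix is to introduce fresh copies: write $\mstar{\varphi_1(\bfx_1)} \land \mstar{\varphi_2(\bfx_2)} \land (\bfx_1 = \bfx) \land (\bfx_2 = \bfx)$ with $\bfx_1,\bfx_2$ fresh (and existentially quantified, which we already prenexed), so the two stars now range over genuinely disjoint variable blocks and the merge identity applies cleanly; the linking equations $\bfx_1 = \bfx = \bfx_2$ are star-free and go into the $\psi_1$ component. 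One must check this duplication is done at most once per star occurrence so the size stays polynomial — the number of fresh variables is bounded by (number of stars) $\times$ (arity), which is polynomial, and each merge adds only constant-size structure. A secondary point to verify is that semantic equivalence is genuinely preserved at each rewriting step, in particular that the $\bfzero$-disjunct does not accidentally enlarge the set of realizable sums (it does not, because $\bfzero$ is always a legitimate summand contributing nothing). Once these bookkeeping points are settled, the induction closes routinely and the polynomial bound follows by a straightforward accounting over the recursion.
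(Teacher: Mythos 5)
There is a genuine gap in your handling of disjunctions. The normal form definition requires a quantifier-free formula to be either star-free or \emph{exactly} of the form $\psi_1 \land \mstar{\psi_2}$ with $\psi_1$ star-free; formulas such as $A \lor \mstar{B}$ or $\mstar{A(\bfx)} \lor \mstar{B(\bfx)}$ are therefore \emph{not} in normal form, so your first option (``leave top-level $\lor$ intact'') misreads the definition, and your fallback of distributing does not help either — the result is still a top-level disjunction of star-containing formulas (indeed $\mstar{A} \lor \mstar{B}$ is unchanged by distribution). Your merge identity is purely conjunctive and offers no way to extract a single star from under a $\lor$.

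The paper's construction handles $\land$ and $\lor$ uniformly by a different idea: for each top-level star subformula $\mstar{\phi_i(\bfx)}$, introduce a fresh block $\bfx_i$ and replace the whole star occurrence by the star-free atom $\bfx = \bfx_i$ inside whatever Boolean context it sits in, yielding the star-free $\psi_1$; all bodies are then packed into a single $\psi_2 := \bigwedge_i (\bfx_i = \bfzero \lor \phi_i(\bfx_i))$. Crucially, correctness of this replacement in the backward direction (from the normal form back to the original) relies on $\epastar$ being \emph{negation-free}: since the Boolean structure is monotone, it suffices that whenever the atom $\bfx = \bfx_i$ holds the original $\mstar{\phi_i(\bfx)}$ also holds (guaranteed by $\mstar{\psi_2}$ forcing $\mstar{\phi_i(\bfx_i)}$), and no constraint is needed when the atom is false. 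With negation this fails — $\neg\mstar{A(\bfx)}$ would become spuriously satisfiable by choosing $\bfx_1 = \bfzero$. Your proposal never invokes the negation-free hypothesis, and the conjunction-merge route, while sound for conjunctions over disjoint blocks (and essentially the same fresh-copy gadget the paper uses in that case), cannot close the disjunctive case without adding precisely this atom-level replacement together with the monotonicity observation.
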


We say that a $\epastar$ formula $\mstar{\phi(\bfz)}$ is a top-level star subformula of the $\epastar$ formula $\varphi(\bfx)$,
if there is no $\epastar$ formula $\mstar{\psi(\bfy)}$ satisfied that
$\mstar{\phi(\bfz)}$ is a strict subformula $\mstar{\psi(\bfy)}$ and
$\mstar{\psi(\bfy)}$ is a subformula of $\varphi(\bfx)$.

\begin{proof}
    We first observe that for every $\epastar$ formula $\varphi(\bfx, \bfy)$,
    the formulas $\exists \bfy\ \mstar{\varphi(\bfx, \bfy)}$ and
    $\mstar{\exists \bfy\ \varphi(\bfx, \bfy)}$ 
    are semantically equivalent.
    Thus, with routine procedure, we assume that the formula is in the prefix normal form in any time.

    Next, for a quantifier-free $\epastar$ formula $\varphi(\bfx)$ with $m$ top-level star subformulas
    $\mstar{\phi_1(\bfx)}, \ldots, \mstar{\phi_m(\bfx)}$.
    We claim that $\varphi(\bfx)$ and $\exists \bfx_1 \cdots \bfx_m\ \psi(\bfx, \bfx_1, \ldots, \bfx_m)$
    are semantically equivalent, where
    $\bfx_1, \ldots, \bfx_m$ are fresh variables and
    \begin{equation*}
        \begin{aligned}
            \psi(\bfx, \bfx_1, \ldots, \bfx_m)\ :=\ &
            \psi_1(\bfx, \bfx_1, \ldots, \bfx_m)\ \land\ 
            \mstar{\psi_2(\bfx, \bfx_1, \ldots, \bfx_m)} \\
            \psi_1(\bfx, \bfx_1, \ldots, \bfx_m)\ :=\ &
            \varphi\left[\bfx=\bfx_1 / \mstar{\phi_1(\bfx)}\right]\ldots\left[\bfx=\bfx_m / \mstar{\phi_m(\bfx)}\right] \\
            \psi_2(\bfx, \bfx_1, \ldots, \bfx_m)\ :=\ &
            \bigwedge_{i \in \intsinterval{m}}
            \left(\bfx_i = \bfzero \lor \phi_i(\bfx_i)\right).
        \end{aligned}
    \end{equation*}
    Suppose that $\varphi(\bfx)$ is satisfiable by $\bfa$.
    For $1 \le i \le m$, let
    \begin{equation*}
        \bfa_i\ :=\ 
        \begin{cases}
            \bfa, &\text{if $\mstar{\psi(\bfa)}$ holds} \\
            \bfzero, &\text{otherwise}.
        \end{cases}
    \end{equation*}
    It is not difficult to check that $\psi(\bfa, \bfa_1, \ldots, \bfa_m)$ holds.

    On the other hand,
    suppose that $\psi(\bfa, \bfa_1, \ldots, \bfa_m)$ holds.
    For $1 \le i \le m$,
    by the definition of $\psi$,
    $\mstar{\phi_i(\bfa_i)}$ holds.
    Therefore, if $\bfa = \bfa_i$, then $\mstar{\psi_i(\bfa)}$ holds.
    Note that negations are not allowed in $\epastar$ formula.
    Since $\psi(\bfa, \bfa_1, \ldots, \bfa_m)$ holds, $\varphi(\bfa)$ also holds.

    Finally, note that the star height of the formula $\varphi_2(\bfx, \bfx_1, \ldots, \bfx_m)$ is one less than the origin formula $\varphi(\bfx)$.
    We can repeat this procedure and obtain the desired formula in the normal form.
\end{proof}

It is worth noting that the $\epastar$ formula $\exists \bfy\ \psi(\bfx, \bfy)$ is satisfiable
if and only if $\psi(\bfx, \bfy)$ is satisfiable.
Thus, it is sufficient to only consider quantifier-free $\epastar$ formulas in the normal form.

We are now ready to prove the hardness of the satisfiability problem of $\olreluGNN$s.
Let $\Psi(\bfx)$ be a quantifier-free $\epastar$ formula in normal form with $n$ variables and star height $k$, 
\begin{equation*}
    \Psi(\bfx) := \psi_k(\bfx) \land \mstar{\psi_{k-1}(\bfx) \land \mstar{\cdots \land \mstar{\psi_{0}(\bfx)}}},
\end{equation*}
where $\psi_{t}(\bfx)$ are quantifier-free and star-free,
Let $\Psi_{0}(\bfx) := \psi_{0}(\bfx)$, and,
for $1 \le t \le k$,
\begin{equation*}
    \Psi_{t}(\bfx)\ :=\ \psi_{t}(\bfx) \land \mstar{\Psi_{t-1}(\bfx)}.
\end{equation*}
Note that $\Psi(\bfx) = \Psi_k(\bfx)$.
For $0 \le t \le k$,
let $\ell_t := \sum_{i \in \intsinterval{t}} \left(2\depth{\psi_i} + 1\right)$.

We define the $(\ell_k + 1)$-layer $n$-$\olreluGNN$ $\cA_\Psi$ as follows.
The coefficient matrices and bias vectors are chosen so that the features of $\cA_\Psi$ will satisfy certain conditions.

For the first layer,
\begin{equation*}
    \feat{1}_i(v)\ =\ 
    \begin{dcases}
        \sum_{u \in \nbr{\out}(v)}\feat{0}_i(u), &\text{for $1 \le i \le n$} \\
        0, &\text{for $n < i \le n+2$} \\
        1, &\text{for $i = n+3$}.
    \end{dcases}
\end{equation*}

For layers between $2$ and $\ell_0$,
$\cA_\Psi$ test whether $\psi_0\left(\bfal{0}(v)\right)$ holds,
where $\bfal{0}(v)$ is the first $n$ entries of $\feat{1}(v)$.
That is
\begin{equation*}
    \feat{\ell_0}_i(v)\ =\ 
    \begin{cases}
        \feat{1}_i(v), &\text{for $1 \le i \le n+3$} \\
        \eval{\psi_0\left(\bfal{0}(v)\right)}, &\text{for $i = n + 4$}.
    \end{cases}
\end{equation*}

For $1 \le t \le k$,
for the $(\ell_{t} - 2\depth{\psi_t})$ layer,
\begin{equation*}
    \feat{\ell_{t} - 2\depth{\psi_t}}_i(v)\ =\ 
    \begin{dcases}
        \sum_{u \in \nbr{\out}(v)}\feat{\ell_{t} - 2\depth{\psi_t}-1}_i(u), &\text{for $1 \le i \le n$} \\
        \sum_{u \in \nbr{\out}(v)}\feat{\ell_{t} - 2\depth{\psi_t}-1}_{i+2}(u), &\text{for $n < i \le n+2$} \\
        1, &\text{for $i = n+3$}.
    \end{dcases}
\end{equation*}
For layers between $\ell_{t} - 2\depth{\psi_t} + 1$ and $\ell_t$,
$\cA_\Psi$ test whether $\psi_t\left(\bfal{t}(v)\right)$ holds,
where $\bfal{t}(v)$ is the first $n$ entries of $\feat{\ell_{t} - 2\depth{\psi_t}}(v)$.
That is
\begin{equation*}
    \feat{\ell_t}_i(v)\ =\ 
    \begin{cases}
        \feat{\ell_{t} - 2\depth{\psi_t}}_i(v), &\text{for $1 \le i \le n+3$} \\
        \eval{\psi_t\left(\bfal{t}(v)\right)\ \land\ 
        \feat{\ell_{t} - 2\depth{\psi_t}}_{n+1}(v) = \feat{\ell_{t} - 2\depth{\psi_t}}_{n+2}(v)},
        &\text{for $i = n + 4$}.
    \end{cases}
\end{equation*}

Finally, for the last layer, $\feat{\ell_k+1}_1(v) = \feat{\ell_k}_{n+4}(v)$.
Note that, for $1 \le t \le k$,
$\ell_t - \ell_{t-1} = 2\depth{\psi_t} + 1$.
Thus, these definitions cannot clash.
By Lemma~\ref{lemma:fnn}, it is easy to see that there are $\olreluGNN$s that satisfy these conditions.

One direction of the theorem will follows once we have shown the following property of $\cA_\Psi$:
\begin{lemma}
    For every quantifier-free $\epastar$ formula $\Psi(\bfx)$,
    let $\cA_\Psi$ be the $\olreluGNN$ defined above.
    For every $n$-graph $\cG$ and vertex $v$ in $\cG$,
    for $0 \le t \le k$,
    if $\feat{\ell_{t}}_{n+4}(v) = 1$,
    then $\Psi_t\left(\bfal{t}(v)\right)$ holds,
    where $\bfal{t}(v)$ is the first $n$ entries of $\feat{\ell_{t} - 2\depth{\psi_t}}(v)$.
\end{lemma}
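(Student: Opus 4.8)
The plan is to prove the statement by induction on $t$, using Lemma~\ref{lemma:fnn} as the workhorse for the ``test gadgets'' and carefully tracking how the auxiliary feature coordinates $n+1,\dots,n+4$ evolve across the blocks of layers of $\cA_\Psi$.

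For the base case $t=0$, note that $\Psi_0(\bfx)=\psi_0(\bfx)$ is quantifier-free and star-free, and that $\bfal{0}(v)$ consists of the first $n$ coordinates of $\feat{1}(v)$. By the defining conditions of $\cA_\Psi$, layers $2$ through $\ell_0 = 2\depth{\psi_0}+1$ apply exactly the coefficient pattern of Lemma~\ref{lemma:fnn} for $\psi_0$ to those coordinates, so $\feat{\ell_0}_{n+4}(v)=\eval{\psi_0(\bfal{0}(v))}$. Hence $\feat{\ell_0}_{n+4}(v)=1$ immediately gives $\psi_0(\bfal{0}(v))$, i.e.\ $\Psi_0(\bfal{0}(v))$ holds.

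For the inductive step, I would assume the claim for $t-1$ and suppose $\feat{\ell_t}_{n+4}(v)=1$. First unpack this coordinate: by the construction it equals
\begin{equation*}
\eval{\psi_t(\bfal{t}(v))\ \land\ \feat{\ell_t-2\depth{\psi_t}}_{n+1}(v)=\feat{\ell_t-2\depth{\psi_t}}_{n+2}(v)},
\end{equation*}
where the $\psi_t$-test is supplied by Lemma~\ref{lemma:fnn} on the block of layers from $\ell_t-2\depth{\psi_t}+1$ to $\ell_t$. Thus $\psi_t(\bfal{t}(v))$ holds and the equality on coordinates $n+1$ and $n+2$ holds. The next step is to identify those two coordinates. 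Since $\ell_t-2\depth{\psi_t}-1=\ell_{t-1}$, the aggregation layer gives $\feat{\ell_t-2\depth{\psi_t}}_{n+1}(v)=\sum_{u\in\nbr{\out}(v)}\feat{\ell_{t-1}}_{n+3}(u)$ and $\feat{\ell_t-2\depth{\psi_t}}_{n+2}(v)=\sum_{u\in\nbr{\out}(v)}\feat{\ell_{t-1}}_{n+4}(u)$; tracing coordinate $n+3$ back to the layer where it is set to the constant $1$ and noting it is preserved, the first sum equals $\abs{\nbr{\out}(v)}$. Because each $\feat{\ell_{t-1}}_{n+4}(u)$ is a value $\eval{\cdots}\in\set{0,1}$, the equality $\abs{\nbr{\out}(v)}=\sum_{u}\feat{\ell_{t-1}}_{n+4}(u)$ forces $\feat{\ell_{t-1}}_{n+4}(u)=1$ for every out-neighbour $u$.

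Finally, I would invoke the induction hypothesis to obtain $\Psi_{t-1}(\bfal{t-1}(u))$ for every $u\in\nbr{\out}(v)$, observe from the preserved coordinates $1,\dots,n$ that $\bfal{t}(v)=\sum_{u\in\nbr{\out}(v)}\bfal{t-1}(u)$, and conclude by the semantics of the Kleene star (with the empty-out-neighbourhood case handled by the empty sum equal to $\bfzero$) that $\mstar{\Psi_{t-1}}(\bfal{t}(v))$ holds. Together with $\psi_t(\bfal{t}(v))$ this yields $\Psi_t(\bfal{t}(v))=\psi_t(\bfal{t}(v))\land\mstar{\Psi_{t-1}(\bfal{t}(v))}$, completing the induction. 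The only real obstacle is the bookkeeping in the previous paragraph: correctly matching the shifted index $i+2$ in the aggregation rule so that coordinate $n+1$ counts out-neighbours via the constant-$1$ coordinate while coordinate $n+2$ counts the out-neighbours passing the $\Psi_{t-1}$-test, and verifying that $\bfal{t}(v)$ really is the sum of the $\bfal{t-1}(u)$ and not of some other feature slice.
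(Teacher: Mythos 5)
Your proof is correct and follows essentially the same inductive structure as the paper's argument: you read off $\psi_t(\bfal{t}(v))$ from the test gadget and the $(n{+}1)=(n{+}2)$ coordinate equality from $\feat{\ell_t}_{n+4}(v)=1$, trace coordinate $n+3$ to the constant $1$ so that the equality forces $\feat{\ell_{t-1}}_{n+4}(u)=1$ for every $u\in\nbr{\out}(v)$, invoke the induction hypothesis, and conclude via $\bfal{t}(v)=\sum_{u}\bfal{t-1}(u)$ and the Kleene star semantics. The bookkeeping you flag -- the $i{+}2$ index shift and the identity $\ell_t-2\depth{\psi_t}-1=\ell_{t-1}$ -- you resolve exactly as the paper does, and your explicit handling of the empty-out-neighbourhood case (empty sum equals $\bfzero$, realizable with $k=0$ in the star) is a small and correct addition the paper leaves implicit.
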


\begin{proof}
    We prove the lemma by induction on the star height of the formula.
    For the base case $t = 0$, note that $\Psi_0(\bfx) = \psi_0(\bfx)$ is quantifier-free and star-free.
    Thus the property easily follows the definition of $\cA_\Psi$.
    
    For the induction step $1 \le t \le k$,
    we first observe that, for $1 \le i \le n$, 
    by the definition of $\cA_\Psi$,
    \begin{equation*}
        \feat{\ell_t - 2\depth{\psi_t}}_i(v)
        \ =\ \sum_{u \in \nbr{\out}(v)}\feat{\ell_{t-1}}_i(u)
        \ =\ \sum_{u \in \nbr{\out}(v)}\feat{\ell_{t-1} - 2\depth{\psi_{t-1}}}_i(u),
    \end{equation*}
    which implies that $\bfal{t}(v) = \sum_{u \in \nbr{\out}(v)} \bfal{t-1}(u)$.
    Next, by the definition of $\cA_\Psi$,
    if $\feat{\ell_{t}}_{n+4}(v) = 1$,
    then following properties follow.
    \begin{itemize}
        \item $\psi_t\left(\bfal{t}(v)\right)$ holds.
        \item $\feat{\ell_t - 2\depth{\psi_t}}_{n+1}(v) = \feat{\ell_t - 2\depth{\psi_t}}_{n+2}(v)$.
        Note that, by the definition of $\cA_\Psi$,
        \begin{equation*}
            \begin{aligned}
                \feat{\ell_t - 2\depth{\psi_t}}_{n+1}(v)
                \ =\ &\sum_{u \in \nbr{\out}(v)}\feat{\ell_{t-1}}_{n+3}(u)
                \ =\ \sum_{u \in \nbr{\out}(v)} 1 \\
                \feat{\ell_t - 2\depth{\psi_t}}_{n+2}(v)
                \ =\ &\sum_{u \in \nbr{\out}(v)}\feat{\ell_{t-1}}_{n+4}(u),
            \end{aligned}
        \end{equation*}
        which implies that, for every $u \in \nbr{\out}(v)$,
        $\feat{\ell_{t-1}}_{n+4}(u) = 1$.
        By the induction hypothesis, it follows that
        $\Psi_{t-1}\left(\bfal{t-1}(u)\right)$ holds.
        Since $\bfal{t}(v) = \sum_{u \in \nbr{\out}(v)} \bfal{t-1}(u)$,
        by the semantics of Kleene star, 
        $\mstar{\Psi_{t-1}\left(\bfal{t}(v)\right)}$ holds.
    \end{itemize}
    Combining above two results, we obtain that
    $\Psi_{t}\left(\bfal{t}(v)\right)$ holds.
\end{proof}

We now prove the other direction.

\begin{lemma}
    For every quantifier-free $\epastar$ formula $\Psi(\bfx)$ in normal form,
    let $\cA_\Psi$ be the $\olreluGNN$ defined above.
    If $\Psi(\bfx)$ is satisfiable, then $\cA_\Psi$ is satisfiable.
\end{lemma}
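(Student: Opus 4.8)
The plan is to prove, by induction on the star height $t$ with $0 \le t \le k$, the following claim: for every $\bfa \in \bbN^n$ such that $\Psi_t(\bfa)$ holds, there are an $n$-graph $\cG$ and a vertex $v$ in $\cG$ with $\bfal{t}(v) = \bfa$ and $\feat{\ell_t}_{n+4}(v) = 1$. Granting this for $t = k$, any solution $\bfa \in \bbN^n$ of $\Psi(\bfx) = \Psi_k(\bfx)$ produces a graph $\cG$ and a vertex $v$ with $\feat{\ell_k}_{n+4}(v) = 1$, hence $\feat{\ell_k+1}_1(v) = \feat{\ell_k}_{n+4}(v) = 1 \ge 1/2$, so $\cA_\Psi$ accepts $\tuple{\cG, v}$ and is satisfiable. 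Here ``$\Psi$ satisfiable'' is read as ``$\Psi$ has a solution over $\bbN$'': this is the right notion, since coordinates $1, \dots, n$ of the features of $\cA_\Psi$ are always non-negative integer sums, and it matches the standard semantics of $\epastar$.

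First I would handle the base case $t = 0$, where $\Psi_0 = \psi_0$ is quantifier-free and star-free. Given $\bfa \in \bbN^n$ with $\psi_0(\bfa)$ true, let $v$ be the root of a one-level tree with $N := \max_{1 \le i \le n} a_i$ out-children $w_1, \dots, w_N$, coloured so that $w_j \in U_i$ iff $j \le a_i$. The layer-$1$ aggregation then makes the first $n$ coordinates of $\feat{1}(v)$ equal to $\bfa$ (so $\bfal{0}(v) = \bfa$), coordinates $n+1, n+2$ equal to $0$, and coordinate $n+3$ equal to $1$; the $\psi_0$-test sub-block, which exists by Lemma~\ref{lemma:fnn}, computes $\eval{\psi_0(\bfa)} = 1$ into coordinate $n+4$ while preserving coordinates $1, \dots, n+3$. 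For the inductive step $1 \le t \le k$, from $\Psi_t(\bfa)$ — i.e.\ $\psi_t(\bfa)$ holds and $\mstar{\Psi_{t-1}(\bfa)}$ holds — I extract a decomposition $\bfa = \sum_{j=1}^m \bfa_j$ with each $\Psi_{t-1}(\bfa_j)$ holding ($m = 0$ is allowed, forcing $\bfa = \bfzero$, in which case $\Psi_t(\bfzero)$ reduces to $\psi_t(\bfzero)$). By the induction hypothesis each $\bfa_j$ is realized at the root $u_j$ of a graph $\cG_j$; I take the disjoint union of the $\cG_j$, add a fresh root $v$, and add an edge $v \to u_j$ for each $j$. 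At the aggregation layer $\ell_t - 2\depth{\psi_t}$ preceding the $\psi_t$-test block, for $1 \le i \le n$ coordinate $i$ of $v$ equals $\sum_j \feat{\ell_{t-1}}_i(u_j) = \sum_j (\bfa_j)_i = a_i$ (coordinates $1, \dots, n$ being preserved through the $\psi_{t-1}$-test block), so $\bfal{t}(v) = \bfa$; coordinate $n+1$ equals $\sum_j \feat{\ell_{t-1}}_{n+3}(u_j) = m$ and coordinate $n+2$ equals $\sum_j \feat{\ell_{t-1}}_{n+4}(u_j) = m$, the latter by the induction hypothesis $\feat{\ell_{t-1}}_{n+4}(u_j) = 1$. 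The $\psi_t$-test block then computes coordinate $n+4$ of $v$ as the truth value of ``$\psi_t(\bfal{t}(v))$ and coordinates $n+1, n+2$ agree'', which is $1$ since $\psi_t(\bfa)$ holds and both those coordinates equal $m$; this closes the induction. (For $t = 1$ the child-layer in question is $\ell_0$, handled by the base case; for $t = 0$ the pre-test layer is layer $1$ in place of $\ell_0 - 2\depth{\psi_0}$.)

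The one genuinely delicate — though short — step is to check that inserting the fresh root $v$ with the edges $v \to u_j$ does not alter any feature value inside the subtrees $\cG_j$, so that the induction hypothesis still applies to $u_j$ and to every node of $\cG_j$. This is exactly where being outgoing-only matters: since $\cA_\Psi$ is outgoing-only and local, a routine induction on the layer index shows that $\feat{\ell}(w)$ depends only on the subgraph directed-reachable from $w$, and adding an incoming edge to $u_j$ enlarges no such reachable subgraph inside $\cG_j$ (the new root $v$ is unreachable from any node of $\cG_j$), so all features within $\cG_j$ are unchanged — the same observation already used in the proof of Theorem~\ref{thm:semilinearoutgoingonlyunbounded}. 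The remaining work is bookkeeping: the degenerate case $m = 0$, and checking that the successive test blocks occupy disjoint layer ranges, which holds because $\ell_t - \ell_{t-1} = 2\depth{\psi_t} + 1$.
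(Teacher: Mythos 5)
Your proposal is correct and follows essentially the same inductive construction as the paper's proof: for each star height $t$ and each witness $\bfa$ of $\Psi_t$, build a tree with a fresh root whose out-children are recursively-constructed witnesses for the decomposition $\bfa = \sum_j \bfa_j$, and use the outgoing-only property to argue that attaching the fresh root does not perturb feature values inside the subtrees. You additionally make explicit the $m=0$ degenerate case and the dependence-on-reachable-subgraph observation, both of which the paper handles more tersely, but the argument is the same.
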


\begin{proof}
    For $0 \le t \le k$,
    for every valid assignment $\bfa \in \bbN^n$ of $\Psi_t(\bfx)$,
    we define the $n$-tree $\rtree{t, \bfa}$ inductively.
    For the base case $t = 0$,
    the set of vertices in $\rtree{0, \bfa}$ is $\set{v_r} \cup \setc{v_i}{1 \le i \le \am}$, 
    where $\am$ is the maximum entry of $\bfa$.
    For $1 \le i \le n$, $U_{t} = \setc{v_j}{1 \le j \le a_i}$,
    where $a_i$ is the $i^{th}$ entries of $\bfa$.
    The edges are $E = \setc{(v_r, v_i)}{1 \le i \le \am}$.

    For the inductive step $1 \le t \le k$,
    recall that $\Psi_t(\bfx) = \psi_t(\bfx) \land \mstar{\Psi_{t-1}(\bfx)}$.
    Since $\bfa$ is a valid assignment of $\Psi_t(\bfx)$,
    there exists $\bfa_1, \ldots, \bfa_\ell$ satisfying that
    $\bfa = \sum_{i \in \intsinterval{\ell}} \bfa_i$ and
    $\Psi_{t-1}\left(\bfa_i\right)$ holds.
    The $n$-tree $\rtree{t, \bfa}$ is a disjoint union of
    $\rtree{t-1, \bfa_i}$ and a fresh vertex $v_r$.
    For $1 \le i \le \ell$, there exists an edge from $v_r$ to $v_i$, where
    $v_i$ is the root of $\rtree{t-1, \bfa_i}$.

    The correctness of the construction follows from the following claim:
    For $0 \le t \le k$
    for every valid assignment $\bfa \in \bbN^n$ of $\Psi_t(\bfx)$,
    it holds that 
    \begin{equation*}
        \feat{\ell_t}_{\rtree{t, \bfa}, n+4}(v_r) = t-1
        \quad \text{and} \quad
        \bfal{t}(v_r) = \bfa,
    \end{equation*}
    where $v_r$ is the root of $\rtree{t, \bfa}$
    and $\bfal{t}(v_r)$ consists of the first $n$ entries of $\feat{\ell_t}_{\rtree{t, \bfa}, n+4}(v_r)$.

    We prove the claim by induction on the star height of $\Psi(\bfx)$.
    For the base case $t = 0$,
    by the definition of $\rtree{0, \bfa}$,
    the first $n$ entries of $\feat{1}_{\rtree{0, \bfa}}(v_r)$
    are exactly $\bfa$.
    Since $\bfa$ is a valid assignment of $\Psi_0(\bfx)$,
    it holds that
    \begin{equation*}
        \feat{\ell_0}_{\rtree{0, \bfa}, n+4}(v_r)
        \ =\ \eval{\Psi_0\left(\bfal{0}(v_r)\right)}
        \ =\ \eval{\Psi_0\left(\bfa\right)}
        \ =\ 1.
    \end{equation*}

    For the inductive step $1 \le t \le k$,
    recall that $\Psi_t(\bfx) = \psi_t(\bfx) \land \mstar{\Psi_{t-1}(\bfx)}$.
    Let $v_r$ be the root of $\rtree{t, \bfa}$.
    Since $\bfa$ is a valid assignment of $\Psi_t(\bfx)$,
    there exists $\bfa_1, \ldots, \bfa_\ell$ satisfying that
    $\bfa = \sum_{i \in \intsinterval{\ell}} \bfa_i$ and
    $\Psi_{t-1}\left(\bfa_i\right)$ holds.
    Let $v_i$ be the root of $\rtree{t-1, \bfa_i}$.
    Note that since $\cA_\Psi$ is outgoing-only,
    the edge from $v_r$ to $v_i$ did not affect the feature of $v_i$.
    Thus, it holds that
    \begin{equation*}
        \feat{\ell_{t-1}}_{\rtree{t-1, \bfa_i}}(v_i)
        \ =\ 
        \feat{\ell_{t-1}}_{\rtree{t, \bfa_i}}(v_i).
    \end{equation*}
    \begin{itemize}
        \item For $1 \le i \le \ell$,
        by the induction hypothesis,
        the first $n$ entries of $\feat{\ell_{t-1}}_{\rtree{t-1, \bfa_i}}(v_i)$ are exactly $\bfa_i$.
        By the definition of $\cA_\Psi$,
        for $1 \le j \le n$, it follows that
        \begin{equation*}
            \feat{\ell_t}_{\rtree{t, \bfa}, j}(v_r)
            \ =\ 
            \sum_{u \in \nbr{\out}(v_r)}\feat{\ell_{t-1}}_{\rtree{t, \bfa}, j}(u)
            \ =\ 
            \sum_{i \in \intsinterval{\ell}}\feat{\ell_{t-1}}_{\rtree{t-1, \bfa_i}, j}(v_i),
        \end{equation*}
        which implies that the first $n$ entries of 
        $\feat{\ell_t}_{\rtree{t, \bfa}}(v_r)$ are $\bfa$.


        \item
        By the definition of $\cA_\Psi$, it holds that
        \begin{equation*}
            \begin{aligned}
                \feat{\ell_t}_{\rtree{t, \bfa}, n+1}(v_r)
                \ =\ &
                \sum_{u \in \nbr{\out}(v_r)}\feat{\ell_{t-1}}_{\rtree{t, \bfa}, n+3}(u)
                \ =\ 
                \sum_{i \in \intsinterval{\ell}}\feat{\ell_{t-1}}_{\rtree{t-1, \bfa_i}, n+3}(v_i)
                \ =\ 
                \sum_{i \in \intsinterval{\ell}} 1 \\
                \feat{\ell_t}_{\rtree{t, \bfa}, n+2}(v_r)
                \ =\ &
                \sum_{u \in \nbr{\out}(v_r)}\feat{\ell_{t-1}}_{\rtree{t, \bfa}, n+4}(u)
                \ =\ 
                \sum_{i \in \intsinterval{\ell}}\feat{\ell_{t-1}}_{\rtree{t-1, \bfa_i}, n+4}(v_i).
            \end{aligned}
        \end{equation*}
        For $1 \le i \le \ell$,
        by the induction hypothesis,
        ite holds that $\feat{\ell_{t-1}}_{\rtree{t-1, \bfa_i}, n+4}(v_i) = 1$.
        Thus, we have
        \begin{equation*}
            \feat{\ell_t}_{\rtree{t, \bfa}, n+1}(v_r)
            \ =\ 
            \feat{\ell_t}_{\rtree{t, \bfa}, n+2}(v_r).
        \end{equation*}
    \end{itemize}
\end{proof}

\section{Discussion} \label{sec:discuss}

This work extends the exploration of the relationship between aggregate-combine GNNs and logic, with exact characterizations of expressiveness for GNNs with eventually constant activation functions, and embedding a logic into the GNNs with standard $\relu$ activations. 
We also obtain  both decidability and undecidability results, some using the logical characterizations and some by porting the techniques used for decidability of the logics to apply directly on the GNNs. Perhaps the main take-away, echoing the theme
of \cite{barceloetallogical}, is that Presburger logics and the techniques for analyzing them can be relevant to  GNNs.

We give a fairly comprehensive picture of the complex of satisfiability and universal satisfiability:
see Table \ref{tab:sat} and \ref{tab:univsat}. We leave open a couple cases, such as truncated $\relu$, outgoing-only, with global readout.

In our work we leave open the question of getting an exact logical characterization of the expressiveness
of GNNs with regular $\relu$. We presented a logic that was contained in this class, which was sufficient
to obtain negative results on verification.
Also note that in this work we considered GNNs with bounded input features. We believe our techniques extend to the case of graphs where the features are attributes with unbounded precision: we are examining this in ongoing work.


Lastly, note that our  complexity analysis is rather limited. We provide complexity bounds for the outgoing-only case with truncated $\relu$, and for GNNs based on truncated $\relu$ and local aggregation, we have shown satisfiability is $\pspace$-complete, and is $\np$-complete for a fixed number of layers.  Of course, for the broad class of GNNs with eventually constant activation functions, it is difficult to talk about complexity bounds.  But for specific eventually constant functions one could hope to say more.

Looking at broader open issues, we focused here on some very basic verification problems on GNNs: can a certain classification be achieved?
But it is clear that our techniques apply to many other logic-based verification problems; for example, it can be applied to determine whether a GNN can achieve a certain classification on a graph satisfying a certain sentence -- provided that the sentence is also in one of our decidable logics.
In ongoing work, we are developing verification tools based on the ideas in the paper, via reduction to problems that can be decided using existing solvers.

Our work provides motivation for exploring the properties of Presburger logics over  relational structures and their connections with GNNs beyond the setting here, which considers only graphs with discrete feature values from a fixed set. In our ongoing work we are  adapting our techniques to deal with GNNs whose feature values are \emph{unbounded integers}, specified by an initial semi-linear set.




In the body of the paper, we have worked with a directed graph model. As mentioned earlier, all of our
 results which will deal with  bidirectional GNNs also apply to undirected graphs: we explain this in the appendix.
 
\bibliography{references}

\newpage
\appendix
\section{Results for Undirected Graphs}

We summarize the results of the complexity and decidability of verification problems for GNNs over undirected graphs in Table~\ref{tab:sat_undirected} and Table~\ref{tab:univsat_undirected}.
In cases where we have only written ``decidable', we have not computed precise bounds: and in the ``eventually constant'' case we would require more hypotheses in order to compute such bounds.

\begin{table}[h!]
    \tbl{Complexity and decidability results for the satisfiability problem for fragments of GNNs over undirected graphs.}
    {\begin{tblr}{
        colspec={l|c|c},
        hline{2, 3, 4, 5},
        cell{2}{3} = {r=4}{m},
        cell{4}{2} = {r=2}{m},
        }
        & Local ($\cL$) & Global \\
        $\trrelu$ &
        \makecell[c]{$\pspace$-complete \\(Theorem~\ref{thm:pspace_undirected})} &
        \makecell[c]{Undecidable \\(Theorem~\ref{thm:global_gnn_undirected_undecidable})} \\
        Eventually constant ($\cC$) &
        \makecell[c]{Decidable \\(Theorem~\ref{thm:local_gnn_undirected_decidable})} \\
        $\relu$ &
        \makecell[c]{Undecidable \\(Theorem~\ref{thm:hilbert_to_blrelugnn_undirected})} \\
        Piecewise linear ($\pw$) &
    \end{tblr}} \label{tab:sat_undirected}
\end{table}

\begin{table}[h!]
    \tbl{Decidability results for the universal satisfiability problem for fragments of GNNs over undirected graphs.}
    {\begin{tblr}{
        colspec={l|c|c},
        hline{2, 3, 4, 5},
        cell{2}{3} = {r=4}{m},
        cell{2}{2} = {r=2}{m},
        cell{4}{2} = {r=2}{m},
        }
        & Local ($\cL$) & Global \\
        $\trrelu$ &
        \makecell[c]{Decidable \\(Theorem~\ref{thm:local_gnn_universal_undirected_decidable})} &
        \makecell[c]{Undecidable \\(Theorem~\ref{thm:global_gnn_universal_undirected_undecidable})} \\
        Eventually constant ($\cC$) \\
        $\relu$ &
        \makecell[c]{Undecidable \\(Theorem~\ref{thm:gnn_unbounded_undecun})} \\
        Piecewise linear ($\pw$) 
    \end{tblr}} \label{tab:univsat_undirected}
\end{table}

\subsection{Undirected decidability results for Section \ref{subsec:spectrum}}

In the body of the paper, we mentioned that the decidability results for GNNs with eventually constant activations also apply to undirected graphs. Recall that we consider an undirected graph as a directed graph where the edge relation is symmetrical. 
We now explain the modifications needed to adapt the results in Section \ref{subsec:spectrum} to the undirected case.

We first prove the undirected version of Corollary~\ref{corollary:mp2_decidable}.
The main idea is that we can enforce undirectedness within the larger decidable logic $\GPtwo$ to obtain decidability:

\begin{corollary}\label{corollary:local_mp2_undirected_decidable}
    The finite satisfiability problem of $\blMPtwo$ over undirected graphs is decidable.
\end{corollary}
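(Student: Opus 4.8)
The plan is to mimic the proof of Corollary~\ref{corollary:mp2_decidable}, but additionally force the witnessing model to be undirected. Recall that finite satisfiability of $\GPtwo$ is decidable (Theorem~\ref{thm:gp2_decidabel}), and that in the proof of Corollary~\ref{corollary:mp2_decidable} we reduced satisfiability of a $\blMPtwo$ formula $\varphi(x)$ to finite satisfiability of the $\GPtwo$ sentence $\exists x\ U_{n+1}(x) \land \varphi(x)$ via a fresh unary predicate $U_{n+1}$. For the undirected case I would first observe that $\GPtwo$ can express that the edge relation $E$ is symmetric: the sentence
\begin{equation*}
    \chi_{\mathrm{sym}}\ :=\ \forall x\ (x=x) \to \big(\forall y\ E(x,y) \to E(y,x)\big)
\end{equation*}
is a $\GPtwo$ sentence, since both quantifiers are guarded in the sense required by the definition of $\GPtwo$ (the outer by $x=x$, the inner by $E(x,y)$).

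Next I would claim that $\varphi(x)$ is finitely satisfiable over undirected graphs if and only if the $\GPtwo$ sentence $\psi := \chi_{\mathrm{sym}} \land \exists x\ U_{n+1}(x) \land \varphi(x)$ is finitely satisfiable. The argument is the same shape as in Corollary~\ref{corollary:mp2_decidable}. For the forward direction, if $\varphi(x)$ holds at a vertex $v$ of a finite undirected $n$-graph $\cG$, then extending $\cG$ with $U_{n+1} := \set{v}$ yields a finite model of $\psi$, since $\cG$ being undirected makes $\chi_{\mathrm{sym}}$ true. For the converse, if $\psi$ has a finite model $\cG$, then $\cG$ satisfies $\chi_{\mathrm{sym}}$, so its edge relation is symmetric, i.e. $\cG$ (after forgetting $U_{n+1}$) is a finite undirected $n$-graph; moreover there is a vertex $v$ with $\cG \models U_{n+1}(v) \land \varphi(v)$, hence $\cG \models \varphi(v)$, and since $U_{n+1}$ does not occur in $\varphi$ the reduct satisfies $\varphi(v)$ as well. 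Decidability then follows by applying Theorem~\ref{thm:gp2_decidabel} to $\psi$.

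I do not anticipate a serious obstacle here; the only points requiring care are (i) checking that $\chi_{\mathrm{sym}}$ genuinely lies in the $\GPtwo$ fragment as defined — in particular that the guard forms $x = x$ and $E(x,y)$ are among those permitted — and (ii) being careful that $\blMPtwo$'s semantics over ``undirected graphs'' is exactly the semantics over directed graphs with symmetric $E$, so that the two notions of satisfaction coincide on such structures. Both are immediate from the definitions in the preliminaries, so the proof is essentially a one-line modification of Corollary~\ref{corollary:mp2_decidable}.
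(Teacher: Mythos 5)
Your proof is correct and takes essentially the same approach as the paper: both reduce to finite satisfiability in $\GPtwo$ by conjoining the sentence from Corollary~\ref{corollary:mp2_decidable} with a symmetry axiom on $E$ and then invoking Theorem~\ref{thm:gp2_decidabel}. Your formulation $\forall x\ (x=x) \to (\forall y\ E(x,y) \to E(y,x))$ is, if anything, slightly more careful than the paper's about making the guard on the outer quantifier explicit.
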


\begin{proof}
    The proof is similar to that of Corollary~\ref{corollary:mp2_decidable}.
    Let $\varphi(x)$ be an $n$-$\blMPtwo$ formula and $U_{n+1}$ be a fresh unary predicate.
    We claim that $\varphi(x)$ is finitely satisfiable over undirected graphs if and only if the $\GPtwo$ sentence
    \begin{equation*}
        \psi'\ :=\ \psi \land \left( \forall x\ \forall y\ E(x, y) \to E(y, x)\right)
    \end{equation*}
    is also finitely satisfiable,
    where $\psi := \exists x\ U_{n+1}(x) \land \varphi(x)$ is the $\GPtwo$ sentence defined in Corollary~\ref{corollary:mp2_decidable}.
    Then the corollary follows from the decidability of the finite satisfiability problem of $\GPtwo$ by Theorem~\ref{thm:gp2_decidabel}.
    
    If $\varphi(x)$ is finitely satisfiable by the undirected $n$-graph $\cG$ and vertex $v \in G$,
    let $\cG'$ be the $(n+1)$-graph that extended $\cG$ with $U_{n+1} := \set{v}$.
    By the argument similar to Corollary~\ref{corollary:mp2_decidable},
    we have $\cG' \models \psi$.
    Since $\cG$ is a undirected graph, for every vertices $v, u \in V$,
    if $(v, u) \in E$, then $(u, v) \in E$,
    which implies that $\cG' \models \forall x\ \forall y\ E(x, y) \to E(y, x)$.
    Hence, $\cG' \models \psi'$.

    If $\psi$ is finitely satisfiable by the $(n+1)$-graph $\cG$,
    let $\cG'$ be the $n$-graph that restricted $\cG$ by removing $U_{n+1}$.
    By the argument similar to Corollary~\ref{corollary:mp2_decidable},
    there exists a vertex $v \in V$ such that $\cG' \models \varphi(v)$.
    Since $\cG \models \forall x\ \forall y\ E(x, y) \to E(y, x)$,
    for every vertices $v, u \in V$,
    if $(v, u) \in E$, then $(u, v) \in E$.
    Therefore $\cG$ is a undirected graph, and so is $\cG'$.
\end{proof}

Recall that in Section \ref{subsec:spectrum} we obtained decidability results by reducing to satisfiability of the logic
over directed graphs.
It is now straightforward to show that by reducing to decidability in the logic $\blMPtwo$ over undirected graphs, shown above, 
we extend the results to the undirected case. That is,   satisfiability and universal satisfiability problem for local GNNs with eventually constant activations over undirected graphs -- that is, the standard notion of GNN -- is decidable.
\begin{theorem}\label{thm:local_gnn_undirected_decidable}
    The satisfiability problem of $\blcGNN$s over undirected graphs is decidable.
\end{theorem}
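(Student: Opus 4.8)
The plan is to reduce the satisfiability problem for $\blcGNN$s over undirected graphs to the finite satisfiability problem for $\blMPtwo$ over undirected graphs, which is decidable by Corollary~\ref{corollary:local_mp2_undirected_decidable}. The key observation is that the translation in Theorem~\ref{thm:gnn_to_logic} (via Lemma~\ref{lem:gnn_to_logic}) is purely syntactic: it transforms an $n$-$\blcGNN$ $\cA$ into an $n$-$\blMPtwo$ formula $\psi_\cA(x)$ such that for every $n$-graph $\cG$ and vertex $v$, $\cA$ accepts $\tuple{\cG, v}$ if and only if $\cG, v \models \psi_\cA(x)$. Nothing in that construction depends on whether $\cG$ is directed or undirected, so the equivalence continues to hold when we restrict attention to undirected graphs: $\cA$ is satisfiable over undirected graphs if and only if $\psi_\cA(x)$ is finitely satisfiable over undirected graphs.

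Concretely, I would proceed as follows. First, given a $\blcGNN$ $\cA$, apply Theorem~\ref{thm:gnn_to_logic} to obtain the effectively computable $\blMPtwo$ formula $\psi_\cA(x)$ equivalent to $\cA$. Second, note that this equivalence, being stated for all $n$-graphs and vertices, in particular holds for all undirected $n$-graphs and their vertices; hence $\cA$ accepts some $\tuple{\cG, v}$ with $\cG$ undirected precisely when $\psi_\cA(x)$ is finitely satisfiable over undirected graphs. Third, invoke Corollary~\ref{corollary:local_mp2_undirected_decidable}, which gives decidability of finite satisfiability of $\blMPtwo$ over undirected graphs, to conclude that satisfiability of $\blcGNN$s over undirected graphs is decidable.

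The argument is essentially a transcription of the proof of Theorem~\ref{thm:local_gnn_decidable} with ``graph'' replaced by ``undirected graph'' throughout, leaning on the fact that Corollary~\ref{corollary:local_mp2_undirected_decidable} has already done the real work of encoding the undirectedness constraint inside the decidable logic $\GPtwo$. There is no substantive obstacle here: the only thing to check carefully is that the GNN-to-logic translation is genuinely agnostic to edge symmetry, which is immediate from its definition since the guard atoms $E(x,y)$ and $E(y,x)$ are interpreted over whatever edge relation the input graph provides. For completeness one can also record the analogous universal-satisfiability statement by the same reduction, using the $\GPtwo$ sentence $\forall x\ (x=x) \to \psi_\cA(x)$ conjoined with the symmetry axiom $\forall x\, \forall y\ E(x,y) \to E(y,x)$, exactly as in the proof of Theorem~\ref{thm:local_gnn_universal_decidable} combined with Corollary~\ref{corollary:local_mp2_undirected_decidable}.
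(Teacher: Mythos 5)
Your proof is correct and follows the same route as the paper: translate the $\blcGNN$ $\cA$ to the equivalent $\blMPtwo$ formula $\psi_\cA(x)$ via Theorem~\ref{thm:gnn_to_logic}, observe that the equivalence (stated over all $n$-graphs) restricts to undirected graphs, and then invoke Corollary~\ref{corollary:local_mp2_undirected_decidable} for decidability of $\blMPtwo$ over undirected graphs. The paper makes exactly this reduction, noting that the GNN-to-logic translation is graph-agnostic, so no substantive difference exists.
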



\begin{theorem}\label{thm:local_gnn_universal_undirected_decidable}
    The universal satisfiability problem of $\blcGNN$s over undirected graphs is decidable.
\end{theorem}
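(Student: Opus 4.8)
\textbf{Proof proposal for Theorem~\ref{thm:local_gnn_universal_undirected_decidable}.}

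The plan is to mirror the proof of Theorem~\ref{thm:local_gnn_universal_decidable}, replacing the use of ordinary $\GPtwo$-satisfiability with its undirected variant, which is available via the argument already used in Corollary~\ref{corollary:local_mp2_undirected_decidable}. First I would invoke Theorem~\ref{thm:gnn_to_logic} to obtain, from a given $\blcGNN$ $\cA$, a $\blMPtwo$ formula $\psi_\cA(x)$ that is equivalent to $\cA$; crucially this translation and the equivalence it yields are agnostic to whether graphs are directed or undirected, since undirected graphs are simply a subclass of directed graphs, so $\cA$ and $\psi_\cA(x)$ accept exactly the same pairs $\tuple{\cG,v}$ when $\cG$ ranges over undirected graphs.

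Next I would reduce universal satisfiability of $\cA$ over undirected graphs to finite satisfiability of a single $\GPtwo$ sentence. Concretely, set
\begin{equation*}
    \varphi\ :=\ \bigl(\forall x\ (x=x) \to \psi_\cA(x)\bigr)\ \land\ \bigl(\forall x\ \forall y\ E(x,y) \to E(y,x)\bigr).
\end{equation*}
Note that the universal quantifier over $y$ guarded by $E(x,y)$ is expressible in $\GPtwo$, so $\varphi$ is indeed a $\GPtwo$ sentence. The claim is that $\cA$ is universally satisfiable over undirected graphs if and only if $\varphi$ is finitely satisfiable. For the forward direction, a finite undirected witness graph $\cG$ on which $\cA$ accepts every vertex satisfies the first conjunct because of the equivalence of $\cA$ and $\psi_\cA(x)$, and satisfies the second conjunct because $\cG$ is undirected. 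For the backward direction, a finite model $\cG$ of $\varphi$ is undirected by the second conjunct, and by the first conjunct together with the equivalence, $\cA$ accepts $\tuple{\cG,v}$ for every $v$; since $\cG$ is finite, this is a genuine undirected witness. Finally, decidability of finite satisfiability of $\GPtwo$ (Theorem~\ref{thm:gp2_decidabel}) gives the result.

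There is essentially no new obstacle here: the only subtlety, and the point I would take care over, is making sure that the symmetry axiom $\forall x\,\forall y\ E(x,y)\to E(y,x)$ can be phrased as a legitimate $\GPtwo$ formula (it can, using the guarded two-variable quantification rule of the logic) and that the logic-to-GNN equivalence from Theorem~\ref{thm:gnn_to_logic}, being purely semantic, restricts cleanly to the undirected subclass. Both are immediate from the definitions, so the proof is a short adaptation of the directed case, exactly parallel to how Theorem~\ref{thm:local_gnn_undirected_decidable} adapts Theorem~\ref{thm:local_gnn_decidable}.
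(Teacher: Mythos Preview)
Your proposal is correct and follows essentially the same approach as the paper: it adapts the proof of Theorem~\ref{thm:local_gnn_universal_decidable} by conjoining the symmetry axiom $\forall x\,\forall y\ E(x,y)\to E(y,x)$ to the $\GPtwo$ sentence, exactly as done in Corollary~\ref{corollary:local_mp2_undirected_decidable}, and then invokes decidability of $\GPtwo$. The paper does not spell out a separate proof for this theorem, merely noting that the directed arguments extend straightforwardly to the undirected case via this trick, which is precisely what you do.
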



\subsection{Undirected variants of Section~\ref{subsec:global_undecidable}: undecidability in the eventually constantt case}

We can also revise our undecidability results for global GNNs with eventually constant activations to the undirected case,
thus giving undecidability for the usual notion of GNN with global readout.
This is done with the same reduction from solvability of simple equation systems to the finite satisfiability of $\bgMPtwo$ formulas, which we can show works over undirected graphs.
The main difference is the following stronger version of Lemma~\ref{lemma:hilbert_to_mptwo}.

\begin{lemma}\label{lemma:hilbert_to_mptwo_undirected}
    For every simple equation system $\varepsilon$ with $n$ variables and $m$ equations,
    there exists an $(n+m)$-$\MPtwo$ formula $\Psi_\varepsilon(x)$
    such that the following are equivalent,
    \begin{enumerate}
        \item The system $\varepsilon$ has a solution in $\bbN$.
        \item There exists an undirected $(n+m)$-graph $\cG$ such that for every vertex $v$ in $\cG$, $\cG \models \Psi_\varepsilon(v)$.
        \item The formula $\Psi_\varepsilon(x)$ is finitely satisfiable over undirected graphs.
    \end{enumerate}
\end{lemma}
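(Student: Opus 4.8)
The plan is to revisit the construction of $\Psi_\varepsilon(x)$ from Lemma~\ref{lemma:hilbert_to_mptwo} and verify that the witness graph built in the direction $(1)\Rightarrow(2)$ can be replaced by an \emph{undirected} graph, while the reverse direction $(3)\Rightarrow(1)$ continues to work verbatim. The key observation is that the formula $\Psi_\varepsilon(x)$ distinguishes the two orientations of an edge only through the guard atoms $E(x,y)$ versus $E(y,x)$ inside the Presburger quantifiers; if we are willing to symmetrize the edge relation, we must make sure that adding the reverse of every edge does not spoil any of the counting identities. In the original construction the edges $E_t$ encoding an equation $\upsilon_{t_1}=\upsilon_{t_2}+\upsilon_{t_3}$ (or the product case) go \emph{only} from ``summand/factor'' vertices to ``result'' vertices, and the counting subformulas $\psi_t(y)$ count outgoing edges from one side and incoming edges into the other. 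Since the $U_i$'s and $P_t$'s partition $V$, a vertex that is a ``result'' vertex for equation $t$ (i.e.\ in $P_t\cap U_{t_1}$) is never simultaneously a ``summand'' vertex for that same equation, so in the \emph{symmetrized} graph the number of $E$-neighbors of a given color is still exactly what we want, provided we read the guards carefully.

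Concretely, first I would take the same $\Psi_\varepsilon(x)$ as in Lemma~\ref{lemma:hilbert_to_mptwo} — no change to the formula is needed, since $\MPtwo$ already allows both $E(x,y)$ and $E(y,x)$ guards. For $(1)\Rightarrow(2)$ I would build the same vertex set, the same unary predicates $P_t,U_i$, and then take $E:=\bigcup_{t}\left(E_t\cup E_t^{-1}\right)$, i.e.\ the symmetric closure of the directed edge set used before. I would then recheck each clause of $\Psi_\varepsilon$: the disjointness/equicardinality parts $\psi^{\itdisj},\psi^{\iteq}$ depend only on the unary predicates, so they are unaffected; for each equation subformula $\varphi_t(x)$ I would verify that, because the ``source'' and ``target'' color classes of $E_t$ are disjoint within the relevant $P_t$-block, counting $E(u,x)$ (resp.\ $E(x,u)$) edges with a target color still yields exactly the original count — the reverse edges $E_t^{-1}$ contribute to the \emph{opposite} guard direction and land on the opposite color class, hence are not counted by the quantifier in question. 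The cases $\upsilon_{t_1}=1$ (empty $E_t$) and $\upsilon_{t_1}=\upsilon_{t_2}+\upsilon_{t_3}$ are immediate; the multiplication case requires checking that the $\abs{V_{t,t_3}}$-to-one structure from $v_{t,t_2,j}$ into the result block is preserved, which again follows from the disjointness of $U_{t_2}$ and $U_{t_1}$ inside $P_t$. The implication $(2)\Rightarrow(3)$ is trivial, and $(3)\Rightarrow(1)$ is literally the same argument as in Lemma~\ref{lemma:hilbert_to_mptwo}, since that direction never used any property of the edge relation beyond what the formula asserts.

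The main obstacle I anticipate is precisely this bookkeeping in the multiplication case: when the edge set is symmetrized, a vertex $v_{t,t_1,k}$ in the result block of a product equation now has \emph{both} an outgoing and an incoming edge to/from its unique ``factor'' neighbor $v_{t,t_2,j'}$, and one must be certain that the guard $E(x,y)\wedge P_t(y)\wedge U_{t_2}(y)$ appearing in $\psi_t$ still counts exactly one such $y$ rather than accidentally picking up the reverse edge in the wrong direction or a stray edge from another equation's gadget. This is resolved by noting that the guards in $\psi_t$ always pair a direction with a target color class that lies on the ``far'' side of $E_t$, so the reverse edges $E_t^{-1}$ — which have the same endpoints but connect the color classes in the opposite role — are simply never the ones being counted; and edges from $E_{t'}$ for $t'\neq t$ are excluded by the $P_t$ conjunct. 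Once this is checked for all three equation types, Lemma~\ref{lemma:hilbert_to_mptwo_undirected} follows, and combined with undecidability of solvability of simple equation systems~\cite{hilbert_theth} it yields the undirected analogues Theorem~\ref{thm:global_gnn_undirected_undecidable} and Theorem~\ref{thm:global_gnn_universal_undirected_undecidable} via Corollary~\ref{cor:logicequalgnn}, exactly as in the directed case.
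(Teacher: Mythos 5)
Your proposal is correct and follows essentially the same route as the paper's proof: reuse the formula $\Psi_\varepsilon(x)$ from Lemma~\ref{lemma:hilbert_to_mptwo} unchanged, note that $(2)\Rightarrow(3)$ and $(3)\Rightarrow(1)$ carry over verbatim, and for $(1)\Rightarrow(2)$ replace the directed witness graph $\cG$ by its symmetric closure. The paper's own write-up simply asserts ``all arguments for $\cG$ also hold for $\cG'$''; your version spells out why that assertion is safe — namely that for each equation block $t$ the edge set $E_t$ only connects disjoint color classes inside $P_t$, so symmetrizing adds edges whose reverse orientation always pairs a counting direction with the ``wrong'' target color and therefore never inflates any of the cardinalities appearing in $\psi_t$ — which is a welcome bit of extra rigor but not a different method.
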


\begin{proof}
    Let $\Psi_\varepsilon(x)$ be the $(n+m)$-$\MPtwo$ formula which is defined in Lemma~\ref{lemma:hilbert_to_mptwo}.
    The proof for (2) $\Rightarrow$ (3) and (3) $\Rightarrow$ (1) are the same.
    We only need to revise the proof for (1) $\Rightarrow$ (2).
    Let $\cG'$ be the graph obtained by adding symmetric edges to $\cG$ in the proof.
    Clearly, $\cG'$ is undirected.
    It can be verified that all arguments for $\cG$ also holds for $\cG'$.
    Thus, the property holds trivially.

\end{proof}

The following undecidability results over undirected graphs can be proven based on Lemma~\ref{lemma:hilbert_to_mptwo_undirected}.
\begin{theorem}\label{thm:global_mptwo_undirected_undecidable}
    The finite satisfiability problem of $\bgMPtwo$ over undirected graphs is undecidable.
\end{theorem}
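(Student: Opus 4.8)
The plan is to prove Theorem~\ref{thm:global_mptwo_undirected_undecidable} by combining Lemma~\ref{lemma:hilbert_to_mptwo_undirected} with the undecidability of the solvability of simple equation systems. Concretely, suppose toward a contradiction that the finite satisfiability problem of $\bgMPtwo$ over undirected graphs were decidable. Given an arbitrary simple equation system $\varepsilon$ with $n$ variables and $m$ equations, we can effectively construct the $(n+m)$-$\MPtwo$ formula $\Psi_\varepsilon(x)$ of Lemma~\ref{lemma:hilbert_to_mptwo_undirected}; note that $\Psi_\varepsilon(x)$ is a $\bgMPtwo$ formula, since the quantifiers appearing in it are unrestricted (``global'') Presburger quantifiers of the kind allowed in $\MPtwo = \bgMPtwo$. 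By the equivalence of items (1) and (3) in Lemma~\ref{lemma:hilbert_to_mptwo_undirected}, $\varepsilon$ has a solution in $\bbN$ if and only if $\Psi_\varepsilon(x)$ is finitely satisfiable over undirected graphs. Hence a decision procedure for the latter would yield a decision procedure for the solvability of simple equation systems, contradicting the undecidability of Hilbert's tenth problem as captured by the undecidability of simple equation system solvability~\cite{hilbert_theth}.

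The only step requiring any care is confirming that $\Psi_\varepsilon(x)$ genuinely lies in $\bgMPtwo$ and that the construction is effective; both are immediate from the definition of $\Psi_\varepsilon(x)$ given in the proof of Lemma~\ref{lemma:hilbert_to_mptwo}, which uses only Boolean combinations of unary predicates and Presburger quantifiers with guard atoms of the form $E(x,y)$, $E(y,x)$, or $\top$ — exactly the syntax of $\MPtwo$. The ``hard part'' of the overall argument — namely showing that the encoding faithfully transfers between simple equation systems and undirected graph models — has already been discharged in Lemma~\ref{lemma:hilbert_to_mptwo_undirected}, where the symmetrization of the witness graph in the direction (1)~$\Rightarrow$~(2) is verified, and the reverse directions are inherited verbatim from the directed case of Lemma~\ref{lemma:hilbert_to_mptwo}. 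So here the reduction is essentially a one-line wrapper around that lemma. I would present the proof as a short paragraph: assume decidability, reduce from simple equation system solvability via $\Psi_\varepsilon$, invoke Lemma~\ref{lemma:hilbert_to_mptwo_undirected}, and conclude.

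\begin{proof}
    Suppose, for contradiction, that the finite satisfiability problem of $\bgMPtwo$ over undirected graphs is decidable.
    Given a simple equation system $\varepsilon$ with $n$ variables and $m$ equations, let $\Psi_\varepsilon(x)$ be the $(n+m)$-$\MPtwo$ formula constructed in Lemma~\ref{lemma:hilbert_to_mptwo_undirected}; this formula is effectively computable from $\varepsilon$ and, since $\MPtwo$ coincides with $\bgMPtwo$, it is a $\bgMPtwo$ formula.
    By the equivalence of items (1) and (3) in Lemma~\ref{lemma:hilbert_to_mptwo_undirected}, $\varepsilon$ has a solution in $\bbN$ if and only if $\Psi_\varepsilon(x)$ is finitely satisfiable over undirected graphs.
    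Hence the assumed decision procedure would decide whether an arbitrary simple equation system is solvable over $\bbN$, contradicting the undecidability of that problem~\cite{hilbert_theth}.
    Therefore the finite satisfiability problem of $\bgMPtwo$ over undirected graphs is undecidable.
\end{proof}
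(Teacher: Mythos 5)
Your proof is correct and follows exactly the paper's route: invoke Lemma~\ref{lemma:hilbert_to_mptwo_undirected} to reduce solvability of simple equation systems to finite satisfiability of $\bgMPtwo$ over undirected graphs, and then appeal to the undecidability of Hilbert's tenth problem~\cite{hilbert_theth}. The paper states the theorem as an immediate consequence of that lemma, and your write-up simply makes the one-line wrapper explicit.
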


\begin{theorem}\label{thm:global_gnn_undirected_undecidable}
    The satisfiability problem of $\bgtrreluGNN$s over undirected graphs is undecidable.
\end{theorem}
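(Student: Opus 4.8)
\textbf{Proof proposal for Theorem~\ref{thm:global_gnn_undirected_undecidable}.}

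The plan is to follow exactly the strategy used for the directed case in Section~\ref{subsec:global_undecidable}, substituting the undirected analogue of each ingredient. The directed result, Theorem~\ref{thm:global_gnn_undecidable}, was obtained by combining the undecidability of finite satisfiability of $\MPtwo$ (Theorem~\ref{thm:global_mptwo_undecidable}) with the expressive equivalence of $\bgcGNN$s and $\MPtwo$ (Corollary~\ref{cor:logicequalgnn}), and then invoking the translation from $\MPtwo$ formulas to $\bgtrreluGNN$s (Theorem~\ref{thm:logic_to_gnn}), which preserves acceptance on a graph-by-graph basis and hence in particular over undirected graphs. So the first thing I would do is state the undirected analogue of the logic undecidability, namely Theorem~\ref{thm:global_mptwo_undirected_undecidable}, which in turn follows immediately from Lemma~\ref{lemma:hilbert_to_mptwo_undirected}: that lemma says that for a simple equation system $\varepsilon$, the formula $\Psi_\varepsilon(x)$ is finitely satisfiable over undirected graphs iff $\varepsilon$ is solvable over $\bbN$, and since solvability of simple equation systems is undecidable~\cite{hilbert_theth}, finite satisfiability of $\bgMPtwo$ over undirected graphs is undecidable.

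Next I would push this through the translation. Given a simple equation system $\varepsilon$, let $\Psi_\varepsilon(x)$ be the $(n+m)$-$\MPtwo$ formula from Lemma~\ref{lemma:hilbert_to_mptwo_undirected}, and let $\cA_{\Psi_\varepsilon}$ be the $(n+m)$-$\bgtrreluGNN$ obtained from $\Psi_\varepsilon(x)$ via Theorem~\ref{thm:logic_to_gnn}. That theorem guarantees that for every graph $\cG$ and vertex $v$, $\cA_{\Psi_\varepsilon}$ accepts $\tuple{\cG,v}$ iff $\cG,v\models\Psi_\varepsilon(x)$; this equivalence is stated for all graphs, so it holds when $\cG$ ranges only over undirected graphs. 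Hence $\cA_{\Psi_\varepsilon}$ is satisfiable over undirected graphs iff $\Psi_\varepsilon(x)$ is finitely satisfiable over undirected graphs iff $\varepsilon$ has a solution in $\bbN$. Since the construction of $\cA_{\Psi_\varepsilon}$ from $\varepsilon$ is effective, this gives a reduction from solvability of simple equation systems to satisfiability of $\bgtrreluGNN$s over undirected graphs, establishing undecidability.

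The only real work, and the main obstacle, is Lemma~\ref{lemma:hilbert_to_mptwo_undirected}: one must check that the graph $\cG$ built in the direction $(1)\Rightarrow(2)$ of Lemma~\ref{lemma:hilbert_to_mptwo} can be ``symmetrized'' — i.e., replaced by the graph $\cG'$ obtained by adding all reverse edges — without breaking the verification of $\Psi_\varepsilon$ at every vertex. Here I would inspect the subformulas $\varphi_t(x)$ used in the encoding: the counting conditions there are phrased using guards $E(x,y)$ and $E(y,x)$ separately, and the model $\cG$ is built so that every relevant edge is ``used'' in exactly one direction. Adding the reverse edges could in principle change the cardinalities $\abs{\setc{u}{E(u,u')\land\cdots}}$. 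The key point to verify is that the predicates $P_t, U_i$ are arranged so that a reverse edge never connects two vertices both satisfying the unary-predicate patterns appearing in any $\varphi_t$, so the counted sets are unchanged; the disjointness and equality gadgets $\psi^{\itdisj}$ and $\psi^{\iteq}$ only involve unary predicates and are unaffected. For the directions $(2)\Rightarrow(3)$ and $(3)\Rightarrow(1)$, nothing changes, exactly as noted in the excerpt, since $(3)$ is immediate from $(2)$ and the extraction of a solution in $(3)\Rightarrow(1)$ only uses unary-predicate cardinalities together with the edge-counting conditions, which are still available in an undirected graph. So the proof reduces to this one routine-but-careful verification that symmetrization is harmless, after which Theorem~\ref{thm:global_gnn_undirected_undecidable} follows from Theorem~\ref{thm:global_mptwo_undirected_undecidable} and Theorem~\ref{thm:logic_to_gnn} exactly as in the directed case.
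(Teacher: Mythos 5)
Your proposal is correct and follows the same route as the paper: establish Lemma~\ref{lemma:hilbert_to_mptwo_undirected} by symmetrizing the witness graph from the directed Lemma~\ref{lemma:hilbert_to_mptwo}, derive Theorem~\ref{thm:global_mptwo_undirected_undecidable}, and then push through the graph-by-graph translation of Theorem~\ref{thm:logic_to_gnn}. Your sketch of why symmetrization is harmless is in fact slightly more explicit than the paper's one-line justification (which merely asserts that the arguments carry over), and correctly identifies the crux: each $E_t$ is antisymmetric, the guards in every $\psi_t$ restrict both endpoints to $P_t$ vertices, and within $E_t$ the $U_i$-types of source and destination differ (as $t_1,t_2,t_3$ are pairwise distinct), so the added reverse edges never enlarge any counted set.
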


\begin{theorem}\label{thm:global_gnn_universal_undirected_undecidable}
    The universal satisfiability problem of $\bgtrreluGNN$s over undirected graphs is undecidable.
\end{theorem}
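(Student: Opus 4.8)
\textbf{Proof plan for Theorem~\ref{thm:global_gnn_universal_undirected_undecidable}.}
The plan is to follow exactly the same route used to obtain Theorem~\ref{thm:global_gnn_universal_undecidable} in the directed setting, but substituting the undirected version of the Hilbert reduction, namely Lemma~\ref{lemma:hilbert_to_mptwo_undirected}, for its directed counterpart Lemma~\ref{lemma:hilbert_to_mptwo}. Concretely, given a simple equation system $\varepsilon$ with $n$ variables and $m$ equations, I would take the $(n+m)$-$\MPtwo$ formula $\Psi_\varepsilon(x)$ supplied by Lemma~\ref{lemma:hilbert_to_mptwo_undirected}. By Corollary~\ref{cor:logicequalgnn}, $\MPtwo$ and $\bgcGNN$s are expressively equivalent, and by the translation underlying Theorem~\ref{thm:logic_to_gnn} (which produces a $\bgtrreluGNN$) there is a $\bgtrreluGNN$ $\cA_\varepsilon$ equivalent to $\Psi_\varepsilon(x)$ in the Boolean-semantics sense: for every graph $\cG$ and vertex $v$, $\cA_\varepsilon$ accepts $\tuple{\cG,v}$ iff $\cG,v\models\Psi_\varepsilon(x)$. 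This equivalence is stated for graphs in general, hence in particular holds when $\cG$ ranges over undirected graphs.

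The key step is then to observe the chain of equivalences: $\varepsilon$ is solvable over $\bbN$ $\iff$ (by clause (2) of Lemma~\ref{lemma:hilbert_to_mptwo_undirected}) there is an undirected $(n+m)$-graph $\cG$ with $\cG\models\Psi_\varepsilon(v)$ for every vertex $v$ $\iff$ (by the GNN/logic equivalence above) there is an undirected graph $\cG$ such that $\cA_\varepsilon$ accepts $\tuple{\cG,v}$ for every vertex $v$ $\iff$ $\cA_\varepsilon$ is universally satisfiable over undirected graphs. Since solvability of simple equation systems over $\bbN$ is undecidable~\cite{hilbert_theth}, this gives a reduction witnessing undecidability of the universal satisfiability problem for $\bgtrreluGNN$s over undirected graphs.

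One small care point, rather than a genuine obstacle: I must make sure that the direction ``$\cA_\varepsilon$ universally satisfiable over undirected graphs $\Rightarrow$ $\varepsilon$ solvable'' really goes through, since a priori the witness graph for universal satisfiability of $\cA_\varepsilon$ need only be \emph{some} undirected graph, not one of the special shape constructed in clause (1) of the lemma. But this is exactly what clause (3) $\Rightarrow$ (1) of Lemma~\ref{lemma:hilbert_to_mptwo_undirected} provides — finite satisfiability of $\Psi_\varepsilon(x)$ over undirected graphs already implies solvability — and any undirected witness graph on which $\cA_\varepsilon$ universally accepts is in particular a model of $\forall x\,\Psi_\varepsilon(x)$, hence a finite model of $\Psi_\varepsilon(x)$. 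So no extra work is needed; the only thing to double-check is that the translation of Theorem~\ref{thm:logic_to_gnn} is insensitive to the directed-vs-undirected distinction, which it is because it is a syntactic construction on the formula that does not assume edge symmetry. I expect the whole argument to be a few lines, paralleling the proof of Theorem~\ref{thm:global_gnn_universal_undecidable}; the ``hard part'' was already discharged in proving Lemma~\ref{lemma:hilbert_to_mptwo_undirected}, where one checks that symmetrizing the edge set of the constructed graph does not disturb the counting arguments for the addition and multiplication gadgets.
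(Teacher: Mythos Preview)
Your proposal is correct and matches the paper's approach: the paper states that Theorem~\ref{thm:global_gnn_universal_undirected_undecidable} follows from Lemma~\ref{lemma:hilbert_to_mptwo_undirected} in exactly the way you outline, combining the undirected Hilbert reduction with the logic-to-GNN translation of Theorem~\ref{thm:logic_to_gnn}. One minor simplification: for the backward direction you can invoke clause (2)$\Rightarrow$(1) of Lemma~\ref{lemma:hilbert_to_mptwo_undirected} directly rather than detouring through clause (3), since a universal-satisfiability witness is precisely an undirected graph on which $\Psi_\varepsilon$ holds at every vertex.
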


\subsection{Undirected $\pspace$-completeness results for Section \ref{sec:pspace}}

We can also revise our $\pspace$-completeness results for local GNNs with $\trrelu$ to the undirected case:

\begin{theorem}\label{thm:pspace_undirected}
    The satisfiability problem of $\bltrreluGNN$s is $\pspace$-complete,
    and it becomes $\np$-complete when the number of layers is fixed.
\end{theorem}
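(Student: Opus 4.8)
\textbf{Proof plan for Theorem~\ref{thm:pspace_undirected}.}
The plan is to re-run the entire argument of Theorem~\ref{thm:pspace} while checking, step by step, that every construction used stays inside the class of undirected graphs. Recall that for the directed case the upper bound went through three ingredients: (1) the exponential history-space property (Lemmas~\ref{lemma:overapprox-trrelu}, \ref{lemma:smallbits}), which bounds the number and bit-size of the values computed by a $\bltrreluGNN$; (2) the exponential tree model property (Theorem~\ref{thm:exp_tree_model}), obtained by unravelling a satisfying model into a tree and then applying the integer-programming surgery of Lemma~\ref{lemma:ilp} to bound the branching; and (3) the on-the-fly traversal Algorithm~\ref{alg:pspace}, which guesses and checks the tree locally using only polynomial space. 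The lower bound came from embedding $\alc$ with one role into $\olMPtwo$ (Lemma~\ref{lemma:alc}) together with the polynomial translation $\olMPtwo \to \oltrreluGNN$ of Theorem~\ref{thm:gnn_to_logic}, plus a direct $\threeSAT$ encoding for the fixed-layer $\np$-hardness.

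First I would observe that ingredient (1) is completely insensitive to orientation: the overapproximation $\spectrumover{\ell}$ and the cardinality bound $D_\cA$ depend only on the coefficients of $\cA$, not on the input graphs, so Lemma~\ref{lemma:smallbits} holds verbatim over undirected graphs. For ingredient (2) the essential point is that the unravelling tree $\urv{L}{\cG,v}$ already identifies an edge and its reverse (a proper path forbids subsequences $v\leftarrow u\rightarrow v$ and $v\rightarrow u\leftarrow v$), so starting from an \emph{undirected} graph the unravelling is a tree in which each tree-edge carries both orientations. I would therefore define the relevant notion of an ``undirected tree model'' to be such a tree — equivalently, orient every tree-edge in both directions — and note that Lemma~\ref{lemma:unravelling} still yields $\hist{L}_{\cT}(v_r)=\hist{L}_{\cG}(v_r)$, since the proof only uses that the local neighbourhood sums are preserved, which remains true when we keep both copies of each edge. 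The characteristic system $\cQ_v$ and Lemma~\ref{lemma:charsys_solvable} then go through unchanged, because the surgery only duplicates subtrees hanging off $v$ and reattaches them with the same (now bidirectional) edge to $v$; the resulting $\cT'$ is still an undirected tree with the same degree bounds $\alpha_\cA,\beta_\cA$. Finally Algorithm~\ref{alg:pspace} is adapted by treating each guessed child as joined to its parent by a bidirectional edge — concretely, in the \textsc{Check} procedure one no longer distinguishes the direction argument $x\in\{\out,\inc\}$ but simply aggregates each child into both the $\out$- and $\inc$-sums when recomputing $\bfh'$ via Lemma~\ref{lemma:comp_hist}; the recursion depth is still $L$ and the space usage still polynomial, so the $\pspace$ (resp.\ $\np$ for fixed $L$) upper bound follows.

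For the lower bounds I would appeal to the appendix's undirected analogue of the logical translation: the standard translation $\pi_x$ of Lemma~\ref{lemma:alc} produces an $\olMPtwo$ formula using only guards $E(x,y)$, and an $\alc$ concept with one role is satisfiable over arbitrary structures iff it is satisfiable over structures where $E$ is an arbitrary binary relation, which one may take to be symmetric without loss of generality (one can always symmetrize and it is routine to check this preserves $\alc$-satisfiability when the role is used only existentially/universally in the standard way); hence $\pi_x(C)$ is finitely satisfiable over undirected graphs iff $C$ is $\alc$-satisfiable, giving $\pspace$-hardness via the polynomial $\olMPtwo\to\oltrreluGNN$ translation. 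The $\np$-hardness for fixed layers is obtained from the same $\threeSAT$ reduction, observing that the single-vertex witness graph used there is trivially undirected (adding the self-loop or no loop at all does not change $\nbr{\out}(v)$ vs.\ $\nbr{\inc}(v)$ in a way that affects the construction, since that reduction uses no aggregation over neighbours at all). I expect the main obstacle to be bookkeeping rather than conceptual: one must be careful that ``undirected tree'' is the right object (a tree with doubled edges, not a tree where each edge has a single arbitrary orientation), so that the degree-bounding surgery and the history-computation lemma both remain literally correct; once that definition is fixed, every step is a direct transcription of the directed proof.
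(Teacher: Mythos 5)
Your upper-bound plan (re-run Lemma~\ref{lemma:smallbits}, the unravelling, the characteristic-system surgery, and Algorithm~\ref{alg:pspace} while keeping track of undirectedness) is in the right spirit and matches the paper's stance that these adaptations are routine. One small inaccuracy: the unravelling $\urv{L}{\cG,v}$ of an undirected $\cG$ is \emph{not} a tree in which each tree-edge already carries both orientations; the proper-path condition forbids only the two alternating backtrack patterns, so for an undirected $\cG$ the unravelling has two separate copies of each child (one reached by a $\rightarrow$ step, one by a $\leftarrow$ step), each tree-edge singly oriented. Your proposed fix — work with a genuinely undirected tree, equivalently double every tree-edge — is the right object, but that object is not the output of the paper's unravelling operator, and one must re-prove the analogue of Lemma~\ref{lemma:unravelling} for it (which is routine, since on an undirected graph the in- and out-aggregations see the same neighbourhood, so the GNN collapses to a single aggregation with matrix $\coefA{\ell}{\out}+\coefA{\ell}{\inc}$).

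The genuine gap is in your lower bound. You assert that an $\alc$ concept with one role is satisfiable iff it is satisfiable over a symmetric interpretation of $R$, and hence that the unchanged translation $\pi_x(C)$ of Lemma~\ref{lemma:alc} already works over undirected graphs. This is false: symmetrizing a model does not preserve $\alc$-satisfiability. A concrete counterexample is
\begin{equation*}
C\ :=\ \neg A \sqcap \neg B \sqcap \exists R.A \sqcap \forall R.\forall R.B.
\end{equation*}
This $C$ is satisfiable by the two-element model $v_0 \xrightarrow{R} v_1$ with $v_0 \models \neg A \sqcap \neg B$ and $v_1 \models A$ (the inner $\forall R.B$ holds vacuously at $v_1$), but it is unsatisfiable when $R$ is required to be symmetric: any witness $v_1$ of $\exists R.A$ then has $v_0$ as an $R$-successor, and $\forall R.\forall R.B$ forces $v_0 \models B$, contradicting $\neg B$. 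This is the usual phenomenon that $K$-satisfiability and $KB$-satisfiability diverge. The paper's appendix resolves this by not reusing $\pi_x$ at all: it introduces fresh depth predicates $P_0,\ldots,P_L$ and a level-restricted translation $\trans{\ell}{x}$ that guards every quantifier with the next level's predicate, so that a symmetric $E$ on a level-stratified undirected graph behaves exactly like a directed edge from level $\ell$ to level $\ell+1$. Without some device of this kind, your reduction does not establish $\pspace$-hardness over undirected graphs.
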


The revisions for $\pspace$ algorithm and $\np$-complete results are obvious.
Here we only show the $\pspace$ lower bound to the undirected case.

The idea of the proof is similar to Lemma~\ref{lemma:alc}.
We establishe the $\pspace$ lower bound
by reducing the concept satisfiability of the description logic $\alc$
into the satisfiability problem of $\blMPtwo$ over undirected graphs.
We introduce fresh unary predicates $P_0, \ldots, P_L$, and
enforce that the vertex with depth $\ell$ only realizes $P_\ell$.
Thus, it is routine to convert models of the $\alc$ concept to the $\blMPtwo$ formula and vice versa.

\begin{lemma}
    There exists a polynomial time translation $\pi$ from $\alc$ concepts with one role $R$ to $\blMPtwo$ formulas such that the $\alc$ concept $C$ is satisfiable if and only if the $\blMPtwo$ formula $\pi(C)$ is finitely satisfiable over undirected graphs.
\end{lemma}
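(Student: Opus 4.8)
The plan is to reduce $\alc$ concept satisfiability (with a single role $R$) to finite satisfiability of $\blMPtwo$ over undirected graphs, mimicking the standard translation $\pi_x$ from Lemma~\ref{lemma:alc} but guarding each nested quantifier so that the tree-unravelling structure of an $\alc$ model is recoverable despite edges now being symmetric. The obstacle, compared to Lemma~\ref{lemma:alc}, is that in the directed case we used $E(x,y)$ versus $E(y,x)$ to distinguish ``move to a successor'' from ``move back to the parent'', but over undirected graphs $E(x,y)\leftrightarrow E(y,x)$, so a naive translation of $\exists R.C$ as $\presby{E(x,y)\land\pi_y(C)}\ge 1$ would also count the parent and any sibling reachable along undirected edges, breaking the correspondence with $\alc$ (which over finite models can always be taken to be a tree).

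First I would introduce $L+1$ fresh unary predicates $P_0,\dots,P_L$, where $L$ is the nesting depth of the input concept $C$ (more precisely the role-depth), and a fresh predicate that marks the root. The translation $\pi$ will carry an index: when translating a subconcept $D$ occurring at role-depth $\ell$, I produce a formula $\pi^\ell(D)$ that always conjoins $P_\ell(x)$, and I translate $\exists R.D$ (at depth $\ell$) as
\begin{equation*}
    \presby{E(x,y)\land P_{\ell+1}(y)\land \pi^{\ell+1}(D)} \ge 1,
\end{equation*}
and $\forall R.D$ as $\presby{E(x,y)\land P_{\ell+1}(y)\land\neg\pi^{\ell+1}(D)}=0$. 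The key point is that because the $P_\ell$ partition the vertices by depth and each edge used in the translation goes from a $P_\ell$-vertex to a $P_{\ell+1}$-vertex, the parent (which is a $P_{\ell-1}$-vertex) is never counted, so the undirectedness of $E$ becomes harmless. The overall formula $\pi(C)$ also asserts $P_0(x)$ at the root, asserts that every vertex realizes exactly one $P_\ell$, and asserts the layering constraint that a $P_\ell$-vertex has no $E$-neighbour in $P_{\ell'}$ for $\ell'\notin\{\ell-1,\ell+1\}$ — all expressible by Presburger quantifiers with $E$-guards, hence in $\blMPtwo$.

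Then I would argue both directions. For soundness: given a finite $\alc$ model of $C$, unravel it into a tree of depth $\le L$ rooted at a witness of $C$ (finite $\alc$ has the finite tree model property up to role-depth $L$), label level $\ell$ with $P_\ell$, make edges undirected, and check by induction on subconcepts that $\pi^\ell(D)$ holds at a level-$\ell$ node iff $D$ holds there in the $\alc$ model; the layering guarantees the Presburger counts over $E$-neighbours in $P_{\ell+1}$ agree with the $R$-successor counts. For completeness: from a finite undirected model of $\pi(C)$, read off a relational structure by interpreting $R$ as $E$ restricted to pairs $(u,v)$ with $u\in P_\ell$, $v\in P_{\ell+1}$; the layering constraints ensure this is well-behaved, and a routine induction shows the witness of $P_0(x)\land\pi(C)$ satisfies $C$. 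Polynomial-time computability of $\pi$ is immediate from the recursive definition. Finally, combining with the polynomial-time reduction from $\blMPtwo$ (indeed $\olMPtwo$ with the extra readout-free encoding) to $\bltrreluGNN$ from Theorem~\ref{thm:logic_to_gnn}, and with $\pspace$-hardness of $\alc$ concept satisfiability with one role~\cite{alc-pspace}, we get $\pspace$-hardness of the satisfiability problem of $\bltrreluGNN$s over undirected graphs; the $\pspace$ upper bound and the fixed-layer $\np$-completeness transfer from the directed arguments with only cosmetic changes (the unravelling and characteristic-system machinery is insensitive to whether we view edges as symmetric), which establishes Theorem~\ref{thm:pspace_undirected}.
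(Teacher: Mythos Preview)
Your approach is the same as the paper's: both introduce depth predicates $P_0,\ldots,P_L$ (with $L$ the role-depth of $C$) and translate $\exists R.D$, $\forall R.D$ occurring at depth $\ell$ by restricting the $E$-guarded count to neighbours carrying the level-$(\ell{+}1)$ marker, so that the parent (at level $\ell{-}1$) is never picked up despite $E$ being symmetric. The paper writes the guard as $\psil{\ell+1}(y) := P_{\ell+1}(y)\land\bigwedge_{i\ne\ell+1}\neg P_i(y)$ and sets $\pi(C):=\psil{0}(x)\land\pi^0_x(C)$, with no further structural constraints; the root-marker predicate you mention is redundant with $P_0$.

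There is one gap in your write-up. The two side-conditions you add --- that \emph{every} vertex realises exactly one $P_\ell$, and that a $P_\ell$-vertex has no $E$-neighbour outside levels $\ell\pm 1$ --- are \emph{not} expressible in $\blMPtwo$ as stated: the logic is local, so an $E$-guarded Presburger quantifier at $x$ constrains only the neighbours of $x$, and there is no unguarded universal. Your claim ``all expressible by Presburger quantifiers with $E$-guards'' is therefore wrong if these are meant globally. Fortunately the constraints are also unnecessary. For the backward direction one only needs to read off a depth-$\le L$ tree model starting at the witness $v$ and following, at each depth $\ell$, edges to neighbours satisfying the exact level-$(\ell{+}1)$ predicate; vertices off this tree, or with inconsistent level labels, are simply ignored. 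The paper's use of the exact-level guard $\psil{\ell+1}(y)$ (rather than bare $P_{\ell+1}(y)$) already guarantees the parent is excluded, so no separate structural axiom is needed. If you replace your $P_{\ell+1}(y)$ by $\psil{\ell+1}(y)$ in the guards and drop the two global constraints, your argument is complete and coincides with the paper's.
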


\begin{proof}
    Suppose that the quantifier depth of the $\alc$ concept $C$ is $L$.
    The vocabulary of the $\blMPtwo$ formula $\pi_x(C)$ is the vocabulary of the $\alc$ concept $C$ extended with fresh unary predicates $P_0, \ldots, P_L$.
    For $0 \le \ell \le L$, let
    \begin{equation*}
        \psil{\ell}(x)\ :=\ 
        \left( \bigwedge_{i \in \intsinterval{0, \ell-1}} \neg P_i(x) \right) \land 
        P_\ell(x) \land
        \left( \bigwedge_{i \in \intsinterval{\ell-1, L}} \neg P_i(x) \right).
    \end{equation*}

    We will define translations $\trans{\ell}{x}$ and $\trans{\ell}{y}$ inductively, which are the standard translation from $\alc$ concepts to first-order logic formulas, except with some slight modification to quantifiers to fit our logic and guarantee the depth of elements in its tree models.
    Note that the base case of the inductive definition is $\trans{L}{x}$ and $\trans{L}{x}$.
    Since the quantifier depth of the $\alc$ concept is $L$,
    for the translations $\trans{L}{x}$ and $\trans{L}{x}$,
    there is no case for quantifiers. 
    The rules for the translations are
    \begin{equation*}
        \begin{aligned}
            \trans{\ell}{x}(A)\ =\ &A(x)&
            \trans{\ell}{y}(A)\ =\ &A(y)
            \\
            \trans{\ell}{x}(\neg C)\ =\ &\neg \trans{\ell}{x}(C)&
            \trans{\ell}{y}(\neg C)\ =\ &\neg \trans{\ell}{y}(C)
            \\
            \trans{\ell}{x}(C \sqcap D)\ =\ &\trans{\ell}{x}(C) \land \trans{\ell}{x}(D)&
            \trans{\ell}{y}(C \sqcap D)\ =\ &\trans{\ell}{y}(C) \land \trans{\ell}{y}(D)
            \\
            \trans{\ell}{x}(C \sqcup D)\ =\ &\trans{\ell}{x}(C) \lor \trans{\ell}{x}(D)&
            \trans{\ell}{y}(C \sqcup D)\ =\ &\trans{\ell}{y}(C) \lor \trans{\ell}{y}(D)
        \end{aligned}
    \end{equation*}
    and
    \begin{equation*}
        \begin{aligned}
            \trans{\ell}{x}(\exists R.C)\ =\ &\presby{E(x, y) \land \psil{\ell+1}(y) \land \trans{\ell+1}{y}(C)} \ge 1
            \\
            \trans{\ell}{y}(\exists R.C)\ =\ &\presbx{E(y, x) \land \psil{\ell+1}(x) \land \trans{\ell+1}{x}(C)} \ge 1
            \\
            \trans{\ell}{x}(\forall R.C)\ =\ &\presby{E(x, y) \land \psil{\ell+1}(y) \land \neg \trans{\ell+1}{y}(C)} = 0
            \\
            \trans{\ell}{y}(\forall R.C)\ =\ &\presbx{E(y, x) \land \psil{\ell+1}(x) \land \neg \trans{\ell+1}{x}(C)} = 0.
        \end{aligned}
    \end{equation*}
    Finally, we define $\pi(C) := \psil{0}(x) \land \trans{0}{x}(C)$.

\end{proof}

\subsection{Undirected results for Section \ref{subsec:sep_unbounded}: undecidability and separation for GNNs with unbounded activations}

We now briefly explain why the undecidability and expressiveness separation results for GNNs with unbounded activation functions also apply to undirected graphs.

For the expressiveness results, recall that the key idea relies on the property:
``the number of two-hop paths from the vertex $v$ to the green vertices is the same as the number of two-hop paths from the vertex $v$ to the blue vertices.''
The separation is proved by constructing a sequence of pairs of bipolar graphs.
To extend this to undirected graphs, we define an “undirected” version of bipolar graphs by adding symmetric edges to the standard definition of bipolar graphs.
It is then routine to verify that all arguments hold for this undirected version of bipolar graphs.
Therefore, we obtain the following results for undirected graphs.


\begin{lemma}
    $\blMtwoPtwo$ is strictly more expressive over undirected graphs than $\blcGNN$s.
\end{lemma}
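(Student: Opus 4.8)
The plan is to follow the proof of Lemma~\ref{lemma:mtwoptwo_gap1} almost verbatim, replacing bipolar graphs by their symmetric closures. The containment direction carries over unchanged: over undirected graphs $\blcGNN$s are still equivalent to $\blMPtwo$, since the translations of Theorem~\ref{thm:gnn_to_logic} and Theorem~\ref{thm:logic_to_gnn} produce objects equivalent on \emph{all} graphs, in particular symmetric ones, and every $\blMPtwo$ formula is trivially a $\blMtwoPtwo$ formula (a Presburger quantifier is a two-hop Presburger quantifier with empty two-hop part). So the whole burden is to produce one $\blMtwoPtwo$ formula that is captured by no $\blcGNN$ over undirected graphs.

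First I would define, for $n_1,n_2\in\bbN$, the \emph{undirected $(n_1,n_2)$-bipolar graph} as the $(n_1,n_2)$-bipolar graph with its edge set replaced by $E\cup E^{-1}$, and reuse the separating formula $\Psi(x) := \bigl(\presbzy{E(x,z)\land E(z,y)\land U_1(y)} = \presbzy{E(x,z)\land E(z,y)\land U_2(y)}\bigr)$. In the undirected $(n_1,n_2)$-bipolar graph the two-hop walks from $v_0$ ending at a $U_1$-vertex are exactly those of the form $v_0,v_1,v_{1,i}$ with $1\le i\le n_1$ (the only neighbour of $v_0$ incident to a $U_1$-vertex is $v_1$), so $v_0$ satisfies $\Psi$ iff $n_1=n_2$, and symmetrically for $U_2$. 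Hence $\Psi$ distinguishes $\tuple{\cG,v_0}$ from $\tuple{\cG',v'_0}$ whenever $\cG$ is the undirected $(k,k)$-bipolar graph and $\cG'$ the undirected $(k,k+1)$-bipolar graph.

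The real work is the undirected analogue of Lemma~\ref{lemma:bipolar}: for every $2$-$\blcGNN$ $\cA$ there is a threshold $n_\cA$ such that for all $n_1,n_2\ge n_\cA$ the vertices $v_0,v_1,v_2$ and the leaf classes of the undirected $(n_\cA,n_\cA)$- and $(n_1,n_2)$-bipolar graphs have pairwise equal feature vectors at every layer. I would prove this by induction on layers, exactly as in Lemma~\ref{lemma:bipolar}, using that a $\blcGNN$ has a finite spectrum (Theorem~\ref{thm:computespectrum}) and eventually constant activations with thresholds $\tl{\ell},\tr{\ell}$. The only adjustment is that the graph is symmetric: a leaf $v_{1,i}$ has $\nbr{\out}(v_{1,i})=\nbr{\inc}(v_{1,i})=\set{v_1}$, so its feature depends only on $\feat{\ell-1}(v_{1,i})$ and $\feat{\ell-1}(v_1)$, which agree across the two graphs by the induction hypothesis; and $v_1$ has $\nbr{\out}(v_1)=\nbr{\inc}(v_1)=\set{v_0}\cup\setc{v_{1,i}}{i}$, so $\feat{\ell}(v_1)=\act{\ell}(p+n_1 q)$ where $p$ absorbs the $\coefC{\ell}$-, bias- and $v_0$-contributions and $q$ is the sum of the outgoing- and incoming-aggregation coefficients applied to the common leaf feature $\feat{\ell-1}(v_{1,1})$. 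The same coordinatewise sign analysis of $q$ against $\tl{\ell},\tr{\ell}$, with $n_\cA$ defined as the corresponding maximum of ceilings as in Lemma~\ref{lemma:bipolar}, shows $\feat{\ell}(v_1)$ is independent of $n_1$ once $n_1\ge n_\cA$, and the cases for $v_0$, $v_2$ and the $U_2$-leaves are symmetric. Concluding: for any $\cA$, taking $\cG$ the undirected $(n_\cA,n_\cA)$-bipolar graph and $\cG'$ the undirected $(n_\cA,n_\cA+1)$-bipolar graph, $\cA$ cannot distinguish $v_0$ from $v'_0$ while $\Psi$ does, so $\Psi$ witnesses the strict inclusion. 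The main obstacle is re-running the saturation argument of Lemma~\ref{lemma:bipolar} with both in- and out-aggregation present; but because symmetrisation makes the two neighbourhoods coincide on bipolar graphs, this merely folds the two aggregation matrices into a single coefficient and is otherwise routine.
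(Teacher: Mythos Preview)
Your proposal is correct and follows essentially the same approach as the paper: symmetrise the bipolar graphs and re-run the saturation argument of Lemma~\ref{lemma:bipolar}, with the only change being that in- and out-aggregation collapse to a single coefficient $(\coefA{\ell}{\out}+\coefA{\ell}{\inc})$ on each neighbour. The paper's appendix proof is in fact terser than yours, simply asserting that ``it is then routine to verify that all arguments hold for this undirected version of bipolar graphs,'' so your explicit computation of the two-hop counts and the reworked $p+n_1 q$ decomposition is exactly the verification the paper leaves to the reader.
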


\begin{corollary}
    $\blMtwoPtwo$ is strictly more expressive over undirected graphs than $\blMPtwo$.
\end{corollary}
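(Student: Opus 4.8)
This final statement is the undirected analog of Lemma~\ref{lemma:mtwoptwo_gap1}: $\blMtwoPtwo$ is strictly more expressive over undirected graphs than $\blcGNN$s. The plan is to follow exactly the same two-part structure as in the directed case, replacing the bipolar graphs with an undirected variant and verifying that both halves of the argument survive.

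\textbf{The separating formula and the upper direction.} I would reuse the same $2$-$\blMtwoPtwo$ formula
\begin{equation*}
    \Psi(x)\ :=\
    \left(
    \presbzy{E(x, z) \land E(z, y) \land U_1(y)} =
    \presbzy{E(x, z) \land E(z, y) \land U_2(y)}
    \right),
\end{equation*}
which asserts that the number of two-hop paths from $x$ to $U_1$-vertices equals the number to $U_2$-vertices. The inclusion $\blMPtwo \le \blMtwoPtwo$ (equivalently, $\blcGNN$s are no more expressive than $\blMtwoPtwo$, via Corollary~\ref{cor:logicequalgnn}) goes through verbatim, since every one-hop Presburger quantifier is a two-hop Presburger quantifier with no two-hop terms. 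So the only work is the lower direction: exhibiting, for each $2$-$\blcGNN$ $\cA$, a pair of undirected graphs with distinguished nodes that $\Psi$ separates but $\cA$ does not.

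\textbf{The undirected bipolar graphs.} I would define the \emph{undirected $(n_1,n_2)$-bipolar graph} by taking the $(n_1,n_2)$-bipolar graph of Fig.~\ref{fig:bipolar} and symmetrizing its edge relation: $E$ becomes $\{(v_0,v_1),(v_1,v_0),(v_0,v_2),(v_2,v_0)\} \cup \{(v_1,v_{1,i}),(v_{1,i},v_1)\}_{i} \cup \{(v_2,v_{2,i}),(v_{2,i},v_2)\}_{i}$. Crucially, $\Psi(v_0)$ still holds iff $n_1=n_2$: the two-hop paths from $v_0$ landing on $U_1$-vertices all factor through $v_1$ (there is no way to reach a leaf $v_{1,i}$ in exactly two steps except via $v_1$, since $v_0$'s only neighbors are $v_1,v_2$, and $v_2$'s neighbors are $v_0$ and the $U_2$-leaves), and symmetrically for $U_2$; so the counts are exactly $n_1$ and $n_2$ respectively. (One should double-check that back-and-forth paths like $v_0 \to v_1 \to v_0$ do not land on a colored vertex — indeed $v_0$ carries no color — so they do not contaminate the count; and paths $v_0 \to v_2 \to v_{2,i}$ contribute only to the $U_2$-count.) Then I would state the undirected analog of Lemma~\ref{lemma:bipolar}: for each $2$-$\blcGNN$ $\cA$ there is a threshold $n_\cA$ such that for all $n_1,n_2 \ge n_\cA$, the features of the distinguished vertices $v_0,v_1,v_2$ (and of the leaf classes) agree between the undirected $(n_\cA,n_\cA)$-bipolar graph and the undirected $(n_1,n_2)$-bipolar graph, at every layer. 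The proof is the same induction on layers, with one adjustment: since edges are now symmetric, the aggregation at $v_1$ pulls in $\feat{\ell-1}$ of $v_0$ and of all $n_1$ leaves through \emph{both} the outgoing and incoming aggregators; so in the formulas for $\fq{\ell}{i}{\cdot}$ and $\fp{\ell}{i}{\cdot}$ one replaces $\coefA{\ell}{\out}$ by $\coefA{\ell}{\out}+\coefA{\ell}{\inc}$ on the leaf term (and both aggregators act on the $v_0$ term). The eventual-constancy saturation argument — once $n_1 \ge n_\cA \ge \lceil (\tr{\ell}-p)/q \rceil$, adding more neighbors pushes the pre-activation past the threshold so the activation value stabilizes — is unchanged; finiteness of the spectrum (Theorem~\ref{thm:computespectrum}) is what makes $n_\cA$ well-defined.

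\textbf{Conclusion and the expected obstacle.} Given the undirected Lemma~\ref{lemma:bipolar} analog, the proof finishes as before: take $\cG$ the undirected $(n_\cA,n_\cA)$-bipolar graph and $\cG'$ the undirected $(n_\cA,n_\cA+1)$-bipolar graph; by the lemma $\feat{L}_{\cG,1}(v_0)=\feat{L}_{\cG',1}(v_0')$, so $\cA$ cannot distinguish $\tuple{\cG,v_0}$ from $\tuple{\cG',v_0'}$, yet $\cG \models \Psi(v_0)$ and $\cG' \not\models \Psi(v_0')$, so $\cA$ is not equivalent to $\Psi$. Combined with the (trivial) inclusion, this gives strict containment. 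The main subtlety — and the only place the undirected case genuinely differs — is confirming that symmetrization does not create \emph{new} two-hop paths from $v_0$ to colored vertices that would break the ``$\Psi(v_0)$ iff $n_1=n_2$'' equivalence, and simultaneously that it does not help the GNN distinguish the two graphs (it does not, because the in- and out-aggregators see exactly the same multiset of neighbor features, so they can be folded into a single effective coefficient and the original induction applies). Everything else is a mechanical transcription of the directed argument, and I would explicitly note that the undirected analogs of the two corollaries ($\blMtwoPtwo$ strictly more expressive than $\blMPtwo$, and $\blreluGNN$s strictly more expressive than $\blcGNN$s over undirected graphs) follow immediately.
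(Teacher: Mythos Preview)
Your proposal is correct and follows exactly the paper's approach: symmetrize the bipolar graphs, verify that $\Psi(v_0)$ still characterizes $n_1=n_2$ (your path-count check is right), and observe that the induction in Lemma~\ref{lemma:bipolar} survives with $\coefA{\ell}{\out}$ replaced by $\coefA{\ell}{\out}+\coefA{\ell}{\inc}$ on the neighbor terms, then combine with the trivial inclusion $\blMPtwo \subseteq \blMtwoPtwo$ and the equivalence $\blMPtwo \equiv \blcGNN$ (which holds graph-by-graph and hence over undirected graphs). One small slip: your opening sentence identifies the target statement with the undirected analog of Lemma~\ref{lemma:mtwoptwo_gap1} (about $\blcGNN$s), whereas the corollary is about $\blMPtwo$---but since you immediately invoke Corollary~\ref{cor:logicequalgnn} to pass between them, the argument is unaffected.
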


\begin{corollary}
    $\blreluGNN$s  are strictly more expressive over undirected graphs than $\blcGNN$s.
\end{corollary}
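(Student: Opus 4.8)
Proof proposal for the final statement (that $\blreluGNN$s are strictly more expressive over undirected graphs than $\blcGNN$s).

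The plan is to mirror the directed-graph argument exactly, replacing every use of directed bipolar graphs and the two-hop counting property by an undirected analogue. First I would fix the undirected version of the $(n_1,n_2)$-bipolar graph: take the same vertex set $U_1 \cup U_2 \cup \{v_0,v_1,v_2\}$ and the same edges, but now close the edge relation under symmetry, so that $v_0$ is joined to $v_1$ and $v_2$, and $v_i$ is joined to all of $U_i$. The separating property becomes ``the number of length-two walks from $v_0$ to a vertex in $U_1$ equals the number of length-two walks from $v_0$ to a vertex in $U_2$'', which is captured by the $2$-$\blMtwoPtwo$ formula $\Psi(x) := \bigl(\presbzy{E(x,z)\land E(z,y)\land U_1(y)} = \presbzy{E(x,z)\land E(z,y)\land U_2(y)}\bigr)$ read over undirected graphs; in the undirected bipolar graph evaluated at $v_0$ the only length-two walks avoiding backtracking-to-$v_0$ go $v_0\!-\!v_1\!-\!v_{1,i}$ or $v_0\!-\!v_2\!-\!v_{2,i}$, so $\cG \models \Psi(v_0)$ iff $n_1 = n_2$ (one must check the extra walks of the form $v_0\!-\!v_j\!-\!v_0$ cancel on both sides and that walks ending in $U_i$ are exactly the ones just described). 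By Theorem~\ref{thm:logic_to_unbounded_gnn} — which produces a $\blreluGNN$ equivalent to any $\blMtwoPtwo$ formula, and whose construction uses only incoming/outgoing aggregation that behaves identically over undirected graphs — the formula $\Psi$ is realized by some $\blreluGNN$.

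Next I would re-run Lemma~\ref{lemma:bipolar} over undirected bipolar graphs: for every $2$-$\blcGNN$ $\cA$ there is a threshold $n_\cA$ such that for all $n_1,n_2 \ge n_\cA$ the features of $v_0,v_1,v_2$ (and the common feature of each $U_i$-vertex) in the $(n_\cA,n_\cA)$- and $(n_1,n_2)$-bipolar graphs coincide at every layer. The proof is the same induction on layers, using that the spectrum of a $\blcGNN$ is finite (Theorem~\ref{thm:computespectrum}) and that the activations are eventually constant, so a sum of sufficiently many copies of a fixed feature vector is pushed past the threshold regardless of the exact count; the only bookkeeping change is that each aggregation at $v_1$ now also picks up $v_0$ as a neighbour both ``in'' and ``out'' (it appears in both $\nbr{\out}$ and $\nbr{\inc}$), but since $\feat{\ell-1}(v_0)$ is independent of $n_1,n_2$ by induction this contributes a constant offset $p$ that is the same across the two graphs. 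With this lemma in hand, the $(n_\cA,n_\cA)$- and $(n_\cA,n_\cA+1)$-bipolar graphs are indistinguishable by $\cA$ at $v_0$ yet separated by $\Psi$, so no $\blcGNN$ equals $\Psi$; combined with the containment ($\blMtwoPtwo$ subsumes $\blMPtwo$, hence every $\blcGNN$, since an ordinary Presburger quantifier is a two-hop quantifier with no two-hop terms) we get strict containment over undirected graphs.

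The main obstacle — though a mild one — is verifying that the symmetrization does not introduce unexpected two-hop walks that break the evaluation of $\Psi$ at $v_0$: one must argue that walks $v_0\!-\!z\!-\!y$ with $y\in U_i$ force $z = v_i$, so the count on each side is exactly $n_i$, and that the ``backtracking'' walks $v_0\!-\!v_i\!-\!v_0$ do not land in any $U_j$ and hence are irrelevant. Everything else is a routine transcription of the directed proof. I would then note that the corollaries about $\blMtwoPtwo$ versus $\blMPtwo$ and $\blreluGNN$ versus $\blcGNN$ over undirected graphs follow immediately, exactly as in the directed case, from this separation together with the logical characterization of $\blcGNN$ by $\blMPtwo$ (Corollary~\ref{cor:logicequalgnn}), whose proof via Theorem~\ref{thm:gnn_to_logic} and Theorem~\ref{thm:logic_to_gnn} is itself insensitive to the directed/undirected distinction.
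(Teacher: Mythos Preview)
Your proposal is correct and follows essentially the same approach as the paper: symmetrize the bipolar graphs, keep the same separating $\blMtwoPtwo$ formula $\Psi$, and re-run Lemma~\ref{lemma:bipolar} with the minor bookkeeping change that neighbours now appear in both the in- and out-aggregations as constant offsets. The paper's own treatment in the appendix is even more terse---it simply asserts that after adding symmetric edges ``it is routine to verify that all arguments hold for this undirected version of bipolar graphs''---so your explicit checks on the two-hop walk counts at $v_0$ and the constant-offset observation for the inductive step of the bipolar lemma are exactly the details the paper leaves to the reader.
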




\subsection{Undirected results for Section \ref{subsec:unbounded_undec_sat}}

Next, we turn to the satisfiability problem for $\blreluGNN$s.
For a simple equation system $\varepsilon$, we say that $\varepsilon$ has a non-trivial solution
if it has a solution where not all variables are zero.
For an undirected graph $\cG$ and a vertex $v$ in $\cG$, let $\nbr{}(v)$ be the set of neighbors of $v$.

\begin{lemma}
    For every simple equation system $\varepsilon$ with $n$ variables and $m$ equations,
    there exists an $(n+2m)$-$\blreluGNN$ $\cA_\varepsilon$ 
    such that
    $\varepsilon$ has a non-trivial solution in $\bbN$ if and only if
    $\cA_\varepsilon$ is satisfiable over undirected graphs.
\end{lemma}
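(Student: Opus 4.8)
The goal is to adapt the directed-graph reduction of Lemma~\ref{lemma:hilbert_to_blrelugnn} to undirected graphs, with the twist that we now reduce from the \emph{non-trivial} solvability of simple equation systems (a solution where not all $\upsilon_i$ are zero). The first step is to observe why triviality must be excluded: over undirected graphs, every edge is symmetric, so the ``single incoming neighbor'' trick of Lemma~\ref{lemma:hilbert_to_blrelugnn} — which relied on separate $\nbr{\out}$ and $\nbr{\inc}$ aggregations to pin down a unique parent — is no longer available, and the all-zero assignment is always realized on the one-vertex graph (or a disconnected trivial component), so it cannot be detected; hence the natural reduction can only capture non-trivial solvability. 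The plan is therefore to reuse the overall architecture of $\cA_\varepsilon$ from Lemma~\ref{lemma:hilbert_to_blrelugnn} — layers that compute the $\feat{1}$-level encodings of the variables, then apply Lemma~\ref{lemma:fnn}-style $\Psi_1$- and $\Psi_2$-test fragments — but replace the directionality-based gadgets with ones that work on undirected graphs.

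The key idea for the undirected simulation of multiplication $\upsilon_{t_1}=\upsilon_{t_2}\cdot\upsilon_{t_3}$ is to build, instead of a directed bipartite gadget, an undirected ``grid-like'' gadget. First I would encode the variable values $a_i$ as the number of neighbors of the designated root carrying color $U_i$, exactly as in the $\nbr{}(v)$ reading. For multiplication, one attaches to the root a fan of $a_{t_2}\cdot a_{t_3}$ product vertices together with $a_{t_3}$ ``unit'' vertices, arranged so that aggregating over $\nbr{}$ at the right layers counts $\sum_{u}\feat{\ell}_{n+t}(u)$ against $a_{t_2}\cdot\sum_u \feat{\ell}_{n+m+t}(u)$; the role played by the incoming/outgoing split in the directed case is now played by \emph{color bookkeeping} and by the test formulas $\Psi_{1,t},\Psi_{2,t}$ inside the $\relu$-test fragment of Lemma~\ref{lemma:fnn}, which fire only when the local counts are consistent. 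Concretely, I would keep the predicates $U_{n+t}$ and $U_{n+m+t}$ as in Lemma~\ref{lemma:hilbert_to_blrelugnn}, enforce $\feat{}_{n+t}(u)\in\{0,\feat{}_{t_2}(u)\}$ and $\feat{}_{n+m+t}(u)\in\{0,1\}$ via $\Psi_{1,t}$, and enforce $\sum_u \feat{}_{n+t}(u) = \feat{}_{t_1}(v)$, $\sum_u \feat{}_{n+m+t}(u)=\feat{}_{t_3}(v)$ via $\Psi_{2,t}$; because aggregation over $\nbr{}$ in an undirected graph is symmetric, consistency of these sums across a connected block forces the multiplicative relation. The ``non-trivial'' hypothesis is used in the backward direction: given a non-trivial solution, some $a_i\ge 1$, which lets the constructed undirected graph have the root node actually see the fan it needs (with the trivial solution there is nothing to attach and the accepting condition $\feat{\text{last}}_1 \ge 1/2$ would fail because the graph is forced to be a single isolated vertex whose features degenerate).

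The forward direction (if $\cA_\varepsilon$ is satisfiable over undirected graphs, then $\varepsilon$ has a non-trivial solution) proceeds as in Lemma~\ref{lemma:hilbert_to_ogrelugnn} and Lemma~\ref{lemma:hilbert_to_blrelugnn}: extract $a_i := \feat{1}_i(v)$ at the accepting vertex $v$, argue from $\feat{\text{last}}_1(v)\ge 1/2$ that $\Psi_2(\feat{\cdot}(v))$ holds, hence all the per-equation constraints $\Psi_{2,t}$ hold, and that $\Psi_1$ holds at all relevant neighbors, which pins down the values $\feat{}_{n+t}(u)$ appearing in the aggregated sums; then the equality-of-sums constraints force $a_{t_1}=a_{t_2}\cdot a_{t_3}$ for multiplication equations, and the easy constraints force $a_{t_1}=1$ and $a_{t_1}=a_{t_2}+a_{t_3}$ for the other two forms. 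Non-triviality of the extracted solution follows because the accepting vertex must have at least one neighbor witnessing some $U_i$ (otherwise the $\feat{1}$-encoding is all zero and, because we can add a sentinel color/constraint asserting $\sum_i a_i \ge 1$ inside $\Psi_1$ analogously to how $x_{n+2m+1}=1$ is used, acceptance fails). I expect the main obstacle to be getting the multiplication gadget right in the undirected setting: without directed edges I cannot prevent a ``product'' vertex from also contributing to its siblings' aggregates, so the bookkeeping needs an extra layer of colors (or an extra $\Psi_1$-clause of the form ``$x_{n+t}=0 \lor x_{n+m+t}=1$'' applied pointwise) to guarantee each unit vertex is counted exactly once; showing that the resulting polynomial-size undirected graph genuinely realizes the intended feature values — i.e., that the symmetric aggregation does not leak counts between blocks — is the crux, and it is exactly where the care in choosing the colors $U_{n+t},U_{n+m+t}$ and the test formulas $\Psi_{1,t},\Psi_{2,t}$ pays off. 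Once the gadget is verified, the theorem that the universal/existential satisfiability problem over undirected graphs is undecidable follows from the undecidability of non-trivial solvability of simple equation systems (a standard variant of Hilbert's tenth problem, reducible from the general case by introducing a fresh variable forced to $1$).
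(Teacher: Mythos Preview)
Your overall framework is right: reuse the layered architecture of Lemma~\ref{lemma:hilbert_to_blrelugnn}, insert a non-triviality constraint, and replace the directionality-based parent identification by something that survives symmetrization. But the proposal leaves a genuine gap at exactly the point you flag as ``the crux'': you do not say \emph{how} a child $u\in\nbr{}(v)$ recovers the value $a_{t_2}=\feat{1}_{t_2}(v)$ by aggregating over its undirected neighborhood. In the directed proof this worked because $\nbr{\inc}(u)=\{v\}$ was enforced; your suggestions (grid gadgets, extra colors, a pointwise clause ``$x_{n+t}=0\lor x_{n+m+t}=1$'') do not address this, since the obstacle is not that product vertices leak into each other's counts, but that a child $u$ may have many neighbors and must single out the one that is $v$.

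The paper's solution is to make non-triviality do double duty. It introduces $\Psi_0(\bfx):=\bigvee_i x_i\ge 1$, tests $\Psi_0(\feat{1}(\cdot))$ at every vertex in an early block of layers, and then in $\Psi_1$ replaces the clause $x_{2n+2m+1}=1$ (which in the directed case said ``exactly one incoming neighbor'') by the constraint ``exactly one of my undirected neighbors satisfies $\Psi_0$ at layer~1''. Separately, $\Psi_0$ is conjoined into $\Psi_2$, so the accepting root $v$ necessarily has $\Psi_0(\feat{1}(v))$ true. Now for each $u\in\nbr{}(v)$, the unique neighbor of $u$ with non-trivial first-layer features is $v$ itself; every other neighbor $u'$ has $\feat{1}_i(u')=0$ for all $i\le n$, so $\sum_{u'\in\nbr{}(u)}\feat{1}_{t_2}(u')=\feat{1}_{t_2}(v)=a_{t_2}$ exactly as in the directed case. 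This is why the restriction to \emph{non-trivial} solutions is structural, not incidental: if the solution were all-zero, $\Psi_0$ would fail at $v$ and the parent-identification mechanism would collapse. Your explanation (``the graph is forced to be a single isolated vertex'') misses this; the point is that non-triviality is the \emph{marker} that lets children find the root in an undirected neighborhood.
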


\begin{proof}
    The proof is similar to Lemma~\ref{lemma:hilbert_to_blrelugnn}.

    We define the 20-layer $(n+2m)$-$\blreluGNN$ $\cA_\varepsilon$ with 
    \emph{integer} coefficient matrices and bias vectors as follows.
    The dimensions, coefficient matrices, and bias vectors will be chosen so that the features of $\cA_\varepsilon$ satisfy the following conditions.
    Let 
    \begin{equation*}
        \begin{aligned}
            \Psi_0(\bfx)\ :=\ &
            \bigvee_{t \in \intsinterval{m}} x_t \ge 1 \\
            \Psi_1(\bfx)\ :=\ &
            \bigwedge_{t \in \intsinterval{m}} \Psi_{1, t}(\bfx)\ \land\ 
            \left(x_{2n+2m+1} = 1\right) \\
            \Psi_2(\bfx)\ :=\ &
            \bigwedge_{t \in \intsinterval{m}} \Psi_{2, t}(\bfx)\ \land\ 
            \left(x_{n+2m+1} = x_{n+2m+2}\right)\ \land\ \Psi_0(\bfx),
        \end{aligned}
    \end{equation*}
    where, for $1 \le t \le m$,
    \begin{itemize}
        \item if the $t^{th}$ equation in $\varepsilon$ is $\upsilon_{t_1} = 1$,
        then $\Psi_{1, t}(\bfx) := \top$ and $\Psi_{2, t}(\bfx) := \left(x_{t_1} = 1\right)$.

        \item if the $t^{th}$ equation in $\varepsilon$ is $\upsilon_{t_1} =\upsilon_{t_2} + \upsilon_{t_3}$,
        then $\Psi_{1, t}(\bfx) := \top$ and
        $\Psi_{2, t}(\bfx) := \left(x_{t_1} = x_{t_2} + x_{t_3}\right)$.

        \item if the $t^{th}$ equation in $\varepsilon$ is $\upsilon_{t_1} =\upsilon_{t_2} \cdot \upsilon_{t_3}$,
        then 
        \begin{equation*}
            \begin{aligned}
                \Psi_{1, t}(\bfx)\ :=\ &\left(\left(x_{n+t} = 0\right) \land \left(x_{n+m+t} = 0\right)\right) \lor \left(\left(x_{n+t} = x_{n+2m+t_2}\right) \land \left(x_{n+m+t} = 1\right)\right) \\
                \Psi_{2, t}(\bfx)\ :=\ &\left(x_{n+t} = x_{t_1}\right) \land \left( x_{n+m+t} = x_{t_3}\right).
            \end{aligned}
        \end{equation*}
    \end{itemize}

    For the first layer,
    \begin{equation*}
        \feat{1}_i(v)\ =\ 
        \begin{dcases}
            \sum_{u \in \nbr{}(v)} \feat{0}_i(u), &\text{for $1 \le i \le n+m$} \\ 
            \feat{0}_i(v), &\text{for $n+m < i \le n+2m$}.
        \end{dcases}
    \end{equation*}

    For the second to the fifth layers,
    $\cA_\varepsilon$ test whether $\Psi_0\left(\feat{1}(v)\right)$ holds.
    That is
    \begin{equation*}
        \feat{5}_i(v)\ =\ 
        \begin{dcases}
            \feat{1}_i(v), &\text{for $1 \le i \le n+2m$} \\
            \eval{\psi_0\left(\feat{1}(v)\right)}, &\text{for $i = n+2m+1$}.
        \end{dcases}
    \end{equation*}

    For the sixth layer,
    \begin{equation*}
        \feat{6}_i(v)\ =\ 
        \begin{dcases}
            \feat{5}_i(v), &\text{for $1 \le i \le n+2m$} \\
            \sum_{u \in \nbr{}(v)} \feat{5}_{i - n - 2m}(u), &\text{for $n+2m < i \le 2n+2m$} \\
            \sum_{u \in \nbr{}(v)} \feat{5}_{i - n}(u), &\text{for $i = 2n + 2m + 1$} \\
            1, &\text{for $i = 2n + 2m + 2$}.
        \end{dcases}
    \end{equation*}

    For the seventh to the fourteenth layers,
    $\cA_\varepsilon$ test whether $\Psi_1\left(\feat{6}(v)\right)$ holds.
    That is,
    \begin{equation*}
        \feat{14}_i(v)\ =\ 
        \begin{dcases}
            \feat{6}_i(v), &\text{for $1 \le i \le 2n+2m+2$} \\
            \eval{\psi_1\left(\feat{6}(v)\right)}, &\text{for $i = 2n+2m+3$}.
        \end{dcases}
    \end{equation*}

    For the fifteenth layer,
    \begin{equation*}
        \feat{15}_i(v)\ =\ 
        \begin{dcases}
            \feat{14}_i(v), &\text{for $1 \le i \le n$} \\
            \sum_{u \in \nbr{}(v)} \feat{14}_{i}(u), &\text{for $n < i \le n + 2m$} \\
            \sum_{u \in \nbr{}(v)} \feat{14}_{i+n+1}(u), &\text{for $n + 2m < i \le n + 2m + 2$}.
        \end{dcases}
    \end{equation*}

    For the sixteenth to the nineteenth layers,
    $\cA_\varepsilon$ test whether $\Psi_2\left(\feat{15}(v)\right)$ holds.
    That is
    \begin{equation*}
        \feat{19}_i(v)\ =\ 
        \begin{dcases}
            \feat{19}_i(v), &\text{for $1 \le i \le n+2m+2$} \\
            \eval{\Psi_2\left(\feat{15}(v)\right)}, &\text{for $i = n+2m+3$}.
        \end{dcases}
    \end{equation*}

    Finally, for the last layer, $\feat{20}_{1}(v) = \feat{19}_{n+2m+3}(v)$.
    By Lemma~\ref{lemma:fnn}, it is easy to see that there are $\blreluGNN$s that satisfy these conditions.
    
    We now show that this construction has the required properties for the lemma.
    \textbf{\underline{Suppose $\cA_\varepsilon$ is satisfiable}} by the graph $\cG$ and vertex $v$ in $\cG$.
    Let $a_i := \feat{1}_i(v)$.
    We will show that $\set{\upsilon_i \gets a_i}_{i \in \intsinterval{n}}$ is a solution of $\varepsilon$.

    The proof idea is similar to Lemma~\ref{lemma:hilbert_to_blrelugnn},
    with the only difference being the following variant of the main claim used in that lemma:

    The following properties hold for every $u \in \nbr{}(v)$.
    \begin{itemize}
        \item For $1 \le i \le n$,
        $\feat{6}_{n+2m+i}(u) = a_i$.
        \item For $1 \le t \le m$,
        if the $t^{th}$ equation in $\varepsilon$ is $\upsilon_{t_1} = \upsilon_{t_2} \cdot \upsilon_{t_3}$,
        then
        \begin{equation*}
            \feat{14}_{n+t}(u)\ =\ a_{t_2} \cdot \feat{14}_{n+m+t}(u).
        \end{equation*}
    \end{itemize}

    We now prove the claim.
    By the argument similar to Lemma~\ref{lemma:hilbert_to_blrelugnn}.
    For every $u \in \nbr{}(v)$,
    $\Psi_1\left(\feat{6}(u)\right)$ holds.
    \begin{itemize}
        \item By the definition of $\cA_\varepsilon$, it holds that
        \begin{equation*}
            \feat{6}_{2n+2m+1}(u)
            \ =\ \sum_{u' \in \nbr{}(u)} \feat{5}_{n+2m+1}(u)
            \ =\ \sum_{u' \in \nbr{}(u)} \eval{\Psi_0\left(\feat{1}(u')\right)}.
        \end{equation*}
        Because $\Psi_1\left(\feat{6}(u)\right)$ holds,
        it follows that $\feat{6}_{2n+2m+1}(u) = 1$.
        This implies that
        there is only one neighbor $\widetilde{u}$ of $u$
        for which $\Psi_0\left(\feat{1}(\widetilde{u})\right)$ holds.
        Therefore,
        for $1 \le i \le n$,
        for every $u' \in \nbr{}(u)$ with $u' \neq \widetilde{u}$,
        $\feat{1}_i(u') = 0$.
        Thus, for $1 \le i \le n$,
        by the definition of $\cA_\varepsilon$,
        it holds that
        \begin{equation*}
            \begin{aligned}
                \feat{6}_{n+2m+i}(u)
                \ =\ \sum_{u' \in \nbr{}(u)} \feat{5}_{i}(u')
                \ =\ \sum_{u' \in \nbr{}(u)} \feat{1}_{i}(u')
                \ =\ \feat{1}_{i}(\widetilde{u}).
            \end{aligned}
        \end{equation*}
        Once we show that $\widetilde{u} = v$,
        it follows that
        $\feat{6}_{n+2m+i}(u) = \feat{1}_{i}(\widetilde{v}) = a_i$.

        Since $v$ is a neighbor of $u$ and
        $\widetilde{u}$ is the only neighbor of $u$
        for which $\Psi_0\left(\feat{1}(\widetilde{u})\right)$ holds.
        To show that $\widetilde{u} = v$,
        it is sufficient to demonstrate that $\Psi_0\left(\feat{1}(v)\right)$ holds.

        Because $\cA_\varepsilon$ is satisfied by $\tuple{\cG, v}$,
        $\feat{20}_{1}(v) = \feat{19}_{n+2m+3}(v) = \eval{\Psi_2\left(\feat{15}(v)\right)} = 1$.
        Since $\Psi_2\left(\feat{15}(v)\right)$ holds,
        $\Psi_0\left(\feat{15}(v)\right)$ also holds.
        Note that, for $1 \le i \le n$,
        by the definition of $\cA_\varepsilon$,
        $\feat{15}_i(v) = \feat{1}_i(v)$.
        Hence $\Psi_0\left(\feat{1}(v)\right)$ holds.

        \item Fix $1 \le t \le n$, and suppose that 
        the $t^{th}$ equation in $\varepsilon$ is $\upsilon_{t_1} = \upsilon_{t_2} \cdot \upsilon_{t_3}$.
        Since $\Psi_1\left(\feat{6}(u)\right)$ holds,
        $\Psi_{1, t}\left(\feat{6}(u)\right)$ also holds.
        Thus, one of the following must hold:
        \begin{equation*}
            \feat{6}_{n + t}(u) = 0 \land \feat{6}_{n + m + t}(u) = 0
            \quad\text{or}\quad
            \feat{6}_{n + t}(u) = \feat{6}_{n + 2m + t_2}(u) \land \feat{6}_{n + m + t}(u) = 1.
        \end{equation*}
        Note that by the definition of $\cA_\varepsilon$,
        \begin{equation*}
            \feat{14}_{n + t}(u)\ =\ \feat{6}_{n + t}(u)
            \quad\text{and}\quad
            \feat{14}_{n + m + t}(u) = \feat{6}_{n + m + t}(u).
        \end{equation*}
        By the previous claim,
        $\feat{6}_{n+2m+t_2}(u) = a_{t_2}$.
        Therefore, for both cases, 
        $\feat{14}_{n + t}(u) = a_{t_2} \cdot \feat{14}_{n + m + t}(u)$.
    \end{itemize}

    \textbf{\underline{Suppose $\varepsilon$ is solvable}}.
    Let $\cG'$ be the graph obtained by adding symmetric edges to the directed graph $\cG$ constructed in the proof of Lemma~\ref{lemma:hilbert_to_ogrelugnn}.
    We can show that $\cA_\varepsilon$ accepts $\tuple{\cG, v_0}$
    via an argument similar to the one used in Lemma~\ref{lemma:hilbert_to_ogrelugnn}.
\end{proof}

Since the solvability (over $\bbN$) of simple equation systems is undecidable,
we obtain undecidability of the satisfiability problems for $\blreluGNN$.

\begin{theorem}\label{thm:hilbert_to_blrelugnn_undirected}
    The satisfiability problem for $\blreluGNN$s over undirected graphs is undecidable.
\end{theorem}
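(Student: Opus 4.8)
The plan is to derive the theorem directly from the immediately preceding lemma together with a short padding argument that reduces ordinary solvability of simple equation systems to \emph{non-trivial} solvability. Recall the lemma produces, for each simple equation system $\varepsilon$ with $n$ variables and $m$ equations, an $(n+2m)$-$\blreluGNN$ $\cA_\varepsilon$ that is satisfiable over undirected graphs if and only if $\varepsilon$ has a non-trivial solution in $\bbN$ (a solution in which not all variables are $0$). Since solvability of simple equation systems over $\bbN$ is undecidable~\cite{hilbert_theth}, it suffices to transfer this undecidability to the non-trivial variant and then compose with the lemma.

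First I would observe that non-trivial solvability of simple equation systems is itself undecidable. Given an arbitrary simple equation system $\varepsilon$ over variables $\upsilon_1,\dots,\upsilon_n$, let $\varepsilon'$ be obtained by adding a fresh variable $\upsilon_{n+1}$ together with the equation $\upsilon_{n+1}=1$; this is again a simple equation system. Every solution of $\varepsilon'$ restricts to a solution of $\varepsilon$, and every solution of $\varepsilon$ extends to a solution of $\varepsilon'$ by setting $\upsilon_{n+1}=1$; moreover every solution of $\varepsilon'$ assigns $1$ to $\upsilon_{n+1}$ and is therefore non-trivial. Hence $\varepsilon'$ has a non-trivial solution if and only if $\varepsilon$ is solvable, and $\varepsilon\mapsto\varepsilon'$ is computable, so non-trivial solvability inherits the undecidability.

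Then I would compose the two reductions: from a simple equation system $\varepsilon$, build $\varepsilon'$ as above and then build $\cA_{\varepsilon'}$ via the lemma. Chasing the equivalences, $\cA_{\varepsilon'}$ is satisfiable over undirected graphs iff $\varepsilon'$ has a non-trivial solution iff $\varepsilon$ is solvable. This is a computable many-one reduction from an undecidable problem to the satisfiability problem of $\blreluGNN$s over undirected graphs, which establishes the theorem.

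There is essentially no genuinely hard step in this final argument — the substantive content lives in the lemma, whose construction encodes iterative multiplication using alternating incoming/outgoing aggregations to enforce the ``single incoming neighbour constraint''. The only point requiring care in the write-up is the non-triviality bookkeeping: a graph is required to be nonempty, so the GNN obtained from an equation system cannot witness the all-zero assignment, which is precisely why the lemma is phrased in terms of non-trivial solutions and why the small padding step above is needed.
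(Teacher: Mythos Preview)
Your proposal is correct and follows essentially the same line as the paper, which simply invokes the lemma together with the undecidability of solvability of simple equation systems; your explicit padding step from ordinary to non-trivial solvability is in fact more careful than the paper, which glosses over that gap. One small correction to your closing commentary (not to the proof itself): the reason the undirected lemma is stated for non-trivial solutions is not graph nonemptiness---the directed version also requires nonempty graphs yet handles the all-zero solution---but rather that the undirected construction uses the auxiliary predicate $\Psi_0$ (testing that some coordinate is positive) to identify the root vertex among a node's neighbours in the absence of edge direction, and $\Psi_0$ is false on the all-zero assignment.
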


\subsection{Undirected results for Section \ref{subsec:unbounded_undec_univ_sat}}

To obtain undecidability of the universal satisfiability problem for $\blreluGNN$ over undirected graphs,
we again reduce from two-counter machines, but now with a modification to guarantee the direction of the transition of the counter machine.

\begin{lemma}\label{lemma:tcm_to_mtwoptwoun}
    For every two-counter machine $\cM$ with $n$ instructions, there exists an $(n+8)$-$\blMtwoPtwo$ formula $\Psi_\cM(x)$ such that $\cM$ halts if and only if $\forall x\ \Psi_\cM(x)$ is finitely satisfiable over undirected graphs. 
\end{lemma}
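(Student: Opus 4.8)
The plan is to adapt the directed-graph reduction of Lemma~\ref{lemma:tcm_to_mtwoptwo} to undirected graphs, which forces us to simulate directionality by logical means. Recall that in the directed encoding (Fig.~\ref{fig:two_counter_machine}) the graph is a path $S, Q_1, Q_2, \ldots, T$ where the direction of the edges is what distinguishes a configuration from its successor, and where each $Q_q$-vertex has its counter values encoded by the number of its \emph{outgoing} $C_0$- and $C_1$-neighbors. Once edges become symmetric, a vertex can no longer tell which of its neighbors is its ``predecessor'' configuration and which is its ``successor'', nor which of its $C_i$-neighbors ``belong'' to it rather than to an adjacent configuration. The standard fix, used already in the undirected $\pspace$-hardness proof (the translation $\pi$ with fresh predicates $P_0,\ldots,P_L$), is to 2-color or layer the graph so that orientation can be recovered from the colors. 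Here, since the computation can be arbitrarily long, we cannot use one predicate per depth; instead I would introduce a small bounded set of fresh unary predicates --- hence the jump from $n+5$ colors in Lemma~\ref{lemma:tcm_to_mtwoptwo} to $n+8$ colors here --- that impose a local ``arrow'' structure: e.g. two extra ``phase'' predicates $A$ and $B$ alternating along the $Q$-chain (so a $Q$-vertex in phase $A$ has exactly one $Q$-or-$T$ neighbor in phase $B$ and exactly one $S$-or-$Q$ neighbor in phase $B$, playing the roles of successor and predecessor), plus a predicate marking ``counter cells'' as belonging to the tail side rather than the head side of the edge joining them to their configuration vertex, plus possibly a predicate to orient the $S$--$Q_1$ and $Q_\ell$--$T$ links.

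The key steps, in order, would be: (1) Fix the vocabulary $\tau$ with the $n+8$ unary predicates and the single binary predicate $E$; write down the ``well-formedness'' conjuncts $\varphi_1,\ldots$ that force each vertex to carry exactly one principal label from $\set{S,T,C_0,C_1,Q_1,\ldots,Q_n}$ and a consistent auxiliary label, and that force the local arrow structure (every $Q$-vertex has exactly one ``forward'' neighbor and exactly one ``backward'' neighbor, of the right types and phases; the $S$-vertex has exactly one forward neighbor, a $Q_1$-vertex; etc.). These are all expressible with ordinary one-hop Presburger quantifiers as in Lemma~\ref{lemma:tcm_to_mtwoptwo}. (2) Re-express the counter-difference gadget $\psi^{\itdiff}_{i,\delta}$ using two-hop Presburger quantifiers, but now counting only two-hop paths $x \to z \to y$ where $z$ is the \emph{forward} $Q$-neighbor of $x$ and $y$ is a $C_i$-cell on the \emph{tail side} of its edge to $z$ --- the auxiliary predicates make ``forward'' and ``tail side'' expressible as unary conditions on $z$ and $y$ respectively. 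Then subtract off $\presby{E(x,y)\land C_i(y)\land(\text{tail-side})}$, exactly paralleling the directed construction. The instruction gadgets $\psi_q$ carry over verbatim once $\psi^{\itdiff}_{i,\delta}$, $\psi^{\itzero}_i$, $\psi^{\itsucc}_j$ are redefined this way. (3) Prove both directions of the equivalence: if $\forall x\,\Psi_\cM(x)$ holds in a finite undirected $\cG$, run the same argument as in Lemma~\ref{lemma:tcm_to_mtwoptwo} --- the local arrow conditions give a well-defined ``forward'' function on $Q$-vertices, one extracts a finite sequence of configurations because $\cG$ is finite and the in-arrow is unique, and one reads off a halting computation; conversely, given a halting computation, take the directed graph of Fig.~\ref{fig:two_counter_machine}, symmetrize its edges, and add the auxiliary labels according to the orientation that was just erased, then check it satisfies $\forall x\,\Psi_\cM(x)$. (4) Conclude undecidability of universal satisfiability for $\blMtwoPtwo$ over undirected graphs, hence via Theorem~\ref{thm:logic_to_unbounded_gnn} (whose construction uses only local aggregation and transfers to undirected graphs) undecidability of universal satisfiability for $\blreluGNN$s over undirected graphs --- this is the analogue of Theorem~\ref{thm:blgnn_unbounded_undec} whose statement (Theorem~\ref{thm:gnn_unbounded_undecun}) is promised in Table~\ref{tab:univsat_undirected}.

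The main obstacle I expect is step~(2) together with the ``backward'' half of step~(3): making the orientation-recovery airtight. In the directed case, counting two-hop paths automatically respects direction; here we must be certain that the auxiliary unary labels cannot be placed ``dishonestly'' by an adversarial model --- e.g. that a $Q$-vertex cannot declare two different neighbors both ``forward'', or that a $C_i$-cell shared between two configurations cannot be counted by both. The bookkeeping is: the uniqueness conjuncts ($\presby{\cdots}=1$) pin down the forward and backward neighbors of each $Q$-vertex, and a ``cell-ownership'' conjunct on $C_i$-vertices ($\presby{E(y,x)\land Q(x)\land(\text{head-side matches})}=1$, say) ensures each cell is owned by exactly one configuration; one then has to verify these jointly force the intended global picture. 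This is the same flavor of argument as the ``edges give a bijection'' reasoning in Lemma~\ref{lemma:hilbert_to_mptwo}, just with more predicates, and I do not anticipate a genuine difficulty, only a fair amount of careful case analysis --- which, following the precedent set by the directed proof, the paper would present in compressed form.
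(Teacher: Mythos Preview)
Your overall strategy---recover edge orientation on the undirected graph by means of a small fixed set of auxiliary unary predicates, then replay the directed proof---is exactly what the paper does. But your concrete instantiation has a real gap: a two-phase alternation $A,B,A,B,\ldots$ along the $Q$-chain cannot distinguish successor from predecessor. For an internal $Q$-vertex in phase $A$, both its forward and backward $Q$-neighbors are in phase $B$, and no unary label on those neighbors separates them (your parenthetical ``$Q$-or-$T$ vs.\ $S$-or-$Q$'' only helps at the two endpoints). An adversarial model could then place a cycle of $Q$-vertices with alternating phases and pass all your local constraints, since each vertex would see two phase-$B$ $Q$-neighbors and could freely declare either one ``forward''. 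The tail-side marker on counter cells does not rescue this, because correctness of the counter-difference gadget presupposes you already know which neighbor is the successor.

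The paper's fix is to use \emph{three} index predicates $I_0,I_1,I_2$ cycling modulo $3$ along the chain (this accounts precisely for the jump from $n+5$ to $n+8$ colors). From a $Q$-vertex with index $i$, the successor carries index $(i{+}1)\bmod 3$ and the predecessor carries index $(i{-}1)\bmod 3$; since these are distinct labels, forward and backward are now first-order properties of the neighbor. Counter cells simply inherit the same index as the $Q$-vertex that owns them, so the two-hop gadget $\psi^{\itdiff}_{t,\delta}$ counts paths $x\to z\to y$ with $z$ labelled $I_{(i+1)\bmod 3}$ and $y$ a $C_t$-cell also labelled $I_{(i+1)\bmod 3}$, subtracting the one-hop count of $C_t$-cells labelled $I_i$. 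No separate ``tail-side'' predicate is needed. With this correction, steps (3) and (4) of your plan go through exactly as you describe.
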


We cannot apply the exact encoding from Figure \ref{fig:two_counter_machine} and  Lemma~\ref{lemma:tcm_to_mtwoptwo} here,
because in that encoding we distinguished the predecessor and successor configurations using the direction of edges.
Here we sketch the trick used to overcome the lack of direction in the edges.
We will utilize the predicates from the proof of Lemma~\ref{lemma:tcm_to_mtwoptwo}. In particular
we will have a predicate $Q$ and an associated notion of $Q$ vertex as in that proof.

We introduce three fresh unary predicates $I_0$, $I_1$, and $I_2$ to label the configuration's index modulo $3$. 
We add an extra clause to the formula to guarantee that each element has exactly one of these three index labels.
Each $C_0$ or $C_1$ vertex has exactly the same index as the $Q$ vertex that connected to it.
Finally, for each $Q$ vertex $v$ with index $i$, 
there exists at most one $Q$ vertex $v'$ with index $(i + 1 \bmod 3)$,
such that $v$ and $v'$ are connected;
there exists at most one $Q$ vertex $v''$ with index $(i - 1 \bmod 3)$,
such that $v$ and $v'$ are connected.
Therefore we can modify the formula which identifies the successor and predecessor
based on the index, rather than the direction of the edges, and show that the two-counter machine halts if and only if the modified formula is finitely satisfiable over undirected graphs.

\begin{figure}[h]
    \centering
    \begin{tikzpicture}[thick, main/.style = {draw, circle, minimum size=30, scale=0.6}]
        \node[main] (c0) at (0,  0)  {$S$};
        \node[main] (c1) at (2,  0)  {$Q_1$};
        \node[main] (c2) at (4,  0)  {$Q_2$};
        \node[main] (c3) at (6,  0)  {$Q_3$};
        \node[main] (c4) at (8,  0)  {$Q_8$};
        \node[    ] (c5) at (9,  0)  {$\cdots$};
        \node[main] (c6) at (10,  0) {$Q_7$};
        \node[main] (c7) at (12, 0)  {$T$};
        \draw[] (c0) -- (c1);
        \draw[] (c1) -- (c2);
        \draw[] (c2) -- (c3);
        \draw[] (c3) -- (c4);
        \draw[] (c4) -- (c5);
        \draw[] (c5) -- (c6);
        \draw[] (c6) -- (c7);
        \node[main] (c1b1) at (2,  1)         {$C_0$};
        \node[main] (c1g1) at (2.342, -0.940) {$C_1$};
        \node[main] (c1g2) at (1.658, -0.940) {$C_1$};
        \draw[] (c1) -- (c1b1);
        \draw[] (c1) -- (c1g1);
        \draw[] (c1) -- (c1g2);
        \node[main] (c2b1) at (4.342, 0.940)  {$C_0$};
        \node[main] (c2b2) at (3.658, 0.940)  {$C_0$};
        \node[main] (c2g1) at (4.342, -0.940) {$C_1$};
        \node[main] (c2g2) at (3.658, -0.940) {$C_1$};
        \draw[] (c2) -- (c2b1);
        \draw[] (c2) -- (c2b2);
        \draw[] (c2) -- (c2g1);
        \draw[] (c2) -- (c2g2);
        \node[main] (c3b1) at (6.342, 0.940)  {$C_0$};
        \node[main] (c3b2) at (5.658, 0.940)  {$C_0$};
        \node[main] (c3g1) at (6, -1) {$C_1$};
        \draw[] (c3) -- (c3b1);
        \draw[] (c3) -- (c3b2);
        \draw[] (c3) -- (c3g1);
        \node[main] (c4b1) at (8.342, 0.940)  {$C_0$};
        \node[main] (c4b2) at (7.658, 0.940)  {$C_0$};
        \draw[] (c4) -- (c4b1);
        \draw[] (c4) -- (c4b2);
        \node[main] (c6b1) at (10.342, 0.940)  {$C_0$};
        \node[main] (c6b2) at (9.658, 0.940)   {$C_0$};
        \draw[] (c6) -- (c6b1);
        \draw[] (c6) -- (c6b2);

        \draw [dashed, gray] (1.1, 1.5) -- (2.9, 1.5) -- (2.9, -1.5) -- (1.1, -1.5) -- (1.1, 1.5);
        \draw [dashed, gray] (3.1, 1.5) -- (4.9, 1.5) -- (4.9, -1.5) -- (3.1, -1.5) -- (3.1, 1.5);
        \draw [dashed, gray] (5.1, 1.5) -- (6.9, 1.5) -- (6.9, -1.5) -- (5.1, -1.5) -- (5.1, 1.5);
        \draw [dashed, gray] (7.1, 1.5) -- (8.9, 1.5) -- (8.9, -1.5) -- (7.1, -1.5) -- (7.1, 1.5);
        \draw [dashed, gray] (9.1, 1.5) -- (10.9, 1.5) -- (10.9, -1.5) -- (9.1, -1.5) -- (9.1, 1.5);
        \node[] () at (2, -1.8)  {$I_0$};
        \node[] () at (4, -1.8)  {$I_1$};
        \node[] () at (6, -1.8)  {$I_2$};
        \node[] () at (8, -1.8)  {$I_0$};
        \node[] () at (10, -1.8)  {$I_{(i \bmod 3)}$};
        
    \end{tikzpicture}
    \centering
    \caption{An example of the encoding of the computation of a two-counter machines in undirected graphs.}
    \label{fig:two_counter_machine_un}
\end{figure}

Formally the reduction is as follows.
Given a two-counter machine $\cM$, 
we construct the formula $\Psi_\cM(x)$ as in the proof of 
Lemma~\ref{lemma:tcm_to_mtwoptwo}.
Here we demonstrate only the main idea of the modification;
other formulas can be treated analogously.
\begin{equation*}
    \begin{aligned}
        \psi^{\itdiff}_{t, \delta}(x)\ :=\ 
            \bigwedge_{i \in \intinterval{0}{2}} I_{i}(x) \to 
            (
                &\presbzy{E(x, z) \land E(z, y) \land C_t(y) \land I_{(i+1\bmod 3)}(y)} - \\
                &\presby{E(x, y) \land C_t(y) \land I_{i}(y)}  = \delta
            ) \\
        \psi^{\itsucc}_{j}(x)\ :=\ 
            \bigwedge_{i \in \intinterval{0}{2}} I_{i}(x) \to
            (
                &\presby{E(x, y) \land Q_j(y) \land I_{(i+1\bmod 3)}(y)} = 1
            )
    \end{aligned}
\end{equation*}

The proof that $\cM$ halts if and only if $\forall x \Psi_\cM(x)$
is finitely satisfiable over undirected graphs is similar to
the one for Lemma~\ref{lemma:tcm_to_mtwoptwo},
thus, omitted.

\begin{theorem}\label{thm:gnn_unbounded_undecun}
    The universal satisfiability problem of $\blreluGNN$s over undirected graphs is undecidable.
\end{theorem}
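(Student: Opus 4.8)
The plan is to obtain Theorem~\ref{thm:gnn_unbounded_undecun} by combining the translation from $\blMtwoPtwo$ into $\blreluGNN$s with the undirected two-counter machine encoding of Lemma~\ref{lemma:tcm_to_mtwoptwoun}. First I would recall that Theorem~\ref{thm:logic_to_unbounded_gnn} gives, for every $n$-$\blMtwoPtwo$ formula $\Psi(x)$, an equivalent $n$-$\blreluGNN$ $\cA_\Psi$; moreover the construction there is purely local and bidirectional, so it is agnostic to whether the input graph is directed or undirected -- an undirected graph is just a directed graph with a symmetric edge relation, and the GNN computation on such a graph is exactly the computation of $\cA_\Psi$ on that symmetric directed graph. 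Hence $\cA_\Psi$ and $\Psi(x)$ remain equivalent when we restrict attention to undirected graphs, and in particular $\cA_\Psi$ accepts every vertex of an undirected graph $\cG$ iff $\cG \models \forall x\ \Psi(x)$; that is, $\cA_\Psi$ is universally satisfiable over undirected graphs iff $\forall x\ \Psi(x)$ is finitely satisfiable over undirected graphs.

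Next I would invoke Lemma~\ref{lemma:tcm_to_mtwoptwoun}: for every two-counter machine $\cM$ with $n$ instructions there is an $(n+8)$-$\blMtwoPtwo$ formula $\Psi_\cM(x)$ such that $\cM$ halts iff $\forall x\ \Psi_\cM(x)$ is finitely satisfiable over undirected graphs. Composing, I set $\cA_{\cM} := \cA_{\Psi_\cM}$, the $(n+8)$-$\blreluGNN$ produced by Theorem~\ref{thm:logic_to_unbounded_gnn} applied to $\Psi_\cM(x)$. Then $\cM$ halts iff $\forall x\ \Psi_\cM(x)$ is finitely satisfiable over undirected graphs iff $\cA_{\cM}$ is universally satisfiable over undirected graphs. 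Since the halting problem for two-counter machines is undecidable~\cite{minsky}, and both the translation of Lemma~\ref{lemma:tcm_to_mtwoptwoun} and the translation of Theorem~\ref{thm:logic_to_unbounded_gnn} are effective (the latter is even polynomial time), this reduction establishes undecidability of the universal satisfiability problem for $\blreluGNN$s over undirected graphs, which is the theorem.

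The one point that genuinely requires care -- and which I would spell out explicitly rather than wave at -- is that the equivalence ``$\cA_\Psi$ accepts $\tuple{\cG,v}$ iff $\cG,v \models \Psi(x)$'' of Theorem~\ref{thm:logic_to_unbounded_gnn}, a priori stated for directed graphs, survives the restriction to undirected graphs. This is immediate from the fact that the class of undirected graphs is a subclass of the class of directed graphs (those with symmetric $E$), so the equivalence, holding for \emph{all} directed graphs, in particular holds for all undirected graphs; and Lemma~\ref{lemma:tcm_to_mtwoptwoun} already delivers its witness/counterexample graphs as undirected graphs, so there is no mismatch in the graph class on either side of the chain of equivalences. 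Thus the main obstacle has in effect already been discharged inside Lemma~\ref{lemma:tcm_to_mtwoptwoun} (whose proof had to replace edge-direction-based bookkeeping of predecessor/successor configurations with the index-modulo-$3$ trick of Figure~\ref{fig:two_counter_machine_un}), and the final theorem is then a short composition argument.

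\begin{proof}
    By Lemma~\ref{lemma:tcm_to_mtwoptwoun}, for every two-counter machine $\cM$ with $n$ instructions there is an $(n+8)$-$\blMtwoPtwo$ formula $\Psi_\cM(x)$ such that $\cM$ halts if and only if $\forall x\ \Psi_\cM(x)$ is finitely satisfiable over undirected graphs. By Theorem~\ref{thm:logic_to_unbounded_gnn}, there is an $(n+8)$-$\blreluGNN$ $\cA_{\cM}$, effectively computable from $\Psi_\cM(x)$, that is equivalent to $\Psi_\cM(x)$ over all graphs, hence in particular over undirected graphs: for every undirected $(n+8)$-graph $\cG$ and vertex $v$ in $\cG$, $\cA_{\cM}$ accepts $\tuple{\cG,v}$ if and only if $\cG,v \models \Psi_\cM(x)$. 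Consequently $\cA_{\cM}$ is universally satisfiable over undirected graphs if and only if $\forall x\ \Psi_\cM(x)$ is finitely satisfiable over undirected graphs, which in turn holds if and only if $\cM$ halts. Since the halting problem for two-counter machines is undecidable~\cite{minsky} and $\cM \mapsto \cA_{\cM}$ is computable, the universal satisfiability problem for $\blreluGNN$s over undirected graphs is undecidable.
\end{proof}
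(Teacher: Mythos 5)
Your proof is correct and takes exactly the same route as the paper: compose Lemma~\ref{lemma:tcm_to_mtwoptwoun} with the translation of Theorem~\ref{thm:logic_to_unbounded_gnn}, noting that equivalence over all (directed) graphs in particular holds over the symmetric subclass. The paper leaves this as a one-line remark pointing back to the proof of Theorem~\ref{thm:blgnn_unbounded_undec}; you simply spell out the composition and the restriction-to-undirected step explicitly, which is a sound elaboration rather than a different argument.
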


The theorem follows from  Lemma \ref{lemma:tcm_to_mtwoptwoun}, using the undecidability result for two-counter machines, as in Theorem \ref{thm:blgnn_unbounded_undec}.

\end{document}